\def\dOi{11(4:21)2015}
\newcommand{\FORGET}[1]{}
\newif\ifpgf\pgftrue  
\lstdefinestyle{mystyle}{flexiblecolumns=true,showstringspaces=false,keepspaces=true,basewidth={0em,0em},,basicstyle=\sffamily,commentstyle=\itshape,stringstyle=\sffamily}
\newcommand{\BNFcce}{{\bf ::=}}
\newcommand{\BNFmid}{\;\bigr\rvert\;}
\newcommand{\PROGRAM}{\textbf{P}}
\newcommand{\FUNCTION}{\texttt{D}}
\newcommand{\e}{\texttt{e}}
\newcommand{\dname}{\texttt{d}}
\newcommand{\fname}{\texttt{f}}
\newcommand{\xname}{\texttt{x}}
\newcommand{\oname}{\texttt{b}}
\newcommand{\boolVar}{\texttt{c}\xspace}
\newcommand{\defK}{\texttt{def}}
\newcommand{\isK}{\texttt{is}}
\newcommand{\defKK}[4]{\defK \; #1 \;#2 (#3) \;\isK\; #4}
\newcommand{\skiptransition}{\\[15pt]}
\newcommand{\ruleNameSize}[1]{{\scriptsize #1}}
\newcommand{\senv}[1]{\texttt{\##1}}
\newcommand{\deftt}[3]{\texttt{def #1(#2) is #3}}
\newcommand{\spreadtt}[2]{\texttt{{\char '173} #1 : #2 {\char '175}}}
\newcommand{\startt}{\texttt{@}}
\newcommand{\Topo}{\tau}
\newcommand{\Sens}{\Sigma}
\newcommand{\Envi}{E}
\newcommand{\EnviS}[2]{#1,#2}
\newcommand{\Cfg}{N}
\newcommand{\Field}{F}
\newcommand{\SystS}[2]{\langle #1;#2\rangle}
\newcommand{\devset}{I}
\newcommand{\nettran}[3]{#1\xrightarrow{#2} #3}
\newcommand{\mapupdate}[2]{#1[#2]}
\newcommand{\netArrIdStar}{\stackrel{\overline{\deviceId}}{\netArrStar}}
\newcommand{\netArrIdFairStarN}[1]{\netArrIdStar_{#1}}
\newcommand{\wfn}[1]{\textit{WFE}(#1)}
\newcommand{\surfaceTyping}[3]{
  \begin{array}{l@{\;}c}
    \stackrel{~}{{\tiny \textrm{[#1]}}} & #2 \\ \hline 
    \multicolumn{2}{c}{#3}
  \end{array}
}
\newcommand{\nullsurfaceTyping}[2]{
  \surfaceTyping{#1}{}{#2}
}
\newcommand{\surExpTypJud}[3]{#1\vdash #2 : #3}
\newcommand{\surFunTypJud}[3]{#1\vdash #2 : #3}
\newcommand{\SurTypEnv}{\mathcal{T}}
\newcommand{\typeofNAME}{\textit{type}}
\newcommand{\typeof}[1]{\typeofNAME(#1)}
\newcommand{\anyvalue}{\texttt{v}}
\newcommand{\anyvaluealt}{\texttt{u}}
\newcommand{\anyvaluebis}{\texttt{w}}
\newcommand{\anyvalueInNC}[2]{\anyvalue_{#1 (\textrm{in }#2)}}
\newcommand{\groundvalue}{\texttt{g}}
\newcommand{\snsname}{s}
\newcommand{\main}{\texttt{main}}
\newcommand{\starK}{\texttt{@}}
\newcommand{\progK}[2]{#1(\starK,#2)}
\newcommand{\spreadThree}[3]{\{#1:\progK{#2}{#3}\}}
\newcommand{\lengthOf}[1]{\textit{length}(#1)}
\newcommand{\signature}{\textit{t-sig}}
\newcommand{\signatureOf}[1]{\signature(#1)}
\newcommand{\anytype}{\texttt{T}}
\newcommand{\lsempar}{[\![}
\newcommand{\rsempar}{]\!]}
\newcommand{\semOf}[1]{\lsempar{#1}\rsempar}
\newcommand{\lowerBound}{\textstyle{\bigwedge}}
\newcommand{\wfEq}{=}
\newcommand{\wfEqOf}[1]{\wfEq_{#1}}
\newcommand{\wfLt}{<}
\newcommand{\wfLtOf}[1]{\wfLt_{#1}}
\newcommand{\wfLe}{\le}
\newcommand{\wfLeOf}[1]{\wfLe_{#1}}
\newcommand{\keywfEq}{=^1}
\newcommand{\keywfEqOf}[1]{\keywfEq_{#1}}
\newcommand{\keywfLt}{<^1}
\newcommand{\keywfLtOf}[1]{\keywfLt_{#1}}
\newcommand{\keywfLe}{\le^1}
\newcommand{\keywfLeOf}[1]{\keywfLe_{#1}}
\newcommand{\dom}{\textit{dom}}
\newcommand{\domOf}[1]{\dom(#1)}
\newcommand{\noetherianOf}[1]{\noetherianOf(#1)}
\newcommand{\netframe}{\Envi}
\newcommand{\netframeName}{\textit{environment}}
\newcommand{\netframeOf}[1]{\netframeName(#1)}
\newcommand{\deviceIdSet}{\textbf{I}}
\newcommand{\nodeSubSet}{\textbf{D}}
\newcommand{\stableNodeSet}{\textbf{S}}
\newcommand{\sourceNodeSubSet}{\textbf{M}}
\newcommand{\network}{\Cfg}
\newcommand{\snsFun}{\sigma}
\newcommand{\snsFunFor}[1]{\snsFun_{#1}}
\newcommand{\frontier}{\textit{frontier}}
\newcommand{\frontierOfIn}[2]{\frontier_{#1}(#2)}
\newcommand{\netopsemRule}[3]{\surfaceTyping{#1}{#2}{#3}}
\newcommand{\vtree}{\theta}
\newcommand{\vtreealt}{\eta}
\newcommand{\vtreeInNC}[2]{\theta_{#1( \textrm{in }#2)}}
\newcommand{\senstate}{\sigma}
\newcommand{\genmap}[2]{#1\rhd#2}
\newcommand{\act}{\ell}
\newcommand{\envact}{\epsilon}
\newcommand{\netArr}{\rightarrow}
\newcommand{\netArrWithAct}[1]{\stackrel{#1}{\netArr}}
\newcommand{\netArrStar}{\Longrightarrow}
\newcommand{\netArrActStar}{\stackrel{\overline{\act}}{\netArrStar}}
\newcommand{\netArrPresStar}{\netArrStar}
\newcommand{\fair}{\textit{fair}}
\newcommand{\netArrFairStarN}[1]{\netArrStar_{#1}}
\newcommand{\actDevInOf}[1]{\envact}
\newcommand{\actDevOutOf}[1]{\envact}
\newcommand{\actEdgeInOf}[1]{\envact}
\newcommand{\actEdgeOutOf}[1]{\envact}
\newcommand{\actSnsUpdOf}[1]{\envact}
\newcommand{\emain}{\e_{\main}}
\newcommand{\opFunOf}[1]{\semOf{#1}}
\newcommand{\opApply}[2]{#1(#2)}
\newcommand{\bsopsem}[4]{#1;#2\vdash #3\Downarrow #4}
\newcommand{\deviceId}{\iota}
\newcommand{\premise}{\mathbf{\pi}}
\newcommand{\premiseNum}[1]{\premise_{#1}}
\newcommand{\premiseNumOf}[2]{\premiseNum{#1}(#2)}
\newcommand{\conclusion}{\mathbf{\rho}}
\newcommand{\conclusionOf}[1]{\conclusion(#1)}
\newcommand{\vroot}{\mathbf{\rho}}
\newcommand{\vrootOf}[1]{\vroot(#1)}
\newcommand{\substitution}[2]{#1:=#2}
\newcommand{\applySubstitution}[2]{#1[#2]}
\newcommand{\bsWFVT}[3]{\textit{WTVT}(#1,#2,#3)}
\newcommand{\bsWSVT}[3]{\textit{WSVT}(#1,#2,#3)}
\newcommand{\canonicalTopNAME}{\texttt{pt}}
\newcommand{\canonicalTopFor}[1]{\canonicalTopNAME[#1]}
\newcommand{\canonicalTop}[2]{\canonicalTopFor{#1}(#2)}
\newcommand{\ctdNAME}{\texttt{sd}}
\newcommand{\ctdOfFor}[2]{\ctdNAME[#1,#2]}
\newcommand{\sumOrNAME}{\texttt{sum\_or}}
\newcommand{\addToFstNAME}{\texttt{add\_to\_1st}}
\newcommand{\ptRealBoolNAME}{\texttt{\canonicalTopNAME\_\topNumK\_\trueK}}
\newcommand{\ptRealRealNAME}{\texttt{\canonicalTopNAME\_\topNumK\_\topNumK}}
\newcommand{\spSumOrNAME}{\texttt{\ctdNAME\_sum\_or}}
\newcommand{\spAddToFstNAME}{\texttt{\ctdNAME\_add\_to\_1st}}
\newcommand{\progressiveSortSigNAME}{progressive}
\newcommand{\ProgressiveSortSigNAME}{Progressive}
\newcommand{\progressiveSortSigOf}[1]{\textit{\progressiveSortSigNAME}(#1)}
\newcommand{\pureFunOfNAME}{\textit{fun}}
\newcommand{\pureFunOf}[1]{\pureFunOfNAME(#1)}
\newcommand{\myldots}{...}
\newcommand{\diffusionText}{diffusion}
\newcommand{\diffusionsText}{diffusions}
\newcommand{\DiffusionText}{Diffusion}
\newcommand{\DiffusionsText}{Diffusions}
\newcommand{\stabilisingText}{stabilising}
\newcommand{\prestabilisingText}{prestabilising}
\newcommand{\possiblyprestabilisingText}{possibly prestabilising}
\newcommand{\certainlyprestabilisingText}{certainly prestabilising}
\newcommand{\groundType}{\texttt{G}}
\newcommand{\boolType}{\texttt{bool}}
\newcommand{\numNAME}{real}
\newcommand{\numType}{\texttt{\numNAME}}
\newcommand{\mkpairType}[2]{\texttt{<}#1,#2\texttt{>}}
\newcommand{\numNAMEs}{reals}
\newcommand{\mkpair}[2]{\texttt{<}#1,#2\texttt{>}}
\newcommand{\fstK}{\texttt{fst}}
\newcommand{\sndK}{\texttt{snd}}
\newcommand{\leftKey}{\textit{key}}
\newcommand{\leftKeyOf}[1]{\leftKey(#1)}
\newcommand{\notK}{\texttt{not}}
\newcommand{\orK}{\texttt{or}}
\newcommand{\trueK}{\texttt{TRUE}}
\newcommand{\falseK}{\texttt{FALSE}}
\newcommand{\topNumK}{\texttt{POSINF}}
\newcommand{\botNumK}{\texttt{NEGINF}}
\newcommand{\condExpr}[3]{#1\texttt{?}#2\texttt{:}#3}
\newcommand{\progressionOpText}{\diffusionText}
\newcommand{\progressionOp}{\textit{\progressionOpText}}
\newcommand{\progressionOpOf}[1]{\progressionOp(#1)}
\newcommand{\deterministicText}{deterministic}
\newcommand{\deterministic}{\textit{\deterministicText}}
\newcommand{\deterministicOf}[1]{\deterministic(#1)}
\newcommand{\minimalsNAME}{\textit{minimals}}
\newcommand{\minimalsOf}[1]{\minimalsNAME(#1)}
\newcommand{\stabilisingOpText}{stabilising}
\newcommand{\stabilisingOp}{\textit{\stabilisingOpText}}
\newcommand{\stabilisingOpOf}[2]{\stabilisingOp(#1,#2)}
\newcommand{\rsignatures}{\textit{s-sigs}}
\newcommand{\rsignaturesOf}[1]{\rsignatures(#1)}
\newcommand{\stabilisingSignatures}{\textit{stb-s-sigs}}
\newcommand{\stabilisingSignaturesOf}[1]{\stabilisingSignatures(#1)}
\newcommand{\asignatures}{\textit{a-s-sigs}}
\newcommand{\asignaturesOf}[1]{\asignatures(#1)}
\newcommand{\Ppsignatures}{\textit{!-s-sigs}}
\newcommand{\PpsignaturesOf}[1]{\Ppsignatures(#1)}
\newcommand{\psignatures}[1]{#1\textit{-s-sigs}}
\newcommand{\psignaturesOf}[2]{\psignatures{#1}(#2)}
\newcommand{\mostSpecific}{\textit{ms}}
\newcommand{\mostSpecificOf}[2]{\mostSpecific(#1,#2)}
\newcommand{\rtype}{\texttt{S}}
\newcommand{\rsub}{\le}
\newcommand{\rsubof}[2]{#1\rsub#2}
\newcommand{\rtypeofNAME}{\textit{sort}}
\newcommand{\rtypeof}[1]{\rtypeofNAME(#1)}
\newcommand{\atypeofNAME}{\textit{a-sort}}
\newcommand{\atypeof}[1]{\atypeofNAME(#1)}
\newcommand{\RefTypEnv}{\mathcal{S}}
\newcommand{\refofNAME}{\textbf{sorts}}
\newcommand{\refof}[1]{\refofNAME(#1)}
\newcommand{\refsigofNAME}{\textbf{sort-signatures}}
\newcommand{\refsigof}[1]{\refsigofNAME(#1)}
\newcommand{\progrefsigofNAME}{\textbf{progressive-sort-signatures}}
\newcommand{\progrefsigof}[1]{\progrefsigofNAME(#1)}
\newcommand{\refSupOfNAME}{\textit{sup}}
\newcommand{\refSupOf}[2]{\refSupOfNAME_{\rsub}(#1,#2)}
\newcommand{\topValOfNAME}{\top}
\newcommand{\topValOf}[1]{\topValOfNAME_{#1}}
\newcommand{\trueType}{\texttt{true}}
\newcommand{\falseType}{\texttt{false}}
\newcommand{\nnumType}{\texttt{nr}}
\newcommand{\znnumType}{\texttt{znr}}
\newcommand{\znumType}{\texttt{zr}}
\newcommand{\npnumType}{\texttt{npr}}
\newcommand{\zpnumType}{\texttt{zpr}}
\newcommand{\pnumType}{\texttt{pr}}
\newcommand{\rsubwrhof}[2]{#1\subwrhLe#2}
\newcommand{\rsemsubwrhof}[2]{#1\semsubwrhLe#2}
\newcommand{\ReverseHoare}{\textrm{progressive}}
\newcommand{\subwrhLe}{\le^\ReverseHoare}
\newcommand{\semsubwrhLe}{\subseteq^{\ReverseHoare}}
\newcommand{\powerSet}[1]{\mathcal{P}(#1)}
\newcommand{\setInPowerSet}{\textbf{S}}
\newcommand{\rsubstabof}[2]{#1\substabLe#2}
\newcommand{\STABILISING}{\textrm{stabilising}}
\newcommand{\substabLe}{\le^\STABILISING}
\newcommand{\stabsigofNAME}{\textbf{stabilising-sort-signatures}}
\newcommand{\stabsigof}[1]{\stabsigofNAME(#1)}
\newcommand{\prestabsigofNAME}{\textbf{prestabilising-sort-signatures}}
\newcommand{\psigofNAME}{\pAnn\textrm{-}\prestabsigofNAME}
\newcommand{\psigof}[1]{\psigofNAME(#1)}
\newcommand{\PpsigofNAME}{\PpAnn\textrm{-}\prestabsigofNAME}
\newcommand{\Ppsigof}[1]{\PpsigofNAME(#1)}
\newcommand{\annsigofNAME}{\textbf{annotated-sort-signatures}}
\newcommand{\annsigof}[1]{\annsigofNAME(#1)}
\newcommand{\pannsigofNAME}{\pAnn\textrm{-}\annsigofNAME}
\newcommand{\pannsigof}[1]{\pannsigofNAME(#1)}
\newcommand{\PpannsigofNAME}{\PpAnn\textrm{-}\annsigofNAME}
\newcommand{\Ppannsigof}[1]{\PpannsigofNAME(#1)}
\newcommand{\WpannsigofNAME}{\WpAnn\textrm{-}\annsigofNAME}
\newcommand{\Wpannsigof}[1]{\WpannsigofNAME(#1)}
\newcommand{\SB}[1]{\texttt{[$#1$]}}
\newcommand{\pAnn}{\pi}
\newcommand{\PpAnn}{\texttt{!}}
\newcommand{\WpAnn}{\texttt{?}}
\newcommand{\pAnnSB}{\texttt{[$\pi$]}}
\newcommand{\PpAnnSB}{\texttt{[!]}}
\newcommand{\WpAnnSB}{\texttt{[?]}}
\newcommand{\inputAnnSB}{\texttt{[?]}}
\newcommand{\pAnnScript}[1]{\pAnn#1}
\newcommand{\pAnnScriptSB}[1]{\texttt{[$\pAnnScript{#1}$]}}
\newcommand{\annErasure}[1]{\vert#1\vert}
\newcommand{\artype}{\texttt{A}}
\newcommand{\AnnTypEnv}{\mathcal{A}}
\newcommand{\annExpTypJud}[3]{#1\vdash #2 : #3}
\newcommand{\annFunTypJud}[3]{#1\vdash #2 : #3}
\newcommand{\pOf}[2]{\pAnn\mbox{-}\textit{\prestabilisingText}(#1,#2)}
\newcommand{\PpOf}[2]{\PpAnn\mbox{-}\textit{\stabilisingText}(#1,#2)}
\newcommand{\WpOf}[2]{\WpAnn\mbox{-}\textit{\stabilisingText}(#1,#2)}
\newcommand{\pOfSCRIPT}[3]{\pAnn#1\mbox{-}\textit{\prestabilisingText}(#2,#3)}
\begin{document}

\title[Type-based Self-stabilisation for Computational Fields]{Type-based Self-stabilisation for Computational Fields\rsuper*}

\author[F.\ Damiani]{Ferruccio Damiani\rsuper a}	
\address{{\lsuper a}University of Torino, Italy}	
\email{ferruccio.damiani@unito.it}  
\thanks{{\lsuper a}Ferruccio Damiani has been partially supported by  project HyVar (\textit{www.hyvar-project.eu}) which has received funding from the European Union's Horizon 2020 research and
innovation programme under grant agreement No 644298, by ICT COST Action IC1402 ARVI (\textit{www.cost-arvi.eu}), by ICT COST
Action IC1201 BETTY (\textit{www.behavioural-types.eu}), by the Italian PRIN 2010/2011 project CINA (\textit{sysma.imtlucca.it/cina}), and by Ateneo/CSP project SALT (\textit{salt.di.unito.it}).}	

\author[M.~Viroli]{Mirko Viroli\rsuper b}	
\address{{\lsuper b}University of Bologna, Italy}	
\email{mirko.viroli@unibo.it}  
\thanks{{\lsuper b}Mirko Viroli has been partially supported by the EU
FP7 project ``SAPERE - Self-aware Pervasive Service
Ecosystems'' under contract No. 256873 and by the Italian PRIN
2010/2011 project CINA (\textit{sysma.imtlucca.it/cina}).}




\keywords{Computational field, Core calculus, Operational semantics, Spatial computing, Type-based analysis, Type soundness, Type
system, Refinement type}
\subjclass{D.1.1, D.2.4, D.3.1, D.3.2, D.3.3, F.1.2, F.3.2, F.3.3}
\titlecomment{{\lsuper*}A preliminary version of some of the material presented in this paper has appeared in COORDINATION 2015.}


\begin{abstract}
Emerging network scenarios require the development of solid large-scale situated systems. Unfortunately, the diffusion/aggregation computational processes therein often introduce a source of complexity that hampers predictability of the overall system behaviour. Computational fields have been introduced to help engineering such systems: they are spatially distributed data structures designed to adapt their shape to the topology of the underlying (mobile) network and to the events occurring in it, with notable applications to pervasive computing, sensor networks, and mobile robots. To assure behavioural correctness, namely, correspondence of micro-level specification (single device behaviour) with macro-level behaviour (resulting global spatial pattern), we investigate the issue of self-stabilisation for computational fields.
We present a tiny, expressive, and type-sound calculus of computational
fields, and define sufficient conditions for self-stabilisation, defined as the ability to react to changes in the environment
finding a new stable state in finite time. A type-based approach is used to provide a correct checking procedure for self-stabilisation.
\end{abstract}

\maketitle

\sloppy
\section{Introduction}\label{sec-intro}

\emph{Computational fields} \cite{maclennanFC,tokoro} (sometimes simply \emph{fields} in the following) are an abstraction traditionally used to enact self-organisation mechanisms in contexts including swarm robotics \cite{ProtoSwarm}, sensor networks \cite{proto06a}, pervasive computing \cite{tota}, task assignment \cite{weyns-field}, and traffic control \cite{ClaesHW11}.
They are distributed data structures originated from pointwise events raised in some specific device (i.e., a sensor), and propagating in a whole network region until forming a spatio-temporal data structure upon which distributed and coordinated computation can take place.
Middleware/platforms supporting this notion include TOTA \cite{tota}, Proto \cite{mitproto}, and SAPERE \cite{Zetal-PCS2011,VCO-SAC2009}.
The most paradigmatic example of computational field is the so-called \emph{gradient} \cite{crf,tota,MVFFZ-MONET2012}, mapping each node of the network to the minimum distance from the source node where the gradient has been injected.
Gradients are key to get awareness of physical/logical distances, to project a single-device event into a whole network region, and to find the direction towards certain locations of a network, e.g., for routing purposes.
Several pieces of work have been developed that investigate coordination models supporting fields \cite{tota,spatialcoord-coord2012}, introduce advanced gradient-based spatial patterns \cite{MPV-SASO2012}, study universality and expressiveness \cite{BVD-SCW14}, and develop catalogues of self-organisation mechanisms where gradients play a crucial role \cite{FDMVA-NACO2013}.

As with most self-organisation approaches, a key issue is to try to fill the gap between the system micro-level (the single-node computation and interaction behaviour) and the system macro-level (the shape of the globally established spatio-temporal structure), namely, ensuring that the programmed code results in the expected global-level behaviour.
However, the issue of formally tackling the problem is basically yet unexplored in the context of spatial computing, coordination, and process calculi---some exceptions are \cite{crf,giavitto-autonomic}, which however apply in rather ad-hoc cases.
We note instead that studying this issue will likely shed light on which language constructs are best suited for developing well-engineered self-organisation mechanisms based on computational fields, and to consolidate existing patterns or develop new ones.

In this paper we follow this direction and devise an expressive calculus 
to specify the propagation process of those computational fields for which we can identify a precise mapping between system micro- and macro-level.
The key constructs of the calculus are three: sensor fields (considered as an environmental input), pointwise functional composition of fields, and a form of \emph{spreading} that tightly couples information diffusion and re-aggregation.
The spreading construct is constrained so as to enforce a special ``stabilising-diffusion condition'' that we identified, by which we derive self-stabilisation~\cite{dolev}, that is, the ability of the system running computational fields to reach a stable distributed state in spite of perturbations (changes of network topology and of local sensed data) from which it recovers in finite time.
A consequence of our results is that the ultimate (and stable) state of an even complex computational field can be fully-predicted once the environment state is known (network topology and sensors state).
Still, checking that a field specification satisfies such stabilising-diffusion condition is subtle, since it involves the ability of reasoning about the 
relation holding between inputs and outputs of functions used to propagate information across nodes.
Hence, as an additional contribution, we introduce a type-based approach
that provides
a correct checking procedure for the stabilising-diffusion condition.


The remainder of this paper is organised as follows: Section~\ref{sec-fields} illustrates the proposed linguistic constructs by means of examples; Section \ref{sec-calculus} presents the calculus and formalises the self-stabilisation property; Section~\ref{sec-stabilisingProgression} introduces the stabilising-diffusion condition to constrain spreading in order to ensure self-stabilisation; Section~\ref{sec-properties-stabilising} proves  that the stabilising-diffusion condition guarantees self-stabilisation; Section \ref{sec-calculus-extended} extends the calculus with the pair data structure and provides further examples; Sections~\ref{sec-stabilisation-checking},~\ref{sec-ensuringSelfStabilisation} and~\ref{sec-checkingStabilisingDiffusion} incrementally present a type-based approach for checking the stabilising-diffusion condition and prove that the approach is sound;
Section~\ref{sec-related} discusses related work; and finally Section~\ref{sec-conclusion}  concludes and discusses directions for future work.
The appendices contain the proof of the main results.
A preliminary version of some of the material  presented in this paper appeared in~\cite{ViroliDamiani:COORDINATION-2014}. 


\section{Computational Fields}\label{sec-fields}

From an abstract viewpoint, a computational field is simply a map from nodes of a network to some kind of value.
They are used as a valuable abstraction to engineer self-organisation into networks of situated devices.
Namely, out of local interactions (devices communicating with a small neighbourhood), global and coherent patterns (the computational fields themselves) establish that are robust to changes of environmental conditions.
Such an adaptive behaviour is key in developing system coordination in dynamic and unpredictable environments \cite{OV-KER2011}.

Self-organisation and computational fields are known to build on top of three basic mechanisms \cite{FDMVA-NACO2013}: diffusion (devices broadcast information to their neighbours), aggregation (multiple information can be reduced back into a single sum-up value), and evaporation/decay (a cleanup mechanism is used to reactively adapt to changes).
These mechansisms have been used to synthetise a rather vast set of distributed algorithms \cite{autonomicommunications,ZV-JPCC2011,FDMVA-NACO2013,SpatialIGI2013}.

For instance, these mechanisms are precisely those used to create adaptive and stable \emph{gradients}, which are building blocks of more advanced patterns \cite{FDMVA-NACO2013,MPV-SASO2012}.
A gradient is used to reify in any node some information about the path towards the nearest gradient source. It can be computed by the following process: value $0$ is held in the gradient source; each node executes asynchronous computation rounds in which \emph{(i)} messages from neighbours are gathered and \emph{aggregated} in a minimum value, \emph{(ii)} this is increased by one and is \emph{diffused} to all neighbours, and \emph{(iii)} the same value is stored locally, to replace the old one which \emph{decays}.
This continuous ``spreading process'' stabilises to a so called \emph{hop-count gradient}, storing distance to the nearest source in any node, and automatically repairing in finite time to changes in the environment (changes of topology, position and number of sources).

\subsection{Basic Ingredients}\label{sec-basicIngredients}

Based on these ideas, and framing them so as to isolate those cases where the spreading process actually stabilises, we propose a core calculus to express  computational fields.
Its syntax is reported in Figure~\ref{fig:source:syntax}.
\begin{figure}[!t]{
\centerline{\framebox[\textwidth]{$
\begin{array}{lclr}
 \e & \BNFcce &  \xname \;
     \; \BNFmid \; \snsname
    \; \BNFmid \;  \groundvalue
     \; \BNFmid \;   \condExpr{\e_0}{\e_1}{\e_2}
    \; \BNFmid \;    \fname(\e_1,\myldots,\e_n)
    \;\BNFmid \;
 \spreadThree{\e}{\fname}{\e_1,\myldots,\e_n}
    &   {\mbox{\footnotesize expression}} 
\\[2pt]
     \fname & \BNFcce &  \dname \;\BNFmid\; \oname & {\mbox{\footnotesize function name}}\\[2pt]
     \FUNCTION & \BNFcce & \defKK{\anytype}{\dname}{\anytype_1\;\xname_1,\myldots,\anytype_n\;\xname_n}{\e}
  &   {\mbox{\footnotesize function definition}}\\
 \end{array}
 $}}}
 \caption{Syntax of expressions and function definitions}
\label{fig:source:syntax}
\end{figure}
Our language is typed and (following the general approach used in other languages for spatial computing \cite{VDB-FOCLASA-CIC2013,mitproto}, which the one we propose here can be considered as a core fragment) functional. 

Types $\anytype$  are monomorphic. For simplicity, only  \emph{ground types} $\groundType$ (like \texttt{\numNAME} and \texttt{bool}) are modeled---in Section~\ref{sec-calculus-extended} we will point out that the properties of the calculus are indeed parametric in the set of modeled types  (in particular, we will consider and extension of the calculus that models pairs).
We write $\semOf{\anytype}$ to denote the set of the values of type $\anytype$. We assume that each type $\anytype$ is equipped with a \emph{total order} $\wfLeOf{\anytype}$ over $\semOf{\anytype}$ that is \emph{noetherian}~\cite{Huet:1980:CRA:322217.322230}, i.e.,  there are no infinite ascending chains of  values  $\anyvalue_0\wfLtOf{\anytype}\anyvalue_1\wfLtOf{\anytype}\anyvalue_2\wfLtOf{\anytype}\cdots$.
This implies that  $\semOf{\anytype}$ has a maximum element, that we denote by $\topValOf{\anytype}$.
Each ground type usually comes with a natural ordering (for $\boolType$ we consider $\falseK\wfLeOf{\boolType}\trueK$) which is total and noetherian---though in principle ad-hoc ordering relations could be used in a deployed specification language.

An expression can be a \emph{variable} $\xname$, a \emph{sensor} $\snsname$, a \emph{ground-value} $\groundvalue$, 
a \emph{conditional} $\condExpr{\e_0}{\e_1}{\e_2}$, a \emph{function application} $\fname(\e_1,\ldots,\e_n)$ or a \emph{spreading} $\spreadThree{\e}{\fname}{\e_1,\myldots,\e_n}$.
Variables are the formal parameters of a function. 

Sensors are sources of input produced by the environment, available in each device (in examples, we shall use for them literals starting with symbol ``\senv{}'').
For instance, in a urban scenario we may want to use a crowd sensor \senv{crowd} yielding non-negative real numbers, to represent the perception of crowd level available in each deployed sensor over time \cite{MVFFZ-MONET2012}.
 
Values $\anyvalue$ coincide with  \emph{ground values}  $\groundvalue$ (i.e., values of ground type), like
\numNAMEs\
(e.g., \texttt{1}, \texttt{5.7},... and \topNumK\ and \botNumK\ meaning the maximum and the minimum \numNAME, respectively) and  booleans (\texttt{TRUE}
and \texttt{FALSE}).

A function can be either a built-in function $\oname$ or a user-defined function $\dname$.
Built-in functions include usual mathematical/logical ones, used either
in prefix or infix notation, e.g., to form expressions like
\texttt{2*\senv{crowd}} and \texttt{or(TRUE,FALSE)}.
User-defined functions are declared by a \emph{function definition} $\defKK{\anytype}{\dname}{\anytype_1\;\xname_1,\myldots,\anytype_n\;\xname_n}{\e}$---cyclic
definitions are prohibited, and the 0-ary function \texttt{main} is
the program entry point. 
As a first example of user-defined function consider the following function \texttt{restrict}:  
\[
\deftt{\numNAME\ restrict}{\numNAME\ i, bool \boolVar}{\boolVar\ ? i :
\topNumK}.
\]
%
%
It takes two arguments \texttt{i} and \boolVar, and
yields the former if \boolVar is true, or \topNumK{}
otherwise---as we shall see, because of our semantics \topNumK{}
plays a role similar to an undefined value.  

A \emph{pure function}
$\fname$  is either a built-in function $\oname$, or a user-defined function $\dname$ whose call graph (including $\dname$ itself) does not contain functions  with spreading expressions or sensors in their body. We write $\opFunOf{\fname}$ to denote the (trivial) semantics of a pure-function $\fname$, which is a computable functions that maps a tuple of elements from $\semOf{\anytype_1}$,\myldots, $\semOf{\anytype_n}$ to  $\semOf{\anytype}$, where $\anytype_1,\myldots,\anytype_n$ and $\anytype$ are the types of the arguments and the type of the result of $\fname$, respectively.

As in \cite{VDB-FOCLASA-CIC2013,mitproto}, expressions in our language have a twofold interpretation.
When focussing on the \emph{local} device behaviour, they represent values computed in a node at a given time.
When reasoning about the \emph{global} outcome of a specification instead, they represent whole computational fields: $1$ is the immutable field holding $1$ in each device, \senv{crowd} is the (evolving) crowd field, and so on.

The key construct of the proposed language is \emph{spreading},
denoted by syntax
\mbox{$\spreadThree{\e}{\fname}{\e_1,\myldots,\e_n}$}, where
\texttt{e} is called \emph{source} expression, and
$\progK{\fname}{\e_1,\myldots,\e_n}$ is called \emph{\diffusionText}
expression. In a \diffusionText\ expression the function 
$\fname$, which we call \emph{\diffusionText}, must be a pure function whose return type  and first argument type are the same. The symbol $\starK$
plays the role of a formal argument, hence the \diffusionText\
expression can be seen as the body of an anonymous, unary
function.
Viewed locally to a node, expression
$\e=\spreadThree{\e_0}{\fname}{\e_1,\myldots,\e_n}$ is evaluated at a given time
to value $\anyvalue$ as follows:
\begin{enumerate}
\item expressions $\e_0,\e_1,\myldots,\e_n$ are evaluated to
    values $\anyvalue_0,\anyvalue_1,\myldots,\anyvalue_n$;
\item the current values $\anyvaluebis_1,\myldots,\anyvaluebis_m$ of $\e$ in neighbours are
    gathered;
\item for each $\anyvaluebis_j$ in them, the \diffusionText\ function is applied as $\fname(\anyvaluebis_j,\anyvalue_1,\myldots,\anyvalue_n)$, giving value $\anyvaluebis'_j$;
\item the final result $\anyvalue$ is the minimum value among $\{\anyvalue_0,\anyvaluebis'_1,\myldots,\anyvaluebis'_m$\}: this value is made available to other nodes.
\end{enumerate}
Note that $\anyvalue\wfLeOf{\anytype}\anyvalue_0$, and if the
device is isolated then $\anyvalue=\anyvalue_0$.
 \FORGET{\begin{enumerate}
\item the current values of $\e$ in neighbours are
    gathered, and the minimum value is computed, call it
    $\anyvaluebis$;
\item expressions $\e_0,\e_1,\myldots,\e_n$ are evaluated to values $\anyvalue_0,\anyvalue_1,\myldots,\anyvalue_n$;
\item $\anyvalue$ is computed as $\textit{min}(\anyvalue_0,\fname(\anyvaluebis,\anyvalue_1,\myldots,\anyvalue_n))$, and is made available to other nodes.
\end{enumerate}
Note that $\anyvalue\wfLeOf{\anytype}\anyvalue_0$, and if the
device is isolated then $\anyvalue=\anyvalue_0$, as we assume
$\anyvaluebis$ is undefined.}
Viewed globally,
\mbox{$\spreadThree{\e_0}{\fname}{\e_1,\myldots,\e_n}$}
represents a field initially equal to $\e_0$; as time passes
some field values can decrease due to smaller values being
received from neighbours (after applying the diffusion
function).

The hop-count gradient created out of a \senv{src} sensor is hence simply defined as 
\[
\spreadtt{\senv{src}}{\startt{} + 1}
\]
assuming \senv{src} holds what we call a zero-field, namely, it is $0$ on source nodes and \topNumK{} everywhere else.
In this case \senv{src} is the source expression, and $\fname$ is unary successor function.

\subsection{Examples}\label{sec-examples}

As a reference scenario to ground the discussion, we can consider crowd steering in pervasive environments \cite{MVFFZ-MONET2012}: computational fields run on top of a myriad of small devices spread in the environment (including smartphones), and are used to guide people in complex environments (buildings, cities) towards points of interest (POIs) across appropriate paths.
There, a smartphone can perceive neighbour values of a gradient spread from a POI, and give directions towards smallest values so as to steer its owner and make him/her quickly descend the gradient \cite{tota}.
Starting from the hop-count gradient, various kinds of behaviour useful in crowd steering can be programmed, based on the definitions reported in Figure \ref{f:code}.
%

{\footnotesize \begin{figure}[t]
\[\begin{array}{|l|}
\hline
\deftt{\numNAME\ grad}{\numNAME\ i}{\spreadtt{i}{\startt{} + \senv{dist}}} \\[1mm]
\deftt{\numNAME\ restrict}{\numNAME\ i, bool \boolVar}{\boolVar\ ? i : \topNumK}\\[1mm]
\deftt{\numNAME\ restrictSum}{\numNAME\ x, \numNAME\ y, bool \boolVar}{restrict(x + y, \boolVar)}\\[1mm]
\deftt{\numNAME\ gradobs}{\numNAME\ i, bool \boolVar}{\spreadtt{i}{restrictSum(\startt{},\senv{dist}, \boolVar)}}\\[1mm]
\deftt{float gradbound}{\numNAME\ i, \numNAME\ z}{gradobs(i, grad(i) < z))}\\[1mm]
\hline
\end{array}\]
\caption{Definitions of examples}\label{f:code}\end{figure}}

The first function in Figure \ref{f:code} defines a more powerful gradient construct, called \texttt{grad}, which can be used to generalise over the hop-by-hop notion of distance: sensor \senv{dist} is assumed to exist that reifies an application-specific notion of distance as a positive number.
It can be $1$ everywhere to model hop-count gradient, or can vary from device to device to take into consideration contextual information.
For instance, it can be the output of a crowd sensor, leading to greater distances when/where crowded areas are perceived, so as to dynamically compute routes penalising crowded areas as in \cite{MVFFZ-MONET2012}.
In this case, note that diffusion function $\fname$ maps
$(\anyvalue_1,\anyvalue_2)$ to $\anyvalue_1+\anyvalue_2$.
Figure \ref{f:grad} (left) shows a pictorial representation, assuming devices are uniformly spread in a 2D environment: considering that an agent or data items move in the direction descending the values of a field, a gradient looks like a sort of uniform attractor towards the source, i.e., to the nearest source node.
It should be noted that when deployed in articulated environments, the gradient would stretch and dilate to accommodate the static/dynamic shape of environment, computing optimal routes.

\begin{figure}[t]
\begin{center}
\includegraphics[width=0.2\textwidth]{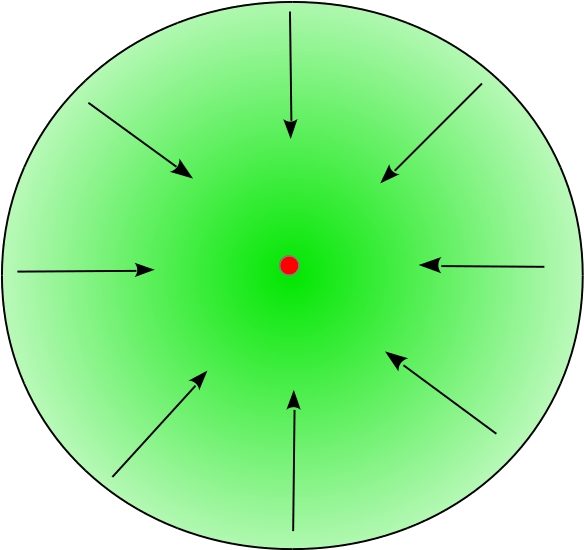}~~
\hspace{0.1\textwidth}
\includegraphics[width=0.2\textwidth]{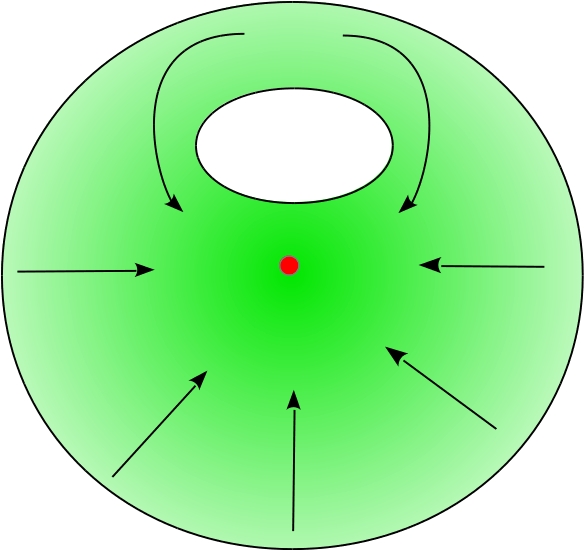}~~
\hspace{0.1\textwidth}
\includegraphics[width=0.22\textwidth]{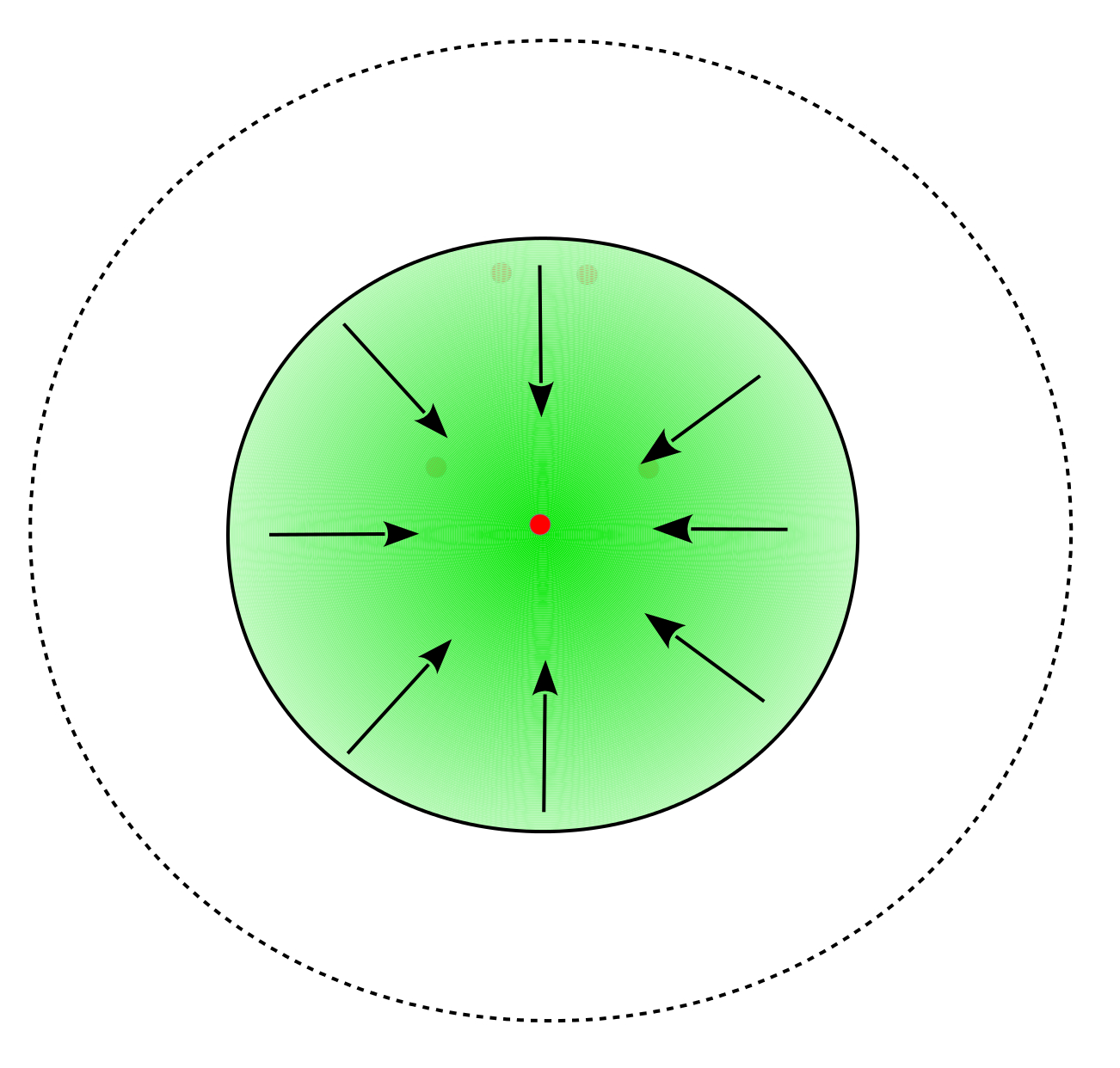}~~
\caption{Pictorial representation of  hop-count gradient field (left), a gradient circumventing ``crowd'' obstacles field (center), and a gradient with bounded distance (right)}\label{f:grad}
\end{center}
\end{figure}

By suitably changing the \diffusionText\ function, it is also
possible to block the diffusion process of gradients, as shown
in function \texttt{gradobs}: there, by
restriction we turn the gradient value to \topNumK{} in nodes
where the ``obstacle'' boolean field \boolVar holds \texttt{FALSE}.
This can be used to completely circumvent obstacle areas, as
shown in Figure~\ref{f:grad} (center).
Note that we here refer to a ``blocking'' behaviour, since sending a \topNumK{} value has no effect on the target because of the semantics of spreading; hence, an optimised implementation could simply avoid sending a \topNumK{} at all, so as not to flood the entire network.
This pattern is useful whenever steering people in environments with prohibited areas---e.g. road construction in a urban scenario.

Finally, by a different blocking mechanism we can limit the propagation distance of a gradient, as shown by function \texttt{gradbound}
and Figure \ref{f:grad} (right): the second argument \texttt{z} imposes a numerical bound to the distance, which is applied by exploiting the functions \texttt{gradobs} and \texttt{grad}.

In section~\ref{sec-calculus-extended} we will exploit the pair data structure to program more advanced examples of
behaviour useful in crowd steering.

\section{The Calculus of Self-Stabilising Computational Fields}\label{sec-calculus}

After informally introducing the proposed calculus in previous section, we now provide a formal account of it and precisely state the self-stabilisation property.
Namely, we formalise and illustrate by means of examples the type system (in Section~\ref{sec-typing}),  the operational semantics  (in Section~\ref{sec-operational-semantics}),  and the self-stabilisation property (in Section~\ref{sec:self-stabilisation}).

\subsection{Type checking}\label{sec-typing}

The syntax of the calculus is reported in Figure~\ref{fig:source:syntax}. As a standard syntactic notation in calculi for object-oriented and functional
languages~\cite{FJ}, we use the overbar notation to denote
metavariables over lists, e.g., we let $\overline \e$ range
over lists of expressions, written $\e_1\;\e_2\;\myldots\;\e_n$, and similarly for $\overline \xname$, $\overline \anytype$ and so on.
We write  $\signatureOf{\fname}$ to denote the type-signature $\anytype(\overline{\anytype})$ of $\fname$ (which specifies the type $\anytype$ of the result and the types $\overline{\anytype}=\anytype_1,\myldots,\anytype_n$ of the $n\ge 0$ arguments of $\fname$). We assume that the mapping $\signatureOf{\cdot}$ associates a type-signature to each built-in function and, for user-defined functions,  returns the type-signature specified in the function definition.
%
%

A program $\PROGRAM$ in our language is a mapping from function
names to function definitions, enjoying the following
\emph{sanity conditions}: \emph{(i)} $\PROGRAM(\dname)
= \defKK{\dname}{\cdots}{\cdots}{\cdots}$ for every
$\dname\in\domOf{\PROGRAM}$; \emph{(ii)} for every
function name $\dname$ appearing anywhere in $\PROGRAM$, we
have $\dname\in\domOf{\PROGRAM}$; \emph{(iii)} there
are no cycles in the function call graph (i.e., there are no
recursive functions in the program); and \emph{(iv)}
$\main\in\domOf{\PROGRAM}$ and it has zero arguments.
A program that does not contain the $\main$ function is called a \emph{library}.

The type system we provide aims to guarantee that no run-time error may arise during evaluation: its typing rules are given in Figure~\ref{fig:SurfaceTyping}.
 \begin{figure}[!t]{
 \framebox[1\textwidth]{
 $\begin{array}{l}
\textbf{Expression type checking:} \hfill
  \boxed{\surExpTypJud{\SurTypEnv}{\e}{\anytype}}
  \\
\begin{array}{c}
\nullsurfaceTyping{T-VAR}{
\surExpTypJud{\SurTypEnv,\xname:\anytype}{\xname}{\anytype}
}
\qquad\qquad
\nullsurfaceTyping{T-SNS}{
\surExpTypJud{\SurTypEnv}{\snsname}{\typeof{\snsname}}
}
\qquad\qquad
\nullsurfaceTyping{T-GVAL}{
\surExpTypJud{\SurTypEnv}{\groundvalue}{\typeof{\groundvalue}}
}
\skiptransition
\surfaceTyping{T-COND}{  \quad
\surExpTypJud{\SurTypEnv}{\e_0}{\boolType}
\quad
\surExpTypJud{\SurTypEnv}{\e_1}{\anytype}
\quad
\surExpTypJud{\SurTypEnv}{\e_2}{\anytype}
}{ \surExpTypJud{\SurTypEnv}{ \condExpr{\e_0}{\e_1}{\e_2}}{\anytype} }
\skiptransition
\surfaceTyping{T-FUN}{  \quad
\anytype(\overline{\anytype})=\signatureOf{\fname} \quad \surExpTypJud{\SurTypEnv}{\overline{\e}}{\overline{\anytype}}
}{ \surExpTypJud{\SurTypEnv}{\fname(\overline{\e})}{\anytype} }
\qquad
\surfaceTyping{T-SPR}{ \quad
\progressionOpOf{\fname}
\quad
\surExpTypJud{\SurTypEnv}{\fname(\e,\overline{\e})}{\anytype} }{
\surExpTypJud{\SurTypEnv}{\spreadThree{\e}{\fname}{\overline{\e}}}{\anytype} }
\skiptransition
%
%
%
\end{array}
\\
\textbf{User-defined function type checking:} \hfill
  \boxed{\surFunTypJud{}{\FUNCTION}{\anytype(\overline{\anytype})}}
  \\
\qquad\begin{array}{c}
\surfaceTyping{T-DEF}{ \quad
\surExpTypJud{\overline{\xname}:\overline{\anytype}}{\e}{\anytype}
}{ \surFunTypJud{}{\defK \; \anytype \;\dname
(\overline{\anytype\;\xname}) = \e}{\anytype(\overline{\anytype})} }
\end{array}
\end{array}$}
} 
\caption{Type-checking rules for expressions and function definitions} \label{fig:SurfaceTyping}
\end{figure}
\emph{Type environments}, ranged over by $\SurTypEnv$ and written $\overline{\xname}:\overline{\anytype}$, contain type assumptions for program variables.
The type-checking judgement for expressions is of the form $\surExpTypJud{\SurTypEnv}{\e}{\anytype}$, to be read: $\e$ has type $\anytype$ under the type assumptions
$\SurTypEnv$ for the program variables occurring in $\e$.
As a standard syntax in type systems~\cite{FJ}, given
$\overline{\xname}=\xname_1,\myldots,\xname_n$,
$\overline{\anytype}=\anytype_1,\myldots,\anytype_n$ and
$\overline{\e}=\e_1,\myldots,\e_n$ ($n\ge 0$), we write
$\overline{\xname}:\overline{\anytype}$ as short for
$\xname_1:\anytype_1,\myldots,\xname_n:\anytype_n$, and
$\surExpTypJud{\SurTypEnv}{\overline{\e}}{\overline{\anytype}}$
as short for $\surExpTypJud{\SurTypEnv}{\e_1}{\anytype_1}$
$\cdots$ $\surExpTypJud{\SurTypEnv}{\e_n}{\anytype_n}$.

Type checking of variables, sensors, ground values, 
conditionals, and function applications are almost standard. In particular, values and sensors and built-in functions are given a type by construction: the mapping $\typeof{\cdot}$ associates  a sort to each ground value and to each sensor, while rule \ruleNameSize{[T-FUN]} exploits the mapping $\signatureOf{\cdot}$.

\begin{exa}\label{exa:source-instance}
Figure~\ref{fig:source-instance} illustrates the ground types,  sensors, and built-in functions used in the examples introduced throughout the paper.
\end{exa}

 \begin{figure}[!t]{
 \framebox[1\textwidth]{
 $\begin{array}{@{\hspace{-6cm}}l}
\textbf{Ground types:}
  \\
\begin{array}{lclr}
\groundType & = & \boolType   \; \BNFmid \;   \numType 
\end{array}
\\
\\
\textbf{Sensor types:} 
  \\
\begin{array}{lcl}
\typeof{\texttt{\#src}} & = & \numType  
\\
\typeof{\texttt{\#dist}} & = & \numType  
\end{array}
\\
\\
\textbf{Built-in function type-signatures:} 
  \\
\begin{array}{lclr}
\signatureOf{\notK} & = & \boolType(\boolType)   & 
\\
\signatureOf{\orK} & = &  \boolType(\boolType,\boolType)   & 
\\
\signatureOf{-} & = & \numType(\numType)   & 
\\
\signatureOf{+} & = &  \numType(\numType,\numType)   & 
\\
\signatureOf{=} & = &  \boolType(\numType,\numType)   & 
\\
\signatureOf{<} & = &  \boolType(\numType,\numType)   & 
\end{array}
\end{array}$}
} 
\caption{Ground types, types for sensors, and type-signatures for built-in functions used in the examples} \label{fig:source-instance}
\end{figure}
The only ad-hoc type checking is provided for spreading expressions $\spreadThree{\e}{\fname}{\overline \e}$: they are given the type of $\fname(\e,\overline{\e})$, though the function $\fname$ must be a \emph{\diffusionText}, according to the following definition.

\begin{defi}[\DiffusionText]\label{def:Progression}
A type signature $\anytype(\overline{\anytype})$ with $\overline{\anytype}=\anytype_1,\myldots,\anytype_n$ $(n\ge 1)$ is a \emph{diffusion type signature} (notation $\progressionOpOf{\anytype(\overline{\anytype})}$) if $\anytype=\anytype_1$.
A pure function $\fname$ is a  \emph{\diffusionText} (notation $\progressionOpOf{\fname}$) if its type signature $\signatureOf{\fname} $  is a diffusion type signature.
\end{defi}

\begin{exa}\label{exa:diffusion}
Consider the functions defined in Figure~\ref{f:code}. The following two predicates hold:
\begin{itemize}
\item
 $\progressionOpOf{\texttt{+}}$, and
\item
 $\progressionOpOf{\texttt{restrictSum}}$
\end{itemize}
\end{exa}

Function type checking, represented by judgement
$\surFunTypJud{\SurTypEnv}{\FUNCTION}{\anytype(\overline{\anytype})}$, is standard.
In the following we always consider a \emph{well-typed}
program (or library) $\PROGRAM$, to mean that all the function declarations
in $\PROGRAM$ type check.
Note that no choice may be done when building a derivation for a given  type-checking judgment, so the type-checking rules straightforwardly describe a type-checking algorithm.

\begin{exa}
The library in Figure~\ref{f:code}  type checks by using the ground types, sensors, and type-signatures for built-in functions  in Figure~\ref{fig:source-instance}.
\end{exa}

\subsection{Operational Semantics}\label{sec-operational-semantics}

In this section we fomalise the operational semantics of the calculus. 
As for the field calculus~\cite{VDB-FOCLASA-CIC2013} and the Proto language~\cite{mitproto},
devices undergo computation
in rounds.
In each round, a device sleeps for some time, wakes up,
gathers information about messages received from neighbours while sleeping, evaluates the program, and finally broadcasts a message to
neighbours with information about the outcome of evaluation and 
goes back to sleep.
The scheduling of such rounds across the network is fair and
non-synchronous.
The structure of the network may change over time: when a device is sleeping its neighborhood may change and the device itself  may disappear (switch-off) and subsequently appear (switch-on). 
We 
first focus on single-device computations (in Section~\ref{sec-device-comp}) and then on whole
network evolution (in Section~\ref{sec-network-comp}).

\subsubsection{Device Computation}\label{sec-device-comp}

In the following, we let meta-variable $\deviceId$ 
range over the denumerable set $\deviceIdSet$ of \emph{device
identifiers}, meta-variable $\devset$ over finite sets of such devices, meta-variables $\anyvaluealt$, $\anyvalue$ and $\anyvaluebis$ over values.
Given a finite nonempty set $V\subseteq\semOf{\anytype}$ we denote by $\lowerBound V$ its minimum
element, and write  $\anyvalue\wedge\anyvalue'$ as short for $\lowerBound\{\anyvalue,\anyvalue'\}$.

In order to simplify the notation, we shall assume a fixed program $\PROGRAM$ and write $\e_{\main}$ to denote the body of the \texttt{main} function.
We say that ``device $\deviceId$ \emph{fires}'', to mean that expression $\e_{\main}$ is evaluated on device $\deviceId$.
The result of the evaluation is a \emph{value-tree}, which is an ordered tree of values, tracking the value of any evaluated subexpression.
Intuitively, such an evaluation is performed against the most recently received
value-trees of current neighbours and the current value of sensors, and
produces as result a new value-tree that is broadcasted to current  neighbours for their
firing.
Note that considering  simply a value (instead of a value-tree) as the outcome of the evaluation $\e_{\main}$ on a device $\deviceId$ would not be enough,
since the evaluation of each spreading expression $\e$ occurring in $\e_{\main}$ requires the values (at the root of their sub-value-trees) produced by the most recent evaluation of $\e$ on neighbours of $\deviceId$  (c.f.\ Sect.~\ref{sec-fields}).\footnote{Any implementation
might massively compress the value-tree, storing only enough
information for tracking the values of spreading expressions.}

The syntax of value-trees is given in Figure~\ref{fig:deviceSemantics}, together with the definition of the auxiliary functions $\conclusionOf{\cdot}$ and $\premiseNumOf{i}{\cdot}$ for extracting the root value and the $i$-th subtree of a value-tree, respectively---also the extension of these functions to sequences of value-environments $\overline{\vtree}$ is defined. We sometimes abuse the notation writing a value-tree with just the root as $\anyvalue$ instead of $\anyvalue()$.
The state of sensors $\senstate$ is a map from sensor names to values, modelling the inputs received from the external world.
This is written $\genmap{\overline\snsname}{\overline\anyvalue}$ as an abuse of notation to mean $\genmap{\snsname_1}{\anyvalue_1},\myldots,\genmap{\snsname_n}{\anyvalue_n}$.
We shall assume that it is complete (it has a mapping for any sensor used in the program), and correct (each sensor $\snsname$ has a type written $\typeof{\snsname}$, and is mapped to a value of that type).
For this map, and for the others to come, we shall use the following notations: $\senstate(\snsname)$ is used to extract the value that $\snsname$ is mapped to, $\mapupdate{\senstate}{\senstate'}$ is the map obtained by updating $\senstate$ with all the associations $\genmap{\snsname}{\anyvalue}$ of $\senstate'$ which do not escape the domain of $\senstate$ (namely, only those such that $\senstate$ is defined for $\snsname$).

The computation that takes place on a single device is formalised  by the big-step operational semantics rules given in
Figure~\ref{fig:deviceSemantics}. The derived judgements are of the form $\bsopsem{\senstate}{\overline\vtree}{\e}{\vtree}$,
to be read ``expression $\e$ evaluates to  value-tree $\vtree$
on sensor state $\senstate$ and w.r.t.\ the value-trees
$\overline\vtree$'', where:
\begin{itemize}
\item $\senstate$ is the current sensor-value map, modelling the inputs received from the external world;
\item $\overline\vtree$ is the list of the value-trees produced by the most recent evaluation of $\e$ on the
    current device's neighbours;
\item $\e$ is the closed expression to be evaluated;
\item the value-tree $\vtree$  represents the values
    computed for all the expressions encountered during the
    evaluation of $\e$--- in particular $\vrootOf{\vtree}$
    is the local value of field expression $\e$.
\end{itemize}

\begin{figure}[!t]{
 \framebox[1\textwidth]{
 $\begin{array}{l}
\textbf{Value-trees and sensor-value maps:}\\
\begin{array}{rcl@{\hspace{8.5cm}}r}
%
%
\vtree,\vtreealt & \BNFcce &  \anyvalue(\overline{\vtree})    &   {\mbox{\footnotesize value-tree}} \\[2pt]
\senstate & \BNFcce &  \genmap{\overline\snsname}{\overline\anyvalue}    &   {\mbox{\footnotesize sensor-value map}} \\
%
\end{array}\\
\hline\\[-8pt]
\textbf{Auxiliary functions:}\\
\begin{array}{l@{\hspace{1.5cm}}l}
\conclusionOf{\anyvalue(\overline{\vtree})}  =   \anyvalue
                         &    \premiseNumOf{i}{\anyvalue(\vtree_1,\myldots,\vtree_n)}  =   \vtree_i
\\
\conclusionOf{\vtree_1,\myldots,\vtree_n}  =   \conclusionOf{\vtree_1},\myldots,\conclusionOf{\vtree_n}
                         &
\premiseNumOf{i}{\vtree_1,\myldots,\vtree_n}  = \premiseNumOf{i}{\vtree_1},\myldots,\premiseNumOf{i}{\vtree_n}
\\
\end{array}\\
\hline\\[-8pt]
\textbf{Rules for expression evaluation:} \hfill
  \boxed{\bsopsem{\senstate}{\overline\vtree}{\e}{\vtree}}
  \\
\begin{array}{c}
\nullsurfaceTyping{E-SNS}{
\bsopsem{\senstate}{\overline\vtree}{\snsname}{\senstate(\snsname)}
}
\qquad\qquad
\nullsurfaceTyping{E-VAL}{
\bsopsem{\senstate}{\overline\vtree}{\anyvalue}{\anyvalue}
}
\skiptransition
\surfaceTyping{E-COND}{\\[-1.5mm]
  \bsopsem{\senstate}{\premiseNumOf{1}{\overline\vtree}}{\e_1}{\vtreealt_1}
    \quad
    \cdots
    \quad
    \bsopsem{\senstate}{\premiseNumOf{3}{\overline\vtree}}{\e_3}{\vtreealt_3}
    \qquad
    \anyvalue=\left\{\begin{array}{ll}
     \vrootOf{\vtreealt_2} & \mbox{if $\vrootOf{\vtreealt_1}=\trueK$}
     \\
     \vrootOf{\vtreealt_3} & \mbox{if $\vrootOf{\vtreealt_1}=\falseK$}
    \end{array}\right.
 }
{
\bsopsem{\senstate}{\overline\vtree}{ \condExpr{\e_1}{\e_2}{\e_3}}{\anyvalue(\vtreealt_1,\vtreealt_2,\vtreealt_3)}
}

\skiptransition
\surfaceTyping{E-BLT}{\\
  \bsopsem{\senstate}{\premiseNumOf{1}{\overline\vtree}}{\e_1}{\vtreealt_1}
  \quad
    \cdots
    \quad
    \bsopsem{\senstate}{\premiseNumOf{n}{\overline\vtree}}{\e_n}{\vtreealt_n}
    \qquad
    \anyvalue=\opApply{\opFunOf{\oname}}{\vrootOf{\vtreealt_1},\myldots,\vrootOf{\vtreealt_n}}
 }
{
\bsopsem{\senstate}{\overline\vtree}{\oname(\e_1,\myldots,\e_n)}{\anyvalue(\vtreealt_1,\myldots,\vtreealt_n)}
}
\skiptransition
\surfaceTyping{E-DEF}{
          \\
    \defK \; \anytype \;\dname(\anytype_1\;\xname_1,\myldots,\anytype_n\;\xname_n) = \e
    \qquad
  \bsopsem{\senstate}{\premiseNumOf{1}{\overline\vtree}}{\e_1}{\vtree'_1}
  \quad
    \cdots
    \quad
    \bsopsem{\senstate}{\premiseNumOf{n}{\overline\vtree}}{\e_n}{\vtree'_n}
 \\
     \bsopsem{\senstate}{\premiseNumOf{n+1}{\overline\vtree}}{\applySubstitution{\e}{\substitution{\xname_1}{\vrootOf{\vtree'_1}},\myldots,\substitution{\xname_n}{\vrootOf{\vtree'_n}}}}{\anyvalue(\overline\vtreealt)}
 }{
\bsopsem{\senstate}{\overline\vtree}{\dname(\e_1,\myldots,\e_n)}{\anyvalue(\vtree'_1,\myldots,\vtree'_n,\anyvalue(\overline\vtreealt))}
}
\skiptransition
\surfaceTyping{E-SPR}{
\\
\bsopsem{\senstate}{\premiseNumOf{0}{\overline\vtree}}{\e_0}{\vtreealt_0}
  \quad
    \cdots
    \quad
    \bsopsem{\senstate}{\premiseNumOf{n}{\overline\vtree}}{\e_n}{\vtreealt_n} \\
    \conclusionOf{\vtreealt_0,\myldots,\vtreealt_n} = \anyvalue_0 \myldots \anyvalue_n
    \qquad
    \conclusionOf{\overline\vtree} = \anyvaluebis_1 \myldots \anyvaluebis_m
    \\
    \bsopsem{\senstate}{\emptyset}{\fname(\anyvaluebis_1,\anyvalue_1,\myldots,\anyvalue_n)}{\anyvaluealt_1(\cdots)}
    \quad
    \cdots
    \quad
    \bsopsem{\senstate}{\emptyset}{\fname(\anyvaluebis_m,\anyvalue_1,\myldots,\anyvalue_n)}{\anyvaluealt_m(\cdots)}
    }
{
\bsopsem{\senstate}{\overline\vtree}{\spreadThree{\e_0}{\fname}{\e_1,\myldots,\e_n}}{\lowerBound\{\anyvalue_0,\anyvaluealt_1,\myldots,\anyvaluealt_m\}(\vtreealt_0,\vtreealt_1,\myldots,\vtreealt_n)}
}
\end{array}
\end{array}$}
} 
 \caption{Big-step operational semantics for expression evaluation} \label{fig:deviceSemantics}
\end{figure}

The rules of the operational semantics are \emph{syntax
directed}, namely, the rule used for deriving a judgement
$\bsopsem{\senstate}{\overline\vtree}{\e}{\vtree}$ is univocally
determined by $\e$ (cf.\ Figure~\ref{fig:deviceSemantics}).
Therefore, the shape of the value-tree $\vtree$ is univocally
determined by $\e$, and the whole value-tree is univocally
determined by $\senstate$, $\overline\vtree$, and $\e$.

The rules of the operational semantics are almost standard,
with the exception that rules \mbox{\ruleNameSize{[E-COND]}}, \ruleNameSize{[E-BLT]},
\ruleNameSize{[E-DEF]} and \ruleNameSize{[E-SPR]} use the
 auxiliary function $\premiseNumOf{i}{\cdot}$ to ensure that, in
the judgements in the premise of the rule, the value-tree
environment is aligned with the expression to be evaluated. Note that the semantics of conditional expressions prescribes that both the branches of the conditional are evaluated.\footnote{Our calculus does not model the \emph{domain restriction} construct in~\cite{VDB-FOCLASA-CIC2013,mitproto}.}

The most important rule is \ruleNameSize{[E-SPR]} which handles spreading expressions formalising the description provided in Section~\ref{sec-basicIngredients}.
It first recursively evaluates expressions $\e_i$ to value-trees $\vtreealt_i$ (after proper alignment of value-tree environment by operator $\premiseNumOf{i}{.}$) with top-level values $\anyvalue_i$.
Then it gets from neighbours their values $\anyvaluebis_j$ for the spreading expression, and for each of them $\fname$ is evaluated giving top-level result $\anyvaluealt_j$.
The resulting value is then obtained by the minimum among $\anyvalue_0$ and the values $\anyvaluealt_j$ (which equates to $\anyvalue_0$ if there are currently no neighbours). Note that, since  in
a spreading expression
$\spreadThree{\e_0}{\fname}{\e_1,\myldots,\e_n}$ the function $\fname$ must be a diffusion and diffusions are pure
functions  (cf. Section~\ref{sec-typing}), only the root of the
value-tree produced  by the evaluation of the application of $\fname$ to the values of $\e_1$, $\ldots$, $\e_n$ must be stored (c.f.\ the conclusion of rule
\ruleNameSize{[E-SPR]}). We will in the following provide a network semantics taking care of associating to each device the set of neighbour trees against which it performs a computation round, namely, connecting this operational semantics to the actual network topology.

\begin{exa}[About device semantics]\label{exa:DeviceSemantics:hop-count gradient}
Consider the program $\PROGRAM$:
\[
\texttt{def \numNAME\ main() is
\spreadtt{\senv{src}}{\startt{} + \senv{dist}}}, 
\]
where
\senv{src} and \senv{dist} are sensors of type \numType, and \texttt{+} is the built-in
sum operator which has type-signature \texttt{\numNAME(\numNAME,\numNAME)}.

The evaluation of  $\e_{\main}=\spreadtt{\senv{src}}{\startt{} + \senv{dist}}$ on a device $\deviceId_1$ when
\begin{itemize}
\item
the current sensor-value map  for $\deviceId_1$ is $\senstate_1$ such that  $\senstate_1(\senv{src})=0$ and  $\senstate_1(\senv{dist})=1$, and
\item
$\deviceId_1$ has currently no neighbours,
\end{itemize}
(expressed by the judgement $\bsopsem{\senstate_1}{\emptyset}{\e_{\main}}{\vtree_1}$) yields  the value-tree $\vtree_1=0(0,1)$ by rule \mbox{\ruleNameSize{[E-SPR]}}, since: $n=1$; the evaluation of 
$\e_0=\senv{src}$ yields $\vtreealt_0=0()$ (by rule \mbox{\ruleNameSize{[E-SNS]}});  the evaluation of 
$\e_1=\senv{dist}$ yields $\vtreealt_1=1()$ (by rule \mbox{\ruleNameSize{[E-SNS]}});  $m=0$; and $\lowerBound\{0\}=0$.

Similarly, the evaluation of  $\e_{\main}$ on a device $\deviceId_2$ when
\begin{itemize}
\item
the current sensor-value map  for $\deviceId_2$ is $\senstate_2$ such that  $\senstate_2(\senv{src})=8$ and  $\senstate_2(\senv{dist})=1$, and
\item
$\deviceId_2$ has currently no neighbours,
\end{itemize}
(expressed by the judgement $\bsopsem{\senstate_2}{\emptyset}{\e_{\main}}{\vtree_2}$) yields  the value-tree $\vtree_2=8(8,1)$ by rule \mbox{\ruleNameSize{[E-SPR]}}, since: $n=1$; the evaluation of 
$\e_0=\senv{src}$ yields $\vtreealt_0=8()$ (by rule \mbox{\ruleNameSize{[E-SNS]}});  the evaluation of 
$\e_1=\senv{dist}$ yields $\vtreealt_1=1()$ (by rule \mbox{\ruleNameSize{[E-SNS]}});  $m=0$; and $\lowerBound\{8\}=8$.

Then, the evaluation of  $\e_{\main}$ on a device $\deviceId_3$ when
\begin{itemize}
\item
the current sensor-value map  for $\deviceId_3$ is $\senstate_3$ such that  $\senstate_3(\senv{src})=4$ and  $\senstate_3(\senv{dist})=1$, and
\item
$\deviceId_3$ has  neighbours $\deviceId_1$ and $\deviceId_2$,
\end{itemize}
(expressed by the judgement $\bsopsem{\senstate_3}{\vtree_1\vtree_2}{\e_{\main}}{\vtree_3}$) yields  the value-tree $\vtree_3=1(4,1)$ by rule \mbox{\ruleNameSize{[E-SPR]}}, since: $n=1$; the evaluation of 
$\e_0=\senv{src}$ yields $\vtreealt_0=4()$ (by rule \mbox{\ruleNameSize{[E-SNS]}});  the evaluation of 
$\e_1=\senv{dist}$ yields $\vtreealt_1=1()$ (by rule \mbox{\ruleNameSize{[E-SNS]}});  $m=2$; the evaluation of 
$0+1$ yields $1(0(),1())$ (by rule \mbox{\ruleNameSize{[E-BLT]}});  the evaluation of 
$8+1$ yields $9(0(),8())$ (by rule \mbox{\ruleNameSize{[E-BLT]}});
and $\lowerBound\{4,1,9\}=1$.
\end{exa}

\subsubsection{Network Evolution}\label{sec-network-comp}

\begin{figure}[!t]{
 \framebox[1\textwidth]{
 $\begin{array}{l}
 \textbf{System configurations and action labels:}\\
\begin{array}{lcl@{\hspace{9cm}}r}
\Field & \BNFcce &  \genmap{\overline\deviceId}{\overline\vtree}    &   {\mbox{\footnotesize computational field}} \\
\Topo & \BNFcce &  \genmap{\overline\deviceId}{\overline\devset}    &   {\mbox{\footnotesize topology}} \\
\Sens & \BNFcce &  \genmap{\overline\deviceId}{\overline\senstate}    &   {\mbox{\footnotesize sensors-map}} \\
\Envi & \BNFcce &  \EnviS{\Topo}{\Sens}    &   {\mbox{\footnotesize environment}} \\
\Cfg & \BNFcce &  \SystS{\Envi}{\Field}    &   {\mbox{\footnotesize network configuration}} \\
\act & \BNFcce &  \deviceId \;\BNFmid\; \envact    &   {\mbox{\footnotesize action label}} \\
\end{array}\\
\hline\\[-8pt]
\textbf{Environment well-formedness:}\\
\begin{array}{l}
\wfn{\EnviS{\Topo}{\Sens}} \textrm{~~holds if $\Topo,\Sens$ have same domain, and $\Topo$'s values do not escape it.}
\\
\end{array}\\
\hline\\[-8pt]
\textbf{Transition rules for network evolution:} \hfill
  \boxed{\nettran{\Cfg}{\act}{\Cfg}}
  \\[0.2cm]
\vspace{0.5cm}
\begin{array}{c}
\netopsemRule{N-FIR}
                 {\qquad \Envi=\EnviS{\Topo}{\Sens} \qquad \Topo(\deviceId)=\overline\deviceId \qquad \bsopsem{\Sens(\deviceId)}{\Field(\overline\deviceId)}{\emain}{\vtree}}
                 {\nettran{\SystS{\Envi}{\Field}}{\deviceId}{\SystS{\Envi}{\mapupdate{\Field}{\genmap{\deviceId}{\vtree}}}}
                 }
\skiptransition
\netopsemRule{N-ENV}
                 {\\ \wfn{\Envi'}\qquad \Envi'=\EnviS{\Topo}{\genmap{\deviceId_1}{\senstate_1},\myldots,\genmap{\deviceId_n}{\senstate_n}}\\
                  \bsopsem{\senstate_1}{\emptyset}{\emain}{\vtree_1}\quad\cdots\quad\bsopsem{\senstate_n}{\emptyset}{\emain}{\vtree_n} \qquad
                  \Field_0=\genmap{\deviceId_1}{\vtree_1},\myldots,\genmap{\deviceId_n}{\vtree_n}
                 }
                 {\nettran{\SystS{\Envi}{\Field}}{\envact}{\SystS{\Envi'}{\mapupdate{\Field_0}{\Field}}}
                 }\\[-10pt]
\end{array}\\
%
%
%
\end{array}$}
} 
 \caption{Small-step operational semantics for network evolution} \label{fig:networkSemantics}
\end{figure}

We now provide an operational semantics for the evolution of whole networks, namely,
for modelling the distributed evolution of computational fields over time.
Figure \ref{fig:networkSemantics} (top) defines key syntactic elements to this end.
$\Field$ models the overall computational field (state), as a map from device identifiers to value-trees.
$\Topo$ models \emph{network topology}, namely, a directed neighbouring graph, as a map from device identifiers to set of identifiers.
$\Sens$ models \emph{sensor (distributed) state}, as a map from device identifiers to (local) sensors (i.e., sensor name/value maps).
Then, $\Envi$ (a couple of topology and sensor state) models the system's environment.
So, a whole network configuration $\Cfg$ is a couple of a field and environment.

We define network operational semantics in terms of small-steps transitions of the kind $\nettran{\Cfg}{\act}{\Cfg'}$, where $\act$ is either a device identifier in case it represents its firing, or label $\envact$ to model any environment change.
This is formalised by the two rules in Figure \ref{fig:networkSemantics} (bottom).
Rule \ruleNameSize{[N-FIR]} models a network evolution due to a computation round (firing) at device $\deviceId$: it reconstructs the proper local environment, taking local sensors ($\Sens(\deviceId)$) and accessing the value-trees of $\deviceId$'s neighbours;\footnote{The operational semantics abstracts from the details of message broadcast from/to neighbours: the most recent value-trees received by a device $\deviceId$  from its neighbours while it was sleeping are identified with the value-trees associated to the neighbours of the device $\deviceId$ when it fires.} then by the single device semantics we obtain the device's value-tree $\vtree$, which is used to update system configuration.
Rule \ruleNameSize{[N-ENV]} models a network evolution due to change of the environment $\Envi$ to an arbitrarily new well-formed environment $\Envi'$---note that this encompasses both neighborhood change and addition/removal of devices. Let $\deviceId_1,\myldots,\deviceId_n$ be the domain of $\Envi'$. We first construct a field $\Field_0$ associating to all the devices of $\Envi'$ the default value-trees $\vtree_1,\myldots,\vtree_n$ obtained by making devices perform an evaluation with no neighbours and sensors as of $\Envi'$. Then, we adapt the existing field $\Field$ to the new set of devices: $\mapupdate{\Field_0}{\Field}$ automatically handles removal of devices, map of new devices to their default value-tree, and retention of existing value-trees in the other devices.

\begin{exa}[About network evolution]\label{exa:NetworkEvolution:hop-count gradient}
Consider a network of devices running the program $\PROGRAM$ of Example~\ref{exa:DeviceSemantics:hop-count gradient}.
The initial situation, when (the functionality associated to) the program $\PROGRAM$ is switched-off on all the devices, is modelled by the
empty network configuration $\SystS{\EnviS{\genmap{\emptyset}{\emptyset}}{\genmap{\emptyset}{\emptyset}}}{\genmap{\emptyset}{\emptyset}}$.

The network evolution representing the fact that the environment evolves because device $\deviceId_3$ switches-on  when 
its sensor  $\senv{src}$ perceives value $0$ and  its sensor $\senv{dist}$ perceives value $1$ (and gets initialised to the  value-tree obtained by firing with respect to the empty set of neighbourhs),  is modelled (according to rule \ruleNameSize{[N-ENV]})  by the reduction step 
$\nettran{\SystS{\EnviS{\genmap{\emptyset}{\emptyset}}{\genmap{\emptyset}{\emptyset}}}{\genmap{\emptyset}{\emptyset}}}{\envact}{\SystS{\EnviS{\genmap{\deviceId_3}{\emptyset}}{\genmap{\deviceId_3}{\senstate_1}}}{\genmap{\deviceId_3}{\vtree_1}}}$, where the sensor-value mapping $\senstate_1$ and the value-tree $\vtree_1$ are those introduced in Example~\ref{exa:DeviceSemantics:hop-count gradient}.

Then, the network evolution representing the fact that the environment evolves is as follows:
\begin{itemize}
\item
the device $\deviceId_1$ switches-on  when 
its sensor  $\senv{src}$ perceives value $0$ and  its sensor $\senv{dist}$ perceives value $1$,
and  has only $\deviceId_3$ as neighbour;
 \item
the device $\deviceId_2$ switches-on  when 
its sensor  $\senv{src}$ perceives value $4$ and  its sensor $\senv{dist}$ perceives value $1$, and has no neighbours, and
 \item
on device $\deviceId_3$ sensor  $\senv{src}$ perceives value $8$, sensor $\senv{dist}$ perceives value $1$, and
$\deviceId_3$ has $\deviceId_1$ and  $\deviceId_2$ as neighbours,
\end{itemize}
is modelled (according to rule \ruleNameSize{[N-ENV]})  by the reduction step
$\nettran{\SystS{\EnviS{\genmap{\deviceId_3}{\emptyset}}{\genmap{\deviceId_3}{\senstate_1}}}{\genmap{\deviceId_3}{\vtree_1}}}{\envact}{\SystS{\EnviS{\Topo}{\Sens}}{\Field}}$, where
\[
\Topo=\genmap{\deviceId_1}{\{\deviceId_3\}},\genmap{\deviceId_2}{\emptyset},\genmap{\deviceId_3}{\{\deviceId_1,\deviceId_2\}},
\quad
\Sens=\genmap{\deviceId_1}{\senstate_1},\genmap{\deviceId_2}{\senstate_2},\genmap{\deviceId_3}{\senstate_3}, 
\quad
\Field=\genmap{\deviceId_1}{\vtree_1},\genmap{\deviceId_2}{\vtree_2},\genmap{\deviceId_3}{\vtree_1}
\]
and the sensor-value mapping $\senstate_2$, $\senstate_3$ and the value-tree $\vtree_2$ are those introduced in Example~\ref{exa:DeviceSemantics:hop-count gradient}.

Finally, the network evolution representing the fact that device $\deviceId_3$ fires 
is modelled (according to rule \ruleNameSize{[N-FIR]})  by the reduction step
$\nettran{\SystS{\EnviS{\Topo}{\Sens}}{\Field}}{\deviceId_3}{\SystS{\EnviS{\Topo}{\Sens}}{\genmap{\deviceId_1}{\vtree_1},\genmap{\deviceId_2}{\vtree_2},\genmap{\deviceId_3}{\vtree_3}}}$, where the value-tree $\vtree_2$ is that introduced in Example~\ref{exa:DeviceSemantics:hop-count gradient}.
\end{exa}

\subsection{The  Self-stabilisation Property}\label{sec:self-stabilisation}

Upon this semantics, we introduce the following definitions and notations ending with the self-stabilisation property.
\begin{description}
 \item[Initiality] 
The empty network configuration $\SystS{\EnviS{\genmap{\emptyset}{\emptyset}}{\genmap{\emptyset}{\emptyset}}}{\genmap{\emptyset}{\emptyset}}$ is said \emph{initial}.
 \item[Reachability] 
Write $\Cfg\netArrActStar\Cfg'$ as short for
$\Cfg\netArrWithAct{\act_1}\Cfg_1\netArrWithAct{\act_2}\cdots\netArrWithAct{\act_n}\Cfg'$. A configuration $\Cfg$ is said \emph{reachable} if $\Cfg_0\netArrActStar\Cfg$ where $\Cfg_0$ is initial. Reachable configurations are the well-formed ones, and in the following we shall implicitly consider only reachable configurations.
 \item[Firing] 
A firing evolution from $\Cfg$ to $\Cfg'$, written $\Cfg\netArrStar\Cfg'$, is one such that $\Cfg\netArrIdStar\Cfg'$ for some $\overline\deviceId$, namely, where only firings occur.
 \item[Stability] 
A network configuration $\Cfg$ is said \emph{stable} if $\nettran{\Cfg}{\deviceId}{\Cfg'}$ implies $\Cfg=\Cfg'$, namely, the computation of fields reached a fixpoint in the current environment. Note that if $\Cfg$ is stable, then it also holds that $\Cfg\netArrStar\Cfg'$ implies $\Cfg=\Cfg'$.
 \item[Fairness] 
We say that a sequence of device fires is \emph{$k$-fair} ($k\ge 0$) to mean that, for every $h$ ($1\le h\le k$), the $h$-th fire of any device is followed by at least $k-h$ fires of all the other devices. Accordingly, a firing evolution $\Cfg\netArrIdStar\Cfg'$ is said $k$-\fair, written $\Cfg\netArrIdFairStarN{k}\Cfg'$, to mean that $\overline{\deviceId}$ is $k$-fair. We also write
 $\Cfg\netArrFairStarN{k}\Cfg'$ if $\Cfg\netArrIdFairStarN{k}\Cfg'$ for some $\overline\deviceId$. This notion of fairness will be used to characterise finite firing evolutions in which all devices are given equal chance to fire when all others had.
 \item[Strong self-stabilisation] 
A network configuration $\SystS{\Envi}{\Field}$ is said to \emph{(strongly) self-stabilise} (simply, self-stabilise, in the following) to $\SystS{\Envi}{\Field'}$ if there is a $k>0$ and a field $\Field'$ such that $\SystS{\Envi}{\Field}\netArrFairStarN{k}\SystS{\Envi}{\Field'}$ implies $\SystS{\Envi}{\Field'}$ is stable, and $\Field'$ is univocally determined by $\Envi$. Self-stability basically amounts to the inevitable reachability of a stable state depending only on environment conditions, through a sufficiently long fair evolution. Hence, the terminology is abused equivalently saying that a program $\PROGRAM$ or (equivalently) a field expression $\emain$ is self-stabilising if for any environment state $\Envi$ there exists a unique stable field $\Field'$ such that \emph{any} $\SystS{\Envi}{\Field}$ \emph{self-stabilises} to $\SystS{\Envi}{\Field'}$.
\item[Self-stability]
A network configuration $\SystS{\Envi}{\Field}$ is said \emph{self-stable} to mean that it  is stable and $\Field$ is univocally determined by $\Envi$.

\end{description}

\noindent Note that our definition of self-stabilisation is actually a stronger version of the standard definition of self-stabilisation as given e.g. in \cite{dolev}---see more details in Section \ref{sec-related}.
Instead of simply requiring that we enter a ``self-stable set'' of states and never escape from it, we require that \emph{(i)} such a set has a single element, and \emph{(ii)} such an element is globally unique, i.e., it does not depend on the initial state.
Viewed in the context of an open system, it means that we seek for programs self-stabilising in any environment independently of any intermediate computation state.
This is a requirement of key importance, since it entails that any subexpression of the program can be associated to a final and stable field, reached in finite time by fair evolutions and adapting to the shape of the environment.
This acts as the sought bridge between the micro-level (field expression in program code), and the macro-level (expected global outcome).

%
%

\begin{exa}[A self-stabilising program]\label{exa:hop-count gradient}
Consider a network of devices running the program $\PROGRAM$ of Examples~\ref{exa:DeviceSemantics:hop-count gradient} and~\ref{exa:NetworkEvolution:hop-count gradient}
\[
\texttt{def \numNAME\ main() is
\spreadtt{\senv{src}}{\startt{} + \senv{dist}}}, 
\]
and assume that the sensor
\senv{dist} is guaranteed to always return a positive value
(recall that the sensors \senv{src} and \senv{dist} have type \numType\ and the built-in operator  \texttt{+}  has type-signature \texttt{\numNAME(\numNAME,\numNAME)}). Then, the operator \texttt{+} is guaranteed to be used with type $\numType(\numType,\pnumType)$, where 
$\pnumType$ is the type of positive reals (i.e., the \emph{refinement type}~\cite{FreemanPhenning:PLDI-1991} that refines  type \texttt{real} by keeping only the positive values), so that the  following conditions are satisfied:
\begin{enumerate}
\item \texttt{+} is \emph{monotonic nondecreasing} in its first argument, i.e.,  for any $\anyvalue$, $\anyvalue'$ and $\anyvalue_2$ of type $\pnumType$:
\\
 $\anyvalue\wfLeOf{\numType}\anyvalue'$
    implies
    $\anyvalue+\anyvalue_2\wfLeOf{\numType}\anyvalue'+\anyvalue_2$;
\item 
\texttt{+}  is \emph{progressive} in its
    first argument, i.e., for any  $\anyvalue$ and $\anyvalue_2$ of type $\pnumType$: 
\\
$\topNumK{}=\topNumK{}+\anyvalue_2$, and
 $\anyvalue\not=\topNumK{}$ implies
     $\anyvalue\wfLtOf{\numType}\anyvalue+\anyvalue_2$.
\end{enumerate}

%
%
Starting from an initial empty configuration, we move by rule \ruleNameSize{[N-ENV]} to a new environment with the following features:
\begin{itemize}
 \item the domain is formed by $2n$ ($n\ge 1$) devices $\deviceId_1,\myldots,\deviceId_n,\deviceId_{n+1},\myldots,\deviceId_{2n}$;
 \item the topology is such that any device $\deviceId_i$ is connected to $\deviceId_{i+1}$ and $\deviceId_{i-1}$ (if they exist);
 \item sensor \senv{dist} gives $1$ everywhere;
 \item sensor \senv{src} gives $0$ on the devices $\deviceId_i$ ($1\le i\le n$, briefly referred to as \emph{left devices}) and a value $u$
($u> n+1$) on the devices $\deviceId_j$ ($n+1\le j\le 2n$, briefly referred to as \emph{right devices}).
\end{itemize}
Accordingly, the left devices are all assigned to value-tree $0(0,1)$, while the right ones to $u(u,1)$: hence, the resulting field maps left devices to $0$ and right devices to $1$---remember such evaluations are done assuming nodes are isolated, hence the result is exactly the value of the source expression.
With this environment, the firing of a device can only replace the root of a value-tree, making it the minimum of the source expression's value and the successor of neighbour's values.
Hence, any firing of a device that is not $\deviceId_{n+1}$ does not change its value-tree.
When $\deviceId_{n+1}$ fires instead by rule \ruleNameSize{[N-FIR]}, its value-tree becomes $1(u,1)$, and it remains so if more firings occur next.

Now, only a firing at $\deviceId_{n+2}$ causes a change: its value-tree becomes $2(u,1)$.
Going on this way, it is easy to see that after any $n$-fair firing sequence the network self-stabilises to the field state
where left devices still have value-tree $0(u,1)$, while right devices $\deviceId_{n+1},\deviceId_{n+2},\deviceId_{n+3},\myldots$ have value-trees $1(u,1),2(u,1),3(u,1),\myldots$, respectively. That is, the root of such trees form a hop-count gradient, measuring minimum distance to the source nodes, namely, the left devices.

It can also be shown that any environment change, followed by a sufficiently long firing sequence, makes the system self-stabilise again, possibly to a different field state.
For instance, if the two connections of $\deviceId_{2n-1}$ to/from $\deviceId_{2n-2}$ break (assuming $n>2$), the part of the network excluding $\deviceId_{2n-1}$ and $\deviceId_{2n}$ keeps stable in the same state.
The values at $\deviceId_{2n-1}$ and $\deviceId_{2n}$ start raising instead, increasing of $2$ alternatively until both reach the initial value-trees $u(u,1)$---and this happens in finite time by a fair evolution thanks to the local noetherianity property of stabilising \diffusionsText.
Note that the final state is still the hop-count gradient, though adapted to the new environment topology.
\end{exa}

\begin{exa}[A non self-stabilising program]\label{exa:not-self-stabilising}
An example of program that is \emph{not} self-stabilising is
\[
\texttt{def real main() is \spreadtt{\senv{src}}{id(\startt{})}}
\]
 (written \texttt{def real main() is \spreadtt{\senv{src}}{\startt{}}} for short). There, the \diffusionText\
 is the identity function: \texttt{\numNAME\ id(\numNAME\ x) is x}, which (under the assumption, done in  Example~\ref{exa:hop-count gradient}, that the sensor \texttt{\#scr} is guaranteed to return positive values) is guaranteed to be used with signature \texttt{(pr)pr} and is not progressive in its first argument (c.f.\ condition (1) in Example~\ref{exa:hop-count gradient}).\footnote{The function \texttt{id} is progressive whenever its is used with a signature of the form \texttt{(real\_n)real\_n}, where \texttt{real\_n} is the refinement type that refines  \texttt{real\_n} by keeping only the value \texttt{n}---in the example, this corresponds to the case when the sensor \texttt{\#scr} is guaranteed to always return the constant value \texttt{n} on all the devices.}
 
Assuming a connected network, and \senv{src} holding value $\anyvalue_s$
in one node and $\topNumK{}$ in all others, then
\emph{any} configuration where all nodes hold the same value
$\anyvalue$ less than or equal to $\anyvalue_s$ is trivially stable.
This would model a source gossiping a
fixed value $\anyvalue_{s}$ everywhere: if the source suddenly gossips a
value $\anyvalue_{s}'$ smaller than $\anyvalue$, then the network would
self-organise and all nodes would eventually hold $\anyvalue_{s}'$.
However, if the source then gossips a value $\anyvalue_{s}''$ greater
than $\anyvalue_{s}'$, the network
would \emph{not} self-organise and all nodes would remain stuck to value
$\anyvalue_{s}'$.
%
%
\end{exa}

\section{Sorts, Stabilising Diffusions and the Stabilising-Diffusion Condition}\label{sec-stabilisingProgression}


In this section we state a sufficient  condition for self-stabilisation. This condition is about the behaviour of a diffusion (cf.\ Definition~\ref{def:Progression}) on a subsets of its arguments (cf.\ Example~\ref{exa:hop-count gradient}).  We first introduce \emph{refinement types} (or \emph{sorts}) as a convenient way to denote these subsets  (in Section~\ref{sec-ref-types}) and then use them to formulate the notion of \emph{stabilising diffusion} (in Section~\ref{sec-Progression}) and the sufficient condition for self-stabilisation (in Section~\ref {sec-program-with-valid-assumptions}).

\subsection{Refinement Types (or Sorts)}\label{sec-ref-types}

\emph{Refinement types}~\cite{FreemanPhenning:PLDI-1991} provide a mean to conservatively extend the static type system of a language by providing the ability of specify \emph{refinements} of each type. All the programs accepted by the original type system are accepted by the refinement-type system and vice-versa, but refinement-types provide additional information that support stating and checking properties of programs. 
Following~\cite{RowanDavies:Phd-2005}, we refer to refinement types as \emph{sorts} and use terms like \emph{subsorting} and \emph{sort checking}.

For instance, for the ground type $\numType$ of \numNAMEs\ we consider the six \emph{ground 
sorts} $\nnumType$ (negative \numNAMEs), $\znumType$ (the sort for \texttt{0}),  $\pnumType$ (positive \numNAMEs), $\znnumType$ (zero or negative \numNAMEs), $\zpnumType$ (zero or positive \numNAMEs),  and $\numType$ (each type trivially refines itself); while for the type $\boolType$ we consider the thee sorts $\falseType$ (the sort for $\falseK$), $\trueType$ (the sort for  $\trueK$) and $\boolType$.
Each sort-signature has the same structure of the type-signature it refines.
For instance, we can build $9 (=3^2)$ \emph{sort-signatures} for the type-signature $\boolType(\boolType)$:
\[
\begin{array}{lll}
\falseType(\falseType), & \falseType(\trueType), & \falseType(\boolType), \\
\trueType(\falseType), & \trueType(\trueType), & \trueType(\boolType), \\
\boolType(\falseType), & \boolType(\trueType), & \boolType(\boolType).
\end{array}
\]
We assume
 a mapping  $\refof{\cdot}$ that associates to each type the (set of) sorts that refine it, and a mapping $\refsigof{\cdot}$ that associates to each type-signature the (set of) sort-signatures that refine it (note that the latter mapping is determined by the the former, i.e., by the value of $\refof{\cdot}$ on ground types). A type $\anytype$ trivially refines itself, i.e.,  for every type $\anytype$ it holds that  $\anytype\in\refof{\anytype}$. Similarly,  for every type-signature $\anytype(\overline{\anytype})$ it holds that  $\anytype(\overline{\anytype})\in\refsigof{\anytype(\overline{\anytype})}$.
We write $\semOf{\rtype}$ to denote the set of values of sort $\rtype$. Note that, by construction:
\begin{center}
for all $\rtype\in\refof{\anytype}$ it holds that $\semOf{\rtype}\subseteq\semOf{\anytype}$.
\end{center}
Sorts and sort-signatures express properties of expressions and  functions, respectively. 
We say that:
\begin{itemize}
\item
a value $\anyvalue$  \emph{has} (or \emph{satisfies})  sort $\rtype$ to mean that $\anyvalue\in\semOf{\rtype}$ holds---we write $\refof{\anyvalue}$ to denote the set of all the sorts satisfied by $\anyvalue$, and
\item
a pure function $\fname$  \emph{has} (or \emph{satisfies})  sort-signature $\rtype(\overline{\rtype})$ to mean that for all $\overline{\anyvalue}\in\semOf{\overline{\rtype}}$ it holds that $\semOf{\fname}(\overline{\anyvalue})\in\semOf{\rtype}$---we write $\refsigof{\fname}$ to denote the set of all the sort-signatures satisfied by $\fname$.
\end{itemize}
For every  sort $\rtype$ in $\refof{\anytype}$ we write $\wfLeOf{\rtype}$ to denote the restriction to $\semOf{\rtype}$ of  the total order $\wfLeOf{\anytype}$ (cf.\ Section~\ref{sec-basicIngredients}) and write $\topValOf{\rtype}$ to denote the maximum element of $\semOf{\rtype}$ with respect to $\wfLeOf{\rtype}$.

\emph{Subsorting} is a partial order relation over sorts of each type that models the inclusion relationship between them. For instance, each positive number is a zero or positive number, so will write $\rsubof{\nnumType}{\znnumType}$ to indicate that $\nnumType$ is a subsort of $\znnumType$.  
We require that the subsorting relation satisfies the following (fairly standard) conditions:
\begin{enumerate}
\item
The type $\anytype$ is the maximum element with respect to subsorting relation on $\refof{\anytype}$.
\item
For every $\rtype_1,\rtype_2\in\refof{\anytype}$ there exists a least upper bound in $\refof{\anytype}$---that we denote by $\refSupOf{\rtype_1}{\rtype_2}$.
\item
For each value $\anyvalue$ the set of sorts $\refof{\typeof{\anyvalue}}$ has a minimum element w.r.t.\ $\rsub$. 
\item
The subsorting relation is \emph{sound} and \emph{complete} according to the semantics of sorts, i.e., 
\begin{center} 
$\rsubof{\rtype}{\rtype'}$ if and only if $\semOf{\rtype}\subseteq\semOf{\rtype'}$.
\end{center}
\end{enumerate}
\begin{exa}\label{exa:refinement-subsortingGROUND}
Figure~\ref{fig:refinement-subsortingGROUND} illustrates the sorts and the subsorting  for the ground types used in the examples introduced throughout the paper.
\end{exa}
 \begin{figure}[!t]{
 \framebox[1\textwidth]{
 $\begin{array}{@{\hspace{-6cm}}l}
\textbf{Ground sorts:}
  \\
\begin{array}{lcl}
\refof{\boolType} & = & \falseType,\trueType,\boolType   
\\
\refof{\numType} & = & \nnumType,\znumType,\pnumType,\znnumType,\zpnumType,\numType 
\end{array}
\\
\\
\textbf{Subsorting for ground sorts:}
\vspace{-0.5cm}
\\
\begin{tikzpicture}
\node (true) at (0:1) {\trueType};
\node (bool) at (90:1) {\boolType} edge [<-] (true);
\node (false) at (180:1) {\falseType} edge [->] (bool);
\end{tikzpicture}
\hspace{2cm}
\begin{tikzpicture}
\node (zpf) at (0:1) {\zpnumType};
\node (\numNAME) at (90:1) {\numType}  edge [<-] (zpf);  
\node (znf) at (180:1) {\znnumType}  edge [->] (\numNAME);
\node (zf) at (270:1) {\znumType}  edge [->] (zpf)   edge [->] (znf);
\node (nf) at (225:1.42) {\nnumType}  edge [->] (znf);   
\node (pf) at (315:1.44) {\pnumType}  edge [->] (zpf);  
\end{tikzpicture}
\end{array}$}
} 
\caption{Sorts and subsorting  for the ground types used in the examples}\label{fig:refinement-subsortingGROUND}
\end{figure}

\emph{Subsigning} is the partial order obtained by lifting subsorting to sort-signatures by the following subsigning rule (which, as usual, is covariant in the result sort and contravariant in the argument sorts): 
\[
\surfaceTyping{I-SIG}{  \quad
\rsubof{\rtype}{\rtype'}
\quad
\rsubof{\overline{\rtype}'}{\overline{\rtype}}
}{ \rsubof{\rtype(\overline{\rtype})}{\rtype'(\overline{\rtype'})} }
\]
According to the above explanations, for every type $\anytype$ and type-signature $\anytype(\overline{\anytype})$ we have that both  $(\refof{\anytype},\rsub)$ and  $(\refsigof{\anytype(\overline{\anytype})},\rsub)$ are partial orders.

\subsection{Stabilising \DiffusionsText}\label{sec-Progression}

Recall the notion of diffusion (Definition~\ref {def:Progression}). 
In this section we exploit sorts to formulate the notion of \emph{stabilising \diffusionText}: a predicate on the behaviour of a diffusion that will be  used to express the sufficient condition for self-stabilisation. 
The stabilising \diffusionText\ predicate specifies constraints on the behaviour of a  diffusion $\fname$ of type $\anytype_1(\anytype_1,\myldots,\anytype_n)$ by exploiting a sort-signature $\rtype(\rtype_1,\myldots,\rtype_n)\in\refsigof{\fname}$.

\begin{defi}[Stabilising \diffusionText]\label{def:StabilisingProgression}
A \diffusionText\ $\fname$ 
 is  \emph{stabilising  with respect to the sort-signature} $\rtype(\rtype_1\overline{\rtype})\in\refsigof{\fname}$ such that  $\rsubof{\rtype}{\rtype_1}$ (notation $\stabilisingOpOf{\fname}{\rtype(\rtype_1\overline{\rtype})}$) if the following conditions hold:
\begin{enumerate}
\item $\fname$ is \emph{monotonic nondecreasing} in its first argument, i.e.,  for all $\anyvalue\in\semOf{\rtype_1}$, $\anyvalue'\in\semOf{\rtype_1}$ and $\overline{\anyvalue}\in\semOf{\overline{\rtype}}$:
$\quad$
 $\anyvalue\wfLeOf{\rtype_1}\anyvalue'$
    implies
    $\opApply{\opFunOf{\fname}}{\anyvalue,\overline{\anyvalue}}\wfLeOf{\rtype_1}\opApply{\opFunOf{\fname}}{\anyvalue',\overline{\anyvalue}}$);
\item 
$\fname$ is \emph{progressive} in its
    first argument, i.e., for all $\overline{\anyvalue}\in\semOf{\overline{\rtype}}$:
        $\opApply{\opFunOf{\fname}}{\topValOf{\rtype_1},\overline{\anyvalue}}\wfEqOf{\rtype_1}\topValOf{\rtype_1}$
        and, for all
        $\anyvalue\in\semOf{\rtype_1}-\{\topValOf{\rtype_1}\}$,
        $\anyvalue\wfLtOf{\rtype_1}\opApply{\opFunOf{\fname}}{\anyvalue,\overline{\anyvalue}}$.\label{eq:motivationRH}
\end{enumerate}
We say that the sort-signature $\rtype(\overline{\rtype})$ is  \emph{stabilising  for} $\fname$ to mean that $\stabilisingOpOf{\fname}{\rtype(\overline{\rtype})}$ holds, and write $\stabsigof{\fname}$ to denote  set of  the stabilising sort-signatures for $\fname$.
\end{defi}
\begin{exa}\label{exa:stabilising-diffusion}
Consider  the library in Figure~\ref{f:code}. The following predicates hold:
\begin{itemize}
\item
$\stabilisingOpOf{\texttt{+}}{\znumType(\znumType,\znumType)}$, 
\item
$\stabilisingOpOf{\texttt{+}}{\pnumType(\zpnumType,\pnumType)}$, 
\item
$\stabilisingOpOf{\texttt{+}}{\numType(\numType,\pnumType)}$, and
\item
 $\stabilisingOpOf{\texttt{restrictSum}}{ \numType(\numType,\pnumType,\boolType)}$.
\end{itemize}
\end{exa}

Note that Condition~(\ref{eq:motivationRH}) in Definition~\ref{def:StabilisingProgression} introduces a further constraint between the sort of the first argument $\rtype_1$ and the sort of the result $\rtype$ in the  sort-signature $\rtype(\rtype_1,\myldots,\rtype_n)$ used for $\fname$. For instance, given the diffusion
\[
\deftt{\numNAME\ f}{\numNAME\  x, \numNAME\  y}{-(x+y)}
\]
 the sort signature $\nnumType(\pnumType,\pnumType)\in\refsigof{\fname}\subset\refsigof{\numType(\numType,\numType)}$ is not compatible with Condition~(\ref{eq:motivationRH}), since $\anyvalue_1,\anyvalue_2\in\semOf{\pnumType,\pnumType}$ and $\anyvalue=\opApply{\opFunOf{\fname}}{\anyvalue_1,\anyvalue_2}\in\nnumType$ imply
$\anyvalue_1\not\wfLe\anyvalue$. Namely, the sort of the result  $\rtype$ and the sort of the first argument $\rtype_1$ must be such that
$\semOf{\rtype}\semsubwrhLe\semOf{\rtype_1}$, where the relation $\semsubwrhLe$ between two subsets $\setInPowerSet$ and $\setInPowerSet_1$ of $\semOf{\anytype}$ (i.e., between elements   of the  powerset   $\powerSet{\semOf{\anytype}}$), that we call \emph{$\ReverseHoare$ inclusion}, is defined as follows:
\[
\setInPowerSet\semsubwrhLe\setInPowerSet_1 \qquad \mbox{if and only if} \qquad \setInPowerSet\subseteq\setInPowerSet_1 \quad \mbox{and} \quad \topValOf{\rtype}=\topValOf{\rtype_1}.
\]
%
%
We write $\subwrhLe$ to denote the \emph{$\ReverseHoare$ subsorting} relation, which is the restriction of 
 subsorting relation defined as follows:
\begin{center} 
$\rsubwrhof{\rtype}{\rtype'}$ if and only if $\rsemsubwrhof{\semOf{\rtype}}{\semOf{\rtype'}}$.
\end{center}
To summarise: if a sort signature  $\rtype(\overline{\rtype})$ is stabilising for some diffusion, then $\rtype(\overline{\rtype})$ must be \emph{progressive}, according to the following definition.
\begin{defi}[\ProgressiveSortSigNAME\ sort-signature]\label{def:ProgressiveSortSignature}
A sort-signature $\rtype(\overline{\rtype})$ with $\overline{\rtype}=\rtype_1,\myldots,\rtype_n$ $(n\ge 1)$ is a \emph{\progressiveSortSigNAME\ sort-signature} (notation $\progressiveSortSigOf{\rtype(\overline{\rtype})}$) if $\rsubwrhof{\rtype}{\rtype_1}$.
\end{defi}
\noindent
Given a diffusion type-signature $\anytype(\overline{\anytype})$ (cf.\ Definition~\ref{def:Progression}) we write $\progrefsigof{\anytype(\overline{\anytype})}$ to denote the (set of) \progressiveSortSigNAME\ sort-signatures  that refine it.

\begin{exa}\label{exa:refinement-reverseHoare}
 Figure~\ref{fig:refinement-reverseHoare} illustrates the $\ReverseHoare$ subsorting for the ground sorts  used in the examples introduced throughout the paper.
\end{exa}

 \begin{figure}[!t]{
 \framebox[1\textwidth]{
 $\begin{array}{@{\hspace{-6cm}}l}
\begin{tikzpicture}
\node (true) at (0:1) {\trueType};
\node (bool) at (90:1) {\boolType} edge [<-] (true);
\node (false) at (180:1) {\falseType};
\end{tikzpicture}
\hspace{2cm}
\begin{tikzpicture}
\node (zpf) at (0:1) {\zpnumType};
\node (\numNAME) at (90:1) {\numType}  edge [<-] (zpf);  
\node (znf) at (180:1) {\znnumType};
\node (zf) at (270:1) {\znumType}   edge [->] (znf);
\node (nf) at (225:1.42) {\nnumType};
\node (pf) at (315:1.44) {\pnumType}  edge [->] (zpf);   
\end{tikzpicture}
\end{array}$}
} 
\caption{Progressive subsorting for the ground sorts used in the examples  (cf.\ Figure~\ref{fig:refinement-subsortingGROUND})} \label{fig:refinement-reverseHoare}
\end{figure}

The following partial order between progressive sort-signatures, that we call \emph{stabilising subsigning}:
\[
\surfaceTyping{I-S-SIG}{  \quad
\rsubwrhof{\rtype}{\rtype'}
\quad
\rsubwrhof{\rtype'_1}{\rtype_1}
\quad
\rsubof{\overline{\rtype}'}{\overline{\rtype}}
}{ \rsubstabof{\rtype(\rtype_1\overline{\rtype})}{\rtype'(\rtype'_1\overline{\rtype}')} }
\]
 captures the natural implication relation between  stabilisation properties, as stated by the following  proposition.
\begin{prop}[Soudness of stabilising subsigning]\label{prop:Soudness of stabilising subsigning}
If $\stabilisingOpOf{\fname}{\rtype(\overline{\rtype})}$ and $\rsubstabof{\rtype(\overline{\rtype})}{\rtype'(\overline{\rtype}')}$, then  $\stabilisingOpOf{\fname}{\rtype'(\overline{\rtype'})}$.
\end{prop}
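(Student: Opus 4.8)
The plan is to unfold both hypotheses into their semantic content and then verify, for the target sort-signature $\rtype'(\rtype'_1\overline{\rtype}')$, the two clauses of Definition~\ref{def:StabilisingProgression}, transferring each clause from the source signature $\rtype(\rtype_1\overline{\rtype})$ by the general principle that $\wfLeOf{\rtype'_1}$ is simply the restriction of $\wfLeOf{\rtype_1}$ (both being restrictions of $\wfLeOf{\anytype}$).

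First I would invert the derivation of $\rsubstabof{\rtype(\rtype_1\overline{\rtype})}{\rtype'(\rtype'_1\overline{\rtype}')}$ through rule \ruleNameSize{[I-S-SIG]}, obtaining $\rsubwrhof{\rtype}{\rtype'}$, $\rsubwrhof{\rtype'_1}{\rtype_1}$ and $\rsubof{\overline{\rtype}'}{\overline{\rtype}}$, and read each semantically: $\semOf{\rtype}\subseteq\semOf{\rtype'}$ with $\topValOf{\rtype}=\topValOf{\rtype'}$; $\semOf{\rtype'_1}\subseteq\semOf{\rtype_1}$ with $\topValOf{\rtype'_1}=\topValOf{\rtype_1}$; and $\semOf{\overline{\rtype}'}\subseteq\semOf{\overline{\rtype}}$ componentwise. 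Since $\stabilisingOpOf{\fname}{\rtype(\rtype_1\overline{\rtype})}$ holds, its signature is progressive, so $\topValOf{\rtype}=\topValOf{\rtype_1}$; chaining the three top-value equalities yields $\topValOf{\rtype'}=\topValOf{\rtype'_1}$. Because $\substabLe$ is a partial order on \emph{progressive} sort-signatures, the target $\rtype'(\rtype'_1\overline{\rtype}')$ is itself progressive, which supplies the remaining ingredient $\semOf{\rtype'}\subseteq\semOf{\rtype'_1}$, i.e.\ $\rsubof{\rtype'}{\rtype'_1}$, the precondition of the definition for the target. Next I would check $\rtype'(\rtype'_1\overline{\rtype}')\in\refsigof{\fname}$: each reverse-Hoare subsorting implies the corresponding ordinary subsorting, so the argument sorts only shrink while the result sort only grows, and by the semantic meaning of a function satisfying a sort-signature (contravariant in arguments, covariant in result) together with $\rtype(\rtype_1\overline{\rtype})\in\refsigof{\fname}$, we conclude $\fname$ satisfies $\rtype'(\rtype'_1\overline{\rtype}')$.

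The core is transferring Conditions~(1) and~(2). For monotonicity, I pick $\anyvalue,\anyvalue'\in\semOf{\rtype'_1}$ and $\overline{\anyvalue}\in\semOf{\overline{\rtype}'}$ with $\anyvalue\wfLeOf{\rtype'_1}\anyvalue'$; since $\semOf{\rtype'_1}\subseteq\semOf{\rtype_1}$ and $\semOf{\overline{\rtype}'}\subseteq\semOf{\overline{\rtype}}$, these also lie in the source sorts, and the inequality reads $\anyvalue\wfLeOf{\rtype_1}\anyvalue'$; Condition~(1) for $\rtype(\rtype_1\overline{\rtype})$ gives $\opApply{\opFunOf{\fname}}{\anyvalue,\overline{\anyvalue}}\wfLeOf{\rtype_1}\opApply{\opFunOf{\fname}}{\anyvalue',\overline{\anyvalue}}$, and since both outputs lie in $\semOf{\rtype'}\subseteq\semOf{\rtype'_1}$ the same relation holds for $\wfLeOf{\rtype'_1}$. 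Progressiveness is analogous: using $\topValOf{\rtype'_1}=\topValOf{\rtype_1}$ the application at the top coincides with the source case, giving $\opApply{\opFunOf{\fname}}{\topValOf{\rtype'_1},\overline{\anyvalue}}\wfEqOf{\rtype'_1}\topValOf{\rtype'_1}$, while for $\anyvalue\in\semOf{\rtype'_1}-\{\topValOf{\rtype'_1}\}\subseteq\semOf{\rtype_1}-\{\topValOf{\rtype_1}\}$ the strict inequality $\anyvalue\wfLtOf{\rtype_1}\opApply{\opFunOf{\fname}}{\anyvalue,\overline{\anyvalue}}$ of Condition~(2) descends to $\wfLtOf{\rtype'_1}$ because the output again lands in $\semOf{\rtype'_1}$.

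I expect the delicate step to be justifying that the function outputs actually lie in $\semOf{\rtype'_1}$, since the clauses of Definition~\ref{def:StabilisingProgression} compare values under $\wfLeOf{\rtype'_1}$, which is only defined on $\semOf{\rtype'_1}$. This is exactly where $\rsubof{\rtype'}{\rtype'_1}$ is essential, and it cannot be recovered from the three premises of \ruleNameSize{[I-S-SIG]} alone (the result sort may grow while the first-argument sort shrinks); one must invoke that $\substabLe$ relates progressive sort-signatures, so that $\rsubwrhof{\rtype'}{\rtype'_1}$ holds for the target. Once that membership fact is secured, every remaining obligation is routine restriction-of-order bookkeeping.
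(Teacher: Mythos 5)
Your proof is correct and follows the same route as the paper, which simply declares the result ``straightforward from Definition~\ref{def:StabilisingProgression} and rule \ruleNameSize{[I-S-SIG]}''; you have just filled in the definitional unfolding the paper omits. Your observation that the membership of the outputs in $\semOf{\rtype'_1}$ rests on the target signature being \emph{progressive} (i.e.\ $\rsubwrhof{\rtype'}{\rtype'_1}$, which is guaranteed because $\substabLe$ is declared only between progressive sort-signatures rather than derivable from the three premises of \ruleNameSize{[I-S-SIG]}) is a legitimate and worthwhile clarification of a point the paper glosses over.
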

\begin{proof}
Straightforward from Definition~\ref{def:StabilisingProgression} and the definition of progressive subsigning (Rule \ruleNameSize{[I-S-SIG]} above).
\end{proof}

\subsection{The Stabilising-Diffusion Condition}\label{sec-program-with-valid-assumptions}

We have now all the ingredients to formulate a sufficient condition for self-stabilisation of a well-typed program $\PROGRAM$, that we call the \emph{stabilising-diffusion condition}: 
any \diffusionText\ $\fname$ used in a spreading expression $\e$ of the program (or library) $\PROGRAM$ must be a 
stabilising \diffusionText\  with respect a sort-signature for $\fname$ that correctly describes the sorts of the arguments of $\fname$ in $\e$.
More formally, a well-typed program (or library) $\PROGRAM$ satisfies the stabilising-diffusion condition if and only if it admits \emph{valid  sort-signature and stabilising  assumptions for diffusions}. I.e., for each diffusion-expression $\spreadThree{\e_1}{\fname}{\e_2,\myldots,\e_n}$ occurring in $\PROGRAM$, there exists a sort-signature  $\rtype(\rtype_1,\ldots,\rtype_n)$ such that the following two conditions are satisfied.
\begin{enumerate}
\item \textbf{Validity of the sort-signature assumption:}
the diffusion $\fname$ has sort-signature $\rtype(\rtype_1,\ldots,\rtype_n)$ and,
in any reachable network configuration, the evaluation of the subexpression $\e_i$ yields a value $\anyvalue_i\in\semOf{\rtype_i}$ ($1\le i\le n$).
\item \textbf{Validity of the  stabilising assumption:}
the sort-signature $\rtype(\rtype_1,\ldots,\rtype_n)$ is stabilising  for the diffusion $\fname$.
\end{enumerate}

\begin{exa}
In the body of the function \texttt{main} in Example~\ref{exa:hop-count gradient}, which defines a self-stabilising field:
\begin{enumerate}
\item   
the \diffusionText\ function \texttt{+} is applied according to the sort-signature  $\numType(\numType,\pnumType)$ since its second argument is the sensor \senv{dist} of type \texttt{\numNAME} that is guaranteed to always return a value of sort $\pnumType$; and
\item
 $\stabilisingOpOf{+}{\numType(\numType,\pnumType)}$ holds (cf.\ Example~\ref{exa:stabilising-diffusion}).
\end{enumerate}
Therefore, the stabilising diffusion  condition is satisfied. Also the library in Figure~\ref{f:code} satisfies the stabilising diffusion  condition.

Instead, for the \diffusionText\ function \texttt{id} of type-signature $\numType(\numType)$ used in the non-self-stabilising spreading expression considered in Example~\ref{exa:not-self-stabilising},   $\rtype(\rtype_1)=\znumType(\znumType)$ is  the only sort-signature such that $\stabilisingOpOf{\texttt{id}}{\rtype(\rtype_1)}$ holds.\footnote{Considering the sorts given in Figure~\ref{fig:refinement-subsortingGROUND}---c.f.\ the footnote in Example~\ref{exa:not-self-stabilising}.}
Therefore, since the sensor \senv{src} returns a value of sort $\pnumType$, the stabilising diffusion  condition cannot be satisfied.
\end{exa}

\begin{rem}[On choosing the refinements of type $\numType$]
The choice of the refinements for type \texttt{real} that we considered in the examples  is somehow arbitrary. We have chosen a set of refinements that is  expressive enough in order the show  that all the  self-stabilising examples considered in the paper satisfy the stabilising-diffusion condition. 
For instance, dropping the
 refinements $\nnumType$ and $\pnumType$ would make it impossible to show that the program considered in Example~\ref{exa:hop-count gradient} satisfies the stabilising-diffusion condition. 

Considering a richer set of refinement would allow to prove more properties of programs (and would make the check more complex). For instance,
adding the refinement $\npnumType$ (negative or positive number) such that:
\[
\nnumType\le\npnumType
\qquad\quad
\pnumType
\le
\npnumType
\qquad\quad
\npnumType\le\numType       
\]
would allow to assign (according to the sort-checking rules presented in Section~\ref{sec-ensuringSelfStabilisation}) sort  
$\npnumType$ to the expressions \texttt{(x ? -1 : 1)}, to assume sort-signature  $\falseType(\znumType,\npnumType)$ for the built-in equality operator on reals $=$, and therefore to check that the user-defined function
\[
\texttt{def <bool> f(<bool> x) is 0 = (x ? -1 : 1)}  
\]
has sort-signature $\falseType(\boolType)$. Although, this would allow  to show that more program satisfies the stabilising-diffusion, the  refinement $\npnumType$ is not needed in order to show that the  self-stabilising examples considered in the paper satisfy the stabilising-diffusion condition. Therefore we have not considered it in the examples.
\end{rem}


\section{Programs that Satisfy the Stabilising-Diffusion Condition Self-stabilise}\label{sec-properties-stabilising}

In this section we prove the main properties of the proposed calculus, namely: type soundness and termination of device computation (in Section~\ref{sec-properties-typing}), and self-stabilisation of network evolution for programs that satisfy the stabilising-diffusion condition (in Section~\ref{sec-properties-self-stabilisation}).

As already mentioned, our notion of self-stabilisation is key as it allows one to conceptually map any (self-stabilising) field expression to its final and stable field state, reached in finite time by fair evolutions and adapting to the shape of the environment.
This acts as the sought bridge between the micro-level (field expression in program code), and the macro-level (expected global outcome). In order facilitate the exploitation of this  bridge it would be useful to have an effective means for checking the stabilising-diffusion condition. A technique providing such an effective means  (for the extension  of the calculus with pairs introduced in Section~\ref{sec-calculus-extended}) is illustrated in Sections~\ref{sec-stabilisation-checking},~\ref{sec-ensuringSelfStabilisation} and~\ref{sec-checkingStabilisingDiffusion}.

\subsection{Type Soundness and Termination of Device Computation}\label{sec-properties-typing}

In order to state the properties of device computation  we introduce the notion of set of well-typed values trees for an expression.

Given an expression $\e$ such that
$\surExpTypJud{\overline{\xname}:\overline{\anytype}}{\e}{\anytype}$,
the set
$\bsWFVT{\overline{\xname}:\overline{\anytype}}{\e}{\anytype}$
of the \emph{well-typed} value-trees for $\e$, is inductively
defined as follows:
    $\vtree$ $\in$ $\bsWFVT{\overline{\xname}:\overline{\anytype}}{\e}{\anytype}$
    if there exist
    \begin{itemize}
    \item a sensor mapping $\snsFun$;
    \item well-formed tree environments
        $\overline{\vtree}\in\bsWFVT{\overline{\xname}:\overline{\anytype}}{\e}{\anytype}$;
        and
    \item values $\overline{\anyvalue}$ such that
        $\lengthOf{\overline{\anyvalue}}=\lengthOf{\overline{\xname}}$
        and
        $\surExpTypJud{\emptyset}{\overline{\anyvalue}}{\overline{\anytype}}$;
    \end{itemize}
    such that
    $\bsopsem{\snsFun}{\overline{\vtree}}{\applySubstitution{\e}{\substitution{\overline{\xname}}{\overline{\anyvalue}}}}{\vtree}$
    holds---note that this definition is inductive, since the sequence  of evaluation trees $\overline{\vtree}$ may be empty.

As this notion is defined we can state the following two theorems, guaranteeing that from a properly typed environment,
    evaluation of a well-typed expression yields a properly typed
    result and always terminates, respectively.

\begin{thm}[Device computation type
preservation]\label{the-DeviceTypePreservation} If
$\surExpTypJud{\overline{\xname}:\overline{\anytype}}{\e}{\anytype}$,
 $\snsFun$ is a sensor mapping,
$\overline{\vtree}\in\bsWFVT{\overline{\xname}:\overline{\anytype}}{\e}{\anytype}$,
$\lengthOf{\overline{\anyvalue}}=\lengthOf{\overline{\xname}}$,
        $\surExpTypJud{\emptyset}{\overline{\anyvalue}}{\overline{\anytype}}$
and
$\bsopsem{\snsFun}{\overline{\vtree}}{\applySubstitution{\e}{\substitution{\overline{\xname}}{\overline{\anyvalue}}}}{\vtree}$,
then $\surExpTypJud{\emptyset}{\vrootOf{\vtree}}{\anytype}$.
\end{thm}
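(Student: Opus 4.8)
The plan is to prove the statement by induction on the derivation $\mathcal{D}$ of the evaluation judgement $\bsopsem{\snsFun}{\overline\vtree}{\applySubstitution{\e}{\substitution{\overline\xname}{\overline\anyvalue}}}{\vtree}$. Because the operational semantics is syntax directed, the last rule of $\mathcal{D}$ is uniquely determined by $\applySubstitution{\e}{\substitution{\overline\xname}{\overline\anyvalue}}$, so I would proceed by a case analysis on that rule and, correspondingly, on the shape of $\e$ (using that substitution distributes over the subexpressions of every compound form). A useful observation is that, since the substitution is already built into the statement, no separate substitution lemma is needed: a substituted variable $\applySubstitution{\xname_i}{\substitution{\overline\xname}{\overline\anyvalue}}=\anyvalue_i$ becomes a value whose type is furnished directly by the hypothesis $\surExpTypJud{\emptyset}{\overline\anyvalue}{\overline\anytype}$, and the only place a \emph{new} substitution is introduced is the body of a user-defined function, where the induction hypothesis is simply reinvoked with a fresh well-typed substitution.

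First I would dispatch the routine cases. For \ruleNameSize{[E-VAL]} the expression is a value $\anyvalue$ with $\surExpTypJud{\emptyset}{\anyvalue}{\anytype}$ forced by \ruleNameSize{[T-GVAL]} (or by the hypothesis on $\overline\anyvalue$ when $\anyvalue$ is a substituted variable), and $\vrootOf\vtree=\anyvalue$. For \ruleNameSize{[E-SNS]} the result is $\snsFun(\snsname)$, of type $\typeof{\snsname}=\anytype$ by the assumed correctness of the sensor map together with \ruleNameSize{[T-SNS]}. For \ruleNameSize{[E-COND]}, rule \ruleNameSize{[T-COND]} gives both branches type $\anytype$, and since $\vrootOf\vtree$ is the root of the selected branch, the induction hypothesis on the evaluated branches yields $\vrootOf\vtree:\anytype$. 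For \ruleNameSize{[E-BLT]}, \ruleNameSize{[T-FUN]} gives $\signatureOf{\oname}=\anytype(\overline\anytype)$ with each argument of type $\anytype_i$; the induction hypothesis gives $\vrootOf{\vtreealt_i}:\anytype_i$, and the result has type $\anytype$ by the assumption that $\opFunOf{\oname}$ respects its signature (it maps $\semOf{\anytype_1},\ldots,\semOf{\anytype_n}$ into $\semOf{\anytype}$). For \ruleNameSize{[E-DEF]}, the arguments are typed as in the built-in case, and by \ruleNameSize{[T-DEF]} the function body is well typed, of the declared result type, under the type assumptions for the formal parameters; applying the induction hypothesis to the body sub-derivation with the fresh (well-typed) substitution of formals by argument values gives that the body root---hence $\vrootOf\vtree$---has type $\anytype$.

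The main obstacle is the spreading case \ruleNameSize{[E-SPR]}, with $\e=\spreadThree{\e_0}{\fname}{\e_1,\ldots,\e_n}$, $\progressionOpOf{\fname}$, and, by \ruleNameSize{[T-SPR]} and Definition~\ref{def:Progression}, $\signatureOf{\fname}=\anytype(\anytype,\anytype_1,\ldots,\anytype_n)$, where the first-argument type and the result type coincide with the type $\anytype$ of the source $\e_0$ and of the whole spreading expression. The induction hypothesis gives $\anyvalue_0:\anytype$ and $\anyvalue_i:\anytype_i$. The delicate point is the neighbour contributions $\anyvaluealt_j=\opApply{\opFunOf{\fname}}{\anyvaluebis_j,\anyvalue_1,\ldots,\anyvalue_n}$: to apply the signature of $\fname$ I must know that each neighbour value $\anyvaluebis_j=\conclusionOf{\overline\vtree}$ already has type $\anytype$. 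This is exactly what the hypothesis $\overline\vtree\in\bsWFVT{\overline\xname:\overline\anytype}{\e}{\anytype}$ should supply, but exploiting it rigorously is the crux: since $\bsWFVT{\cdot}{\cdot}{\cdot}$ is inductively generated and its neighbour environments are themselves required to lie in the same set via structurally smaller derivations, the neighbour trees' roots are covered by a strictly smaller instance of the statement, so the induction must in fact be carried out on the derivation of membership in $\bsWFVT{\cdot}{\cdot}{\cdot}$.

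To make this compositional I would strengthen the invariant carried through the induction, recording that \emph{every} subtree of $\vtree$, at each evaluated subexpression position, has a root of the type assigned to that subexpression. This renders the alignment operator $\premiseNumOf{i}{\cdot}$ and subtree extraction transparent (the shape of a value-tree is determined by the expression), so that the aligned neighbour environments passed to the sub-derivations are again well typed for the corresponding subexpressions and the induction hypothesis applies to them. Granting this, in \ruleNameSize{[E-SPR]} each $\anyvaluebis_j$ has type $\anytype$, hence each $\anyvaluealt_j$ has type $\anytype$ by the signature of $\fname$, and the result $\lowerBound\{\anyvalue_0,\anyvaluealt_1,\ldots,\anyvaluealt_m\}$ has type $\anytype$ because $\semOf{\anytype}$, being totally ordered by $\wfLeOf{\anytype}$, is closed under taking minima of finite nonempty subsets (the minimum is one of its elements). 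This closes the case and the induction. I expect the bookkeeping around $\bsWFVT{\cdot}{\cdot}{\cdot}$ and the alignment operator---rather than any single typing rule---to be the part requiring the most care.
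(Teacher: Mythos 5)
Your proposal is correct and follows essentially the same route as the paper: the paper likewise reduces the theorem to a lemma proved by induction that includes the inductive generation of $\bsWFVT{\overline{\xname}:\overline{\anytype}}{\e}{\anytype}$ (precisely to type the neighbour roots $\anyvaluebis_j$ in the \ruleNameSize{[E-SPR]} case), uses the diffusion signature of $\fname$ and closure of $\semOf{\anytype}$ under finite minima, and handles alignment by observing that $\premiseNumOf{i}{\overline\vtree}$ is a well-typed tree environment for the $i$-th subexpression. The only cosmetic differences are that the paper factors out an explicit substitution lemma and uses the number of user-defined function calls (rather than the evaluation derivation) as the measure for the \ruleNameSize{[E-DEF]} case; both choices are interchangeable with yours.
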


\begin{proof}
See Appendix~\ref{app-proof-typeSounness}.
\end{proof}

\begin{thm}[Device computation termination]\label{the-DeviceTermination} If
$\surExpTypJud{\overline{\xname}:\overline{\anytype}}{\e}{\anytype}$,
 $\snsFun$ is a sensor mapping,
$\overline{\vtree}\in\bsWFVT{\overline{\xname}:\overline{\anytype}}{\e}{\anytype}$,
$\lengthOf{\overline{\anyvalue}}=\lengthOf{\overline{\xname}}$
and
        $\surExpTypJud{\emptyset}{\overline{\anyvalue}}{\overline{\anytype}}$,
then
$\bsopsem{\snsFun}{\overline{\vtree}}{\applySubstitution{\e}{\substitution{\overline{\xname}}{\overline{\anyvalue}}}}{\vtree}$
for some value-tree $\vtree$.
\end{thm}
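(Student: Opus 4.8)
The plan is to prove the statement by well-founded induction on a numeric size measure of the configuration $(\overline\vtree,\e)$ that strictly decreases along every premise of every evaluation rule, and then to read off the existence of a derivation by assembling the applicable rule bottom-up. Since the rules are syntax-directed, the shape of $\e$ fixes which rule applies, so the only features that could threaten termination are user-defined function calls (rule \ruleNameSize{[E-DEF]}, whose last premise evaluates the function body rather than a syntactic subexpression) and spreading (rule \ruleNameSize{[E-SPR]}, whose premises contain one evaluation of the diffusion $\fname$ per neighbour). Both are tamed by the sanity conditions: the call graph is acyclic, so repeatedly unfolding calls terminates, and the neighbour list $\overline\vtree$ is finite, so the branching of \ruleNameSize{[E-SPR]} is finite. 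The measure is designed to encode exactly these two facts.

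Concretely, I would first define for a \emph{pure} expression its \emph{unfolded size} $U(\e)$ by structural recursion, replacing each call $\dname(\overline\e)$ by $1+\sum_i U(\e_i)+U(\text{body of }\dname)$; this is well defined precisely because the call graph has no cycles, so the recursion descends through strictly lower-ranked functions. Writing $m=\lengthOf{\overline\vtree}$, I then define $\mu_m(\e)$ in the same style, with $\mu_m$ equal to $1$ on variables, sensors and ground values, summing the immediate subexpression measures on conditionals, built-ins and user functions (adding $\mu_m(\text{body})$ in the last case), and
\[
\mu_m(\spreadThree{\e_0}{\fname}{\e_1,\ldots,\e_n})=1+\sum_{i=0}^{n}\mu_m(\e_i)+m\cdot W_\fname,
\]
where $W_\fname$ is the constant, $m$-independent unfolded size of one application of the pure diffusion $\fname$. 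Two easy facts are needed: $\mu_m$ is invariant under substituting values for variables (both have measure $1$), so the substituted body in \ruleNameSize{[E-DEF]} has the same measure as the body; and the main recursion preserves the neighbour length because $\premiseNumOf{i}{\cdot}$ does, whereas the diffusion applications of \ruleNameSize{[E-SPR]} run against the empty environment and stay inside the pure fragment, so they are correctly measured by $W_\fname=\mu_0$. The measure of a configuration is then $\mu_{\lengthOf{\overline\vtree}}(\e)$, a single natural number on which the induction runs.

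With the measure in hand I induct, analysing the shape of $\e$. The base cases \ruleNameSize{[E-SNS]} and \ruleNameSize{[E-VAL]} hold outright. For \ruleNameSize{[E-COND]}, \ruleNameSize{[E-BLT]}, \ruleNameSize{[E-DEF]} and \ruleNameSize{[E-SPR]} the induction hypothesis supplies derivations for every premise, since each premise strictly decreases the measure: subexpressions drop a positive summand, the \ruleNameSize{[E-DEF]} body lies strictly below $\dname(\overline\e)$ because the latter adds $1+\sum_i\mu_m(\e_i)$, and each of the $m$ diffusion premises of \ruleNameSize{[E-SPR]} has measure $W_\fname$, strictly below the spreading total. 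Applying the induction hypothesis also requires the sub-configurations to be well-typed, which follows from a standard substitution lemma together with a projection property of $\bsWFVT{\cdot}{\cdot}{\cdot}$ (the $i$-th subtrees of well-typed value-trees for $\e$ are well-typed value-trees for the $i$-th subexpression). To assemble each rule I must finally check its side conditions are satisfiable, and here I invoke Theorem~\ref{the-DeviceTypePreservation} on the already-built subderivations: it guarantees that the condition of a \ruleNameSize{[E-COND]} has type $\boolType$ (hence is $\trueK$ or $\falseK$, selecting a branch), that the argument roots of a built-in lie in the domain on which $\opFunOf{\oname}$ is total, and that the diffusion arguments are well typed. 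The hypothesis $\overline\vtree\in\bsWFVT{\overline\xname:\overline\anytype}{\e}{\anytype}$ is essential in the spreading case, as it forces the neighbour roots $\conclusionOf{\overline\vtree}$ to have the type $\anytype$ of the spreading expression, which coincides with the first-argument type of $\fname$ because $\progressionOpOf{\fname}$; thus each $\fname(\anyvaluebis_j,\anyvalue_1,\ldots,\anyvalue_n)$ is a well-typed closed application, and the final $\lowerBound\{\anyvalue_0,\anyvaluealt_1,\ldots,\anyvaluealt_m\}$ exists as the least element of a finite nonempty subset of $\semOf{\anytype}$ under the total order $\wfLeOf{\anytype}$.

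I expect rule \ruleNameSize{[E-SPR]} to be the main obstacle, for two intertwined reasons. First, it is the only rule whose branching factor depends on the runtime configuration (the neighbour count $m$) rather than on the expression alone, so the measure must be parametrised by $m$, and one must verify both that each of the $m$ diffusion premises genuinely decreases it and that those premises, evaluated under the empty environment within the pure fragment, do not reintroduce any $m$-dependence. Second, showing that the diffusion applications are even well formed means threading the well-typedness of neighbour value-trees from $\bsWFVT{\cdot}{\cdot}{\cdot}$ through the diffusion discipline, i.e.\ the coincidence of result and first-argument types guaranteed by $\progressionOpOf{\fname}$ together with Theorem~\ref{the-DeviceTypePreservation}. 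Getting this plumbing right, rather than the arithmetic of the measure, is the delicate part of the argument.
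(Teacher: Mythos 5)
Your proposal is correct and follows essentially the same route as the paper, which proves this theorem by induction on the number of function calls that may be encountered during evaluation (well-founded by the acyclicity sanity condition) and on the syntax of closed expressions, invoking the type-preservation result to discharge the side conditions of each rule (branch selection in \ruleNameSize{[E-COND]}, totality of built-ins, well-typedness of the neighbour roots fed to the diffusion in \ruleNameSize{[E-SPR]}). Your explicit numeric measure $\mu_m$ is just a packaging of that same lexicographic induction into a single natural number; the extra care you take with the $m$-dependence of the spreading premises and the purity of diffusions is sound and merely makes explicit what the paper leaves implicit.
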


\begin{proof}
See Appendix~\ref{app-proof-typeSounness}.
\end{proof}

The two theorems above guarantee type-soundness and termination of device computations, that is: 
the evaluation of a well-typed program on any device completes without type errors  assuming that the values received from the sensors and the values in the value trees received from the  neighbours are well-typed.

\subsection{Self-stabilisation of Network Evolution for Programs  that Satisfy the Stabilising-Diffusion Condition}\label{sec-properties-self-stabilisation}

On top of the type soundness and termination result for device computation we can prove the main technical result of the paper: self-stabilisation of any program 
that satisfies the stabilising-diffusion condition.

\begin{thm}[Network self-stabilisation for programs  that satisfy the stabilising-diffusion condition]\label{the-network-self--stabilization}
Given a program with valid  sort and stabilising diffusion assumptions,  
every reachable network configuration $\network$ self-stabilises, i.e.,
there exists
$k\ge 0$ such that $\network\netArrFairStarN{k}\network'$
implies that $\network'$ is self-stable.
\end{thm}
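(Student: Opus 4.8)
The plan is to prove the statement in three stages: reduce the general (nested) case to that of a single spreading, characterise the unique self-stable field of a single spreading as the solution of a fixpoint equation, and finally show that every sufficiently long fair firing evolution drives an arbitrary reachable field to that solution. Throughout, fix an environment $\Envi=\EnviS{\Topo}{\Sens}$; by definition of firing evolution ($\netArrStar$) it stays constant along the reductions considered, and by Theorems~\ref{the-DeviceTypePreservation} and~\ref{the-DeviceTermination} every single-device firing used in rule \ruleNameSize{[N-FIR]} is well-defined and type-correct, so only the field $\Field$ evolves. I would organise the argument by induction on the nesting structure of the spreading subexpressions encountered while evaluating $\emain$ (after inlining user-defined functions), ordered by ``$\e$ occurs in the source or in an argument of the spreading producing $\e'$''. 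Since the source $\e_0$ and the diffusion arguments $\e_1,\myldots,\e_n$ of a spreading are evaluated locally, once every spreading strictly nested inside them has reached a field univocally determined by $\Envi$ their root values define fixed fields $\anyvalue_0(\cdot)$ and $\overline{\anyvalue}(\cdot)$ over the devices, and the outer spreading behaves exactly like a single gradient over these stabilised inputs. This reduces the theorem to the key case of one spreading $\spreadThree{\e_0}{\fname}{\overline{\e}}$ with $\progressionOpOf{\fname}$ and fixed input fields, for which the stabilising-diffusion condition supplies a sort-signature $\rtype(\rtype_1\overline{\rtype})$ with $\stabilisingOpOf{\fname}{\rtype(\rtype_1\overline{\rtype})}$.

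For that key case I define the candidate self-stable field $V$ as the solution, at each device $\deviceId$, of
\[
V(\deviceId) = \lowerBound\big(\{\anyvalue_0(\deviceId)\}\cup\{\,\semOf{\fname}(V(\deviceIdalt),\overline{\anyvalue}(\deviceId)) : \deviceIdalt\in\Topo(\deviceId)\,\}\big),
\]
which is precisely the value that rule \ruleNameSize{[E-SPR]} assigns when every device already holds $V$; hence any field equal to $V$ is a fixpoint of firing, i.e.\ stable. Existence follows by iterating the associated operator $T$ from the top field $\topValOf{\rtype}$: $T$ is monotone by clause~(1) of Definition~\ref{def:StabilisingProgression}, so the iterates decrease, and because $\fname$ is progressive (clause~(2)) traversing a cycle strictly increases a path's accumulated value, so non-simple paths never realise the minimum and the iteration reaches its fixpoint after at most $n$ steps, where $n$ is the (finite) number of devices in $\Envi$. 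Uniqueness is where progressiveness is essential: if $V$ and $W$ were distinct fixpoints, choose a device $\deviceId^\ast$ attaining the least value among all devices where they disagree, say $V(\deviceId^\ast)\wfLtOf{\rtype}W(\deviceId^\ast)$. The defining minimum at $\deviceId^\ast$ cannot be realised by the source term, since that would give $W(\deviceId^\ast)\wfLeOf{\rtype}\anyvalue_0(\deviceId^\ast)=V(\deviceId^\ast)$, contradicting the strict inequality; so it is realised by some neighbour term $\semOf{\fname}(V(\deviceIdalt),\overline{\anyvalue}(\deviceId^\ast))$, and progressiveness yields $V(\deviceIdalt)\wfLtOf{\rtype}V(\deviceId^\ast)$, contradicting the minimality of $\deviceId^\ast$. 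This also shows $V$ depends only on $\Envi$, as required for self-stability.

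Finally I would prove convergence: from any reachable $\SystS{\Envi}{\Field}$ there is a $k\ge 0$ such that every $k$-fair firing evolution $\SystS{\Envi}{\Field}\netArrFairStarN{k}\SystS{\Envi}{\Field'}$ reaches $\Field'=V$, which by the previous step is self-stable. I expect this to be the main obstacle, because a reachable $\Field$ may contain values left over from a previous environment that are both too large (above $V$) and too small (below $V$), exactly as in the disconnection scenario of Example~\ref{exa:hop-count gradient}. Too-large values are repaired by ordinary relaxation and, by monotonicity, can only decrease; since simple paths dominate they are controlled by the finiteness of the device set. Too-small values must instead rise, and here the noetherian property of $\wfLeOf{\rtype}$ is essential: a firing that cannot justify a too-small value against the current neighbours replaces it by a strictly larger one, and iterating this along time produces a strictly ascending $\wfLtOf{\rtype}$-chain, which by noetherianity is finite; this is the ``local noetherianity'' alluded to in Example~\ref{exa:hop-count gradient} and it yields a configuration-dependent finite bound on the number of rounds. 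I would make this precise by an induction on the finitely many values taken by $V$, ordered by $\wfLeOf{\rtype}$, combined with a well-founded measure that decreases at each relevant firing: using clause~(1) to rule out overshoot of already-settled devices and clause~(2) together with the settled lower values to forbid spurious justifications of a too-small value. Fairness is what connects this to the synchronous iteration of the second step: a block of $k$-fair firings guarantees that information propagates at least as deep as the bound above, so choosing $k$ large enough (dominated by $n$ times the per-device ascending-chain length) forces $\Field'=V$; combined with the nesting induction of the first step, this establishes self-stabilisation of the whole program.
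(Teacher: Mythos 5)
Your overall architecture matches the paper's: an outer induction on the nesting of spreading subexpressions reduces everything to a single spreading over inputs that have already stabilised (the paper calls such configurations \emph{pre-self-stable}), and the real work is the single-spreading case. Where you genuinely diverge is in separating a \emph{static} characterisation of the limit from the \emph{dynamic} convergence. The paper never writes down the fixpoint equation for $V$; it proves uniqueness and convergence simultaneously by an operational induction (Lemmas~\ref{lem-minimum-value}--\ref{lem-pre-self-stable-network-self-stabilization}): the devices whose source value is globally minimal stabilise first, then stability is propagated outward through the \emph{frontier} of the stable set, and the argument restarts on the residual subnetwork. Your stage~2 --- existence of $V$ by iterating $T$ from $\topValOf{\rtype}$ (with simple paths dominating because cycles are non-decreasing and can be excised by monotonicity) and uniqueness by a minimal-disagreement device --- is correct and buys something the paper does not make explicit: a closed-form description of the stable field as a minimum over simple paths of accumulated diffusion values. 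Your minimal-counterexample argument is essentially the paper's minimum-value reasoning repackaged as a statement about fixpoints rather than about evolutions.

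The one place you should be careful is stage~3, and it is exactly where the paper's frontier machinery earns its keep. The phrase ``iterating this along time produces a strictly ascending $\wfLtOf{\rtype}$-chain'' is not sound for an arbitrary device: a device's value is \emph{not} monotone over time, since a later firing can import a smaller justification from a neighbour that is still relaxing. The noetherian-chain argument only applies to a quantity that provably never decreases again. The paper manufactures such a quantity in Lemma~\ref{lem-minimum-value}(2)(b) and Lemma~\ref{lem-minimum-value-outside}(2)(b): for the device whose value is minimal among the \emph{not-yet-settled} devices, the post-firing value $\anyvalue'$ is a permanent lower bound, because no device ever held a value small enough to re-justify anything below $\anyvalue'$ --- and this permanence in turn requires that the already-settled devices form a frontier that screens off their influence (Lemma~\ref{lem-frontier}). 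Your remark about using ``the settled lower values to forbid spurious justifications'' is gesturing at precisely this, but it is the load-bearing step and needs to be stated as an invariant (``in every subsequent firing evolution the value stays $\wfGeOf{\rtype}\anyvalue'$'') before noetherianity can be invoked. With that lemma made explicit, your induction on the values of $V$ in increasing order becomes the paper's induction on the growing stable set $\stableNodeSet$, and the two proofs coincide.
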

\begin{proof}
See Appendix~\ref{app-proof-selfStabilisation}.
\end{proof}

We conclude this section by giving an outline of the proof of Theorem~\ref{the-network-self--stabilization}. To this aim we first introduce some auxiliary definitions. 

\subsubsection*{Auxiliary Definitions}

In the following we omit the subscript $\rtype$ in
$\wfLtOf{\rtype}$ and $\wfLeOf{\rtype}$ when
     it is clear from the context (i.e., we just write 
    $\wfLt$ and $\wfLe$).

Given a network $\network$ with main expression $\e$, we  write $\vtreeInNC{\deviceId}{\network}$ to denote the
    value-tree of $\e$ on device $\deviceId$ in the network configuration
    $\network$, and write
    $\anyvalueInNC{\deviceId}{\network}$ to denote the value
    $\vrootOf{\vtreeInNC{\deviceId}{\network}}$ of $\e$ on
    device $\deviceId$ in the network configuration
    $\network$. 
 Moreover, when  $\e=\spreadThree{\e_0}{\fname}{\e_1,\ldots,\e_n}$
we write $\vtreeInNC{j,\deviceId}{\network}$ and write
    $\anyvalueInNC{j,\deviceId}{\network}$ to denote  the
    value-tree and the value
    of $\e_j$ ($0\le j\le n$), respectively. In the following we omit to specify the
    network configuration $\network$ when is is clear from
    the context, i.e., we simply write
    $\vtree_{\deviceId}$, $\anyvalue_{\deviceId}$, $\vtree_{j,\deviceId}$ and $\anyvalue_{j,\deviceId}$.

We say that a device $\deviceId$ is \emph{stable} in $\network$ to mean that $\network\netArrPresStar\network'$ implies
      $\vtreeInNC{\deviceId}{\network}=\vtreeInNC{\deviceId}{\network'}$. Note that the following three statements are equivalent:
\begin{itemize}
\item
 $\network$ is stable. 
\item
All the devices of $\network$ are stable.
  \item $\network\netArrPresStar\network'$ implies
      $\network=\network'$.
  \end{itemize}
 We write $\netframeOf{\network}$ to denote the environment
 $ \Envi$ of a network configuration
 $\network=\SystS{\Envi}{\Field}$.
We say that a device $\deviceId$ is \emph{self-stable} in a network $\network$ to mean that it is stable and its value is univocally determined by $\netframeOf{\network}$. Note that a network is self-stable if and only if all its devices are self-stable.

We say that a network $\network$ with main expression $\e=\spreadThree{\e_0}{\fname}{\e_1,\ldots,\e_n}$ is \emph{pre-stable} to mean that for every device $\deviceId$ in $\network$:
\begin{enumerate}
\item
the subexpressions $\e_i$ ($0 \le i\le n$)  are stable, and
\item
$\anyvalue_{\deviceId}\wfLe\anyvalue_{0,\deviceId}$.
\end{enumerate}
We say that the network $\network$ is \emph{pre-self-stable} to mean that it is pre-stable and the value trees of the subexpressions $\e_i$ ($0 \le i\le n$) are self-stable (i.e, they are univocally determined by $\netframeOf{\network}$).
Note that pre-stability is preserved by
firing evolution (i.e., if $\network$ is
pre-stable and $\network\netArrPresStar\network'$, then
$\network'$ is pre-stable).

\subsubsection*{An Outline of the Proof of Theorem~\ref{the-network-self--stabilization}}

The proof is by induction on the syntax of closed expressions $\e$ and on
the number of function calls that may be encountered during the
evaluation of $\e$.
Let $\e$ be the main expression of $\network$ and  $\netframe=\netframeOf{\network}$.
The only interesting case is when $\e$ is a spreading expression
$\spreadThree{\e_0}{\fname}{\e_1,\ldots,\e_n}$.  
    By induction there exists $h\ge 0$ such that if
    $\network\netArrFairStarN{h}\network_1$ then on every
    device $\deviceId$, the evaluation of
    $\e_0,\e_1,\ldots,\e_n$ produce stable value-trees
    $\vtree_{0,\deviceId},\vtree_{1,\deviceId},\ldots,\vtree_{n,\deviceId}$,
    which are univocally determined by $\netframe$. 
    Note that, if $\network\netArrFairStarN{h+1}\network_2$
    then  $\network_2$ is
    pre-self-stable. Therefore we focus on the case when $\network$ is pre-self-stable.
The proof of this case is based on the following auxiliary results (which corresponds to Lemmas~\ref{lem-minimum-value}-\ref{lem-pre-self-stable-network-self-stabilization} of Appendix~\ref{app-proof-selfStabilisation}).

\begin{description}
\item[\ref{lem-minimum-value} (Minimum value)] Any $1$-fair evolution $\network\netArrFairStarN{1}\network'$ increases the value of any not self-stable  device $\deviceId$ in $\network$ such that
$\anyvalueInNC{\deviceId}{\network}$ is minimum (among
the values of the devices in $\network$). The new value $\anyvalueInNC{\deviceId}{\network'}$ is such that there exists a value $\anyvalue'$ such that
    $\anyvalueInNC{\deviceId}{\network}\wfLt\anyvalue'\wfLe\anyvalueInNC{\deviceId}{\network'}$
and  in any subsequent firing evolution $\network'\netArrPresStar\network''$ the value of the device $\deviceId$ will be always greater or equal to $\anyvalue'$ (i.e.,
    $\anyvalue'\wfLe\anyvalueInNC{\deviceId}{\network''}$). 
\item[\ref{lem-self-stabilisation-minimum-value} (Self-stabilisation of the minimum value)] Let $\stableNodeSet_1$ be the subset of the devices in $\network$ 
 such that
$\anyvalue_{0,\deviceId}$ is minimum  (among the values of
$\e_0$ in the devices in $\network$). There exists $k\ge 0$ such that any $k$-fair evolution  $\network\netArrFairStarN{k}\network'$
is such that
\begin{enumerate}
\item
each device $\deviceId$ in $\stableNodeSet_1$ is self-stable in $\network'$.
\item
in $\network'$ each device not in $\stableNodeSet_1$ has a value greater or equal then the values of the devices in $\stableNodeSet_1$ and, during any firing evolution, it will always assume  values greater than the values of the devices in $\stableNodeSet_1$.
\end{enumerate}
\item[\ref{lem-frontier} (Frontier)] Let $\nodeSubSet$ be a set of devices. Given a set of stable devices
$\stableNodeSet\subseteq\nodeSubSet$ we wrote
$\frontierOfIn{\stableNodeSet}{\nodeSubSet}$ to denote the
subset of the devices
$\deviceId\in\nodeSubSet-\stableNodeSet$ such that there
exists a device $\deviceId'\in\stableNodeSet$ such that
$\deviceId'$ is a neighbour of $\deviceId$. If $\nodeSubSet$ are devices of the network and $\stableNodeSet$ satisfies  the following conditions
\begin{enumerate}[label=(\roman*)]
\item
the condition obtained form condition (1) above by replacing $\stableNodeSet_1$ with $\stableNodeSet$,
\item
the condition obtained form condition (2) above by replacing $\stableNodeSet_1$ with $\stableNodeSet$, and
\item
$\frontierOfIn{\stableNodeSet}{\nodeSubSet}\not=\emptyset$,
\end{enumerate}
then any $1$-fair evolution makes the devices in $\frontierOfIn{\stableNodeSet}{\nodeSubSet}$ self-stable.
\item[\ref{lem-minimum-value-outside} (Minimum value not in  $\stableNodeSet$)] 
 If $\nodeSubSet$ are devices of the network and $\stableNodeSet$ satisfies conditions \emph{(i)}-\emph{(iii)} above, satisfies the following condition
\begin{itemize}
\item[\emph{(iv)}]
each device in $\frontierOfIn{\stableNodeSet}{\nodeSubSet}$ is self-stable in $\network$, and
\end{itemize}
$\sourceNodeSubSet\subseteq\nodeSubSet-\stableNodeSet$ is the set of devices $\deviceId$  such that
$\anyvalueInNC{\deviceId}{\network}$ is minimum (among
the values of the devices in $\nodeSubSet-\stableNodeSet$), and $\sourceNodeSubSet\cap\frontierOfIn{\stableNodeSet}{\nodeSubSet}=\emptyset$, then any $1$-fair evolution $\network\netArrFairStarN{1}\network'$ increases the value of any not self-stable  device $\deviceId$ in $\sourceNodeSubSet$. The new value $\anyvalueInNC{\deviceId}{\network'}$ is such that there exists a value $\anyvalue'$ such that
    $\anyvalueInNC{\deviceId}{\network}\wfLt\anyvalue'\wfLe\anyvalueInNC{\deviceId}{\network'}$
and  in any subsequent firing evolution $\network'\netArrPresStar\network''$ the value of the device $\deviceId$ will be always greater or equal to $\anyvalue'$ (i.e.,
    $\anyvalue'\wfLe\anyvalueInNC{\deviceId}{\network''}$). 
\item[\ref{lem-self-stabilisation-minimum-value-outside} (Self-stabilisation of the minimum value not in  $\stableNodeSet$)] 
 If $\nodeSubSet$ are devices of the network and $\stableNodeSet$ satisfies conditions \emph{(i)}-\emph{(iv)} above, and $\sourceNodeSubSet\subseteq\nodeSubSet-\stableNodeSet$ is the set of devices $\deviceId$  such that
$\anyvalueInNC{\deviceId}{\network}$ is minimum (among
the values of the devices in $\nodeSubSet-\stableNodeSet$), then there exists $k\ge 0$ such that any $k$-fair evolution  $\network\netArrFairStarN{k}\network'$
is such that there exists a device $\deviceId_1$ in  $\nodeSubSet-\stableNodeSet$ such that  $\stableNodeSet_1=\stableNodeSet\cup\{\deviceId_1\}$
satisfies the conditions (1) and (2) above.
\item[\ref{lem-pre-self-stable-network-self-stabilization} (Pre-self-stable network self-stabilization)] For every reachable pre-self-stable network configuration $\network$ there exists
$k\ge 0$ such that $\network\netArrFairStarN{k}\network'$
implies that $\network'$ is self-stable. This sixth  auxiliary result, which concludes our proof outline, follows from the previous five auxiliary results.
The idea is to consider the auxiliary results~\ref{lem-self-stabilisation-minimum-value}, \ref{lem-frontier} and~\ref{lem-self-stabilisation-minimum-value-outside}  as reasoning steps that may be iterated.
We start by applying the  auxiliary result~\ref{lem-self-stabilisation-minimum-value} to produce a non-empty set of devices $\stableNodeSet_1$ that satisfies conditions (1) and (2) above. Then, we rename $\stableNodeSet_1$ to $\stableNodeSet$ and iterate the following  two reasoning steps until the set of devices $\stableNodeSet$ is such that  $\frontierOfIn{\stableNodeSet}{\nodeSubSet}=\emptyset$:
\begin{itemize}
\item
apply the auxiliary results~\ref{lem-frontier} and~\ref{lem-self-stabilisation-minimum-value-outside} to produce
a non-empty set of devices  $\stableNodeSet_1$ that satisfies conditions (1) and (2) above; and
\item
rename $\stableNodeSet_1$ to $\stableNodeSet$.
\end{itemize} 
Clearly the number of iterations is finite (note that  $\stableNodeSet=\nodeSubSet$ implies $\frontierOfIn{\stableNodeSet}{\nodeSubSet}=\emptyset$).
If  $\stableNodeSet=\nodeSubSet$ we have done.
Otherwise note that, since $\frontierOfIn{\stableNodeSet}{\nodeSubSet}=\emptyset$,
 the evolution of the devices in
$\nodeSubSet-\stableNodeSet$ is independent from the
devices in $\stableNodeSet$. Therefore we we can iterate the whole reasoning (i.e., starting from the  auxiliary result~\ref{lem-self-stabilisation-minimum-value}) on the  the portion of the network with devices in $\nodeSubSet-\stableNodeSet$.
\end{description}

\section{Extending the Calculus with Pairs}\label{sec-calculus-extended}

The calculus presented in Sections~\ref{sec-fields} and~\ref{sec-calculus} does not model data structures. In this section we 
point out that the definitions and results presented in Sections~\ref{sec-stabilisingProgression} and~\ref{sec-properties-stabilising}  are  parametric in the set of modeled types by considering an extension of the calculus that models the pair data structure. Recent works, such as \cite{BV-FOCAS2014}, show that the ability of modelling pairs of values is key in realising context-dependent propagation of information, where along with distance one let the computational field carry additional information as gathered during propagation. In fact, pairs are needed for programming further interesting examples (illustrated in Section~\ref{sec-examples-extended}), and their introduction here is useful to better grasp the subtleties of our checking algorithm for self-stabilisation, as described in next sections.

\subsection{Syntax}\label{sec-basicIngredients-extended}

The extensions to the  syntax of the calculus  for modeling pairs are reported in Figure~\ref{fig:source:syntax-extended}. Now types include \emph{pair types} (like \texttt{<\numNAME,bool>}, \texttt{<\numNAME,<bool,\numNAME>>},...\ and so on), expressions include \emph{pair construction}   ($\mkpair{\e}{\e}$) and \emph{pair deconstruction} ($\fstK\,\e$  or $\sndK\,\e$), and values includes  \emph{pair values}
(\texttt{<1,TRUE>}, \texttt{<2,3.5>}, \texttt{<<1,FALSE>,3>},...\ and so on). 
\begin{figure}[!t]{
\centerline{\framebox[\textwidth]{$
\begin{array}{lcl@{\hspace{7cm}}r}
 \anytype & \BNFcce &  \groundType \;
    \; \BNFmid \;  \mkpairType{\anytype}{\anytype}
    &   {\mbox{\footnotesize type}} \\[2pt]
 \e & \BNFcce & \cdots \;
    \; \BNFmid \;  \mkpair{\e}{\e}
    \; \BNFmid \;  \fstK \; \e
    \; \BNFmid \;  \sndK \; \e
    &   {\mbox{\footnotesize expression}} 
\\
 \end{array}
 $}}}
 \caption{Extensions to the syntax of types and expressions (in Fig.~\ref{fig:source:syntax}) to model pairs}
\label{fig:source:syntax-extended}
\end{figure}

The ordering for ground types has to be somehow lifted to non-ground types.  A naural choice for pairs is the lexicographic preorder, i.e, to define $\mkpair{\anyvalue_1}{\anyvalue_2}\wfLeOf{\mkpairType{\anytype_1}{\anytype_2}}\mkpair{\anyvalue'_1}{\anyvalue'_2}$ if  either $\anyvalue_1\wfLtOf{\anytype_1}\anyvalue'_1$ holds or both $\anyvalue_1=\anyvalue'_1$ and $\anyvalue_2\wfLeOf{\anytype_2}\anyvalue'_2$ hold.

\subsection{Examples}\label{sec-examples-extended}

In this section we build on the examples introduced in Section~\ref{sec-examples} and show how pairs can be used to program
further kinds of behaviour useful in self-organisation mechanisms in general, and also in the specific case of crowd steering, as reported in Figure~\ref{f:code-extended}.
%

\begin{figure}[t]
\[{\footnotesize \begin{array}{|l|}
\hline
\deftt{<\numNAME,bool> \sumOrNAME}{<\numNAME,bool> x, <\numNAME,bool> y}{}\texttt{<fst x + fst y, snd x or snd y>}\\[1mm]
\deftt{<\numNAME,bool>  \ptRealBoolNAME}{<\numNAME, bool> x}{((\fstK\ x)=\topNumK) ? <\topNumK,\trueK> : x}\\[1mm]
\deftt{<\numNAME,bool> \spSumOrNAME}{<\numNAME,bool> x, <\numNAME,bool> y}{\ptRealBoolNAME(\sumOrNAME(x,y))}\\[1mm]
\deftt{bool sector}{\numNAME\ i, bool c}{snd \spreadtt{<i, c>}{sum\_or(\startt,<\senv{dist},c>)}}\\ \hline\\[-2mm]
\deftt{<\numNAME,\numNAME> \addToFstNAME}{<\numNAME,\numNAME> x, \numNAME\ y}{<fst x + y, snd x>}\\[1mm]
\deftt{<\numNAME,\numNAME>  \ptRealRealNAME}{<\numNAME, \numNAME> x}{((\fstK\ x)=\topNumK) ? <\topNumK,\topNumK> : x}\\[1mm]
\deftt{<\numNAME,\numNAME> \spAddToFstNAME}{<\numNAME,\numNAME> x, \numNAME\ y}{\ptRealRealNAME(\addToFstNAME(x,y))}\\[1mm]
\deftt{<\numNAME,\numNAME> gradcast}{\numNAME\ i, \numNAME\ j}{\spreadtt{<i, j>}{add\_to\_1st(\startt, \senv{dist})}}\\ \hline\\[-2mm]
\deftt{\numNAME\ dist}{\numNAME\ i, \numNAME\ j}{gradcast(restrict(j,j==0),grad(i))}\\[1mm]
\deftt{bool path}{\numNAME\ i, \numNAME\ j, \numNAME\ w}{(grad(i)+grad(j))+(-w) < dist(i, j)}\\[1mm]
\deftt{\numNAME\ channel}{\numNAME\ i, \numNAME\ j, \numNAME\ w}{gradobs(grad(j),not path(i, j, w))}\\
\hline
\end{array}}\]
\caption{Definitions of examples using pairs  (see Fig.~\ref{f:code} for the definitions of functions \texttt{restrict}, \texttt{grad} and \texttt{gradobs})}\label{f:code-extended}\end{figure}

The fourth function in Figure~\ref{f:code-extended}, called \texttt{sector}, can be used to keep track of specific
situations during the propagation process.
It takes a zero-field source \texttt{i} (a field holding value $0$ in a ``source'', as usual) and a boolean field
\texttt{c} denoting an area of interest: it creates a gradient
of pairs, orderly holding distance from source and a boolean
value representing whether the route towards the source crossed
area \texttt{c}.
As one such gradient is produced, it is wholly applied to operator \texttt{snd}, extracting a sector-like boolean field as shown in Figure \ref{f:grad} (left).
To do so, we use a special \diffusionText\ function \texttt{sum\_or} working on \texttt{\numNAME,bool} pairs, which sums the first components, and apply disjunction to the second.
In crowd steering, this pattern is useful to make people be aware of certain areas that the proposed path would cross, so as to support proper choices among alternatives \cite{MPV-SASO2012}.
Figure \ref{f:grad-extended} (left) shows a pictorial representation.

However, our self-stabilisation result reveals a subtlety. Function \sumOrNAME\ has to be tuned by composing it with \ptRealBoolNAME\  (which propagates the top value from the first to the second component of the pair), leading to function \spSumOrNAME\  (which is a stabilising diffusion).
This is needed to make sure that the top value \texttt{<\topNumK,\trueK>} of pairs  of type \texttt{<\numNAME,bool>} is used when distance reaches \topNumK: this is required to achieve progressiveness, and hence self-stabilisation.
Without it, in the whole area where distance is \topNumK{} we would have a behaviour similar to that of Example~\ref{exa:not-self-stabilising}: in particular, if \texttt{c} is \texttt{true} and \texttt{i} is \topNumK{} everywhere, both states where all nodes have second component equal to \texttt{true} (state $s_1$) and where all nodes have second component equal or \texttt{false} (state $s_2$) would be stable, and an even temporaneous flip of \texttt{c} to \texttt{false} in some node would make the system inevitable move to $s_2$---a clear indication of non self-stabilisation.

Note that \texttt{sector} function can be easily changed to propagate values of any sort by changing the type of the second component of pairs, and generalising over the \texttt{or} function.
E.g., one could easily define a spreading of ``set of values'' representing the obstacles encountered during the spread.

\begin{figure}[t]
\begin{center}
\includegraphics[width=0.2\textwidth]{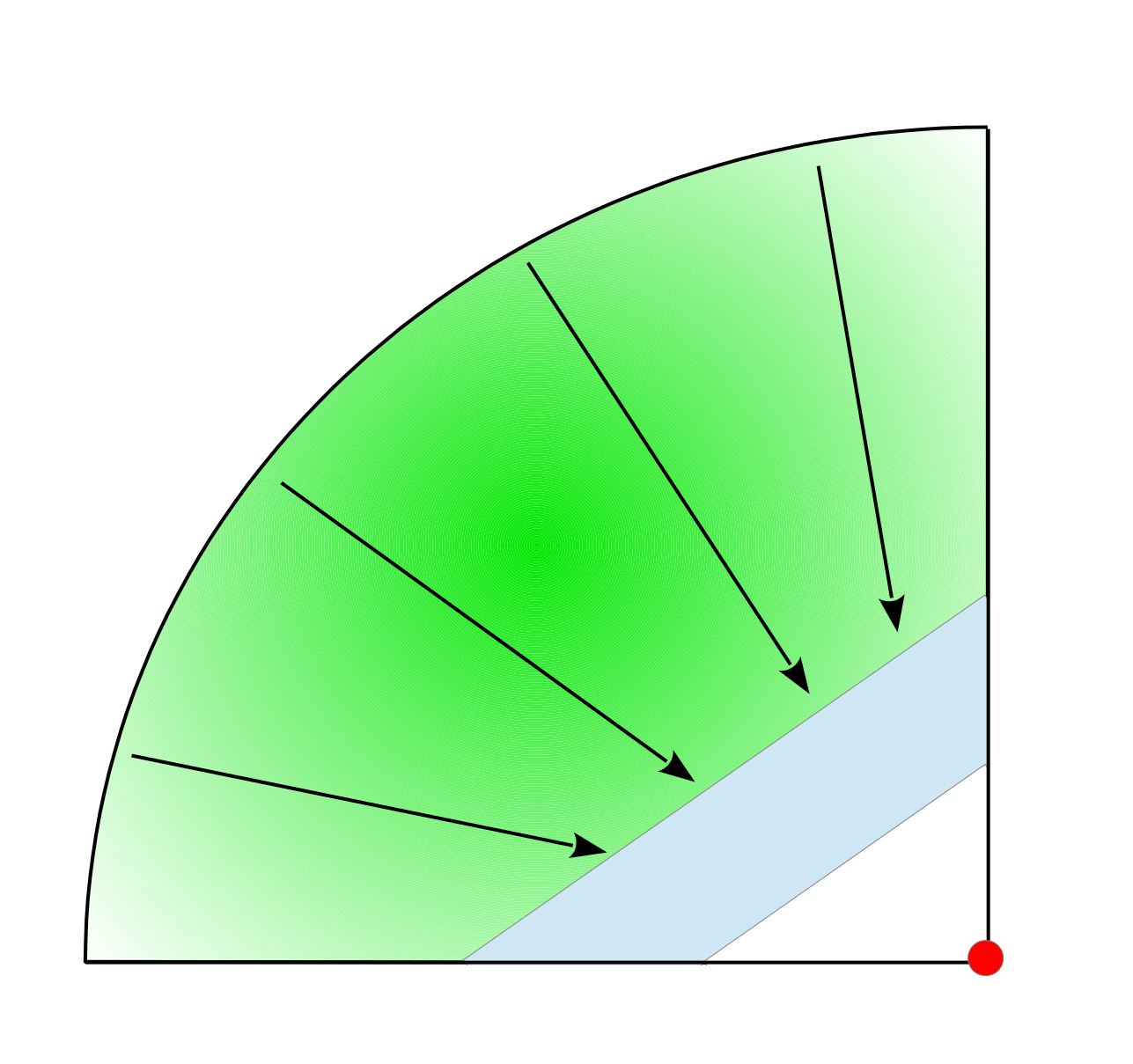}~~
\hspace{0.1\textwidth}
\includegraphics[width=0.2\textwidth]{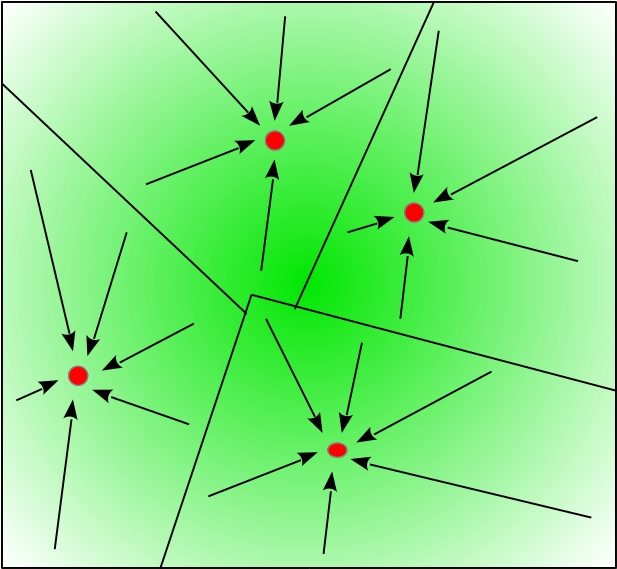}~~
\hspace{0.1\textwidth}
\includegraphics[width=0.2\textwidth]{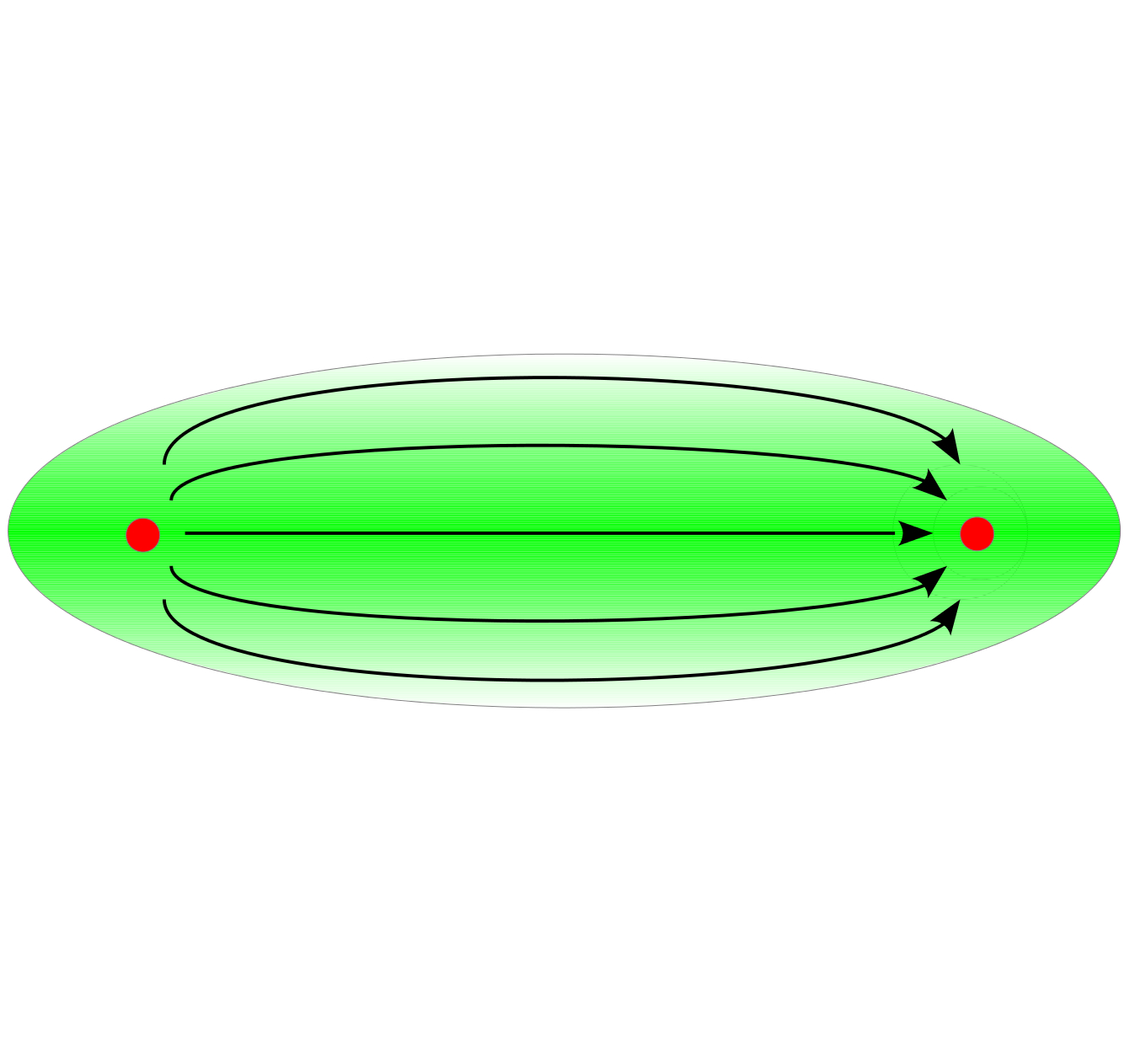}~~
\caption{Pictorial representation of  sector field (left), partition (center) and channel field (right)}\label{f:grad-extended}
\end{center}
\end{figure}

Another interesting function working on pairs is \texttt{gradcast}, that is useful to let a gradient carry some information existing in the source.
Expression \texttt{gradcast(i,j)}, where \texttt{i} is a $0$-field as usual, results in a field of pairs in which the first component is the minimum distance from a source and the second component is the value of \texttt{j} at that source.
Assuming \texttt{j} is a field of unique values in each node, e.g. a unique  identifier ID, \texttt{gradcast} performs a partitioning behaviour (namely, a so-called Voronoi partition): its second component forms a partition of the network into regions based on ``closeness'' to sources, and each region of the partition holds the value of \texttt{j} in that source; first component still forms a gradient towards such sources.
See Figure \ref{f:grad-extended} (center). 
Again, function \addToFstNAME\ has to be changed to \spAddToFstNAME\ to preserve self-stabilisation.

The remaining functions \texttt{dist}, \texttt{path} and \texttt{channel} are used to obtain a spatial pattern more heavily relying on multi-level composition, known as \emph{channel} \cite{VDB-FOCLASA-CIC2013,mitproto}.
Assume \texttt{i} and \texttt{j} are 0-fields, and suppose to be willing to steer people in complex and large environments from area \texttt{i} to destination
\texttt{j}, i.e., from a node where
\texttt{i} holds $0$ to a node where \texttt{j} holds $0$.
Typically, it is important to activate the steering service (spreading information,
providing signs, and detecting contextual information such as
congestion) only along the shortest path, possibly properly
extended of a distance width \texttt{w} to tolerate some
randomness of people movement---see Figure \ref{f:grad-extended} (right).
%
%
Function \texttt{dist} uses \texttt{gradcast}
to broadcasts the distance $d$ between \texttt{i} and
\texttt{j}---i.e., the minimum distance between a node where
\texttt{i} holds $0$ and a node where \texttt{j} holds $0$.
This is done by sending a gradcast from the source of
\texttt{j} holding the value  of \texttt{grad(i)} there, which
is exactly the distance $d$.
Function \texttt{path} simply marks
as positive those nodes whose distance from the shortest path
between \texttt{i} and \texttt{j} is smaller than \texttt{w}.
Finally, function \texttt{channel}
generates from \texttt{j} a gradient confined inside
\texttt{path(i,j,w)}, which can be used to steer people towards
the POI at \texttt{j} without escaping the path area.

\subsection{Type checking}\label{sec-typing-extended}

The typing rules for pair construction and deconstruction expressions are given in Figure~\ref{fig:SurfaceTyping-extended}.
 \begin{figure}[!t]{
 \framebox[1\textwidth]{
 $\begin{array}{l}
\begin{array}{c}
\surfaceTyping{T-PAIR}{  \quad
\surExpTypJud{\SurTypEnv}{\e_1}{\anytype_1}
\quad
\surExpTypJud{\SurTypEnv}{\e_2}{\anytype_2}
}{ \surExpTypJud{\SurTypEnv}{\mkpair{\e_1}{\e_2}}{\mkpairType{\anytype_1}{\anytype_2}} }
\quad
\surfaceTyping{T-FST}{ \;\;
\surExpTypJud{\SurTypEnv}{\e}{\mkpairType{\anytype_1}{\anytype_2}} }{
\surExpTypJud{\SurTypEnv}{\fstK\;\e}{\anytype_1} }
\quad
\surfaceTyping{T-SND}{ \;\;
\surExpTypJud{\SurTypEnv}{\e}{\mkpairType{\anytype_1}{\anytype_2}} }{
\surExpTypJud{\SurTypEnv}{\sndK\;\e}{\anytype_2} }
\end{array}
\end{array}$}
} 
\caption{Type-checking rules for pair construction and deconstruction expressions (cf.\ Fig.~\ref{fig:SurfaceTyping})} \label{fig:SurfaceTyping-extended}
\end{figure}
Note that adding these rules to the rules in Fig.~\ref{fig:SurfaceTyping} preserves the property that no choice may be done when building a derivation for a given  type-checking judgment, so the type-checking rules straightforwardly describe a type-checking algorithm (cf.\ end of Section~\ref{sec-typing}).

\begin{exa}\label{exa:diffusion-extended}
Consider the library in Figure~\ref{f:code-extended}. The following predicates hold:
\begin{itemize}
\item
 $\progressionOpOf{\sumOrNAME}$,
\item
 $\progressionOpOf{\spSumOrNAME}$,
\item
 $\progressionOpOf{\addToFstNAME}$, and
\item
 $\progressionOpOf{\spAddToFstNAME}$
\end{itemize}
\end{exa}

\begin{exa}
 The library in Figure~\ref{f:code-extended} type checks by using the ground types, sensors, and type-signatures for built-in functions  in Figure~\ref{fig:source-instance}.
\end{exa}

\subsection{Device Computation}\label{sec-device-comp-extended}

The big-step operational semantics rules for pair construction and deconstruction expressions are given in
Figure~\ref{fig:deviceSemantics-extended}. Note that they are syntax directed (in particular, the first premise of rule
\ruleNameSize{[E-PAIR]} ensures that there is no conflict with rule \ruleNameSize{[E-VAL]} of Fig.~\ref{fig:deviceSemantics}).

\begin{figure}[!t]{
 \framebox[1\textwidth]{
 $\begin{array}{l}
\begin{array}{rcl@{\hspace{8.5cm}}r}
 \anyvalue & \BNFcce &  \groundvalue \;  \; \BNFmid \;  \mkpair{\anyvalue}{\anyvalue}   &   {\mbox{\footnotesize value}} \\[2pt]
\end{array}\\
\hline\\[-8pt]
\begin{array}{c}
\surfaceTyping{E-PAIR}{
\\
\mkpair{\e_1}{\e_2} \; \textrm{not a value}
\quad
  \bsopsem{\senstate}{\premiseNumOf{1}{\overline\vtree}}{\e_1}{\vtreealt_1}
  \quad
    \bsopsem{\senstate}{\premiseNumOf{2}{\overline\vtree}}{\e_2}{\vtreealt_2}
  \quad
  \anyvalue_1=\vrootOf{\vtreealt_1}
 \quad
  \anyvalue_2=\vrootOf{\vtreealt_2}
 }{
\bsopsem{\senstate}{\overline\vtree}{\mkpair{\e_1}{\e_2}}{\mkpair{\anyvalue_1}{\anyvalue_2}(\vtreealt_1,\vtreealt_2)}
}
\skiptransition
\surfaceTyping{E-FST}{
          \\
    \bsopsem{\senstate}{\premiseNumOf{1}{\overline\vtree}}{\e}{\vtreealt}
  \quad
  \mkpair{\anyvalue_1}{\anyvalue_2}=\vrootOf{\vtreealt}
 }{
\bsopsem{\senstate}{\overline\vtree}{\fstK\;\e}{\anyvalue_1(\vtreealt)}
}
\qquad\qquad
\surfaceTyping{E-SND}{
          \\
    \bsopsem{\senstate}{\premiseNumOf{1}{\overline\vtree}}{\e}{\vtreealt}
  \quad
  \mkpair{\anyvalue_1}{\anyvalue_2}=\vrootOf{\vtreealt}
 }{
\bsopsem{\senstate}{\overline\vtree}{\sndK\;\e}{\anyvalue_2(\vtreealt)}
}
\skiptransition
\end{array}
\end{array}$}
} 
 \caption{Big-step operational semantics for pair construction and deconstruction expressions (cf.\ Fig.~\ref{fig:deviceSemantics})} \label{fig:deviceSemantics-extended}
\end{figure}

\subsection{Sorts}\label{sec-ref-types-extended}

Each sort has the same structure of the type it refines.
For instance, considering the sorts for ground types given in Figure~\ref{fig:refinement-subsortingGROUND}, we can build $36 (=6^2)$ \emph{pair sorts} for the pair type $\mkpairType{\numType}{\numType}$:
\[\begin{array}{llllll}
\mkpairType{\nnumType}{\nnumType}, & \mkpairType{\nnumType}{\znnumType}, & \mkpairType{\nnumType}{\znumType}, & \mkpairType{\nnumType}{\zpnumType}, & \mkpairType{\nnumType}{\pnumType}, & \mkpairType{\nnumType}{\numType}
\\
\vdots & \vdots & \vdots & \vdots & \vdots & \vdots
\\
\mkpairType{\numType}{\nnumType}, & \mkpairType{\numType}{\znnumType}, & \mkpairType{\numType}{\znumType}, & \mkpairType{\numType}{\zpnumType}, & \mkpairType{\numType}{\pnumType}, & \mkpairType{\numType}{\numType}
\end{array}
\] 
and $108 (36*3)$ sorts for the type $\mkpairType{\mkpairType{\numType}{\numType}}{\boolType}$:
\[
\mkpairType{\mkpairType{\nnumType}{\nnumType}}{\falseType}, \quad \mkpairType{\mkpairType{\nnumType}{\znnumType}}{\falseType}, \quad \myldots, \quad \mkpairType{\mkpairType{\numType}{\numType}}{\boolType}.
\]

Subsorting between ground sorts can be lifted to the sorts for non-ground types by suitable subsorting rules.
The following subsorting rule:
\[
\surfaceTyping{I-PAIR}{  \quad
\rsubof{\rtype_1}{\rtype'_1}
\quad
\rsubof{\rtype_2}{\rtype'_2}
}{ \rsubof{\mkpairType{\rtype_1}{\rtype_2}}{\mkpairType{\rtype'_1}{\rtype'_2}} }
\]
lifts subsorting between ground sorts  to pair sorts by 
modelling pointwise ordering on pairs. Note that the subsorting relation is determined by the subsorting for ground sorts.  Using the inclusions $\rsubof{\nnumType}{\znnumType}$ and $\rsubof{\trueType}{\boolType}$, and the above rule it is possible to derive, e.g., the inclusion  
$\rsubof{\mkpairType{\mkpairType{\nnumType}{\nnumType}}{\trueType}}{\mkpairType{\mkpairType{\nnumType}{\znnumType}}{\boolType}}$. 
Note that no choice may be done when building a derivation for a given  subsorting judgment $\rsubof{\rtype_1}{\rtype_2}$, so the subsorting rules (i.e, the rule \ruleNameSize{[I-PAIR]} and the subsorting for ground sorts) describe a deterministic algorithm.  

Similarly, 
$\ReverseHoare$ subsorting between ground sorts is lifted to pair-sorts by the following $\ReverseHoare$ subsorting rule:
\[
\surfaceTyping{P-I-PAIR}{  \quad
\rsubwrhof{\rtype_1}{\rtype'_1}
\quad
\rsubwrhof{\rtype_1}{\rtype'_1}
}{ \rsubwrhof{\mkpairType{\rtype_1}{\rtype_2}}{\mkpairType{\rtype'_1}{\rtype'_2}} }
\]
which (together with the $\ReverseHoare$  subsorting for ground sorts) describes a deterministic algorithm.

%

\subsection{Stabilising Diffusion predicate and Properties}\label{sec-properties-extended}

The stabilising diffusion predicate (Definition~\ref{def:StabilisingProgression}) and the stabilising diffusion condition (Section~\ref{sec-program-with-valid-assumptions}) are parametric in the set of types modeled by the calculus. 

\begin{exa}\label{exa:stabilising-diffusion-extended}
The library in Figure~\ref{f:code-extended} satisfies the stabilising diffusion condition. In particular, the following predicates hold:
\begin{itemize}
\item
$\stabilisingOpOf{\spSumOrNAME}{\mkpairType{\numType}{\boolType}(\mkpairType{\numType}{\boolType},\mkpairType{\pnumType}{\boolType})}$, and
\item
$\stabilisingOpOf{\spAddToFstNAME}{\mkpairType{\numType}{\numType}(\mkpairType{\numType}{\numType},\pnumType)}$. 
\end{itemize}
\end{exa}

The statements of device computation type preservation (Theorem~\ref{the-DeviceTypePreservation}), device computation termination (Theorem~\ref{the-DeviceTermination}),
and network self-stabilisation  for programs with valid sort and stabilising assumptions (Theorem~\ref{the-network-self--stabilization})
are parametric in the set of types modeled by the calculus. The proofs  (see Appendix~\ref{app-proof-typeSounness} and Appendix~\ref{app-proof-selfStabilisation}) are indeed given for the calculus with pairs---the cases for pair construction and deconstruction are straightforward by induction. 
The rest of the paper considers the calculus with pairs---in particular, the rules for checking sort and stabilising assumptions for diffusions are able to check the examples presented in Section~\ref{sec-examples-extended}.

\section{On Checking the  Stabilising-Diffusion Condition}\label{sec-stabilisation-checking}

The rest of the paper is devoted to illustrate a type-based analysis for checking the stabilising-diffusion condition (cf.\  Section~\ref{sec-program-with-valid-assumptions}) for the extension  of the calculus with pairs introduced in Section~\ref{sec-calculus-extended}. 

\subsection{A Type-based approach for checking the  Stabilising-Diffusion Condition}\label{sec-type-basedstabilisation-checking}

Recall that the stabilising-diffusion condition consists of two parts:
\begin{itemize}
\item
validity of the sort-signature assumptions (Condition (1) of Section~\ref{sec-program-with-valid-assumptions}; and
\item
validity of the  stabilising assumptions (Condition (2) of Section~\ref{sec-program-with-valid-assumptions}).
\end{itemize}
In order to check  the stabilising-diffusion condition we assume that each program $\PROGRAM$ comes with:
\begin{itemize}
\item
a non-empty set of sort-signature assumptions $\rsignaturesOf{\fname}$ for each function $\fname$; and 
\item
a possibly empty set of  stabilising sort-signature assumptions $\stabilisingSignaturesOf{\fname}$ for  each diffusion $\fname$.
\end{itemize}
The assumptions $\rsignaturesOf{\oname}$ and $\stabilisingSignaturesOf{\oname}$ for the built-in functions $\oname$ are considered valid---they should come with the definition of the language.  Instead, the validity of the assumptions $\rsignaturesOf{\dname}$ and $\stabilisingSignaturesOf{\dname}$ for the user-defined functions $\dname$ must be checked---these assumptions could be either (possibly partially) provided by the user or automatically inferred.\footnote{The naive inference approach, that is: inferring $\rsignaturesOf{\dname}$ by checking all the  possible refinements of the type-signature of $\dname$ is linear in the number of elements of $\refsigof{\signatureOf{\dname}}$. Some optimizations are possible. We do not address this issue in the paper.}

\subsubsection{On checking Condition (1) of Section~\ref{sec-program-with-valid-assumptions}}

Section~\ref{sec-ensuringSelfStabilisation} introduces a sort-checking system that  checks  Condition (1) of Section~\ref{sec-program-with-valid-assumptions}. Namely,  given a program (or library) $\PROGRAM$, it checks that: $(i)$ each user-defined function  $\dname$ in $\PROGRAM$ has all the sort signatures in $\rsignaturesOf{\dname}$ and, if $\dname$ is a diffusion, it has also the sort signatures in  
$\stabilisingSignaturesOf{\dname}$;  
 and $(ii)$ 
every  diffusion-expressions $\spreadThree{\e_1}{\fname}{\e_2,\myldots,\e_n}$ occurring in   $\PROGRAM$ is sort-checked by considering for $\fname$ only the sort-signatures in $\stabilisingSignaturesOf{\fname}$.
The soundness of the sort-checking system (shown in Section~\ref{sec-properties-sorting}) guarantees that, if the check is passed, then 
for every  diffusion-expressions $\spreadThree{\e_1}{\fname}{\e_2,\myldots,\e_n}$ occurring in the  $\PROGRAM$
there is a sort-signature $\rtype(\rtype_1,\ldots,\rtype_n)\in\stabilisingSignaturesOf{\fname}$ such that the evaluation of the subexpression $\e_i$ yields a value $\anyvalue_i\in\semOf{\rtype_i}$ ($1\le i\le n$). I.e.,
Condition (1) of Section~\ref{sec-program-with-valid-assumptions} holds.

\subsubsection{On checking Condition (2) of Section~\ref{sec-program-with-valid-assumptions}}\label{sec-On checking Condition (2)}

Note that, if  Condition (1) of Section~\ref{sec-program-with-valid-assumptions} has been checked by the sort-checking system of Section~\ref{sec-ensuringSelfStabilisation}, then in order to check Condition (2) of Section~\ref{sec-program-with-valid-assumptions} holds it is enough to check 
 that for each user-defined diffusion $\dname$,
 each sort-signature $\rtype(\rtype_1,\ldots,\rtype_n)\in\stabilisingSignaturesOf{\dname}$ is stabilising  for  $\dname$.

In order to check that  for each user-defined diffusion $\dname$ of signature $\anytype_1(\anytype_1,\ldots,\anytype_n)$ each sort-signature $\rtype(\rtype_1,\ldots,\rtype_n)\in\stabilisingSignaturesOf{\dname}$ is stabilising for $\dname$, we introduce additional requirements. Namely, we require that for every user-defined diffusion $\dname$ such that  $\stabilisingSignaturesOf{\dname}\not=\emptyset$:
\begin{enumerate}
\item
there exists a sort-signature $\rtype(\cdots)\in\stabilisingSignaturesOf{\dname}$ such that $\rtype'(\cdots)\in\stabilisingSignaturesOf{\dname}$ implies $\rtype\subwrhLe\rtype'$; and
\item
if $\anytype_1$ is not ground (i.e., if it is a pair type), then
the user-defined diffusion $\dname$ is of the form
\begin{equation}\label{eq:ctd}
\defK\;\anytype_1 \;  \dname(\anytype_1 \; \xname_1,\ldots,\anytype_n \; \xname_n) \; \isK \; \canonicalTopFor{\topValOf{\rtype}}(\fname(\xname_1,\ldots,\xname_n))
\end{equation}
where 
\begin{itemize}
\item
 $\canonicalTopFor{\topValOf{\rtype}}$ (defined in Section~\ref{sec-preliminary-iProgression}) is a
 pure function of sort-signature $\rtype(\rtype)$,  
\item
$\fname$ is a diffusion, and
\item
if $\fname$ is user-defined then  $\stabilisingSignaturesOf{\fname}=\emptyset$.
\end{itemize}
\end{enumerate}
\noindent
Note that the above additional requirements can be checked automatically.

In the rest of this section we first (in Section~\ref{sec-preliminary-iProgression}) introduce  some auxiliary definitions (including, for each sort $\rtype$,  the definition of the pure function $\canonicalTopFor{\topValOf{\rtype}}$); then (in Section~\ref{sec-iProgression}) we introduce the notion of \emph{$\PpAnn$-prestabilising diffusion with respect to a progressive sort-signature $\rtype(\rtype_1,\ldots,\rtype_n)$} and show that 
\begin{itemize}
\item
$\rtype$ ground implies that: if $\rtype(\rtype_1,\ldots,\rtype_n)$ is $\PpAnn$-prestabilising for the user-defined diffusion  $\dname$ then 
$\rtype(\rtype_1,\ldots,\rtype_n)$ is stabilising for $\dname$;
\item
if $\rtype(\rtype_1,\ldots,\rtype_n)$ is $\PpAnn$-prestabilising for the diffusion $\fname$ then 
$\rtype(\rtype_1,\ldots,\rtype_n)$ is stabilising for the user-defined diffusion $\dname$  displayed in Equation~\ref{eq:ctd}; and
\end{itemize}
finally (in Section~\ref{sec-annotatedSorts}) we introduce  \emph{annotated sort-signatures} and \emph{annotated sorts} as convenient notations to be used in writing type-based  rules for checking  $\PpAnn$-prestabilisation.

Section~\ref{sec-checkingStabilisingDiffusion} introduces an annotated sort checking system that checks that 
for each user-defined diffusion $\dname$ of signature $\anytype_1(\anytype_1,\ldots,\anytype_n)$ and for each sort-signature $\rtype(\rtype_1,\ldots,\rtype_n)\in\stabilisingSignaturesOf{\dname}$:
\begin{itemize}
\item
$\rtype$ ground implies that $\rtype(\rtype_1,\ldots,\rtype_n)$ is $\PpAnn$-prestabilising for  $\dname$; and
\item
$\rtype$ not ground implies that $\rtype(\rtype_1,\ldots,\rtype_n)$ is $\PpAnn$-prestabilising for the diffusion $\fname$ occurring in Equation~\ref{eq:ctd}.
\end{itemize}
The soundness of the annotated sort checking system (shown in Section~\ref{sec-properties-annotating}) guarantees that, if the check is passed, then Condition (2) of Section~\ref{sec-program-with-valid-assumptions} holds.


\subsection{Auxiliary definitions}\label{sec-preliminary-iProgression}

For any type $\anytype$ the \emph{leftmost-as-key preorder} $\keywfLeOf{\anytype}$ is the preorder that weakens the order $\wfLeOf{\anytype}$ by considering each pair as a record where the leftmost ground element is the key. It 
is defined by:
\begin{itemize}
\item
$\anyvalue\keywfLeOf{\anytype}\anyvalue'\quad$ if $\quad\anyvalue\wfLeOf{\anytype}\anyvalue'$, where $\anytype$ is a ground type; and
\item
$\mkpair{\anyvalue_1}{\anyvalue_2}\keywfLeOf{\mkpairType{\anytype_1}{\anytype_2}}\mkpair{\anyvalue'_1}{\anyvalue'_2}\quad$ if $\quad\anyvalue_1\keywfLeOf{\anytype_1}\anyvalue'_1$.
\end{itemize}
Note that that the leftmost-as-key preorder is total, i.e., for every $\anyvalue,\anyvalue'\in\semOf{\anytype}$ we have that either $\anyvalue\keywfLeOf{\anytype}\anyvalue'$ or $\anyvalue'\keywfLeOf{\anytype}\anyvalue$ holds. We write  $\anyvalue\keywfEqOf{\anytype}\anyvalue'$ to mean that both $\anyvalue\keywfLeOf{\anytype}\anyvalue'$ and $\anyvalue'\keywfLeOf{\anytype}\anyvalue$ hold. Of course $\anyvalue\keywfEqOf{\anytype}\anyvalue'$ does not imply $\anyvalue\wfEqOf{\anytype}\anyvalue'$. Note also that
$\anyvalue\keywfLtOf{\anytype}\anyvalue'$ implies $\anyvalue\wfLtOf{\anytype}\anyvalue'$.

For every sort $\rtype$ of $\anytype$, we wrote $\keywfLeOf{\rtype}$ to denote the restriction of $\keywfLeOf{\anytype}$ to $\semOf{\rtype}$.
According to the previous definition, we define the $\leftKey$ of a sort $\rtype$ as the the leftmost ground sort occurring in $\rtype$,
and  the $\leftKey$ of a value $\anyvalue$ as the the leftmost ground value occurring in $\anyvalue$. The   $\leftKey$ mappings can be inductively defined as follows:
\begin{itemize}
\item
$\leftKeyOf{\rtype}=\rtype$, if $\rtype$ is a ground sort
\item
$\leftKeyOf{\rtype}=\leftKeyOf{\rtype_1}$, if $\rtype=\mkpairType{\rtype_1}{\cdots}$
\end{itemize}
and
\begin{itemize}
\item
$\leftKeyOf{\anyvalue}=\anyvalue$, if $\anyvalue$ is a ground value
\item
$\leftKeyOf{\anyvalue}=\leftKeyOf{\anyvalue_1}$, if $\anyvalue=\mkpair{\anyvalue_1}{\cdots}$.
\end{itemize}
Note that, for every $\anyvalue$ and $\anyvalue'$ of sort $\rtype$ it holds that:
\begin{equation}\label{eq:leftKeyProperty}
\anyvalue\keywfLeOf{\rtype}\anyvalue' \qquad \mbox{ if and only if } \qquad \leftKeyOf{\anyvalue}\wfLeOf{\leftKeyOf{\rtype}}\leftKeyOf{\anyvalue'}.
\end{equation}

For every sort $\rtype$ of type $\anytype$ we write $\canonicalTopFor{\topValOf{\rtype}}$ to denote the pure  function (which satisfies the sort-signature $\rtype(\rtype)$), that maps the elements  $\anyvalue$ such that $\anyvalue\keywfEqOf{\rtype}\topValOf{\rtype}$ to $\topValOf{\rtype}$ (i.e., that propagates the top value from the key of a pair value to all the other components of the value), and is the identity otherwise. Note that the function $\canonicalTopFor{\topValOf{\rtype}}$ can be inductively defined as:  $\defK\;\anytype \;  \canonicalTopFor{\topValOf{\rtype}}(\anytype \; \xname) \; \isK \; \e$, where
\[
\e = \left\{\begin{array}{ll}
 \xname & \mbox{if $\rtype$ is ground}
\\
\canonicalTop{\topValOf{\rtype'}}{\fstK \; \xname} = \topValOf{\rtype'}) \; ? \; \topValOf{\rtype} \; : \xname & \mbox{if } \topValOf{\rtype}=\mkpairType{\topValOf{\rtype'}}{\cdots} 
\end{array}\right.
\]

For every diffusion $\fname$ with signature$\anytype_1(\anytype_1,\ldots,\anytype_n)$ and sort signature $\rtype(\rtype_1,\ldots,\rtype_n)$ $(n\ge 1)$ we write $\ctdOfFor{\fname}{\topValOf{\rtype}}$ to denote the composition of $\fname$ and  $ \canonicalTopFor{\topValOf{\rtype}}$, which is the diffusion (which satisfies the sort-signature $\rtype(\rtype_1,\ldots,\rtype_n)$) defined as follows:
\[
\defK\;\anytype_1 \;  \ctdOfFor{\fname}{\topValOf{\rtype}}(\anytype_1 \; \xname_1,\ldots,\anytype_n \; \xname_n) \; \isK \; \canonicalTopFor{\topValOf{\rtype}}(\fname(\xname_1,\ldots,\xname_n))
\]

\subsection{\texorpdfstring{$\PpAnn$}{!}-Prestabilising \DiffusionsText\ and \texorpdfstring{$\WpAnn$}{?}-Prestabilising \DiffusionsText}\label{sec-iProgression}

 A $\pAnn$-prestabilising \diffusionText{} is a \diffusionText{} whose progressiveness behaviour  is expressed by the  annotation $\pAnn$ that ranges over $\PpAnn$ (for \emph{\certainlyprestabilisingText}) and $\WpAnn$ (for \emph{\possiblyprestabilisingText}), as illustrated by the following definition.

\begin{defi}[$\pAnn$-prestabilising \diffusionText]\label{def:MuPiProgression}
A \diffusionText\ $\fname$ 
 is  $\pAnn$-\emph{prestabilising with respect to the \progressiveSortSigNAME\ sort signature} $\rtype(\rtype_1\overline{\rtype})\in\refsigof{\fname}$ (notation $\pOf{\fname}{\rtype(\rtype_1\overline{\rtype})}$)  if for any $\overline{\anyvalue}\in\semOf{\overline{\rtype}}$:
\begin{enumerate}
    \item
     if $\pAnn=\PpAnn$ then
 \begin{itemize}
\item
 $\anyvalue\keywfLtOf{\rtype_1}\anyvalue'$ and $\opApply{\opFunOf{\fname}}{\anyvalue,\overline{\anyvalue}}=\anyvalue''\not\keywfEqOf{\rtype_1}\topValOf{\rtype_1}$
 imply
    $\anyvalue''\keywfLtOf{\rtype_1}\opApply{\opFunOf{\fname}}{\anyvalue',\overline{\anyvalue}}$;
\item
for all
        $\anyvalue\in\semOf{\rtype_1}-\{\topValOf{\rtype_1}\}$,
        $\anyvalue\keywfLtOf{\rtype_1}\opApply{\opFunOf{\fname}}{\anyvalue,\overline{\anyvalue}}$;
\end{itemize}
    \item
     if $\pAnn\in\{\PpAnn,\WpAnn\}$ then
\begin{itemize}
\item
 $\anyvalue\keywfLeOf{\rtype_1}\anyvalue'$
    implies
    $\opApply{\opFunOf{\fname}}{\anyvalue,\overline{\anyvalue}}\keywfLeOf{\rtype_1}\opApply{\opFunOf{\fname}}{\anyvalue',\overline{\anyvalue}}$;
\item 
for all
        $\anyvalue\in\semOf{\rtype_1}$,
     $\anyvalue\keywfLeOf{\rtype_1}\opApply{\opFunOf{\fname}}{\anyvalue,\overline{\anyvalue}}$.
 \end{itemize}
\end{enumerate}
\end{defi}
\noindent We say that the sort-signature $\rtype(\overline{\rtype})$ is  $\pAnn$-\emph{prestabilising} \emph{for} $\fname$ to mean that $\pOf{\fname}{\rtype(\overline{\rtype})}$ holds, and write  $\psigof{\fname}$  to denote  set of  the $\pAnn$-prestabilising sort-signatures for $\fname$.

Recall the definition of $\ctdOfFor{\fname}{\topValOf{\rtype}}$ given at the end of Section~\ref{sec-preliminary-iProgression}.
The following proposition guarantees that if $\rtype(\overline{\rtype})$ is $\PpAnn$-prestabilising for the diffusion $\fname$ then:
\begin{itemize}
\item
  $\rtype$ ground implies that $\rtype(\overline{\rtype})$ is stabilising for $\fname$; and 
\item
 $\rtype(\overline{\rtype})$ is stabilising for the user-defined diffusion $\dname$  displayed in Equation~\ref{eq:ctd} of Section~\ref{sec-preliminary-iProgression}.
\end{itemize}

\begin{prop}
\begin{enumerate}
\item
if  $\rtype$ is ground then: $\PpOf{\fname}{\rtype(\overline{\rtype})}$ implies $\stabilisingOpOf{\fname}{\rtype(\overline{\rtype})}$, i.e,
\[
\Ppsigof{\fname}\subseteq\stabsigof{\fname}.
\]
\item
$\PpOf{\fname}{\rtype(\overline{\rtype})}$ implies $\stabilisingOpOf{\ctdOfFor{\fname}{\topValOf{\rtype}}}{\rtype(\overline{\rtype})}$, i.e,
\[
\Ppsigof{\fname}\subseteq\stabsigof{\ctdOfFor{\fname}{\rtype}}.
\]
\end{enumerate}
\end{prop}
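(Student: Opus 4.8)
The plan is to read both predicates off their defining inequalities and to exploit the single structural fact that separates them: Definition~\ref{def:MuPiProgression} is phrased with the leftmost-as-key preorder $\keywfLeOf{\rtype_1}$, whereas Definition~\ref{def:StabilisingProgression} is phrased with the order $\wfLeOf{\rtype_1}$. On a ground sort these two coincide, which settles Part~(1) with essentially no extra work; on a pair sort they differ precisely on the key-top stratum $\{\anyvalue\in\semOf{\rtype_1}:\anyvalue\keywfEqOf{\rtype_1}\topValOf{\rtype_1}\text{ and }\anyvalue\neq\topValOf{\rtype_1}\}$, and composing with $\canonicalTopFor{\topValOf{\rtype}}$ in Part~(2) is exactly the device that repairs the behaviour there.

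For Part~(1), I would first observe that, since $\fname$ is a diffusion, its first-argument type and its result type coincide (Definition~\ref{def:Progression}); hence $\rtype$ ground forces $\rtype_1$ ground, so on $\semOf{\rtype_1}$ the relations $\keywfLeOf{\rtype_1},\keywfLtOf{\rtype_1},\keywfEqOf{\rtype_1}$ collapse to $\wfLeOf{\rtype_1},\wfLtOf{\rtype_1},\wfEqOf{\rtype_1}$. Then the monotonicity bullet of Definition~\ref{def:MuPiProgression}(2) becomes verbatim condition~(1) of Definition~\ref{def:StabilisingProgression}; the strict bullet of Definition~\ref{def:MuPiProgression}(1) becomes the non-top half of condition~(2); and the fixed-point half $\opApply{\opFunOf{\fname}}{\topValOf{\rtype_1},\overline{\anyvalue}}\wfEqOf{\rtype_1}\topValOf{\rtype_1}$ follows by instantiating the inflationary bullet of Definition~\ref{def:MuPiProgression}(2) at $\topValOf{\rtype_1}$ and using that $\opApply{\opFunOf{\fname}}{\topValOf{\rtype_1},\overline{\anyvalue}}\in\semOf{\rtype_1}$ together with maximality of $\topValOf{\rtype_1}$. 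This step is routine.

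For Part~(2), write $g=\ctdOfFor{\fname}{\topValOf{\rtype}}$, so that $\opApply{\opFunOf{g}}{\anyvalue,\overline{\anyvalue}}=\opApply{\opFunOf{\canonicalTopFor{\topValOf{\rtype}}}}{\opApply{\opFunOf{\fname}}{\anyvalue,\overline{\anyvalue}}}$. I would first isolate a small lemma collecting the properties of $\canonicalTopFor{\topValOf{\rtype}}$: it preserves the key (so $\opApply{\opFunOf{\canonicalTopFor{\topValOf{\rtype}}}}{w}\keywfEqOf{\rtype}w$ for every $w$), it is the identity off the key-top stratum, it sends every key-top value to the genuine maximum $\topValOf{\rtype}=\topValOf{\rtype_1}$, and it is monotone and inflationary for $\wfLeOf{\rtype}$. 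Given this, I would verify the two conditions of Definition~\ref{def:StabilisingProgression} for $g$ by a case split according to whether the relevant $\fname$-images lie in the key-top stratum. Progressiveness is the clean case: for an argument off the stratum, $\fname$ already strictly increases the key by Definition~\ref{def:MuPiProgression}(1) and $\canonicalTopFor{\topValOf{\rtype}}$ leaves the value unchanged, so $g$ strictly increases; for an argument on the stratum, key-monotonicity and the inflationary bullet force $\opApply{\opFunOf{\fname}}{\anyvalue,\overline{\anyvalue}}$ back into the key-top stratum, whence $\canonicalTopFor{\topValOf{\rtype}}$ collapses $g$ to $\topValOf{\rtype_1}$, which strictly dominates any non-top argument; and at $\topValOf{\rtype_1}$ itself the same collapse yields the required fixed point.

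The main obstacle will be the monotonicity condition, and with it the reconciliation of the two orders on the key-top stratum. Off the stratum the composition behaves well, because the key-monotonicity of $\fname$ (Definition~\ref{def:MuPiProgression}(2)) transports through the key-preserving, monotone map $\canonicalTopFor{\topValOf{\rtype}}$; the delicate point is when two comparable arguments yield $\fname$-images sharing the \emph{same} non-top key, where the key preorder records no separation at all. The argument therefore has to lean on the stratum-collapse property of $\canonicalTopFor{\topValOf{\rtype}}$—in particular that its image contains no key-top value other than $\topValOf{\rtype}$, so that the ambiguous top stratum is contracted to a single point—and to confirm that this collapse is precisely what both forces the constraint $\topValOf{\rtype}=\topValOf{\rtype_1}$ (which makes $\rtype(\overline{\rtype})$ progressive) and secures the uniqueness of the stable value underlying self-stabilisation. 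I would thus expect the bulk of the effort to go into this stratum-collapse lemma for $\canonicalTopFor{\topValOf{\rtype}}$, with the remaining cases reducing to bookkeeping; the ground/pair dichotomy in the statement mirrors exactly whether this stratum is trivial (ground: it is just $\{\topValOf{\rtype_1}\}$, nothing to repair) or not.
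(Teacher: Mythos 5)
Your overall strategy---unfold both definitions and track where $\keywfLeOf{\rtype_1}$ and $\wfLeOf{\rtype_1}$ diverge---is the right one, and it is essentially all the paper does: its proof consists of the single sentence ``Straightforward from Definition~\ref{def:MuPiProgression}, Definition~\ref{def:StabilisingProgression}, and the definition of $\ctdOfFor{\fname}{\rtype}$''. Part~(1) is correct as you argue it: for a diffusion the result sort and the first-argument sort refine the same ground type, so the key preorder collapses to the total order, and the fixed-point clause at $\topValOf{\rtype_1}$ follows from the inflationary bullet plus maximality of $\topValOf{\rtype_1}$ in $\semOf{\rtype_1}$.

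The gap is in Part~(2), precisely at the spot you yourself flag as the main obstacle, and the mechanism you propose does not close it. Suppose $\anyvalue\wfLeOf{\rtype_1}\anyvalue'$ and the two images $\opApply{\opFunOf{\fname}}{\anyvalue,\overline{\anyvalue}}$ and $\opApply{\opFunOf{\fname}}{\anyvalue',\overline{\anyvalue}}$ share the same key $k$ with $k$ strictly below $\topValOf{\leftKeyOf{\rtype_1}}$. Then $\canonicalTopFor{\topValOf{\rtype}}$ is the identity on both images---the stratum collapse you invoke contracts only the key-top stratum and is inert below it---yet condition~(1) of Definition~\ref{def:StabilisingProgression} demands that the two images be ordered by the lexicographic order $\wfLeOf{\rtype_1}$. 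Every clause of Definition~\ref{def:MuPiProgression} speaks only about the key preorder and therefore says nothing about the non-key components of the images; what you actually obtain for $\ctdOfFor{\fname}{\topValOf{\rtype}}$ is key-monotonicity, which is strictly weaker than what is required. Concretely, a diffusion of the shape $\opApply{\opFunOf{\fname}}{\mkpair{x}{y},\overline{\anyvalue}}=\mkpair{x+1}{h(y)}$ with $h$ decreasing meets all the key-level constraints of Definition~\ref{def:MuPiProgression}, yet it sends the ordered pair of arguments $\mkpair{0}{1}\wfLeOf{\rtype_1}\mkpair{0}{2}$ to lexicographically reversed results, and composing with $\canonicalTopFor{\topValOf{\rtype}}$ changes nothing since the common key $1$ is not top. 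So either an additional hypothesis must be made explicit (e.g.\ monotonicity of $\fname$ in the non-key components, which the paper's concrete diffusions \texttt{sum\_or} and \texttt{add\_to\_1st} do satisfy), or this case must be discharged by some property not present in Definition~\ref{def:MuPiProgression}; as written, your argument cannot close it, and the paper's one-line proof glosses over exactly the same point.
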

\begin{proof}
Straightforward from Definition~\ref{def:MuPiProgression},  Definition~\ref{def:StabilisingProgression}, and the definition of $\ctdOfFor{\fname}{\rtype}$.
\end{proof}

\begin{exa}
 Consider the libraries in Figure~\ref{f:code} and~\ref{f:code-extended}.
The following predicates hold (cf.\ Examples~\ref{exa:stabilising-diffusion} and~\ref{exa:stabilising-diffusion-extended}):
\begin{itemize}
\item
$\PpOf{+}{\numType(\numType,\pnumType)}$ and $\WpOf{+}{\numType(\pnumType,\zpnumType)}$
\item
$\PpOf{\texttt{max}}{\pnumType(\znnumType,\pnumType)}$ and  $\WpOf{\texttt{max}}{\numType(\numType,\numType)}$, where \texttt{max} is the  the binary maximum function:
\begin{center}
\texttt{def max(\numNAME\ x, \numNAME\ y)  is  x < y ? y : x }
\end{center}
\item
$\WpOf{\texttt{id}}{\numType(\numType)}$
\item
$\WpOf{\texttt{restrict}}{\numType(\numType,\boolType)}$
\item
$\PpOf{\texttt{restrictSum}}{\numType(\numType,\pnumType,\boolType)}$
\item
$\PpOf{\texttt{sum\_or}}{\mkpairType{\numType}{\boolType}(\mkpairType{\numType}{\boolType},\mkpairType{\pnumType}{\boolType})}$
\item
$\PpOf{\texttt{add\_to\_1st}}{\mkpairType{\numType}{\numType}(\mkpairType{\numType}{\numType},\pnumType)}$
\end{itemize}
\end{exa}

\subsection{Annotated Sort-Signatures and Annotated Sorts}\label{sec-annotatedSorts}

 In order to be able to write type-based  rules for checking  $\pAnn$-stabilisation we introduce, as convenient notations,  \emph{annotated sort-signatures} and \emph{annotated sorts}.

An \emph{annotated sort-signature} $\rtype(\overline{\rtype})\pAnnSB$  is a \progressiveSortSigNAME\ sort-signature (cf.\ Definition~\ref{def:ProgressiveSortSignature}) with a  $\pAnn$ annotation. It provides a convenient notation to express the fact that the predicate  $\pOf{\fname}{\rtype(\overline{\rtype})}$ holds. Namely, we say that a diffusion $\fname$ \emph{has} (or \emph{satisfies}) the annotated sort-signature $\rtype(\overline{\rtype})\pAnnSB$ to mean that the predicate  $\pOf{\fname}{\rtype(\overline{\rtype})}$ holds. 
We write $\pannsigof{\fname}$ to denote the set of the annotated sort-signatures with annotation $\pAnn$ that are satisfied by the diffusion $\fname$, and we write $\annsigof{\fname}$ to denote $\Ppannsigof{\fname}\cup\Wpannsigof{\fname}$.

The \emph{support} of an annotated sort signature $\rtype(\overline{\rtype})\pAnnSB$ is the progressive sort signature $\rtype(\overline{\rtype})$. 
Given an annotated sort-signature $\rtype(\overline{\rtype})\pAnnSB$ we write $\annErasure{\rtype(\overline{\rtype})\pAnnSB}$ to denote its support.
 Note that, according to the above definitions, the mapping $\pannsigof{\cdot}$ provides the same information of the mapping  $\psigof{\cdot}$   introduced in Section~\ref{sec-iProgression}, i.e., $\psigof{\fname}=\annErasure{\pannsigof{\fname}}$.

Given a diffusion type-signature $\anytype(\anytype\overline{\anytype})$ (cf.\ Definition~\ref{def:Progression}) we write $\annsigof{\anytype(\anytype\overline{\anytype})}$ to denote the (set of) annotated sort-signatures  that refine it, i.e., the set
\[
\{    \rtype(\rtype'\overline{\rtype})\pAnnSB   \;\vert\; \rtype(\rtype'\overline{\rtype})\in\progrefsigof{\anytype(\anytype\overline{\anytype})}
\mbox{ and } \pAnn\in\{\PpAnn,\WpAnn\} \}.
\]

Recall the stabilising subsigning partial order between progressive signatures introduced at the end of Section~\ref{sec-Progression}. The following order between progressiveness annotations, that we call \emph{subannotating} relation and denote by $\le$,  

\begin{tikzpicture}
\node (WpAnn) at (0:0) {$\WpAnn$};
\node (PpAnn) at (270:1) {$\PpAnn$}  edge [->] (WpAnn);
\end{tikzpicture}

\noindent induces the following partial order between annotated sort-signatures, that we call \emph{annotated subsigning}:
\[
\surfaceTyping{I-A-SIG}{  \quad
\rsubstabof{\rtype(\rtype_1\overline{\rtype})}{\rtype'(\rtype'_1\overline{\rtype}')}
\quad
\rsubof{\pAnn}{\pAnnScript{'}}
}{ \rsubof{\rtype(\rtype_1\overline{\rtype})\pAnnSB}{\rtype'(\rtype'_1\overline{\rtype}')\pAnnScriptSB{'}} }
\]
%
%
The following proposition shows that annotated subsigning captures the natural implication relation between  $\pAnn$-prestabilisation properties.
\begin{prop}[Sounness of annotated subsigning]\label{prop:Sounness of annotated subsigning}
If the diffusion $\fname$ satisfies the the annotated sort-signature $\rtype(\overline{\rtype})\pAnnSB$ and $\rsubof{\rtype(\overline{\rtype})\pAnnSB}{\rtype'(\overline{\rtype}')\pAnnScriptSB{'}}$, then $\fname$ satisfies  $\rtype'(\overline{\rtype'})\pAnnScriptSB{'}$.
\end{prop}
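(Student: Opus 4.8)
The plan is to unfold both the definition of $\pAnn$-prestabilisation (Definition~\ref{def:MuPiProgression}) and the annotated subsigning rule \ruleNameSize{[I-A-SIG]}, and then check that each defining clause transfers from the smaller (unprimed) signature $\rtype(\rtype_1\overline{\rtype})$ to the larger (primed) one $\rtype'(\rtype'_1\overline{\rtype}')$. From $\rsubof{\rtype(\rtype_1\overline{\rtype})\pAnnSB}{\rtype'(\rtype'_1\overline{\rtype}')\pAnnScriptSB{'}}$, rule \ruleNameSize{[I-A-SIG]} together with the stabilising subsigning rule \ruleNameSize{[I-S-SIG]} and the definition of $\subwrhLe$ yields the following data: $\semOf{\rtype}\subseteq\semOf{\rtype'}$ with $\topValOf{\rtype}=\topValOf{\rtype'}$; $\semOf{\rtype'_1}\subseteq\semOf{\rtype_1}$ with $\topValOf{\rtype'_1}=\topValOf{\rtype_1}$; $\semOf{\overline{\rtype}'}\subseteq\semOf{\overline{\rtype}}$; and $\rsubof{\pAnn}{\pAnnScript{'}}$. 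Since $\keywfLeOf{\rtype_1}$ and $\keywfLeOf{\rtype'_1}$ are by definition both restrictions of the single leftmost-as-key preorder $\keywfLeOf{\anytype_1}$ on the common underlying type $\anytype_1$, and $\semOf{\rtype'_1}\subseteq\semOf{\rtype_1}$, the relation $\keywfLeOf{\rtype'_1}$ is exactly $\keywfLeOf{\rtype_1}$ restricted to $\semOf{\rtype'_1}$ (and likewise for $\keywfLtOf{\rtype'_1}$ and $\keywfEqOf{\rtype'_1}$).

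Before the transfer I would record two preliminary facts. First, $\rtype'(\rtype'_1\overline{\rtype}')\in\refsigof{\fname}$: because $\subwrhLe$ refines ordinary subsorting, the data above give $\rsubof{\rtype(\rtype_1\overline{\rtype})}{\rtype'(\rtype'_1\overline{\rtype}')}$, and then the standard subsumption argument for $\refsigof{\cdot}$ applies (every tuple admissible for the primed signature is admissible for the unprimed one, and each resulting output lies in $\semOf{\rtype}\subseteq\semOf{\rtype'}$). This is exactly the membership requirement built into $\pOfSCRIPT{'}{\fname}{\rtype'(\rtype'_1\overline{\rtype}')}$. Second, since $\rtype'(\rtype'_1\overline{\rtype}')$ is an annotated sort-signature it is progressive, so $\semOf{\rtype'}\subseteq\semOf{\rtype'_1}$; hence every output $\opApply{\opFunOf{\fname}}{\anyvalue,\overline{\anyvalue}}$ produced from arguments admissible for the primed signature lies in $\semOf{\rtype'_1}$, which makes all the comparisons under $\keywfLeOf{\rtype'_1}$ appearing in Definition~\ref{def:MuPiProgression} well defined.

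Then I would split on $\pAnnScript{'}$. If $\pAnnScript{'}=\WpAnn$, only Condition~(2) of Definition~\ref{def:MuPiProgression} needs to be shown for the primed signature; as $\rsubof{\pAnn}{\WpAnn}$ holds for either value of $\pAnn$, and Condition~(2) is imposed in both cases, $\fname$ already satisfies Condition~(2) for the unprimed signature, and I transfer it by instantiating its universally quantified values in the smaller domains $\semOf{\rtype'_1}\subseteq\semOf{\rtype_1}$ and $\semOf{\overline{\rtype}'}\subseteq\semOf{\overline{\rtype}}$, reading the conclusions back under the restricted preorder $\keywfLeOf{\rtype'_1}$ thanks to the second preliminary fact. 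If instead $\pAnnScript{'}=\PpAnn$, then $\rsubof{\pAnn}{\PpAnn}$ together with minimality of $\PpAnn$ forces $\pAnn=\PpAnn$, so $\fname$ satisfies both Conditions~(1) and~(2) for the unprimed signature, and each transfers the same way. The only clause that genuinely uses the progressive rather than the ordinary flavour of subsorting is Condition~(1): the side-condition $\anyvalue''\not\keywfEqOf{\rtype'_1}\topValOf{\rtype'_1}$ and the domain $\semOf{\rtype'_1}-\{\topValOf{\rtype'_1}\}$ must coincide with their unprimed counterparts restricted to $\semOf{\rtype'_1}$, and this coincidence is precisely what the equality $\topValOf{\rtype'_1}=\topValOf{\rtype_1}$ (supplied by $\subwrhLe$) guarantees. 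I expect this top-value preservation to be the one load-bearing step; the rest is the routine restriction of quantifiers to subdomains, exactly as in the proof of Proposition~\ref{prop:Soudness of stabilising subsigning}.
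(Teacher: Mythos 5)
Your proposal is correct and follows exactly the route the paper intends: the paper's own proof is the one-line remark that the claim is ``straightforward from Definition~\ref{def:MuPiProgression} and the definition of annotated subsigning,'' and your argument is precisely that unfolding carried out in full, with the top-value preservation supplied by $\subwrhLe$ correctly identified as the one step that needs the progressive rather than the ordinary subsorting. No gaps; you have simply written out the details the authors omitted.
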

\begin{proof}
Straightforward from Definition~\ref{def:MuPiProgression} and the definition of annotated subsigning (Rule \ruleNameSize{[I-A-SIG]} above).
\end{proof}


We say that an value $\anyvalue$ has (or satisfies) \emph{annotated sort} $\rtype'_1\pAnnSB$  to mean that $\anyvalue$ has sort $\rtype'_1$, and
\begin{itemize}
\item
the application 
of any \diffusionText\ with annotated sort signature $\rtype(\rtype_1,\myldots,\rtype_n)\pAnnSB$ such that both $\rsubwrhof{\leftKeyOf{\rtype'_1}}{\leftKeyOf{\rtype_1}}$
and $\rsubof{\rtype'_1}{\rtype_1}$ hold
\item
to $\anyvalue$ and to values $\anyvalue_2,\ldots,\anyvalue_n$ of sorts 
$\rtype'_2,\myldots,\rtype'_n$ (respectively) such that $\rsubof{\rtype'_2}{\rtype_2},\myldots,\rsubof{\rtype'_n}{\rtype_n}$,  
\end{itemize}
produces a result of annotated sort $\rtype\pAnnScriptSB{''}$, where $\pAnn''=\pAnn(\pAnn')$.
According to this definition, the following property holds.
\begin{prop}[Annotated sorts for ground values]\label{prop:AnnotatedSortsForGroundValues}
For every sort $\rtype\in\refof{\anytype}$ the maximum element $\anyvalue$ of $\semOf{\rtype}$ w.r.t.\ $\wfLeOf{\rtype}$
has both  sort $\rtype\PpAnnSB$ and sort  $\rtype\WpAnnSB$.
\end{prop}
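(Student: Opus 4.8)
The plan is to argue by coinduction on the self-referential definition of ``value $\anyvalue$ has annotated sort $\rtype\pAnnSB$''. That definition mentions ``has annotated sort'' again in its own conclusion (the value produced by applying a diffusion must again carry an annotated sort), so I would read it as the greatest relation closed under its defining clause and prove membership by exhibiting a relation $X$ that is closed under that clause and contains the pairs we want. Concretely, I would take
\[
X=\{\,(\anyvalue,\rtype\pAnnSB) \mid \anyvalue\in\semOf{\rtype},\ \anyvalue\keywfEqOf{\rtype}\topValOf{\rtype},\ \pAnn\in\{\PpAnn,\WpAnn\}\,\}.
\]
Since $\topValOf{\rtype}\keywfEqOf{\rtype}\topValOf{\rtype}$ holds trivially and $X$ admits both annotations, the pairs $(\topValOf{\rtype},\rtype\PpAnnSB)$ and $(\topValOf{\rtype},\rtype\WpAnnSB)$ lie in $X$; establishing closure then yields the proposition for both annotated sorts at once.

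The one design choice that matters is that $X$ tracks \emph{key-equality to the top value}, i.e.\ $\anyvalue\keywfEqOf{\rtype}\topValOf{\rtype}$, rather than literal equality $\anyvalue=\topValOf{\rtype}$. For a pair sort the image of a key-top value under a diffusion need not equal $\topValOf{\rtype''}$ (the deeper components are not pinned down), but it will always share its key, and it is exactly key-topness that keeps the closure going. Note also that $X$ imposes no constraint on the annotation, so the composition $\pAnn''=\pAnn(\pAnn')$ appearing in the definition will be irrelevant to the argument, which is what makes the proof uniform in $\pAnn$.

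The core of the work is the closure check. I would take $(\anyvalue,\rtype\pAnnSB)\in X$ and a diffusion $\fname$ with annotated sort-signature $\rtype''(\rtype_1\overline{\rtype})\pAnnScriptSB{'}$ satisfying $\rsubwrhof{\leftKeyOf{\rtype}}{\leftKeyOf{\rtype_1}}$ and $\rsubof{\rtype}{\rtype_1}$, applied to $\anyvalue$ together with arguments $\overline{\anyvalue}=\anyvalue_2,\ldots,\anyvalue_n$ of sorts refining $\overline{\rtype}$; write $\somevalue=\opApply{\opFunOf{\fname}}{\anyvalue,\overline{\anyvalue}}$. That $\somevalue\in\semOf{\rtype''}$ is immediate from the sort-signature. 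For key-topness, from $\anyvalue\keywfEqOf{\rtype}\topValOf{\rtype}$ and $\rsubwrhof{\leftKeyOf{\rtype}}{\leftKeyOf{\rtype_1}}$ (which forces $\topValOf{\leftKeyOf{\rtype}}=\topValOf{\leftKeyOf{\rtype_1}}$) I get, through property~\eqref{eq:leftKeyProperty}, that $\anyvalue\keywfEqOf{\rtype_1}\topValOf{\rtype_1}$. Now I would invoke only condition~(2) of Definition~\ref{def:MuPiProgression}, which holds for \emph{both} annotations: monotonicity in the first argument gives $\somevalue\keywfEqOf{\rtype_1}\opApply{\opFunOf{\fname}}{\topValOf{\rtype_1},\overline{\anyvalue}}$, and weak progressiveness combined with the fact that $\topValOf{\rtype_1}$ is also maximal for $\keywfLeOf{\rtype_1}$ (because $\keywfLtOf{\rtype_1}$ implies $\wfLtOf{\rtype_1}$) pins $\opApply{\opFunOf{\fname}}{\topValOf{\rtype_1},\overline{\anyvalue}}\keywfEqOf{\rtype_1}\topValOf{\rtype_1}$. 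Hence $\somevalue\keywfEqOf{\rtype_1}\topValOf{\rtype_1}$, and transferring along $\rsubwrhof{\rtype''}{\rtype_1}$ (same top value, hence same top key) via~\eqref{eq:leftKeyProperty} yields $\somevalue\keywfEqOf{\rtype''}\topValOf{\rtype''}$, i.e.\ $(\somevalue,\rtype''\pAnnScriptSB{''})\in X$ for whatever $\pAnn''$ the definition prescribes.

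The main obstacle I anticipate is conceptual rather than computational: recognizing that the definition of ``a value has an annotated sort'' must be read coinductively, and that the corresponding invariant has to be phrased as key-equality to the top value instead of equality to it, since on pair sorts only the key of the image is controlled. Once the invariant is chosen correctly, the closure step reduces to a short chain of applications of property~\eqref{eq:leftKeyProperty}, the monotonicity and weak-progressiveness clauses of condition~(2) of Definition~\ref{def:MuPiProgression}, and the equal-top-value clause of $\ReverseHoare$ subsorting, with the two claimed annotated sorts $\rtype\PpAnnSB$ and $\rtype\WpAnnSB$ falling out of the single argument.
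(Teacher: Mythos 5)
Your proof is correct and is essentially the argument the paper intends: the paper's own proof is only the one-line remark that the claim is ``straightforward from Definition~\ref{def:MuPiProgression}'', i.e.\ that condition~(2) of that definition (which holds for both annotations $\PpAnn$ and $\WpAnn$) forces every admissible diffusion to send a key-top value to a key-top value. Your explicit coinductive invariant --- key-equality to $\topValOf{\rtype}$ rather than literal equality, with the annotation left unconstrained --- is a careful formalization of exactly that observation, and your closure check (via property~\eqref{eq:leftKeyProperty}, the equal-top-value clause of $\ReverseHoare$ subsorting, and monotonicity plus weak progressiveness applied to $\topValOf{\rtype_1}$) goes through as written.
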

\begin{proof}
Straightforward from Definition~\ref{def:MuPiProgression}.
\end{proof}
\noindent
The following partial order between annotated sorts, that we call \emph{annotated subsorting}, models the natural implication between the properties they represent:
\[
\surfaceTyping{I-A-SORT}{  \quad
\rsubwrhof{\leftKeyOf{\rtype}}{\leftKeyOf{\rtype'}}
\quad
\rsubof{\rtype}{\rtype'}
\quad
\rsubof{\pAnn}{\pAnnScript{'}}
}{ \rsubof{\rtype\pAnnSB}{\rtype'\pAnnScriptSB{'}} }
\]
The \emph{support} of an annotated sort 
$\rtype\pAnnSB$ is the sort $\rtype$. 
Given an annotated sort $\artype$ we write $\annErasure{\artype}$ to denote its support.
\section{Checking Sort-Signature Assumptions for User-Defined Functions}\label{sec-ensuringSelfStabilisation}

In this section we present a decidable sort-checking system that guarantees that if a program (or library)  $\PROGRAM$ is \emph{well-sorted} (i.e., it can be successfully checked by the rules of the system) then  Condition (1) of Section~\ref{sec-program-with-valid-assumptions} holds. 
\footnote{Note that, in this section, we do not use the additional requirements (1) and (2) introduced at the beginning of Section~\ref{sec-On checking Condition (2)}. This generality might become useful since Condition (2) of Section~\ref{sec-program-with-valid-assumptions} might be checked by using a technique different from the one presented in Section~\ref{sec-stabilisation-checking}.}

We first introduce some auxiliary definitions  (in Section~\ref{sec-ref-typing-auxiliary}); then we consider the issue of associating sorts to values and sensors (in Section~\ref{sec-ref-for-values}), sort-signatures to functions (in Section~\ref{sec-ref-for-functions}) and stabilising sort-signatures to diffusions (in Section~\ref{sec-ref-for-diffusions}); and finally we present 
a decidable sort system for checking the correctness of sort-signature declarations for user-defined functions  (in Section~\ref{sec-ref-typing})
and show that it is sound (in Section~\ref{sec-properties-sorting}).

\subsection{Auxiliary Definitions}\label{sec-ref-typing-auxiliary}

The auxiliary definitions presented in this section will be used (in Section~\ref{sec-ref-typing}) to formulate the sort-checking rules for function applications and spreading expressions.

Recall that for every type $\anytype$ and type-signature $\anytype(\overline{\anytype})$ both  $(\refof{\anytype},\rsub)$ and  $(\refsigof{\anytype(\overline{\anytype})},\rsub)$ are partial orders (cf.\ Section~\ref{sec-ref-types}).
Sort checking an expression $\e$ of type $\anytype$  amounts to compute an \emph{abstract interpretation}~\cite{Cousot:ACM-CS-1996} over these partial orders.

Given a partial order $(P,\le)$ and a subset $Q$ of $P$ we say that:
\begin{itemize}
\item
an element $q_0\in Q$ is \emph{minimal in} $Q$ to mean that: if   $q\in Q$ and $q\le q_0$ then 
 $q=q_0$---the set of the minimal elements of $Q$ is denoted by  $\minimalsOf{Q}$;
\item
$Q$ is \emph{minimised} to mean that every $q\in Q$ is minimal in $Q$, i.e.,  that $Q=\minimalsOf{Q}$.
\end{itemize} 
%

Given a set of sort-signatures $Q\subseteq\refsigof{\anytype(\overline{\anytype})}$ and some sorts $\overline{\rtype}'\in\refsigof{\overline{\anytype}}$, consider the (possibly empty) subset of $Q$ defined as follows:
\[
Q(\overline{\rtype}') = \{ \rtype(\overline{\rtype})\in Q \;\vert\; \rsubof{\overline{\rtype}'}{\overline{\rtype}} \}. 
\]
We say that $Q$
 is \emph{\deterministicText}, notation $\deterministicOf{Q}$, to mean that
for all sorts $\overline{\rtype}'$ there exists a sort-signature $\rtype(\overline{\rtype})\in Q(\overline{\rtype}')$, called the \emph{most specific sort-signature for} $\overline{\rtype}'$ \emph{in} $Q$,  such that:
\begin{center}
for all $\rtype''(\overline{\rtype}'')\in Q(\overline{\rtype}')$ it holds that  $\rsubof{\rtype}{\rtype''}$.
\end{center}
The mapping  $\mostSpecificOf{Q}{\overline{\rtype}'}$,
given a deterministic set of sort-signatures $Q$ $\subseteq$ $\refsigof{\anytype(\overline{\anytype})}$  and some sorts $\overline{\rtype}'\in\refsigof{\overline{\anytype}}$, returns the \emph{most specific sort-signature for} $\overline{\rtype}'$ \emph{in} $Q$ if $Q(\overline{\rtype}')$ is not empty, and is undefined otherwise.

\subsection{Sorts for Values and Sensors}\label{sec-ref-for-values}

We assume  a mapping $\rtypeof{\cdot}$ that associates:
\begin{itemize}
\item
to each ground value $\groundvalue$  the minimum (w.r.t.\  $\rsub$) sort $\rtypeof{\groundvalue}$ in $\refof{\groundvalue}$,  and 
\item
to each sensor $\snsname$ the minimum (w.r.t.\  $\rsub$) sort  $\rtypeof{\snsname}$ in $\refof{\typeof{\snsname}}$ such that
$\groundvalue\in\semOf{\rtypeof{\snsname}}$ for every ground value $\groundvalue$ that may be returned by  $\snsname$.
\end{itemize}

\begin{exa}\label{exa:refinement-valuesANDsensorsGROUND}
 Figure~\ref{fig:refinement-valuesANDsensorsGROUND} illustrates the sorts for the ground values and sensors used in the examples introduced throughout the paper.
\end{exa}

\subsection{Sort-signatures for Functions}\label{sec-ref-for-functions}

We assume a mapping $\rsignaturesOf{\cdot}$ that associates  to each built-in function $\oname$ a  set of sort-signatures $\rsignaturesOf{\oname}\subseteq\refsigof{\oname}$ such that the following conditions are satisfied:
\begin{itemize}
\item
$\rsignaturesOf{\oname}$ is non-empty, minimised and deterministic, and
\item
$\rsignaturesOf{\oname}$ represents all the sort-signatures satisfied by $\fname$, i.e.,
 for each $\rtype'(\overline{\rtype}') \in\refsigof{\oname}$ there exists  $\rtype(\overline{\rtype}) \in\rsignaturesOf{\oname}$ such that $\rtype(\overline{\rtype})\rsub\rtype'(\overline{\rtype}')$ holds.
\end{itemize}
Note that the first of the above two conditions on the mapping $\rsignaturesOf{\oname}$ can be checked automatically.
%

\begin{exa}\label{exa:refinement-functions}
Figure~\ref{fig:refinement-instanceBUILT-IN} illustrates the sort-signatures for built-in functions used in the examples introduced throughout the paper.
\end{exa}
 \begin{figure}[!t]{
 \framebox[1\textwidth]{
 $\begin{array}{@{\hspace{-2cm}}l}
\textbf{Ground values sort:} 
  \\
\begin{array}{lcl}
\rtypeof{\falseK} & = & \falseType  
\\
\rtypeof{\trueK} & = & \trueType  
\\
\rtypeof{\groundvalue} & = & \nnumType, \quad \mbox{if $\;\typeof{\groundvalue}=\numType\;$ and $\;\groundvalue<0$}
\\
\rtypeof{\groundvalue} & = & \znumType, \quad \mbox{if  $\;\groundvalue=0$}
\\
\rtypeof{\groundvalue} & = & \pnumType, \quad \mbox{if $\;\typeof{\groundvalue}=\numType\;$ and $\;\groundvalue>0$}
\end{array}
\\
\\
\textbf{Sensors sort:} 
  \\
\begin{array}{lcl}
\rtypeof{\texttt{\#src}} & = & \zpnumType  
\\
\rtypeof{\texttt{\#dist}} & = & \pnumType  
\end{array}
\end{array}$}
} 
\caption{Sorts for ground values and sensors used in the examples  (cf.\ Figure~\ref{fig:source-instance})} \label{fig:refinement-valuesANDsensorsGROUND}
\end{figure}
 \begin{figure}[!t]{
 \framebox[1\textwidth]{
 $\begin{array}{@{\hspace{-2cm}}l}
\begin{array}{lclr}
\rsignaturesOf{\notK} & = & \trueType(\falseType),   & 
\\
                        &  & \falseType(\trueType),   & 
\\
                        &  & \boolType(\boolType)   & 
\\
\rsignaturesOf{\orK} & = & \falseType(\falseType,\falseType),  &
\\
                        &  & \trueType(\trueType,\boolType), &  
\\
                        & & \trueType(\boolType,\trueType),   & 
\\
                        &  & \boolType(\boolType,\boolType)   & 
\\
\rsignaturesOf{-} & = & \nnumType(\pnumType),   & 
\\
                        &  & \znnumType(\zpnumType),   & 
\\
                        &  & \znumType(\znumType),   & 
\\
                        &  & \zpnumType(\znnumType),   & 
\\
                        &  & \pnumType(\nnumType),   & 
\\
                        &  & \numType(\numType)   & 
\\
\rsignaturesOf{+} & = & \nnumType(\nnumType,\znnumType),  &
\\
                        &  & \nnumType(\znnumType,\nnumType), &  
\\
                        & & \znnumType(\znnumType,\znnumType),   &  
\\
                        &  &  \znumType(\znumType,\znumType),   & 
\\
                        &  &  \zpnumType(\zpnumType,\zpnumType),   &   
\\
                        &  &  \pnumType(\zpnumType,\pnumType),   &
\\
                        & & \pnumType(\pnumType,\zpnumType),   &  
\\
                        & & \numType(\numType,\numType)   &  
\\
\rsignaturesOf{=} & = & \falseType(\znnumType,\pnumType),   & 
\\
                        &  &  \falseType(\nnumType,\zpnumType),   & 
\\       
                        &  &  \falseType(\zpnumType,\nnumType),  &
\\
                        &  & \falseType(\pnumType,\znnumType), &  
\\                                   
                        & & \trueType(\znumType,\znumType),   &  
\\
                        &  &  \boolType(\numType,\numType)   & 
\\
\rsignaturesOf{<} & = & \falseType(\zpnumType,\nnumType),   & 
\\
                        &  &  \falseType(\pnumType,\znnumType),   & 
\\                           
                        & & \falseType(\znumType,\znumType),   &  
\\
                        &  &  \trueType(\nnumType,\zpnumType),  &
\\
                        &  & \trueType(\znnumType,\pnumType), &  
\\
                        &  &  \boolType(\numType,\numType)   & 
\end{array}
\end{array}$}
} 
\caption{Sort-signatures for built-in functions used in the examples (cf.\ Figure~\ref{fig:source-instance})} \label{fig:refinement-instanceBUILT-IN}
\end{figure}
%

We also assume that the mapping $\rsignaturesOf{\cdot}$ associates to each user-defined function $\dname$ a  set of  sort-signatures  $\rsignaturesOf{\dname}\subseteq\refsigof{\signatureOf{\dname}}$ such that the following conditions are satisfied:
\begin{itemize}
\item
$\rsignaturesOf{\dname}$ is non-empty, minimised and \deterministicText, and
\item
$\rsignaturesOf{\dname}$  contains at least a sort-signature which is smaller than the type-signature $\signatureOf{\dname}$, i.e.,
there exists  $\rtype(\overline{\rtype}) \in\rsignaturesOf{\dname}$ such that $\rtype(\overline{\rtype})\rsub\signatureOf{\dname}$ holds.
\end{itemize}
Note that the above two conditions on the mapping $\rsignaturesOf{\dname}$ can be checked automatically. 

%
%

\subsection{Stabilising Sort-signatures for Diffusions}\label{sec-ref-for-diffusions}

We assume a mapping $\stabilisingSignaturesOf{\cdot}$ that associates to each built-in  \diffusionText\ $\oname$  a (possibly empty) set of sort-signatures $\stabilisingSignaturesOf{\oname}$ such that the following conditions are satisfied:
\begin{itemize}
\item
$\stabilisingSignaturesOf{\oname}$ is  minimised and deterministic;
\item
$\stabilisingSignaturesOf{\oname}\subseteq\stabsigof{\oname}$; and
\item
$\stabilisingSignaturesOf{\oname}$ represents all the stabilising sort-signatures satisfied by $\oname$, i.e.,
 for each $\rtype'(\overline{\rtype}') \in\stabsigof{\oname}$ there exists  $\rtype(\overline{\rtype}) \in\stabilisingSignaturesOf{\oname}$ such that $\rtype(\overline{\rtype})\rsub\rtype'(\overline{\rtype}')$ holds.
\end{itemize}
Note that the first of the above three conditions on the mapping $\stabilisingSignaturesOf{\oname}$ can be checked automatically.

\begin{exa}\label{exa:stabilising-instanceBUILT-IN}
Figure~\ref{fig:stabilising-instanceBUILT-IN} gives the stabilising sort-signatures for the built-in \diffusionsText\ $\oname$ used in the examples introduced throughout the paper---the built-in \diffusionsText\ without stabilising sort-signatures are omitted.
Note that $\stabilisingSignaturesOf{+}\not\subseteq\rsignaturesOf{+}$, since the stabilising sort-signature  $\numType(\numType,\pnumType)\in\stabilisingSignaturesOf{+}$ is not minimal in  $\rsignaturesOf{+}\cup\{\numType(\numType,\pnumType)\}$ and therefore it cannot be included in  $\rsignaturesOf{+}$---it would break both the requirement that $\rsignaturesOf{+}$  must be minimised and deterministic (condition (1) at the beginning of Section~\ref{sec-On checking Condition (2)}).
\end{exa}
 \begin{figure}[!t]{
 \framebox[1\textwidth]{
 $\begin{array}{@{\hspace{-2cm}}l}
\begin{array}{lclr}
%
\stabilisingSignaturesOf{\orK} & = &   \falseType(\falseType,\falseType),
\\
                                           & & \trueType(\trueType,\boolType),
\\
                                           & & \trueType(\boolType,\trueType)
\\
\stabilisingSignaturesOf{+} & = &   \znumType(\znumType,\znumType),   
\\                         
                       & &        \pnumType(\zpnumType,\pnumType),   
\\
                        &  &  \numType(\numType,\pnumType)   &  {\footnotesize  \mbox{$(\not\in\rsignaturesOf{+})$}}
\end{array}
\end{array}$}
} 
\caption{Stabilising sort-signatures for built-in functions used in the examples (cf.\ Figure~\ref{fig:refinement-instanceBUILT-IN})} \label{fig:stabilising-instanceBUILT-IN}
\end{figure}

We also assume that the mapping $\stabilisingSignaturesOf{\cdot}$  associates to each user-defined \diffusionText\ $\dname$  a (possibly empty)
set of  stabilising sort-signatures  $\stabilisingSignaturesOf{\dname}\subseteq\stabsigof{\dname}$ such that the following conditions are satisfied:
\begin{itemize}
\item
$\stabilisingSignaturesOf{\dname}$ is minimised and \deterministicText, 
 and
\item
$\stabilisingSignaturesOf{\dname}$ is implied by $ \rsignaturesOf{\dname}$, i.e.,
for each $\rtype'(\overline{\rtype}') \in \stabilisingSignaturesOf{\dname}$ there exists  $\rtype(\overline{\rtype}) \in \rsignaturesOf{\dname}$ such that $\rtype(\overline{\rtype})\rsub\rtype'(\overline{\rtype}')$.
\end{itemize}
Note that the above two conditions on the mapping $\stabilisingSignaturesOf{\dname}$ can be checked automatically.

%
\begin{exa}\label{exa:UserDeclaredStabilisingSortSignaturesForExamples}
We assume that for the user-defined functions $\dname$ used in the examples introduced throughout the paper  
\[
\rsignaturesOf{\dname} = \minimalsOf{\{\signatureOf{\dname}\}\cup\stabilisingSignaturesOf{\dname}}.
\]
Figure~\ref{fig:UserDeclaredStabilisingSortSignaturesForExamples} gives minimised deterministic sets of stabilising sort-signatures that allow to successfully check the user-defined \diffusionsText\ $\dname$ used in the examples introduced in the paper---note that both the additional requirements (1) and (2) given at the beginning of Section~\ref{sec-stabilisation-checking} are satisfied.
\end{exa}

 \begin{figure}[!t]{
 \framebox[1\textwidth]{
 $\begin{array}{l}
\begin{array}{lcl@{\!\!\!\!\!\!\!\!\!\!\!\!\!\!\!\!\!}r}
\stabilisingSignaturesOf{\texttt{restrictSum}} & = & \numType(\numType,\pnumType,\boolType),   & 
\\                                      
\stabilisingSignaturesOf{\spSumOrNAME}    & = & \mkpairType{\numType}{\boolType}(\mkpairType{\numType}{\boolType},\mkpairType{\pnumType}{\boolType}) & 
\\
\stabilisingSignaturesOf{\spAddToFstNAME} & = & \mkpairType{\numType}{\numType}(\mkpairType{\numType}{\numType},\pnumType)    & 
\end{array}
\end{array}$}
} 
\caption{Stabilising sort-signatures for the user-defined functions used in the examples}\label{fig:UserDeclaredStabilisingSortSignaturesForExamples}
\end{figure}

\subsection{Sort Checking}\label{sec-ref-typing}

In this section we present a decidable sort checking system for user-defined functions to check whether the sort-signature declarations provided by the mapping  $\rsignaturesOf{\cdot}$ are correct.
The sort-checking rules are given in Figure~\ref{fig:RefinementTyping}.
 \begin{figure}[!t]{
 \framebox[1\textwidth]{
 $\begin{array}{l}
$\quad$\textbf{Expression sort checking:}  \hfill
 \boxed{\surExpTypJud{\RefTypEnv}{\e}{\rtype}}$\quad$
  \\
\begin{array}{c}
\nullsurfaceTyping{S-VAR}{
\surExpTypJud{\RefTypEnv,\xname:\rtype}{\xname}{\rtype}
}
\qquad\qquad
\nullsurfaceTyping{S-SNS}{
\surExpTypJud{\RefTypEnv}{\snsname}{\rtypeof{\snsname}}
}
\qquad\qquad
\nullsurfaceTyping{S-GVAL}{
\surExpTypJud{\RefTypEnv}{\groundvalue}{\rtypeof{\groundvalue}}
}
\skiptransition
\surfaceTyping{S-PAIR}{  \quad
\surExpTypJud{\RefTypEnv}{\e_1}{\rtype_1}
\quad
\surExpTypJud{\RefTypEnv}{\e_2}{\rtype_2}
}{ \surExpTypJud{\RefTypEnv}{\mkpair{\e_1}{\e_2}}{\mkpairType{\rtype_1}{\rtype_2}} }
\quad
\surfaceTyping{S-FST}{ \;\;
\surExpTypJud{\RefTypEnv}{\e}{\mkpairType{\rtype_1}{\rtype_2}} }{
\surExpTypJud{\RefTypEnv}{\fstK\;\e}{\rtype_1} }
\quad
\surfaceTyping{S-SND}{ \;\;
\surExpTypJud{\RefTypEnv}{\e}{\mkpairType{\rtype_1}{\rtype_2}} }{
\surExpTypJud{\RefTypEnv}{\sndK\;\e}{\rtype_2} }
\skiptransition
\surfaceTyping{S-COND}{  \quad
\surExpTypJud{\RefTypEnv}{\e_0}{\boolType}
\quad
\surExpTypJud{\RefTypEnv}{\e_1}{\rtype_1}
\quad
\surExpTypJud{\RefTypEnv}{\e_2}{\rtype_2}
\quad
\rtype=\refSupOf{\rtype_1}{\rtype_2}
}{ \surExpTypJud{\RefTypEnv}{ \condExpr{\e_0}{\e_1}{\e_2}}{\rtype} }
\skiptransition
\surfaceTyping{S-COND-TRUE}{  \quad
\surExpTypJud{\RefTypEnv}{\e_0}{\trueType}
\quad
\surExpTypJud{\RefTypEnv}{\e_1}{\rtype_1}
\quad
\surExpTypJud{\RefTypEnv}{\e_2}{\rtype_2}
}{ \surExpTypJud{\RefTypEnv}{ \condExpr{\e_0}{\e_1}{\e_2}}{\rtype_1} }
\skiptransition
\surfaceTyping{S-COND-FALSE}{  \quad
\surExpTypJud{\RefTypEnv}{\e_0}{\falseType}
\quad
\surExpTypJud{\RefTypEnv}{\e_1}{\rtype_1}
\quad
\surExpTypJud{\RefTypEnv}{\e_2}{\rtype_2}
}{ \surExpTypJud{\RefTypEnv}{ \condExpr{\e_0}{\e_1}{\e_2}}{\rtype_2} }
\skiptransition
\surfaceTyping{S-FUN}{  \quad
\surExpTypJud{\RefTypEnv}{\overline{\e}}{\overline{\rtype}}
\quad 
\rtype(\cdots)=\mostSpecificOf{\rsignaturesOf{\fname}}{\overline{\rtype}} 
}{ \surExpTypJud{\RefTypEnv}{\fname(\overline{\e})}{\rtype} }
\skiptransition
\surfaceTyping{S-SPR}{ \quad
\progressionOpOf{\fname}
\quad
 \surExpTypJud{\RefTypEnv}{\e_0\overline{\e}}{\rtype_0\overline{\rtype}}
\quad
\rtype'(\cdots)=\mostSpecificOf{\stabilisingSignaturesOf{\fname}}{\rtype_0\overline{\rtype}}
\quad
\rtype=\refSupOf{\rtype_0}{\rtype'}
}{
\surExpTypJud{\RefTypEnv}{\spreadThree{\e_0}{\fname}{\overline{\e}}}{\rtype}
 }
\skiptransition
%
%
%
\end{array}
\\
$\quad$\textbf{User-defined function sort checking:} \hfill
  \boxed{\surFunTypJud{}{\FUNCTION}{\overline{\rtype(\overline{\rtype})}}}$\quad$
  \\
\qquad\begin{array}{c}
\surfaceTyping{S-DEF}{ \quad
\mbox{for all $\rtype(\overline{\rtype})\in\rsignaturesOf{\dname}$,}
\quad
\surExpTypJud{\overline{\xname}:\overline{\rtype}}{\e}{\rtype'}
\quad
\rsubof{\rtype'}{\rtype}
}{ \surFunTypJud{}{\defK \; \anytype \;\dname
(\overline{\anytype\;\xname}) \; \isK \; \e}{\rsignaturesOf{\dname}} }
\end{array}
\end{array}$}
} 
\caption{Sort-checking rules for expressions and function definitions} \label{fig:RefinementTyping}
\end{figure}
\emph{Sort environments}, ranged over by $\RefTypEnv$ and written $\overline{\xname}:\overline{\rtype}$, contain sort assumptions for program variables.
The sort-checking judgement for expressions is of the form $\surExpTypJud{\RefTypEnv}{\e}{\rtype}$, to be read: $\e$ has sort $\rtype$ under the sort assumptions
$\RefTypEnv$ for the program variables occurring in $\e$.
Sort checking of variables, sensors, ground values, pair constructions and deconstructions, and conditionals  is similar to type checking. In particular, ground values and sensors are given a sort by construction by exploiting the mapping  $\rtypeof{\cdot}$  introduced in Section~\ref{sec-ref-for-values}, and the sort assigned to a  conditional-expression is:
\begin{itemize}
\item
 the least upper bound $\refSupOf{\rtype_1}{\rtype_2}$ of the sorts assigned to the branches (cf.\ Section~\ref{sec-ref-types})  when the condition has sort $\boolType$;
\item
 the sort assigned to the left branch  when the condition has sort $\trueType$; and
\item
 the sort assigned to the right branch  when the condition has sort $\falseType$.
\end{itemize}
The  sort-checking rule \ruleNameSize{[S-FUN]} for function application exploits the mapping  $\rsignaturesOf{\cdot}$  introduced in Section~\ref{sec-ref-for-functions} and the auxiliary mapping $\mostSpecificOf{\cdot}{\cdot}$  introduced in Section~\ref{sec-ref-typing-auxiliary}. It first infers the sorts $\overline{\rtype}$ for the arguments $\overline{\e}$ of $\fname$, then uses the most specific sort-signature for $\overline{\rtype}$  in $\rsignaturesOf{\fname}$  for assigning to the application $\fname(\overline{\e})$ the minimum sort  $\rtype$ that can be assigned to $\fname(\overline{\e})$ by using for $\fname$ any of the sort signatures in 
$\rsignaturesOf{\fname}$.

In a similar way, the sort-checking rule \ruleNameSize{[S-SPR]} for spreading expressions  first infers the sorts $\rtype_0\overline{\rtype}$ for  $\e_0\overline{\e}$, then retrieves the most specific sort-signature for $\rtype_0\overline{\rtype}$  in $\stabilisingSignaturesOf{\fname}$, $\rtype(\cdots)$, and finally assigns  to the spreading expression the the least upper bound of $\rtype_0$ and $\rtype$. 
%

The sort-checking rule for function definitions (which derives  judgements of the form
$\surFunTypJud{\RefTypEnv}{\FUNCTION}{\overline{\rtype(\overline{\rtype})}}$, where $\overline{\rtype(\overline{\rtype})} = \rtype^{(1)}(\overline{\rtype}^{(1)}),\myldots,\rtype^{(n)}(\overline{\rtype}^{(n)})$ and $n\ge 1$) requires to check the definition $\FUNCTION$ of a user-defined function $\dname$ with respect to all the sort-signatures in $\rsignaturesOf{\dname}$. 

We say that  a program (or library) $\PROGRAM$ is \emph{well sorted}
 to mean that all the user-defined function definitions
in $\PROGRAM$ sort check by using the rules in Figure~\ref{fig:RefinementTyping}.

Since no choice may be done when building a derivation for a given  sort-checking judgment, the sort-checking rules straightforwardly describe a sort-checking algorithm.

\begin{exa} All user-defined functions provided in the examples in Sections~\ref{sec-examples} and~\ref{sec-examples-extended} sort check by assuming the
ground sorts and the subsorting given in Figure~\ref{fig:refinement-subsortingGROUND}, 
  the sorts for the ground values and sensors given in Figure~\ref{fig:refinement-valuesANDsensorsGROUND}, 
the 
sort-signatures for built-in functions given in
Figure~\ref{fig:refinement-instanceBUILT-IN}, the 
stabilising sort-signatures for built-in functions given in Figures~\ref{fig:stabilising-instanceBUILT-IN} and the stabilising sort-signatures for user-defined diffusions given in~\ref{fig:UserDeclaredStabilisingSortSignaturesForExamples}.
\end{exa}



\subsection{Sort Soundness of Device Computation }\label{sec-properties-sorting}

In order to state the correctness of the sort-checking system  presented in Section~\ref{sec-ref-typing}  we introduce
the notion of set of well-sorted values trees for an expression, which generalizes to sorts the notion of set of well-typed values trees for an expression introduced in Section~\ref{sec-properties-typing}.

Given an expression $\e$ such that
$\surExpTypJud{\overline{\xname}:\overline{\rtype}}{\e}{\rtype}$,
the set
$\bsWSVT{\overline{\xname}:\overline{\rtype}}{\e}{\rtype}$
of the \emph{well-sorted} value-trees for $\e$, is inductively
defined as follows:
    $\vtree$ $\in$ $\bsWSVT{\overline{\xname}:\overline{\rtype}}{\e}{\rtype}$
    if there exist
    \begin{itemize}
    \item a sensor mapping $\snsFun$;
    \item well-formed tree environments
        $\overline{\vtree}\in\bsWSVT{\overline{\xname}:\overline{\rtype}}{\e}{\rtype}$;
        and
    \item values $\overline{\anyvalue}$ such that
        $\lengthOf{\overline{\anyvalue}}=\lengthOf{\overline{\xname}}$,
        $\surExpTypJud{\emptyset}{\overline{\anyvalue}}{\overline{\rtype}'}$ and $\overline{\rtype}'\rsub\overline{\rtype}$;
    \end{itemize}
    such that
    $\bsopsem{\snsFun}{\overline{\vtree}}{\applySubstitution{\e}{\substitution{\overline{\xname}}{\overline{\anyvalue}}}}{\vtree}$
    holds---note that this definition is inductive, since the sequence  of evaluation trees $\overline{\vtree}$ may be empty.

%
%
%

The following theorem guarantees that from a properly sorted environment,
    evaluation of a well-sorted expression yields a properly sorted
    result.

\begin{thm}[Device computation sort
preservation]\label{the-DeviceSortPreservation} If
$\surExpTypJud{\overline{\xname}:\overline{\rtype}}{\e}{\rtype}$,
 $\snsFun$ is a sensor mapping,
$\overline{\vtree}\in\bsWSVT{\overline{\xname}:\overline{\rtype}}{\e}{\rtype}$,
$\lengthOf{\overline{\anyvalue}}=\lengthOf{\overline{\xname}}$,
        $\surExpTypJud{\emptyset}{\overline{\anyvalue}}{\overline{\rtype}'}$,
$\overline{\rtype}'\rsub\overline{\rtype}$,
and
$\bsopsem{\snsFun}{\overline{\vtree}}{\applySubstitution{\e}{\substitution{\overline{\xname}}{\overline{\anyvalue}}}}{\vtree}$,
then $\surExpTypJud{\emptyset}{\vrootOf{\vtree}}{\rtype'}$ for some $\rtype'$ such that  $\rtype'\rsub\rtype$.
\end{thm}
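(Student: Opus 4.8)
The plan is to prove this exactly as the type-preservation result (Theorem~\ref{the-DeviceTypePreservation}) was proved, transported to sorts: a subject-reduction argument showing that evaluation never produces a value outside the sort computed by the sort system. Since both the operational semantics and the sort-checking rules of Figure~\ref{fig:RefinementTyping} are syntax-directed, I would organise the induction on the structure of the closed expression $\applySubstitution{\e}{\substitution{\overline{\xname}}{\overline{\anyvalue}}}$ together with the number of function calls encountered during its evaluation (finite, since the call graph is acyclic), anchoring it on the \emph{inductive} definition of $\bsWSVT{\overline{\xname}:\overline{\rtype}}{\e}{\rtype}$: the statement is really a claim about every well-sorted value-tree, so for each neighbour tree used to construct $\vtree$ I may assume the conclusion already holds, its witnessing WSVT-derivation being strictly smaller. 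Throughout I use conditions (3)-(4) on subsorting from Section~\ref{sec-ref-types}: a closed value $\anyvalue$ has a least derivable sort $\rtype'$, and $\anyvalue\in\semOf{\rtype}$ forces $\rsubof{\rtype'}{\rtype}$; hence in every case it suffices to show $\vrootOf{\vtree}\in\semOf{\rtype}$.

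The base and congruence cases are routine. For a variable $\xname_i$ the root is the substituted value $\anyvalue_i$, whose sort refines $\rtype_i=\rtype$ by hypothesis; for a sensor the value lies in $\semOf{\rtypeof{\snsname}}$ by the defining property of $\rtypeof{\cdot}$; a ground value is immediate. Pair construction combines the two inductive hypotheses through [I-PAIR]; for $\fstK$ and $\sndK$ I first invert the sort of the evaluated pair value (a pair value receives only a pair sort, and $\rsubof{\mkpairType{\rtype'_1}{\rtype'_2}}{\mkpairType{\rtype_1}{\rtype_2}}$ entails $\rsubof{\rtype'_1}{\rtype_1}$ and $\rsubof{\rtype'_2}{\rtype_2}$ by completeness of subsorting on the nonempty factors) and then read off the component. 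For conditionals, rule [S-COND] gives $\rtype=\refSupOf{\rtype_1}{\rtype_2}$ and whichever branch value is selected refines one of $\rtype_1,\rtype_2$, hence $\rtype$; rules [S-COND-TRUE]/[S-COND-FALSE] additionally use that a value whose sort refines $\trueType$ (resp.\ $\falseType$) must equal $\trueK$ (resp.\ $\falseK$), since $\semOf{\trueType}=\{\trueK\}$, so the branch actually taken is precisely the one the rule inspects.

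For a function application rule [S-FUN] selects the most specific signature $\rtype(\overline{\rtype}_0)=\mostSpecificOf{\rsignaturesOf{\fname}}{\overline{\rtype}}$, so the inferred argument sorts satisfy $\rsubof{\overline{\rtype}}{\overline{\rtype}_0}$; by the inductive hypotheses the evaluated arguments lie in $\semOf{\overline{\rtype}}\subseteq\semOf{\overline{\rtype}_0}$, and since $\fname$ satisfies the sort-signature $\rtype(\overline{\rtype}_0)$ the result lies in $\semOf{\rtype}$. When $\fname$ is user-defined this is combined with [S-DEF]: the body sort-checks to some $\rtype''$ with $\rsubof{\rtype''}{\rtype}$ under $\overline{\xname}:\overline{\rtype}_0$, and I apply the inductive hypothesis to the body evaluation (legitimate because it carries a strictly smaller function-call count) using that the evaluated arguments refine $\overline{\rtype}_0$; its root then refines $\rtype''$, hence $\rtype$. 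A routine auxiliary observation is used silently here, namely that the projections $\premiseNumOf{i}{\overline{\vtree}}$ feeding the sub-evaluations are themselves well-sorted value-trees for the corresponding subexpressions.

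The spreading case is the crux and the main obstacle. Let $\e=\spreadThree{\e_0}{\fname}{\e_1,\ldots,\e_n}$; rule [S-SPR] infers $\e_0:\rtype_0$ and $\overline{\e}:\overline{\rtype}$, selects $\rtype'(\rtype'_0\overline{\rtype'})=\mostSpecificOf{\stabilisingSignaturesOf{\fname}}{\rtype_0\overline{\rtype}}$ (so $\rsubof{\rtype_0}{\rtype'_0}$ and $\rsubof{\overline{\rtype}}{\overline{\rtype'}}$), and sets $\rtype=\refSupOf{\rtype_0}{\rtype'}$. As $\rtype'(\rtype'_0\overline{\rtype'})$ is stabilising for $\fname$ it is in particular progressive, giving $\rsubwrhof{\rtype'}{\rtype'_0}$ and hence $\rsubof{\rtype'}{\rtype'_0}$; together with $\rsubof{\rtype_0}{\rtype'_0}$ this makes $\rtype'_0$ an upper bound of $\{\rtype_0,\rtype'\}$, whence $\rsubof{\rtype}{\rtype'_0}$. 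This single inequality closes the loop: each neighbour value $\anyvaluebis_j$ is the root of a neighbour tree for the \emph{same} expression $\e$, so by the outer (WSVT) inductive hypothesis $\anyvaluebis_j\in\semOf{\rtype}\subseteq\semOf{\rtype'_0}$ and is therefore a legal first argument to $\fname$; with the evaluated $\anyvalue_i\in\semOf{\rtype_i}\subseteq\semOf{\rtype'_i}$, the fact that $\fname$ satisfies $\rtype'(\rtype'_0\overline{\rtype'})$ yields $\anyvaluealt_j=\opApply{\opFunOf{\fname}}{\anyvaluebis_j,\anyvalue_1,\ldots,\anyvalue_n}\in\semOf{\rtype'}\subseteq\semOf{\rtype}$, while $\anyvalue_0\in\semOf{\rtype_0}\subseteq\semOf{\rtype}$. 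The root of $\vtree$ is $\lowerBound\{\anyvalue_0,\anyvaluealt_1,\ldots,\anyvaluealt_m\}$, which is one of these values and so again lies in $\semOf{\rtype}$. The genuinely delicate point is precisely the treatment of the neighbour trees: they are evaluations of the spreading expression itself and hence not structurally smaller, which is exactly why the induction must be anchored on the inductive definition of $\bsWSVT{\overline{\xname}:\overline{\rtype}}{\e}{\rtype}$ rather than on expression structure alone, and why progressiveness (through $\rsubof{\rtype}{\rtype'_0}$) is indispensable for re-feeding neighbour outputs into $\fname$.
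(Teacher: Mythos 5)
Your proposal is correct and follows essentially the same route as the paper's proof in Appendix~\ref{app-proof-sortSounness}: a triple induction anchored on the inductive definition of $\bsWSVT{\overline{\xname}:\overline{\rtype}}{\e}{\rtype}$ (precisely to license the hypothesis on neighbour trees of the spreading expression), secondarily on the number of user-defined function calls, and finally on expression syntax, with the spreading case closed by the inequality $\rsubof{\rtype}{\rtype'_0}$ that lets neighbour roots be re-fed as first arguments of the diffusion. Your derivation of that inequality from progressiveness of the selected stabilising signature is in fact spelled out more explicitly than in the paper, which simply asserts $\rtype'''_l\rsub\rtype\rsub\rtype_0$.
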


\begin{proof}
See Appendix~\ref{app-proof-sortSounness}.
\end{proof}

\begin{rem}[On the relation between type checking and sort checking]\label{rem:TypeVsSortChecking}
A sort system should be such that all the programs (or libraries) accepted by the sort system are accepted by the original type system  and vice-versa (cf.\ the discussion at the beginning of Section~\ref{sec-ref-types}).
However, the sort system considered in this paper has a peculiarity: it checks that every  diffusion-expressions $\spreadThree{\e_1}{\fname}{\e_2,\myldots,\e_n}$ occurring in $\PROGRAM$ is sort-checked by considering for $\fname$ only the sort-signatures in $\stabilisingSignaturesOf{\fname}$---which is assumed to be such that for all $\rtype(\overline{\rtype})\in\stabilisingSignaturesOf{\fname}$ the predicate  $\stabilisingOpOf{\fname}{\rtype(\overline{\rtype})}$ holds. Therefore, some well-typed programs (or libraries)---including all the non-self-stabilising programs  (like the one considered in Example~\ref{exa:not-self-stabilising})---do not sort check. 

The standard relation between the sort system and the type system that it refines holds for programs (or libraries)  that do not contain spreading expressions. I.e., all the programs (or libraries) that do not contain spreading expressions that accepted by the original type system are accepted by the sort system and vice-versa. In particular, whenever all the sorts are trivial (i.e., $\refof{\anytype}=\{\anytype\}$, for every type $\anytype$) we have that, on programs (or libraries)  that do not contain spreading expressions, the sort checking rules (in Figure~\ref{fig:RefinementTyping}) behave exactly as the type-checking rules (in Figure~\ref{fig:SurfaceTyping}).
\end{rem}

\section{Checking Stabilising Assuptions for User-Defined Diffusions}\label{sec-checkingStabilisingDiffusion}

In this section we present a decidable annotated sort checking system that guarantees that if a program $\PROGRAM$ is \emph{well-sorted} (i.e., it can be successfully checked by the rules of the system given in Section~\ref{sec-ensuringSelfStabilisation} and, therefore, satisfies Condition (1) of Section~\ref{sec-program-with-valid-assumptions}) guarantees that, under the  additional requirements (1) and (2) introduced at the beginning of Section~\ref{sec-On checking Condition (2)},  if  $\PROGRAM$ is \emph{well-annotated} (i.e., it can be successfully checked by the rules of the system) then  Condition (2) of Section~\ref{sec-program-with-valid-assumptions} holds, and hence self-stabilisation follows. 

To this aim, under the  additional requirements (1) and (2) introduced at the beginning of Section~\ref{sec-stabilisation-checking}, we  assume
 that each program $\PROGRAM$ comes with
 a mapping $\asignaturesOf{\cdot}$ that associates to each  \diffusionText\ $\fname$   a (possibly empty) set of  annotated sort-signatures $\asignaturesOf{\fname}$  (we write $\psignaturesOf{\pAnn}{\cdot}$ to denote the mapping such that $\psignaturesOf{\pAnn}{\fname}=\{\rtype(\overline{\rtype})\pAnnSB \;\vert\;\rtype(\overline{\rtype})\pAnnSB \in\asignaturesOf{\fname}\}$)
such that the following conditions are satisfied:
\begin{itemize}
\item
for every diffusion $\fname$ of type $\anytype(\cdots)$,
if $\anytype$ is ground, then $\stabilisingSignaturesOf{\fname}=\annErasure{\PpsignaturesOf{\fname}}$; 
\item
for every user-defined diffusion  $\dname$ of the form  displayed in Equation~\ref{eq:ctd} of Section~\ref{sec-preliminary-iProgression},
consider the function $\fname$ occurring in the body of $\dname$: 
\begin{itemize}
\item 
if $\fname$ is built-in function, then $\stabilisingSignaturesOf{\dname}\subseteq\annErasure{\PpsignaturesOf{\fname}}$, and
\item
if $\fname$ is  user-defined, then
$\stabilisingSignaturesOf{\dname}=\annErasure{\PpsignaturesOf{\fname}}$.
\end{itemize}
\end{itemize}
\noindent
Note that the above conditions (which can be checked automatically) imply that, for every user defined function $\dname$, the value of $\stabilisingSignaturesOf{\dname}$ is completely defined by the mapping $\asignaturesOf{\cdot}$---therefore, there is no need to explicitly define the value of $\stabilisingSignaturesOf{\cdot}$ for user-defined diffusions.
The assumptions $\asignaturesOf{\oname}$ for the built-in functions $\oname$ are considered valid---they should come with the definition of the language.  Instead, the validity of the assumptions $\asignaturesOf{\dname}$ for the user-defined functions $\dname$ must be checked---these assumptions could be either (possibly partially) provided by the user or automatically inferred.\footnote{The naive inference approach, that is: inferring $\asignaturesOf{\dname}$ by checking all the sort-signature of $\dname$ is linear in the number of elements of $\rsignaturesOf{\dname}$. Some optimizations are possible. We do not address this issue in the paper.}
Therefore, in order to check that Condition (2) of Section~\ref{sec-program-with-valid-assumptions} holds, it is enough to check  that each the user-defined diffusions $\dname$ of $\PROGRAM$ has all the annotated sort signatures $\asignaturesOf{\dname}$.

We first introduce some auxiliary definitions (in Section~\ref{sec-ann-typing-auxiliary}), then we consider the issue of associating  annotated sort-signatures to diffusions (in Section~\ref{sec-AnnotatedSortSignaturesMappings})
 and the issue of associating annotated sorts to values (in Section~\ref{sec-AnnotatedSortMapping}), and finally we present a decidable  annotated sort checking system for checking the correctness of annotated sort-signature declarations for user-defined diffusions  (in Section~\ref{sec-annotatedSortChecking}) and show its soundness (in Section~\ref{sec-properties-annotating}).

\subsection{Auxiliary definitions}\label{sec-ann-typing-auxiliary}

In this section we adapt the notions of minimal set of sort-signatures, deterministic set of sort signatures and most specific sort-signature (cf.\ Section~\ref{sec-ref-typing-auxiliary})  to annotated sort-signatures. These notions will be used (in Section~\ref{sec-annotatedSortChecking}) to formulate the annotated  sort-checking rules for function applications.  

Given a diffusion signature $\anytype(\anytype\overline{\anytype})$ and a set of annotated sort-signatures $Q\subseteq\annsigof{\anytype(\anytype\overline{\anytype})}$, an annotated sort $\rtype'_1\pAnnScriptSB{'}$ such that $\rtype'_1\in\refsigof{\anytype_1}$ and some sorts $\overline{\rtype}'\in\refsigof{\overline{\anytype}}$, consider the (possibly empty) subset of $Q$ defined as follows:
\[
Q(\rtype'_1\pAnnScriptSB{'}\overline{\rtype}') = \{ \rtype(\rtype_1\overline{\rtype})\SB{\pAnn''}\in Q \;\vert\; \leftKeyOf{\rtype'_1}\subwrhLe\leftKeyOf{\rtype_1},
\;\;
\rtype'_1\rsub\rtype_1
\mbox{ and }
\rsubof{\overline{\rtype}'}{\overline{\rtype}}
 \}. 
\]
We say that $Q$
 is \emph{\deterministicText}, notation $\deterministicOf{Q}$, to mean that
for all  $\rtype'_1\pAnnScriptSB{'}\overline{\rtype}'$ there exists an annotated sort-signature $\rtype(\rtype_1\overline{\rtype})\SB{\pAnn}\in Q(\rtype'_1\pAnnScriptSB{'}\overline{\rtype}')$, called the \emph{most specific annotated sort-signature for} $\overline{\rtype}'$ \emph{in} $Q$,  such that:
\begin{center}
for all $\rtype''(\overline{\rtype}'')\SB{\pAnn''}\in Q(\rtype'_1\pAnnScriptSB{'}\overline{\rtype}')$ it holds that  $\rsubof{\rtype\SB{\pAnn(\pAnn')}}{\rtype''\SB{\pAnn''(\pAnn')}}$.
\end{center}
The mapping  $\mostSpecificOf{Q}{\rtype'_1\pAnnScriptSB{'}\overline{\rtype}'}$,
given a deterministic set of sort-signatures $Q$ $\subseteq$ $\annsigof{\anytype(\anytype\overline{\anytype})}$,   an annotated sort $\rtype'_1\pAnnScriptSB{'}$ such that $\rtype'_1\in\refsigof{\anytype_1}$ and some sorts $\overline{\rtype}'\in\refsigof{\overline{\anytype}}$, returns the \emph{most specific annotated sort-signature for} $\rtype'_1\pAnnScriptSB{'}\overline{\rtype}'$ \emph{in} $Q$ if $Q(\rtype'_1\pAnnScriptSB{'}\overline{\rtype}')$ is not empty, and is undefined otherwise.

\subsection{Annotated Sort-Signatures for Diffusions}\label{sec-AnnotatedSortSignaturesMappings}

The mapping  $\asignaturesOf{\cdot}$ and the conditions that provides its link with the mapping $\stabilisingSignaturesOf{}$ have been illustrated at the beginning of Section~\ref{sec-checkingStabilisingDiffusion}. Here, we illustrate some additional condition that is needed to simplify the formulation of the annotated sort checking rules and to guarantee their soundness.

We assume that for each built-in \diffusionText\ $\oname$  the (possibly empty) set of annotated sort-signatures $\asignaturesOf{\oname}$ is such that the following conditions are satisfied:
\begin{itemize}
\item
$\asignaturesOf{\oname}$ is  minimized and deterministic;
\item
$\asignaturesOf{\oname}\subseteq\annsigof{\oname}$; and
\item
$\asignaturesOf{\oname}$ represents all the annotated sort-signatures satisfied by $\oname$, i.e.,
 for each $\rtype'(\overline{\rtype}')\pAnnScriptSB{'}\in\annsigof{\oname}$ there exists  $\rtype(\overline{\rtype})\pAnnSB \in\asignaturesOf{\oname}$ such that $\rtype(\overline{\rtype})\pAnnSB\rsub\rtype'(\overline{\rtype}')\pAnnScriptSB{'}$ holds.
\end{itemize}
Note that the first of the above three conditions on the mapping $\asignaturesOf{\oname}$ can be checked automatically.

\begin{exa}\label{exa:annotated-instanceBUILT-IN}
Figure~\ref{fig:annotated-instanceBUILT-IN} illustrates the annotated sort-signatures for the built-in \diffusionsText\  used in the examples introduced thought the paper.
\end{exa}
 \begin{figure}[!t]{
 \framebox[1\textwidth]{
 $\begin{array}{@{\hspace{-2cm}}l}
\begin{array}{lclr}
%
\asignaturesOf{\orK} & = & \falseType(\falseType,\falseType)\PpAnnSB,  &
\\
                        &  & \trueType(\trueType,\boolType)\PpAnnSB,  & 
\\
                        &  & \trueType(\boolType,\trueType)\PpAnnSB & 
\\[10pt]
\asignaturesOf{+} & = &  \nnumType(\nnumType,\znumType)\WpAnnSB,   & 
\\
                       &  &  \znnumType(\znnumType,\znumType)\WpAnnSB,   & 
\\
                        & &  \znumType(\znumType,\znumType)\PpAnnSB,   & 
\\
                        &  &  \zpnumType(\zpnumType,\zpnumType)\WpAnnSB,   &   
\\
                        &  &  \pnumType(\zpnumType,\pnumType)\PpAnnSB,   &  
\\
                        & & \pnumType(\pnumType,\zpnumType)\WpAnnSB,   &  
\\
                        &  &  \numType(\numType,\zpnumType)\WpAnnSB,   &  
\\
                        &  &  \numType(\numType,\pnumType)\PpAnnSB   &  
%
\end{array}
\end{array}$}
} 
\caption{Annotated sort signatures for built-in  $\pAnn$-prestabilising \diffusionsText\  used in the examples (cf.\ Figure~\ref{fig:refinement-instanceBUILT-IN})} \label{fig:annotated-instanceBUILT-IN}
\end{figure}
%


We also assume that for each user-defined \diffusionText\ $\dname$  the (possibly empty) set of  annotated sort-signatures $\asignaturesOf{\dname}$  is minimized and \deterministicText\ (note that this condition can be checked automatically).

%
%
%
%


\begin{exa}\label{exa:UserDeclaredAnnotatedSignaturesForExamples}
Figure~\ref{fig:UserDeclaredAnnotatedSignaturesForExamples} gives minimal deterministic sets of annotated sort signatures for the user-defined $\pAnn$-prestabilising \diffusionsText\ that allow to successfully check the user-defined \diffusionsText\  used in the examples introduced thought the paper.
\end{exa}
 \begin{figure}[!t]{
 \framebox[1\textwidth]{
 $\begin{array}{l}
\begin{array}{lclr}
\asignaturesOf{\texttt{restrict}} & = & \numType(\numType,\boolType)\WpAnnSB  &
\\
\asignaturesOf{\texttt{restrictSum}} & = & \numType(\numType,\pnumType,\boolType)\PpAnnSB &
\\
\asignaturesOf{\texttt{sum\_or}}    & = & \mkpairType{\numType}{\boolType}(\mkpairType{\numType}{\boolType},\mkpairType{\pnumType}{\boolType})\PpAnnSB &  
\\
\asignaturesOf{\texttt{add\_to\_1st}} & = & \mkpairType{\numType}{\numType}(\mkpairType{\numType}{\numType},\pnumType)\PpAnnSB   & 
\end{array}
\end{array}$}
} 
\caption{Annotated sort signatures for the user-defined $\pAnn$-prestabilising \diffusionsText\ used in the examples (cf.\ Figure~\ref{fig:UserDeclaredStabilisingSortSignaturesForExamples})}\label{fig:UserDeclaredAnnotatedSignaturesForExamples}
\end{figure}

\subsection{Annotated Sorts for Values}\label{sec-AnnotatedSortMapping}
We assume  a partial mapping $\atypeof{\cdot}$ that for each ground value $\groundvalue$: 
\begin{itemize}
\item
returns the annotated sort $\rtypeof{\groundvalue}\PpAnnSB$, if $\groundvalue$ is the maximum element of $\semOf{\rtypeof{\groundvalue}}$ w.r.t.\ $\wfLeOf{\typeof{\groundvalue}}$, and
\item
 is undefined, otherwise. 
\end{itemize}
Note that Proposition~\ref{prop:AnnotatedSortsForGroundValues} guarantees the soundness of the mapping  $\atypeof{\cdot}$.

\begin{exa}\label{exa:annotated-instanceGROUND}
 Figure~\ref{fig:annotated-instanceGROUND} illustrates the $\PpAnn$-annotated sorts for the ground values 
used in the examples introduced thought the paper.
\end{exa}
 \begin{figure}[!t]{
 \framebox[1\textwidth]{
 $\begin{array}{l}
\begin{array}{lcl}
\atypeof{\falseK} & = & \falseType\PpAnnSB  
\\
\atypeof{\trueK} & = & \trueType\PpAnnSB
\\
\atypeof{0} & = & \znumType\PpAnnSB
\\
\atypeof{\topNumK} & = & \pnumType\PpAnnSB
\end{array}
\end{array}$}
} 
\caption{Annotated sorts for the ground values used in the examples} \label{fig:annotated-instanceGROUND}
\end{figure}
%

\subsection{Annotated Sort Checking for User-Defined \DiffusionsText}\label{sec-annotatedSortChecking}

In this section we present a decidable annotated sort checking system to check whether the  $\pAnn$-annotated sort-signature assumptions for the user-defined diffusions $\dname$  provided by the mapping  $\asignaturesOf{\cdot}$ are correct. The annotated sort-checking rules are given in Figure~\ref{fig:AnnotatedTyping}.

The check a user defined diffusion $\dname$ has the annotated sort signature $(\rtype_1,\myldots,\rtype_n)\rtype\pAnnSB$ can be done by
assuming annotated sort $\rtype_1\WpAnnSB$ for the first formal parameter of $\dname$, assuming sorts $\rtype_2,\myldots,\rtype_n$ for the other formal parameters of $\dname$,  and trying to assign to the body of $\dname$ an annotated sort $\rtype'\pAnnScriptSB{'}$ such that $\rsubof{\rtype'\pAnnScriptSB{'}}{\rtype\pAnnSB}$. According to this observation we introduce the notion of
\emph{annotated sort environments}, ranged over by $\AnnTypEnv$ and written $\xname:\rtype\WpAnnSB,\,\overline{\xname}:\overline{\rtype}$, that contain one $\WpAnn$-annotated sort assumption and some (possibly none) sort assumptions for program variables. The annotated sort checking rule for user-defined diffusions \ruleNameSize{[A-DEF]} (which derives  judgements of the form
$\annFunTypJud{}{\FUNCTION}{\overline{\rtype(\overline{\rtype})\pAnnSB}}$) uses this strategy to check that the definition of a user-defined diffusion $\dname$ with respect to all the annotated sort signatures in 
$\asignaturesOf{\dname}$.

 \begin{figure}[!t]{
 \framebox[1\textwidth]{
 $\begin{array}{l}
$\quad$\textbf{Pure expression annotated sort checking:}  \hfill
 \boxed{\annExpTypJud{\AnnTypEnv}{\e}{\artype}}$\quad$
  \\
\begin{array}{c}
\nullsurfaceTyping{A-VAR}{
\annExpTypJud{\AnnTypEnv,\xname:\rtype\inputAnnSB}{\xname}{\rtype\inputAnnSB}
}
\qquad\qquad
\surfaceTyping{A-GVAL}{ \quad
\topValOf{\leftKeyOf{\rtype}}=\groundvalue
}{
\annExpTypJud{\AnnTypEnv,\xname:\rtype\inputAnnSB}{\groundvalue}{\atypeof{\groundvalue}}
}
\skiptransition
\surfaceTyping{A-PAIR}{  \quad
\annExpTypJud{\AnnTypEnv}{\e_1}{\rtype_1\pAnnSB}
\quad
\surExpTypJud{\annErasure{\AnnTypEnv}}{\e_2}{\rtype_2}
}{ \annExpTypJud{\AnnTypEnv}{\mkpair{\e_1}{\e_2}}{\mkpairType{\rtype_1}{\rtype_2}\pAnnSB} }
\qquad\qquad
\surfaceTyping{A-FST}{ \;\;
\annExpTypJud{\AnnTypEnv}{\e}{\mkpairType{\rtype_1}{\rtype_2}\pAnnSB} }{
\annExpTypJud{\AnnTypEnv}{\fstK\;\e}{\rtype_1\pAnnSB} }
\skiptransition
\surfaceTyping{A-COND}{  \quad
\surExpTypJud{\annErasure{\AnnTypEnv}}{\e_0}{\boolType}
\quad
\annExpTypJud{\AnnTypEnv}{\e_1}{\artype_1}
\quad
\annExpTypJud{\AnnTypEnv}{\e_2}{\artype_2}
\quad
\artype=\refSupOf{\artype_1}{\artype_2}
}{ \annExpTypJud{\AnnTypEnv}{ \condExpr{\e_0}{\e_1}{\e_2}}{\artype} }
\skiptransition
\surfaceTyping{A-COND-TRUE}{  \quad
\surExpTypJud{\annErasure{\AnnTypEnv}}{\e_0}{\trueType}
\quad
\annExpTypJud{\AnnTypEnv}{\e_1}{\artype}
\quad
\surExpTypJud{\annErasure{\AnnTypEnv}}{\e_2}{\rtype}
}{ \annExpTypJud{\AnnTypEnv}{ \condExpr{\e_0}{\e_1}{\e_2}}{\artype} }
\skiptransition
\surfaceTyping{A-COND-FALSE}{  \quad
\surExpTypJud{\annErasure{\AnnTypEnv}}{\e_0}{\falseType}
\quad
\surExpTypJud{\annErasure{\AnnTypEnv}}{\e_1}{\rtype}
\quad
\annExpTypJud{\AnnTypEnv}{\e_2}{\artype}
}{\annExpTypJud{\AnnTypEnv}{ \condExpr{\e_0}{\e_1}{\e_2}}{\artype} }
\skiptransition
\surfaceTyping{A-FUN}{ \\
 \annExpTypJud{\AnnTypEnv}{\e_1}{\rtype_1\SB{\pAnn''}}
\qquad
\surExpTypJud{\annErasure{\AnnTypEnv}}{\overline{\e}}{\overline{\rtype}}
\qquad
 \rtype(\cdots)\SB{\pAnn'}\in\mostSpecificOf{\asignaturesOf{\fname}}{\rtype_1\overline{\rtype}}
\qquad
\pAnn=\pAnn'(\pAnn'') 
}{ \annExpTypJud{\AnnTypEnv}{\fname(\e_1,\overline{\e})}{\rtype\pAnnSB} }
\skiptransition
%
%
%
\end{array}
\\
$\quad$\textbf{User-defined \diffusionText\ annotated sort checking:} \hfill
  \boxed{\annFunTypJud{}{\FUNCTION}{\overline{\rtype(\overline{\rtype})\pAnnSB}}}$\quad$
  \\
\qquad\begin{array}{c}
\surfaceTyping{A-DEF}{ \\
\mbox{for all $\rtype(\rtype_1\overline{\rtype})\pAnnSB\in\asignaturesOf{\dname}$,}
\\
\qquad
\surExpTypJud{\overline{\xname}:\rtype_1\WpAnnSB\,\overline{\rtype}}{\e}{\rtype'\pAnnScriptSB{'}}
\qquad
\rsubof{\rtype'\pAnnScriptSB{'}}{\rtype\pAnnSB}
}{ \annFunTypJud{}{\defK \; \anytype \;\dname
(\overline{\anytype\;\xname}) \; \isK \; \e}{\asignaturesOf{\dname}} }
\end{array}
\end{array}$}
} 
\caption{Annotated sort checking rules for expressions and \diffusionText\ definitions} \label{fig:AnnotatedTyping}
\end{figure}

The annotated sort-checking judgement for expressions is of the form $\annExpTypJud{\AnnTypEnv}{\e}{\artype}$, to be read: pure-expression $\e$ has annotated sort $\artype$ under the assumptions
$\AnnTypEnv$ for the program variables occurring in $\e$. 
The support of an annotated sort environment $\AnnTypEnv$, denoted by $\annErasure{\AnnTypEnv}$, is the sort environment obtained from $\AnnTypEnv$ by removing the input annotation, i.e.,
\[
\annErasure{\xname:\rtype\WpAnnSB,\,\overline{\xname}:\overline{\rtype}} = \xname:\rtype,\,\overline{\xname}:\overline{\rtype}
\] 
(cf.\ Section~\ref{sec-annotatedSorts}).
Some of the annotated sort-checking rules rely on the judgements of the sort checking system introduced in Section~\ref{sec-AnnotatedSortSignaturesMappings} to sort check some subexpressions. Namely: the right element of the pair in rule \ruleNameSize{[A-PAIR]}; the condition of the conditional-expression in rules \ruleNameSize{[A-COND]}, \ruleNameSize{[A-COND-TRUE]} and \ruleNameSize{[A-COND-FALSE]}; the right branch of the conditional-expression in rule \ruleNameSize{[A-COND-TRUE]}; the left branch of the conditional-expression in rule \ruleNameSize{[A-COND-FALSE]}; and the arguments (excluding the first one)  of the pure function  $\fname$ (that must be an $\pAnn$-diffusion) in rule \ruleNameSize{[A-FUN]}.
 
Annotated sort checking of variables and ground values is similar to sort checking. In particular, ground values  may be given an annotated sort by construction by exploiting the mapping  $\atypeof{\cdot}$---the premise $\topValOf{\leftKeyOf{\rtype}}=\groundvalue$ ensures that the value $\groundvalue$ is relevant to the overall goal of the annotated sort checking derivation process, that is: deriving an annotated sort $\rtype'\pAnnScriptSB{'}$ such that $\rsubof{\rtype'\pAnnScriptSB{'}}{\rtype\pAnnSB}$ for the body $\e$ of a user-defined diffusion $\dname$ in  order to check that $\dname$ has the annotated sort-signature $\rtype(\rtype_1,\myldots,\rtype_n)\pAnnSB$.

 Note that there is no annotated sort checking rule for expressions of the form $\sndK\;\e$---because of the leftmost-as-key preorder such an expression is not relevant  to the overall goal of the annotated sort checking derivation process.

The  sort-checking rule \ruleNameSize{[A-FUN]} for diffusion application exploits the mapping  $\asignaturesOf{\cdot}$  and the auxiliary mapping $\mostSpecificOf{\cdot}{\cdot}$  introduced in Section~\ref{sec-ann-typing-auxiliary}. It first infers 
the sort $\rtype_1\pAnnScriptSB{''}$ for the first argument $\e_1$ of $\fname$ and
the sorts $\overline{\rtype}$ for remaining the arguments $\overline{\e}$ of $\fname$, then uses the most specific annotated sort signature for $\overline{\rtype}$  in $\asignaturesOf{\fname}$  for assigning to the application $\fname(\overline{\e})$ the minimum annotated sort  $\rtype\pAnnScriptSB{'}$ that can be assigned to $\fname(\overline{\e})$ by using for $\fname$ any of the annotated sort signatures in 
$\rsignaturesOf{\fname}$.

We say that a program (or library) $\PROGRAM$ is \emph{well annotated}
 to mean that:  all the user-defined function definitions
in $\PROGRAM$ check by using  the sort-checking rules in Figure~\ref{fig:RefinementTyping}, and
all the user-defined \diffusionText\ definitions sort checking rules in Figure~\ref{fig:AnnotatedTyping}.

\begin{exa} All user-defined functions provided in the examples in Sections~\ref{sec-examples} and~\ref{sec-examples-extended} sort-check and (when they are \diffusionsText) annotate sort check by assuming the
ground sorts and the subsorting given in Figure~\ref{fig:refinement-subsortingGROUND}, 
  the sorts for the ground values and sensors given in Figure~\ref{fig:refinement-valuesANDsensorsGROUND}, 
the 
sort-signatures for built-in functions given in
Figure~\ref{fig:refinement-instanceBUILT-IN},
the annotated sorts for ground values given in
Figure~\ref{fig:annotated-instanceGROUND},
 the 
annotated sort-signatures for built-in functions given in Figures~\ref{fig:annotated-instanceBUILT-IN} and the annotated sort-signatures for user-defined diffusions given in~\ref{fig:UserDeclaredAnnotatedSignaturesForExamples}.
\end{exa}

Since no choice may be done when building a derivation for a given annotated sort-checking judgment, the annotated sort-checking rules straightforwardly describe an annotated sort-checking algorithm.

\subsection{Annotation Soundness}\label{sec-properties-annotating}

%
%

The following theorem states the correctness of the annotation-checking system  presented in Section~\ref{sec-annotatedSortChecking} 

\begin{thm}[Annotation soundness]\label{the-AnnotationSoundness} \emph{If
$\annFunTypJud{}{\FUNCTION}{\overline{\rtype(\overline{\rtype})\pAnnSB}}$ holds, then $\pOf{\fname}{\rtype(\overline{\rtype})}$ holds for all 
$\rtype(\overline{\rtype})\pAnnSB\in\overline{\rtype(\overline{\rtype})\pAnnSB}$.}
\end{thm}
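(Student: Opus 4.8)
The plan is to obtain the function-level statement from a stronger \emph{expression-level} invariant, established by induction on the annotated-sort derivation (equivalently, on the shape of the pure body, since the rules of Figure~\ref{fig:AnnotatedTyping} are syntax-directed). The invariant I would prove is: whenever $\annExpTypJud{\xname_1:\rtype_1\WpAnnSB,\,\overline{\xname}:\overline{\rtype}}{\e}{\rtype\pAnnSB}$ holds, then for every $\overline{\anyvalue}\in\semOf{\overline{\rtype}}$ the unary map sending $\anyvalue_1$ to $\vrootOf{\vtree}$, where $\bsopsem{\snsFun}{\emptyset}{\applySubstitution{\e}{\substitution{\xname_1}{\anyvalue_1},\substitution{\overline{\xname}}{\overline{\anyvalue}}}}{\vtree}$, takes values in $\semOf{\rtype}$ (by the support of the derivation and Theorem~\ref{the-DeviceSortPreservation}) and is $\pAnn$-prestabilising in the sense of Definition~\ref{def:MuPiProgression}, with the first argument ranging over $\semOf{\rtype_1}$ and all comparisons taken in the leftmost-as-key preorder $\keywfLeOf{\rtype_1}$. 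Since $\opFunOf{\dname}$ applied to $(\anyvalue_1,\overline{\anyvalue})$ is exactly this map instantiated at the body of the defined diffusion $\dname$, rule \ruleNameSize{[A-DEF]} then delivers the theorem: for each $\rtype(\rtype_1\overline{\rtype})\pAnnSB\in\asignaturesOf{\dname}$ the premise checks the body in $\xname_1:\rtype_1\WpAnnSB,\,\overline{\xname}:\overline{\rtype}$ yielding some $\rtype'\pAnnScriptSB{'}$ with $\rsubof{\rtype'\pAnnScriptSB{'}}{\rtype\pAnnSB}$, and $\pOf{\dname}{\rtype(\overline{\rtype})}$ follows by soundness of annotated subsigning (Proposition~\ref{prop:Sounness of annotated subsigning}).

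For the leaves, \ruleNameSize{[A-VAR]} produces the identity map on $\semOf{\rtype_1}$, which is monotone and weakly progressive and so matches its $\WpAnn$ annotation, while \ruleNameSize{[A-GVAL]} produces a constant map whose value, by the premise $\topValOf{\leftKeyOf{\rtype}}=\groundvalue$, sits at the key-top; hence both progressiveness clauses and the conditional strict-monotonicity clause of $\PpAnn$ hold (cf.\ Proposition~\ref{prop:AnnotatedSortsForGroundValues}). The cases \ruleNameSize{[A-PAIR]} and \ruleNameSize{[A-FST]} reduce to the key component: because $\keywfLeOf{\rtype_1}$ depends only on leftmost keys (Equation~\ref{eq:leftKeyProperty}) and $\leftKeyOf{\mkpairType{\rtype_1}{\rtype_2}}=\leftKeyOf{\rtype_1}$, prestabilisation of a pair and of its first projection is inherited from the inductively known prestabilisation of the key component, which is also why no rule is needed for $\sndK$.

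The core of the induction is the application case \ruleNameSize{[A-FUN]}, for which I would isolate a \emph{composition lemma}: if the inner map $g$ associated with $\e_1$ is $\pAnn''$-prestabilising and the diffusion $\fname$ is $\pAnn'$-prestabilising for the most specific signature $\rtype(\cdots)\pAnnScriptSB{'}\in\mostSpecificOf{\asignaturesOf{\fname}}{\rtype_1\SB{\pAnn''}\,\overline{\rtype}}$, then $\anyvalue_1\mapsto\opApply{\opFunOf{\fname}}{g(\anyvalue_1),\overline{\anyvalue}}$ is $\pAnn'(\pAnn'')$-prestabilising. Monotonicity composes directly, weak progressiveness chains through $\anyvalue_1\keywfLeOf{\rtype_1}g(\anyvalue_1)\keywfLeOf{\rtype_1}\opApply{\opFunOf{\fname}}{g(\anyvalue_1),\overline{\anyvalue}}$, and the strict clauses combine precisely as the annotation product prescribes (strict only when both factors carry $\PpAnn$); validity of $\asignaturesOf{\fname}$ is assumed for built-ins and, for user-defined $\fname$, is available by induction along the acyclic call graph. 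I expect the main obstacle to be this composition step together with the conditional rules \ruleNameSize{[A-COND]}, \ruleNameSize{[A-COND-TRUE]} and \ruleNameSize{[A-COND-FALSE]}: one must verify that forming the least upper bound $\refSupOf{\artype_1}{\artype_2}$ of the branch annotations again yields a prestabilising map, and this is the delicate point where the restriction of annotated leaves to the key variable and to key-top constants, together with the properties of the leftmost-as-key preorder, must be exploited to prevent a branch switch from destroying monotonicity. Once the composition lemma and the conditional case are settled, the remaining cases and the concluding appeal to Proposition~\ref{prop:Sounness of annotated subsigning} are routine.
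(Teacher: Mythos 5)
Your proposal is correct and follows essentially the same route as the paper: the paper likewise establishes an expression-level invariant (its Lemma on annotation soundness for expressions, stating that $\pureFunOf{\e}$ is monotone and progressive in the leftmost-as-key preorder whenever $\annExpTypJud{\xname_1:\rtype_1\inputAnnSB,\,\overline{\xname}:\overline{\rtype}}{\e}{\rtype\pAnnSB}$ holds) by structural induction with \ruleNameSize{[A-VAR]}, \ruleNameSize{[A-GVAL]} and \ruleNameSize{[A-FUN]} as the key cases, and then concludes via \ruleNameSize{[A-DEF]} and Proposition~\ref{prop:Sounness of annotated subsigning}. Your explicit isolation of the composition step and of the conditional/least-upper-bound cases only spells out what the paper treats as ``straightforward by induction.''
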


\begin{proof}
See Appendix~\ref{app-proof-annotationSoundness}.
\end{proof}

\section{Related Work and Discussion}
\label{sec-related}

We here discuss the main related pieces of work, rooted in previous research on finding core models for spatial computing and self-organisation, formal approaches for large-scale systems, and finally on self-stabilisation in distributed systems.

\subsection{Spatial computing and self-organisation}

A first step in studying general behavioural properties of self-organising systems is the identification of a reference model, making it possible to reuse results across many different models, languages and platforms.
The review in \cite{SpatialIGI2013} surveys a good deal of the approaches considering some notion of space-time computations, which are the basis for any self-organising system.
Examples of such models include the Hood sensor network abstraction~\cite{hood}, the $\sigma\tau$-Linda model \cite{spatialcoord-coord2012}, the SAPERE computing model \cite{VPMSZ-SCP2015}, and TOTA middleware \cite{tota}, which all implement computational fields using similar notions of spreading.
More generally, Proto \cite{mitproto,proto06a} and its core formalisation as the ``field calculus'' \cite{DVPB-FORTE2015}, provides a functional model that appear general enough to serve as a starting point for investigating behavioural aspects of spatial computation and self-organisation \cite{BV-PTRS2015}.
In fact, in \cite{BVD-SCW14} it is proved that the field calculus is universal, in the sense that it can be used to describe any causal and discretely-approximable computation in space-time.

Hence, we started from the field calculus, in which computation is expressed by the functional combination of input fields (as provided by sensors), combined with mechanisms of space-based (neighbour) data aggregation, restriction (distributed branch) and state persistence.
The calculus presented here is a fragment of the field calculus, focussing on only two basic computational elements: \emph{(i)} functional composition of fields, and \emph{(ii)} a spreading expression.
In particular, the latter is a suitable combination of basic mechanisms of the field calculus, for which we were able to prove convergence to a single final state.
Namely, a spreading expression \texttt{\{e${}_0$ : g(@,e${}_1$,..,e${}_n$)\}} in our calculus is equivalent to the following field calculus expression:

\[\texttt{(rep x (inf) (min e${}_0$ (g (min-hood+ (nbr x)) e${}_1$ .. e${}_n$)))}\]
In particular, it was key to our end to neglect recursive
function calls (in order to ensure termination of device
fires, since the calculus does not model the domain restriction construct~\cite{VDB-FOCLASA-CIC2013,mitproto} and both the branches of a conditional expression are evaluated), stateful operations (in our model, the state of a
device is always cleaned up before computing the new one), and
to restrict aggregation to minimum function and progression to
what we called ``stabilising diffusion'' functions.

Other than applying to fragments of the field calculus, the result provided here can be applied to rule-based systems like those of the SAPERE approach \cite{VPMSZ-SCP2015} and of rewrite-based coordination models \cite{spatialcoord-coord2012}, along the lines depicted in \cite{V-SCW2013}.
Note that our condition for self-stabilisation is only a sufficient one.
A primary example of the fact that it is not necessary is Laplacian consensus \cite{DBLP:conf/atal/ElhageB10}, expressed as follows in the field calculus:
\[\texttt{(rep x e${}_i$ (+ x (* e${}_{\epsilon}$ (sum-hood (- (nbr x) x))))))}\]
It cannot be expressed in the calculus we propose here, but still stabilises to a plateau field, computed as a consensus among the values of input field \texttt{e${}_i$} (with \texttt{e${}_\epsilon$)} driving the dynamics of the output field.
Other cases include so-called convergence cast \cite{BV-FOCAS2014}.

\subsection{Formal approaches}

In this paper we are interested in formally predicting the behaviour of a complex system, in which the local interactions among a possibly miriad of devices make a global and robust pattern of behaviour emerge.
In the general case, one such kind of prediction can hardly be obtained.

The quintessential formal approach, model-checking \cite{clarke-99}, cannot typically scale with the number of involved components: suitable abstractions are needed to model arbitrary-size systems (as in \cite{Delzanno02}), which however only work in very constrained situations.
Approximate model-checking \cite{HLMP04,VC-ASENSIS2012}, basically consisting in a high number of simulation bursts, is viable in principle, but it still falls under the umbrella of semi-empirical evaluations, for only statistical results are provided.
Recently, fluid flow approximation has been proposed to turn large-scale computational systems into systems of differential equations that one could solve analytically or use to derive an evaluation of system behaviour \cite{DBLP:conf/coordination/BortolussiLM13}.
Unfortunately, this approach seems developed yet only to abstract from the number of (equivalent and non-situated) agents performing a repetitive task, instead of abstracting from the discreteness of a large-scale situated computational network.

Recent works finally aim at proving properties of large-scale systems by hand-written proofs, which are the works most related to the result of present paper.
The only work aiming at a mathematical proof of stabilisation for the specific case of computational fields is \cite{crf}.
There, a self-healing gradient algorithm called CRF (constraints and restoring forces) is introduced to estimate physical distance in a spatial computer, where the neighbouring relation is fixed to unit-disc radio, and node firing is strictly connected to physical time.
Compared to our approach, the work in \cite{crf} tackles a more specific problem, and is highly dependent on the underlying spatial computer assumptions.
Another work presenting a proof methodology that could be helpful in future stages of our research is the universality study in \cite{BVD-SCW14}.

From the viewpoint of rewrite semantics \cite{Huet:1980:CRA:322217.322230}, which is the meta-model closest to our formalisation attempt, our proof is most closely related to the \emph{confluence} property, that is, don't care non-determinism.
Our result entails confluence, but it is actually a much strongest property of global uniqueness of a normal form, independently of initial state.

\subsection{Self-stabilisation}

Our work concerns the problem of identifying complex network computations whose outcome is predictable.
The notion we focus on requires a unique global state being reached in finite time independently of the initial state, that is, depending only on the state of the environment (topology and sensors).
It is named \emph{(strong) self-stabilisation} since it is related with a usual notion of self-stabilisation to \emph{correct} states for distributed systems \cite{dolev}, defined in terms of a set $C$ of correct states in which the system eventually enters in finite time, and then never escapes from -- in our case, $C$ is made by the single state corresponding to the sougth result of computation.

Actually many different versions of the notion of self-stabilisation have been adopted in past, surveyed in \cite{S93c}, from works of Dijkstra's \cite{D73b,D74} to more recent and abstract ones \cite{AG93}, typically depending on the reference model for the system to study---protocols, state machines.
In our case, self-stabilisation is studied for a distributed data structure (the computational field).
Previous work on this context like \cite{HP01} however only considers the case of heap-like data structures in a non-distributed settings: this generally makes it difficult to draw a bridge with existing research.

Several variations of the definition also deal with different levels of quality (fairness, performance).
For instance, the notion of superstabilisation \cite{Dolev:1997:SPD:866056} adds to the standard self-stabilisation definition a requirement on a ``passage predicate'' that should hold while a system recovers from a specific topological change.
Our work does not address this very issue, since we currently completely equate the treatment of topological changes and changes to the inputs (i.e., sensors), and do not address specific performance requirements.
However, future works addressing performance issues will likely require some of the techniques studied in \cite{Dolev:1997:SPD:866056}.
Performance is also affected by the fairness assumption adopted: we relied on a notion abstracting from more concrete ones typically used \cite{KC98}, which we could use as well though losing a bit of the generality of our result.

Concerning the specific technical result achieved here, the closest one appears to be the creation of a hop-count gradient, which is known to self-stabilise: this is used in \cite{dolev} as a preliminary step in the creation of the spanning tree of a graph.
The main novelty in this context is that self-stabilisation is not proved here for a specific algorithm/system: it is proved for all fields inductively obtained by functional composition of fixed fields (sensors, values) and by a gradient-inspired spreading process.
Other works attempt to devise general methodologies for building self-stabilising systems like we do.
The work in \cite{AV91} depicts a compiler turning any protocol into a self-stabilising one. Though this is technically unrelated to our solution, it shares the philosophy of hiding the details of how self-stabilisation is achieved under the hood of the execution platform: in our case in fact, the designer wants to focus on the macro-level specification, trusting that components behave and interact so as to achieve the global outcome in a self-stabilising way.
The work in \cite{GH91} suggests that hierarchical composition of self-stabilising programs is self-stabilising: an idea that is key to the construction of a functional language of self-stabilising ``programs''.

In spite of the connection with some of these previous works, to the best of our knowledge ours is novel under different dimensions.
First, it is the first attempt of providing a notion of self-stabilisation directly connected to the problem of engineering self-organisation.
Secondly, the idea of appliying it to a whole specification language is also new, along with the fact that we apply a type-based approach, providing a correct checking procedure that paves the way towards compiler support.
As we use now a type-based approach, other static analysis techniques
may be worth studying in future attempts (see,
e.g.,~\cite{NielsonNielsonHankin:BOOK-1999}).

\section{Conclusions and Future Work}
\label{sec-conclusion}

Emerging application scenarios like pervasive computing, robotic systems, and wireless sensor networks call for developing robust and predictable large-scale situated systems.
However, the diffusion/aggregation processes that are typically to be implemented therein are source of complex phenomena, and are notoriously very hard to be formally treated.
The goal of this work is to bootstrap a research thread in which mechanisms of self-organisation are captured by linguistic constructs, so that static analysis in the programming language style can be used to isolate fragments with provable predictable behaviour.
In the medium term, we believe this is key to provide a tool-chain (programming language, libraries, simulation and execution platforms) which enables the development of complex software systems whose behaviour has still some predictability obtained ``by construction''.

Along this line, this paper studies a notion of strong self-stabilisation, identifying a sufficient condition expressed on the diffusion/aggregation mechanisms occurring in the system.
This targets the remarkable situation in which the final shape of the distributed data structure created (i.e., the computational field) is deterministically established, does not depend on transient events (such as temporaneous failures), and is only determined by the stabilised network topology.
Namely, this is the case in which we can associate to a complex computation a deterministic and easily computable result.

It would be interesting to relax some of the conditions and assumptions we relied upon in this paper, so as to provide a more general self-stabilisation result.
First of all, the current definition of self-stabilisation requires values to have upper-bounds, to prevent network subparts that become isolated from ``sources'' (e.g., of a gradient) to be associated with values that grow to infinity without reaching a fixpoint: a more involved definition of self-stabilisation could be given that declares such a divergence as admitted, allowing us to relax ``noetherianity'' of values.
We also plan to extend the result to encompass the domain restriction construct~\cite{VDB-FOCLASA-CIC2013,mitproto} (i.e., to add to the calculus a form of conditional expression where only one of the branches is evaluated). In this way also recursive function definitions could be added (then, in order to guarantee termination of computation rounds, standard analysis techniques for checking termination of recursive function definitions might be used).
We currently focus only on spreading-like self-stabilisation, whether recent works \cite{BV-FOCAS2014} suggest that ``aggregation'' patterns can be similarly addressed as well, though they might require a completely different language and proof methodology.

Of course, other behavioural properties are of interest which we plan to study in future work as an extension to the results discussed here.
First, it is key to study notions of self-stabilisation for computational fields which are designed so as to be dynamically evolving, like e.g. the anticipative gradient \cite{MPV-SASO2012}.
Second, it would be interesting to extend our notion of self-stabilisation so as to take into account those cases in which only approximate reachability of the sought state is required, since it can lead to computations with better average performance, as proposed in \cite{DBLP:conf/sac/Beal09}. 
Other aspects of interest that can be formally handled include performance characterisation, code mobility, expressiveness of mechanisms, and independence of network density, which will likely be subject of next investigations as well.

\section*{Acknowledgement}
We thank the anonymous COORDINATION 2014
referees for useful comments on an earlier version of this paper, and the anonymous LMCS referees for insightful comments  and suggestions  for improving the
presentation.


\bibliographystyle{abbrv}
\bibliography{main}

\appendix
%
\section{ Proof of Theorem~\ref{the-DeviceTypePreservation} and Theorem~\ref{the-DeviceTermination}}\label{app-proof-typeSounness}

The proof are given for the calculus with pairs  (cf. Section~\ref{sec-calculus-extended}).

\begin{lem}[Substitution lemma for typing]\label{lem:substitutionForTyping}
If
$\surExpTypJud{\overline{\xname}:\overline{\anytype}}{\e}{\anytype}$,
$\lengthOf{\overline{\anyvalue}}=\lengthOf{\overline{\xname}}$
and
        $\surExpTypJud{\emptyset}{\overline{\anyvalue}}{\overline{\anytype}}$,
then
$\surExpTypJud{\emptyset}{\applySubstitution{\e}{\substitution{\overline{\xname}}{\overline{\anyvalue}}}}{\anytype}$.
\end{lem}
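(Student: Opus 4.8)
The statement is the standard substitution lemma for the typing system of Figure~\ref{fig:SurfaceTyping} (extended with the pair rules of Figure~\ref{fig:SurfaceTyping-extended}). The natural approach is structural induction on the typing derivation $\surExpTypJud{\overline{\xname}:\overline{\anytype}}{\e}{\anytype}$, equivalently by induction on the structure of $\e$, since the rules are syntax-directed. The hypotheses give me a closing substitution $\substitution{\overline{\xname}}{\overline{\anyvalue}}$ with $\surExpTypJud{\emptyset}{\overline{\anyvalue}}{\overline{\anytype}}$, and I must show the substituted expression has the same type $\anytype$ under the empty environment.

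First I would dispatch the base cases. For a variable $\e=\xname_i$, rule \ruleNameSize{[T-VAR]} forces $\anytype=\anytype_i$, and $\applySubstitution{\xname_i}{\substitution{\overline{\xname}}{\overline{\anyvalue}}}=\anyvalue_i$, so the claim is exactly the hypothesis $\surExpTypJud{\emptyset}{\anyvalue_i}{\anytype_i}$. For a sensor $\snsname$ or a ground value $\groundvalue$, substitution acts as the identity and the type is fixed by construction via $\typeof{\cdot}$ (rules \ruleNameSize{[T-SNS]}, \ruleNameSize{[T-GVAL]}), so the same derivation applies under $\emptyset$.

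For the inductive cases I would proceed rule by rule, observing in each that substitution commutes with the syntactic constructor and that type environments in the premises are the same $\overline{\xname}:\overline{\anytype}$ (no binders are introduced inside expressions in this calculus). For \ruleNameSize{[T-COND]}, \ruleNameSize{[T-FUN]}, \ruleNameSize{[T-PAIR]}, \ruleNameSize{[T-FST]} and \ruleNameSize{[T-SND]}, I invert the rule to obtain typing judgements for the immediate subexpressions, apply the induction hypothesis to each (using that $\applySubstitution{(\fname(\overline{\e}))}{\sigma}=\fname(\applySubstitution{\overline{\e}}{\sigma})$, and similarly for the other constructors), and reassemble the same rule under $\emptyset$; the signature lookup $\signatureOf{\fname}$ and the type of the result are unaffected by substitution. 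The case \ruleNameSize{[T-SPR]} is analogous: the side condition $\progressionOpOf{\fname}$ is purely about $\fname$ and is preserved, and the type of $\spreadThree{\e}{\fname}{\overline{\e}}$ is that of $\fname(\e,\overline{\e})$, to which the induction hypothesis applies.

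\textbf{Main obstacle.} There is no deep difficulty here, as the calculus has no variable-binding expression forms (functions are top-level, and no $\lambda$ or local binder occurs inside expressions), so I never have to reason about capture, environment extension, or weakening during the induction. The only point demanding mild care is bookkeeping: making precise that $\applySubstitution{\e}{\substitution{\overline{\xname}}{\overline{\anyvalue}}}$ distributes over each constructor and that the premises of each rule are typed in the \emph{same} environment $\overline{\xname}:\overline{\anytype}$ so that the induction hypothesis applies uniformly to all subexpressions (including the simultaneous-substitution reading of $\surExpTypJud{\SurTypEnv}{\overline{\e}}{\overline{\anytype}}$ as shorthand for the componentwise judgements). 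Once this notation is fixed, every case is a one-line inversion-and-reassembly, so I expect the proof to be routine.
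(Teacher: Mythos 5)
Your proof is correct and takes exactly the route the paper does: the paper's own proof is a one-line "straightforward by induction on the application of the typing rules" of Figures~\ref{fig:SurfaceTyping} and~\ref{fig:SurfaceTyping-extended}, and your case analysis (variable base case using the hypothesis on $\overline{\anyvalue}$, identity of substitution on sensors and ground values, inversion-and-reassembly for the compound forms, noting the absence of binders) is precisely the argument the paper leaves implicit.
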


\begin{proof}
Straightforward by induction on the application of the typing rules for expressions in Fig.~\ref{fig:SurfaceTyping} and Fig.~\ref{fig:SurfaceTyping-extended}. 
\end{proof}

\begin{lem}[Device computation
type preservation]\label{lem:DeviceTypePreservation}
If
$\vtree\in\bsWFVT{\overline{\xname}:\overline{\anytype}}{\e}{\anytype}$,
then $\surExpTypJud{\emptyset}{\vrootOf{\vtree}}{\anytype}$.
\end{lem}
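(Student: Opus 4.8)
The plan is to reduce the statement to an ordinary type-preservation argument about the closed, ground-instantiated expression. By the definition of the set of well-typed value-trees, membership $\vtree\in\bsWFVT{\overline{\xname}:\overline{\anytype}}{\e}{\anytype}$ unpacks to the existence of a sensor mapping $\snsFun$, neighbour value-trees $\overline{\vtree}\in\bsWFVT{\overline{\xname}:\overline{\anytype}}{\e}{\anytype}$, and values $\overline{\anyvalue}$ with $\surExpTypJud{\emptyset}{\overline{\anyvalue}}{\overline{\anytype}}$, such that $\bsopsem{\snsFun}{\overline{\vtree}}{\applySubstitution{\e}{\substitution{\overline{\xname}}{\overline{\anyvalue}}}}{\vtree}$ holds. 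First I would invoke the Substitution Lemma (Lemma~\ref{lem:substitutionForTyping}) to conclude that $\e'=\applySubstitution{\e}{\substitution{\overline{\xname}}{\overline{\anyvalue}}}$ is closed and well typed, $\surExpTypJud{\emptyset}{\e'}{\anytype}$. It then suffices to prove that evaluating a closed well-typed expression produces a value-tree whose root carries the declared type, given that $\snsFun$ is correct and the supplied neighbour trees are well typed for $\e'$.

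\textbf{Induction.} The core is an induction on the height of the evaluation derivation $\bsopsem{\snsFun}{\overline{\vtree}}{\e'}{\vtree}$, case-splitting on its last rule; because the calculus forbids recursive definitions, the function-call case strictly decreases the remaining number of callable bodies, so the measure is well founded. To make the neighbour hypotheses propagate, I would carry as an invariant that the supplied neighbour trees are well typed for the expression being evaluated, which is exactly what $\overline{\vtree}\in\bsWFVT{\emptyset}{\e'}{\anytype}$ records, and observe that the alignment $\premiseNumOf{i}{\cdot}$ sends neighbour trees well typed for $\e'$ to neighbour trees well typed for its $i$-th subexpression, so that the IH applies to every subderivation. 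Inversion on the typing derivation of $\e'$ then pins down the typing rule matching each evaluation rule. The routine cases are immediate: \ruleNameSize{[E-VAL]}, where the root is $\e'$ itself; \ruleNameSize{[E-SNS]}, where the root is $\snsFun(\snsname)$, of type $\typeof{\snsname}=\anytype$ by correctness of $\snsFun$; \ruleNameSize{[E-COND]}, where the result is the root of one branch, each typed $\anytype$ by \ruleNameSize{[T-COND]} and the IH; and \ruleNameSize{[E-PAIR]}, \ruleNameSize{[E-FST]}, \ruleNameSize{[E-SND]}, which follow from \ruleNameSize{[T-PAIR]}, \ruleNameSize{[T-FST]}, \ruleNameSize{[T-SND]} and the IH.

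\textbf{Key cases and the main obstacle.} For \ruleNameSize{[E-BLT]} the result $\opApply{\opFunOf{\oname}}{\ldots}$ lies in $\semOf{\anytype}$ because the built-in semantics respects the signature $\signatureOf{\oname}$ and, by the IH, the argument roots inhabit $\semOf{\overline{\anytype}}$. For \ruleNameSize{[E-DEF]}, the IH gives well-typed argument roots $\vrootOf{\vtree'_i}:\anytype_i$; the Substitution Lemma applied to the body (well typed by \ruleNameSize{[T-DEF]}) makes the substituted body closed and well typed, and the IH on its evaluation yields the result type. The delicate case, and the main obstacle, is \ruleNameSize{[E-SPR]} for $\spreadThree{\e_0}{\fname}{\e_1,\ldots,\e_n}$: here $\fname$ is a diffusion, so by Definition~\ref{def:Progression} its result type equals its first-argument type $\anytype_1$, and \ruleNameSize{[T-SPR]} forces $\anytype=\anytype_1$ and $\e_0:\anytype_1$. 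The neighbour values $\anyvaluebis_j=\conclusionOf{\overline{\vtree}}$ fed to $\fname$ are exactly the roots of the neighbour trees for the whole spreading expression, so the invariant gives $\anyvaluebis_j\in\semOf{\anytype_1}$; each diffusion application $\fname(\anyvaluebis_j,\anyvalue_1,\ldots,\anyvalue_n)$ is then closed and well typed, and the IH yields $\anyvaluealt_j\in\semOf{\anytype_1}$. Since $\anyvalue_0\in\semOf{\anytype_1}$ as well, the result $\lowerBound\{\anyvalue_0,\anyvaluealt_1,\ldots,\anyvaluealt_m\}$ is a minimum with respect to $\wfLeOf{\anytype_1}$ of elements of $\semOf{\anytype_1}$, hence lies in $\semOf{\anytype_1}=\semOf{\anytype}$, closing the case. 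The crux throughout is to keep the well-typedness of the neighbour environment synchronised with the subexpression structure under $\premiseNumOf{i}{\cdot}$; once that bookkeeping is isolated, every case is a direct application of the corresponding typing rule, and Theorem~\ref{the-DeviceTypePreservation} follows by unfolding the definition of $\bsWFVT{\overline{\xname}:\overline{\anytype}}{\e}{\anytype}$.
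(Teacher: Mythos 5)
Your overall strategy matches the paper's: unpack the definition of $\bsWFVT{\overline{\xname}:\overline{\anytype}}{\e}{\anytype}$, apply the substitution lemma to get the closed instance well typed, and then do a syntax-directed case analysis in which the only delicate cases are \ruleNameSize{[E-DEF]} and \ruleNameSize{[E-SPR]}, with the $\premiseNumOf{i}{\cdot}$ bookkeeping keeping the neighbour environment aligned. All of that is fine and is exactly how the paper argues.

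There is, however, one genuine gap in the induction measure you declare. In the \ruleNameSize{[E-SPR]} case you write that ``the invariant gives $\anyvaluebis_j\in\semOf{\anytype_1}$'' for the neighbour roots $\anyvaluebis_j=\conclusionOf{\overline{\vtree}}$. But membership $\vtree_j\in\bsWFVT{\emptyset}{\e'}{\anytype_1}$ does not by definition say anything about the type of $\conclusionOf{\vtree_j}$; it only records that $\vtree_j$ arose from evaluating (an instance of) $\e'$ against some further well-typed neighbour environment. Concluding that its root inhabits $\semOf{\anytype_1}$ is precisely an application of the lemma you are proving, to a value-tree whose generating evaluation derivation is \emph{not} a subderivation of the one you are currently inducting on --- it comes from an earlier, unrelated firing and may have arbitrary height. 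So an induction on the height of the current evaluation derivation (even combined with the decreasing function-call count) does not license this step. The repair is the one the paper makes explicit: add a third, outermost induction on the inductive definition of $\bsWFVT{\cdot}{\cdot}{\cdot}$ itself, whose base case is $\overline{\vtree}=\emptyset$; the witnessing membership derivations for the neighbour trees are then strictly smaller, and the appeal to the inductive hypothesis for $\anyvaluebis_j$ becomes legitimate. With that adjustment your proof goes through and coincides with the paper's.
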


\begin{proof}
Recall that the typing rules (in Fig.~\ref{fig:SurfaceTyping} and Fig.~\ref{fig:SurfaceTyping-extended}) and the evaluation rules  (in Fig.~\ref{fig:deviceSemantics} and Fig.~\ref{fig:deviceSemantics-extended}) are syntax directed.
The proof is by induction on the definition of $\bsWFVT{\overline{\xname}:\overline{\anytype}}{\e}{\anytype}$ (given in Section~\ref{sec-properties-typing}), on
the number of user-defined function calls that may be encountered during the
evaluation of $\applySubstitution{\e}{\substitution{\overline{\xname}}{\overline{\anyvalue}}}$ (cf.\ sanity condition \emph{\textbf{(iii)}} in
Section~\ref{sec-typing}), and on the syntax of closed expressions.

From the hypothesis $\vtree\in\bsWFVT{\overline{\xname}:\overline{\anytype}}{\e}{\anytype}$ we have $\surExpTypJud{\overline{\xname}:\overline{\anytype}}{\e}{\anytype}$, $\bsopsem{\snsFun}{\overline{\vtree}}{\applySubstitution{\e}{\substitution{\overline{\xname}}{\overline{\anyvalue}}}}{\vtree}$  for some sensor mapping $\snsFun$,  evaluation trees $\overline{\vtree}\in\bsWFVT{\overline{\xname}:\overline{\anytype}}{\e}{\anytype}$, and values $\overline{\anyvalue}$ such that $\lengthOf{\overline{\anyvalue}}=\lengthOf{\overline{\xname}}$
and
        $\surExpTypJud{\emptyset}{\overline{\anyvalue}}{\overline{\anytype}}$. Moreover, by Lemma~\ref{lem:substitutionForTyping} we have that $\surExpTypJud{\emptyset}{\applySubstitution{\e}{\substitution{\overline{\xname}}{\overline{\anyvalue}}}}{\anytype}$ holds. The case $\overline{\vtree}$ empty represents the base of the induction on the definition of $\bsWFVT{\overline{\xname}:\overline{\anytype}}{\e}{\anytype}$. Therefore the rest of this proof can be understood as a proof of the base step by assuming  $\overline{\vtree}=\emptyset$ and a proof of the inductive step  by assuming $\overline{\vtree}\not=\emptyset$.

The case when $\e$ does non contain user-defined function calls represents the base on the induction on the number of user-defined function calls that may be encountered during the
evaluation of  $\applySubstitution{\e}{\substitution{\overline{\xname}}{\overline{\anyvalue}}}$. Therefore the rest of this proof can be understood as a proof of the base step by ignoring the cases  $\applySubstitution{\e}{\substitution{\overline{\xname}}{\overline{\anyvalue}}}=\fname(\e_1,\myldots,\e_n)$ and  $\applySubstitution{\e}{\substitution{\overline{\xname}}{\overline{\anyvalue}}}=\spreadThree{\e_0}{\fname}{\e_1,\myldots,\e_n}$ when $\fname$ is a used-defined function $\dname$.
The base of the induction on  $\applySubstitution{\e}{\substitution{\overline{\xname}}{\overline{\anyvalue}}}$ consist of two cases.
\begin{description}
\item[Case $\snsname$] 
From the hypothesis we have  $\surExpTypJud{\emptyset}{\snsname}{\anytype}$ where $\anytype=\typeof{\snsname}$ (by rule \ruleNameSize{[T-SNS]}) and $\bsopsem{\senstate}{\overline\vtree}{\snsname}{\vtree}$ where $\vtree=\anyvalue()$ and $\anyvalue=\senstate(\snsname)$ (by rule \ruleNameSize{[E-SNS]}). Since the sensor  $\snsname$ returns values of type $\typeof{\snsname}$, we have that $\typeof{\anyvalue}=\typeof{\snsname}=\anytype$.  So  the result follows by a straightforward induction of the syntax of values using rules \ruleNameSize{[T-VAL]} and \ruleNameSize{[T-PAIR]}. 
\item[Case $\anyvalue$]
From the hypothesis we have  $\surExpTypJud{\emptyset}{\anyvalue}{\anytype}$  and  (by rule \ruleNameSize{[E-VAL]}) $\bsopsem{\senstate}{\overline\vtree}{\anyvalue}{\vtree}$ where  $\vtree=\anyvalue()$.  So  the result follows by a straightforward induction of the syntax of values using rules \ruleNameSize{[T-VAL]} and \ruleNameSize{[T-PAIR]}.
\end{description}
For the inductive step on  $\applySubstitution{\e}{\substitution{\overline{\xname}}{\overline{\anyvalue}}}$, we show only the two most interesting cases (all the other cases are straightforward by induction). 
\begin{description}
\item[Case $\dname(\e_1,\myldots,\e_n)$] 
From the  hypothesis we have $ \surExpTypJud{\emptyset}{\fname(\e_1,\myldots,\e_n)}{\anytype}$ (by rule \ruleNameSize{T-FUN]}) and $\bsopsem{\senstate}{\overline\vtree}{\dname(\e_1,\myldots,\e_n)}{\anyvalue(\vtree'_1,\myldots,\vtree'_n,\anyvalue(\overline\vtreealt))}$ (by rule \ruleNameSize{E-DEF]}). Therefore we have
$\anytype(\anytype_1,\myldots,\anytype_n)=\signatureOf{\dname}$,  $\surExpTypJud{\emptyset}{\e_1}{\anytype_1}$, $\myldots$, $\surExpTypJud{\emptyset}{\e_n}{\anytype_n}$ (the premises of rule \ruleNameSize{T-FUN]}) and $\defK \; \anytype \;\dname(\anytype_1\;\xname_1,\myldots,\anytype_n\;\xname_n) = \e''$, $\bsopsem{\senstate}{\premiseNumOf{1}{\overline\vtree}}{\e_1}{\vtree'_1}$, $\myldots$,
$\bsopsem{\senstate}{\premiseNumOf{n}{\overline\vtree}}{\e_n}{\vtree'_n}$ and
\begin{equation}\label{eq1:the-DeviceTypePreservation}
\bsopsem{\senstate}{\premiseNumOf{n+1}{\overline\vtree}}{\e'}{\anyvalue(\overline\vtreealt)}
\quad \mbox{where} \; \e'=\applySubstitution{\e''}{\substitution{\xname_1}{\vrootOf{\vtree'_1}},\myldots,\substitution{\xname_n}{\vrootOf{\vtree'_n}}}
\end{equation}
(the premises of rule \ruleNameSize{E-DEF]}).

Since $\premiseNumOf{i}{\overline\vtree}\in\bsWFVT{\emptyset}{\e_i}{\anytype_i}$ ($1\le i\le n$) then $\vtree'_i\in\bsWFVT{\emptyset}{\e_i}{\anytype_i}$; therefore, by  induction we have $\surExpTypJud{\emptyset}{\vrootOf{\vtree'_1}}{\anytype_1}$, $\myldots$, $\surExpTypJud{\emptyset}{\vrootOf{\vtree'_n}}{\anytype_n}$. 

Since the program is well typed (cf.\ Section~\ref{sec-typing}) we have  $\surExpTypJud{\xname_1:\anytype_1,\myldots,\xname_n:\anytype_n}{\e''}{\anytype}$ (by rule \ruleNameSize{T-DEF]}). 


Since $\premiseNumOf{n+1}{\overline\vtree}\in\bsWFVT{\xname_1:\anytype_1,\myldots,\xname_n:\anytype_n}{\e''}{\anytype}$,
then (by~(\ref{eq1:the-DeviceTypePreservation})) we have $\anyvalue(\overline\vtreealt)\in\bsWFVT{\xname_1:\anytype_1,\myldots,\xname_n:\anytype_n}{\e''}{\anytype}$; therefore, by induction we have that $\surExpTypJud{\emptyset}{\anyvalue}{\anytype}$.

\item[Case $\spreadThree{\e_0}{\fname}{\e_1,\myldots,\e_n}$] 
From the  hypothesis we have $\surExpTypJud{\emptyset}{\spreadThree{\e_0}{\fname}{\e_1,\myldots,\e_n}}{\anytype}$ (by rule \ruleNameSize{T-SPR]}) and $\bsopsem{\senstate}{\overline\vtree}{\spreadThree{\e_0}{\fname}{\e_1,\myldots,\e_n}}{\lowerBound\{\anyvalue_0,\anyvaluealt_1,\myldots,\anyvaluealt_m\}(\vtreealt_0,\vtreealt_1,\myldots,\vtreealt_n)}$ (by rule \ruleNameSize{E-SPR]}). Therefore we have
 $\progressionOpOf{\fname}$, $\anytype(\anytype,\anytype_1,\myldots,\anytype_n)=\signatureOf{\dname}$, $\surExpTypJud{\emptyset}{\e_0}{\anytype}$, $\surExpTypJud{\emptyset}{\e_1}{\anytype_1}$, $\myldots$, $\surExpTypJud{\emptyset}{\e_n}{\anytype_n}$ (the premises of rules \ruleNameSize{T-SPR]} and \ruleNameSize{T-FUN]}) and $\bsopsem{\senstate}{\premiseNumOf{1}{\overline\vtree}}{\e_1}{\vtree'_1}$, $\myldots$,
$\bsopsem{\senstate}{\premiseNumOf{n}{\overline\vtree}}{\e_n}{\vtree'_n}$, 
$\conclusionOf{\vtreealt_0,\myldots,\vtreealt_n} = \anyvalue_0 \myldots \anyvalue_n$,
$\conclusionOf{\overline\vtree} = \anyvaluebis_1 \myldots \anyvaluebis_m$,
\begin{equation}\label{eq3:the-DeviceTypePreservation}
\bsopsem{\senstate}{\emptyset}{\fname(\anyvaluebis_1,\anyvalue_1,\myldots,\anyvalue_n)}{\anyvaluealt_1(\cdots)},
\myldots,
\bsopsem{\senstate}{\emptyset}{\fname(\anyvaluebis_m,\anyvalue_1,\myldots,\anyvalue_n)}{\anyvaluealt_m(\cdots)}
\end{equation}
(the premises rule \ruleNameSize{E-SPR]}).
By induction we have $\surExpTypJud{\emptyset}{ \anyvalue_0 \myldots \anyvalue_n}{\anytype \anytype_1 \myldots \anytype_n}$
and  $\surExpTypJud{\emptyset}{\anyvaluebis_1}{\anytype}$, $\myldots$, $\surExpTypJud{\emptyset}{\anyvaluebis_m}{\anytype}$.
We have two subcases:
\begin{itemize}
\item
If $\fname$ is a user-defined function, then from~(\ref{eq3:the-DeviceTypePreservation}) we get (by reasoning as in the proof of case $\dname(\e_1,\myldots,\e_n)$)
$\surExpTypJud{\emptyset}{\anyvaluealt_1}{\anytype}$, $\myldots$, $\surExpTypJud{\emptyset}{\anyvaluealt_m}{\anytype}$.
\item
If $\fname$ is a built-in function, then from~(\ref{eq3:the-DeviceTypePreservation}) we get (by the semantics of built-in functions) $\surExpTypJud{\emptyset}{\anyvaluealt_1}{\anytype}$, $\myldots$, $\surExpTypJud{\emptyset}{\anyvaluealt_m}{\anytype}$.
\end{itemize}
In both cases $\anyvalue=\lowerBound\{\anyvalue_0,\anyvaluealt_1,\myldots,\anyvaluealt_m\}$ has type $\anytype$, i.e., $\surExpTypJud{\emptyset}{\anyvalue}{\anytype}$ holds.\qedhere
\end{description}
\end{proof}

\noindent\textbf{Restatement of
Theorem~\ref{the-DeviceTypePreservation} (Device computation
type preservation).} \emph{If
$\surExpTypJud{\overline{\xname}:\overline{\anytype}}{\e}{\anytype}$,
 $\snsFun$ is a sensor mapping,
$\overline{\vtree}\in\bsWFVT{\overline{\xname}:\overline{\anytype}}{\e}{\anytype}$,
$\lengthOf{\overline{\anyvalue}}=\lengthOf{\overline{\xname}}$,
        $\surExpTypJud{\emptyset}{\overline{\anyvalue}}{\overline{\anytype}}$
and
$\bsopsem{\snsFun}{\overline{\vtree}}{\applySubstitution{\e}{\substitution{\overline{\xname}}{\overline{\anyvalue}}}}{\vtree}$,
then $\surExpTypJud{\emptyset}{\vrootOf{\vtree}}{\anytype}$.}

\begin{proof}
Straightforward by Lemma~\ref{lem:DeviceTypePreservation}, since $\vtree\in\bsWFVT{\overline{\xname}:\overline{\anytype}}{\e}{\anytype}$.
\end{proof}

\noindent\textbf{Restatement of
Theorem~\ref{the-DeviceTermination} (Device computation
termination).} \emph{If
$\surExpTypJud{\overline{\xname}:\overline{\anytype}}{\e}{\anytype}$,
 $\snsFun$ is a sensor mapping,
$\overline{\vtree}\in\bsWFVT{\overline{\xname}:\overline{\anytype}}{\e}{\anytype}$,
$\lengthOf{\overline{\anyvalue}}=\lengthOf{\overline{\xname}}$
and
        $\surExpTypJud{\emptyset}{\overline{\anyvalue}}{\overline{\anytype}}$,
then
$\bsopsem{\snsFun}{\overline{\vtree}}{\applySubstitution{\e}{\substitution{\overline{\xname}}{\overline{\anyvalue}}}}{\vtree}$
for some value-tree $\vtree$.}

\begin{proof}
By induction on the number of
function calls that may be encountered during the evaluation of
$\applySubstitution{\e}{\substitution{\overline{\xname}}{\overline{\anyvalue}}}$
(cf.\ sanity condition \emph{\textbf{(iii)}} in
Section~\ref{sec-typing}) and on  the syntax of closed expressions, using Lemma~\ref{lem:DeviceTypePreservation}.
\end{proof}

\section{Proof of Theorem~\ref{the-network-self--stabilization}}\label{app-proof-selfStabilisation}

Recall the auxiliary definitions and the outline of  the proof of Theorem~\ref{the-network-self--stabilization} given in Section~\ref{sec-properties-self-stabilisation}. The following six lemmas corresponds to the  auxiliary results~\ref{lem-minimum-value}-\ref{lem-pre-self-stable-network-self-stabilization} introduced in Section~\ref{sec-properties-self-stabilisation}.

\begin{lem}[Minimum value]\label{lem-minimum-value}
Given a program $\e=\spreadThree{\e_0}{\fname}{\e_1,\ldots,\e_n}$ with valid  sort and stabilising diffusion assumptions, for every reachable pre-self-stable network configuration $\network$, for any device $\deviceId$ in $\network$ such that
$\anyvalueInNC{\deviceId}{\network}$ is minimum (among
the values of the devices in $\network$): 
\begin{enumerate}
\item
if $\network\netArrIdStar\network'$ and $\deviceId\not\in\overline{\deviceId}$, then $\anyvalueInNC{\deviceId}{\network'}=\anyvalueInNC{\deviceId}{\network}$ is minimum (among
the values of the devices in $\network'$); and
\item
either  $\anyvalueInNC{\deviceId}{\network}=\anyvalue_{0,\deviceId}$
    (i.e., $\deviceId$ is self-stable in $\network$) or 
if $\network\netArrFairStarN{1}\network'$ then there exists $\anyvalue'$ such that:
\begin{enumerate}
\item
    $\anyvalueInNC{\deviceId}{\network}\wfLt\anyvalue'\wfLe\anyvalueInNC{\deviceId}{\network'}$; and
\item if $\network'\netArrPresStar\network''$,
    then
    $\anyvalue'\wfLe\anyvalueInNC{\deviceId}{\network''}$.
\end{enumerate}
\end{enumerate}
\end{lem}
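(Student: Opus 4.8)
The plan is to isolate a single global invariant — that no firing can drive any device below the current minimum value — and then read off both parts of the lemma from it. Fix $\mu = \anyvalueInNC{\deviceId}{\network}$, the minimum value in $\network$, and prove: \emph{along any firing evolution issuing from $\network$, every device has value $\wfGe\mu$.} This I would show by induction on the length of the firing sequence. The base case holds since $\mu$ is minimum in $\network$. For the inductive step, suppose device $\deviceIdalt$ fires in a configuration where all values are $\wfGe\mu$; by rule \ruleNameSize{[E-SPR]} its new value is $\lowerBound\{\anyvalue_{0,\deviceIdalt},\opApply{\opFunOf{\fname}}{\anyvaluebis_1,\overline{\anyvalue}_{\deviceIdalt}},\ldots\}$, where the $\anyvaluebis_j$ are the neighbour values (all $\wfGe\mu$) and $\overline{\anyvalue}_{\deviceIdalt}$ are the stable arguments. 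Pre-self-stability gives $\anyvalue_{0,\deviceIdalt}\wfGe\mu$, since $\anyvalue_{0,\deviceIdalt}$ dominates the current value, which is $\wfGe\mu$; and progressivity of the stabilising diffusion $\fname$ (Definition~\ref{def:StabilisingProgression}) yields $\opApply{\opFunOf{\fname}}{\anyvaluebis_j,\overline{\anyvalue}_{\deviceIdalt}}\wfGe\anyvaluebis_j\wfGe\mu$. Hence the new value is $\wfGe\mu$. Here I would invoke validity of the sort assumptions to guarantee that the $\anyvaluebis_j$ and $\overline{\anyvalue}_{\deviceIdalt}$ lie in the sorts for which the monotonicity and progressivity clauses of Definition~\ref{def:StabilisingProgression} apply.

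Part~(1) is then immediate: if $\deviceId\not\in\overline{\deviceId}$ it never fires, so it retains the value $\mu$, and by the invariant $\mu$ is still minimum in $\network'$. For Part~(2) I would split on whether $\mu=\anyvalue_{0,\deviceId}$. If so, firing $\deviceId$ recomputes $\lowerBound\{\anyvalue_{0,\deviceId},\ldots\}=\anyvalue_{0,\deviceId}$, because every neighbour contribution is $\wfGe\mu=\anyvalue_{0,\deviceId}$; hence $\deviceId$'s value is fixed, and since the source subexpression is self-stable its value is environment-determined, so $\deviceId$ is self-stable. Otherwise $\mu\wfLt\anyvalue_{0,\deviceId}$, and I would set $\anyvalue'=\lowerBound\{\anyvalue_{0,\deviceId},\opApply{\opFunOf{\fname}}{\mu,\overline{\anyvalue}_{\deviceId}}\}$. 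Progressivity applies because $\mu\wfLt\anyvalue_{0,\deviceId}\wfLe\topValOf{\rtype_1}$ forces $\mu\neq\topValOf{\rtype_1}$, giving $\opApply{\opFunOf{\fname}}{\mu,\overline{\anyvalue}_{\deviceId}}\wfGt\mu$, whence $\anyvalue'\wfGt\mu$.

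Finally, monotonicity of $\fname$ shows that whenever $\deviceId$ fires against neighbour values $\wfGe\mu$ its resulting value is $\wfGe\anyvalue'$, and when it does not fire it retains its value. Since $1$-fairness forces $\deviceId$ to fire at least once along $\network\netArrFairStarN{1}\network'$, this yields $\anyvalue'\wfLe\anyvalueInNC{\deviceId}{\network'}$, establishing clause~(a); and it furnishes a second invariant, preserved after $\deviceId$'s first fire, that $\deviceId$'s value never subsequently drops below $\anyvalue'$, establishing clause~(b). I expect the main obstacle to be the disciplined bookkeeping of these two nested invariants across arbitrary firing sequences — in particular, verifying that the sort-membership hypotheses needed for the monotonicity and progressivity clauses persist at every intermediate configuration, so that Definition~\ref{def:StabilisingProgression} remains applicable throughout.
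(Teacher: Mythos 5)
Your proof is correct and follows essentially the same route as the paper's: both arguments rest on monotonicity and progressiveness of the stabilising diffusion (justified via the valid sort assumptions) together with pre-self-stability, to show that no firing can undercut the minimum $\mu$ and that the minimal device's own fire strictly increases its value to at least $\anyvalue'=\anyvalue_{0,\deviceId}\wedge\opApply{\opFunOf{\fname}}{\mu,\overline{\anyvalue}_{\deviceId}}$, which is exactly the paper's witness. The only difference is presentational: you state the ``never below $\mu$'' invariant globally and prove it by induction on the firing sequence, whereas the paper argues the single-step version and leaves the extension to arbitrary firing evolutions implicit — your version is, if anything, slightly more careful.
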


\begin{proof}
\textbf{Point (1).} Consider a device $\deviceId'\not=\deviceId$. Then $\anyvalueInNC{\deviceId}{\network}\wfLe\anyvalueInNC{\deviceId'}{\network}\wfLe\anyvalue_{0,\deviceId'}$  and its $m\ge 0$ neighbours have values $\anyvaluebis_j$  such that 
$\anyvalueInNC{\deviceId}{\network}\wfLe\anyvaluebis_j$ ($1\le j\le m$). Since the  stabilising diffusion assumptions hold, $\anyvaluebis_j\wfLt\opFunOf{\fname}(\anyvaluebis_j,\anyvalue_{1,\deviceId'},\ldots,\anyvalue_{n,\deviceId'})=\anyvaluealt_j$. Therefore,
 $\anyvalueInNC{\deviceId}{\network}\wfLe\lowerBound\{\anyvalue_{0,\deviceId'},\anyvaluealt_1,\myldots,\anyvaluealt_m\}=\anyvalueInNC{\deviceId'}{\network'}$. So $\anyvalueInNC{\deviceId}{\network'}=\anyvalueInNC{\deviceId}{\network}$ is minimum (among
the values of the devices in $\network'$).

\noindent
\textbf{Point (2).} Assume that $\anyvalueInNC{\deviceId}{\network}\wfLt\anyvalue_{0,\deviceId}$. The $m\ge 0$ neighbours of $\deviceId$ have values $\anyvaluebis_j$  such that 
$\anyvalueInNC{\deviceId}{\network}\wfLe\anyvaluebis_j$ ($1\le j\le m$). Since the  stabilising diffusion assumptions hold, $\anyvalueInNC{\deviceId}{\network}\wfLt\opFunOf{\fname}(\anyvalueInNC{\deviceId}{\network},\anyvalue_{1,\deviceId},\ldots,\anyvalue_{n,\deviceId})=\anyvaluealt_0$ and  $\anyvaluealt_0\wfLe\opFunOf{\fname}(\anyvaluebis_j,\anyvalue_{1,\deviceId},\ldots,\anyvalue_{n,\deviceId})=\anyvaluealt_j$. Therefore,
the value $\anyvalue'=\anyvalue_{0,\deviceId}\wedge\anyvaluealt_0$ is such that 
when $\deviceId$ fires its new value $\anyvalueInNC{\deviceId}{\network'}=\lowerBound\{\anyvalue_{0,\deviceId},\anyvaluealt_1,\myldots,\anyvaluealt_m\}$ is such that $\anyvalueInNC{\deviceId}{\network}\wfLt\anyvalue'\wfLe\anyvalueInNC{\deviceId}{\network'}$. Moreover,  since  $\anyvalueInNC{\deviceId}{\network}\wfLt\anyvalueInNC{\deviceId}{\network'}$ we have that in a firing evolution  none of  devices $\deviceId'\not=\deviceId$ will reach a value less than $\anyvalueInNC{\deviceId}{\network}$ and therefore the device $\deviceId$ will never reach a value less than $\anyvalue'$.
\end{proof}

\begin{lem}[Self-stabilisation of the minimum value]\label{lem-self-stabilisation-minimum-value}
Given a program $\e=\spreadThree{\e_0}{\fname}{\e_1,\ldots,\e_n}$ with valid  sort and stabilising diffusion assumptions, for every reachable pre-self-stable network configuration $\network$, 
if $\stableNodeSet_1$ is the subset of the devices in $\network$ 
 such that
$\anyvalue_{0,\deviceId}$ is minimum  (among the values of
$\e_0$ in the devices in $\network$),  then there exists
$k\ge 0$ such that $\network\netArrFairStarN{k}\network'$
implies that  $\stableNodeSet_1$ satisfies the following conditions: 
\begin{enumerate}
\item
each device $\deviceId$ in $\stableNodeSet_1$ is self-stable in $\network'$ and has value
$\anyvalueInNC{\deviceId}{\network'}=\anyvalue_{0,\deviceId}$; 
\item
in $\network'$ each device not in $\stableNodeSet_1$ has a value greater or equal to the values of the devices in $\stableNodeSet_1$ and, during any firing evolution, it will always assume  values greater than the values of the devices in $\stableNodeSet_1$.
\end{enumerate}
\end{lem}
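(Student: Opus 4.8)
The plan is to prove Lemma~\ref{lem-self-stabilisation-minimum-value} by reducing it to the single-value analysis of Lemma~\ref{lem-minimum-value} and then iterating that analysis until the globally smallest source value has propagated. Since $\network$ is pre-self-stable, the value trees of $\e_0,\ldots,\e_n$ are already fixed and univocally determined by $\netframe=\netframeOf{\network}$, so the only quantity a firing can change on a device $\deviceId$ is the root $\anyvalueInNC{\deviceId}{\network}$ of the spreading expression, and a firing sets it to $\lowerBound\{\anyvalue_{0,\deviceId},\anyvaluealt_1,\ldots,\anyvaluealt_m\}$ where $\anyvaluealt_j=\opFunOf{\fname}(\anyvaluebis_j,\anyvalue_{1,\deviceId},\ldots,\anyvalue_{n,\deviceId})$ are the diffusions of the current neighbour roots $\anyvaluebis_j$. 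Write $m_0$ for the minimum of $\anyvalue_{0,\deviceId}$ over the devices of $\network$, so that $\stableNodeSet_1=\{\deviceId : \anyvalue_{0,\deviceId}=m_0\}$; by pre-stability $\anyvalueInNC{\deviceId}{\network}\wfLe\anyvalue_{0,\deviceId}$, and in particular the current minimum root value $v$ over the network satisfies $v\wfLe m_0$.

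First I would record the invariants that drive the argument. Using the stabilising-diffusion hypotheses exactly as in the proof of Lemma~\ref{lem-minimum-value}, monotonicity and progressiveness of $\fname$ in its first argument give that no firing can push a device's root below the current network minimum: if all neighbour roots are $\wfGe v$ then every diffusion $\opFunOf{\fname}(\anyvaluebis_j,\ldots)\wfGt\anyvaluebis_j\wfGe v$, while every source value is $\wfGe m_0\wfGe v$; hence the network minimum is non-decreasing along any firing evolution. Next, as long as this minimum $v$ is strictly below $m_0$, any device realising it has root value $\wfLt m_0\wfLe\anyvalue_{0,\deviceId}$, so it is not self-stable, and Lemma~\ref{lem-minimum-value}(2) applies: firing it strictly raises its value above a floor that persists. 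Consequently, after each fair round in which all current-minimum devices have fired, the network minimum strictly increases.

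I would then close the iteration by a termination step and a settling step. Because $\wfLeOf{\anytype}$ is noetherian there is no infinite strictly ascending chain of values, so the network minimum can strictly increase only finitely often and therefore reaches $m_0$ after finitely many fair rounds; from that point every device root is $\wfGe m_0$, whence a firing of a device in $\stableNodeSet_1$ yields $\lowerBound\{m_0,\anyvaluealt_1,\ldots,\anyvaluealt_m\}=m_0$ (each $\anyvaluealt_j\wfGt m_0$), which is a fixpoint and gives self-stability together with condition (1), while a firing of a device whose source is $\wfGt m_0$ yields a value $\wfGt m_0$ that is thereafter preserved, giving condition (2). Packaging the finitely many rounds and the settling into one fairness depth yields the required $k$.

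The main obstacle will be producing a single $k$ that works uniformly over all $k$-fair evolutions out of the fixed $\network$: noetherianity makes each individual evolution terminate, but because progressiveness guarantees only strict, not uniformly gapped, increases, a naive chain-length bound may depend on the evolution. The way I would overcome this is to observe that, by progressiveness, a device root can only take values obtained by composing $\fname$ along neighbour paths a bounded number of times before exceeding, and hence being discarded from, the relevant minima; since $\network$ has finitely many devices and the source and sensor data are fixed by $\netframe$, the set of values that can ever appear in a firing evolution is finite, which bounds all the ascending chains simultaneously and thus fixes $k$ as a function of $\network$ alone.
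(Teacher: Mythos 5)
Your proof is correct and follows essentially the same route as the paper's: reduce to Lemma~\ref{lem-minimum-value}, apply it repeatedly to the not-yet-self-stable minimal devices, and use finiteness of the device set together with noetherianity of $\wfLeOf{\anytype}$ to conclude that after finitely many $1$-fair rounds the network minimum climbs to $m_0$, at which point the devices of $\stableNodeSet_1$ settle at their source values and all others stay strictly above. Your closing paragraph on why a single $k$ works uniformly over all $k$-fair evolutions is in fact more careful than the paper's own proof, which simply asserts that ``after a finite number $k$ of $1$-fair evolutions'' the conditions hold.
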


\begin{proof}
The number of devices in the network configuration is
finite, the environment does not change, the network is pre-self-stable, and the  stabilising diffusion assumptions holds. The results follows by  Lemma~\ref{lem-minimum-value}. Namely, if there is a device $\deviceId$ whose value $\anyvalue_\deviceId$ is minimum and such that  $\anyvalue_\deviceId\wfLt\anyvalue_{0,\deviceId}$, then after a 1-fair network evolution  $\deviceId$ reaches a value which is greater or equal to some $\anyvalue'$ such that
\begin{itemize}
\item
    $\anyvalue\wfLt\anyvalue'\wfLe\anyvalue_{0,\deviceId}$; and
\item in any subsequent firing evolution the value of $\deviceId$ will be always greater or equal to $\anyvalue'$. 
\end{itemize}
Therefore, after a finite number $k$ of 1-fair evolutions (i.e., after any $k$-fair evolution)  conditions (1) and (2) in the statement of the lemma are satisfied.
\end{proof}


\begin{lem}[Frontier]\label{lem-frontier}
Given a program $\e=\spreadThree{\e_0}{\fname}{\e_1,\ldots,\e_n}$ with valid  sort and stabilising diffusion assumptions, for every reachable pre-stable network configuration $\network$ with devices $\nodeSubSet$ and a non-empty subset of devices $\stableNodeSet\subset\nodeSubSet$ such that 
\begin{enumerate}[label=(\roman*)]
\item
each device in $\stableNodeSet$ is self-stable in $\network$;
\item
each device in $\nodeSubSet-\stableNodeSet$ has a value greater or equal to the values of the devices in $\stableNodeSet$ and, during any firing evolution, will always assume  values greater or equal to the values of the devices in $\stableNodeSet$; and
\item
$\frontierOfIn{\stableNodeSet}{\nodeSubSet}\not=\emptyset$;
\end{enumerate}
if $\network\netArrFairStarN{1}\network'$ then each device in $\frontierOfIn{\stableNodeSet}{\nodeSubSet}$ is self-stable in $\network'$.
\end{lem}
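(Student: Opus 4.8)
The plan is to fix an arbitrary device $\deviceId \in \frontierOfIn{\stableNodeSet}{\nodeSubSet}$ and to show that, once $\deviceId$ has fired at least once — which any $1$-fair evolution guarantees, since in such an evolution every device fires at least once — the value-tree of $\deviceId$ becomes univocally determined by $\netframe = \netframeOf{\network}$ and stays so under all subsequent firings; that is, $\deviceId$ is self-stable in $\network'$. I would first analyse a single firing of $\deviceId$ via rule \ruleNameSize{[E-SPR]}: its new root value is $\anyvalue_{0,\deviceId} \wedge \lowerBound\{\opApply{\opFunOf{\fname}}{\anyvaluebis_1,\overline{\anyvalue}}, \ldots, \opApply{\opFunOf{\fname}}{\anyvaluebis_m,\overline{\anyvalue}}\}$, where $\anyvaluebis_1,\ldots,\anyvaluebis_m$ are the current values of the neighbours of $\deviceId$ and $\overline{\anyvalue} = \anyvalue_{1,\deviceId},\ldots,\anyvalue_{n,\deviceId}$.

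The heart of the argument is to show that this minimum is \emph{dominated by the contributions of the stable neighbours}. Since $\deviceId$ is in the frontier it has at least one neighbour $\deviceId' \in \stableNodeSet$, and by condition (i) every such neighbour is self-stable, so its value is univocally determined by $\netframe$; likewise, since the configuration is pre-self-stable the subexpression values $\anyvalue_{0,\deviceId},\overline{\anyvalue}$ are univocally determined by $\netframe$. Because $\fname$ is a stabilising \diffusionText\ it is monotonic nondecreasing in its first argument, and by condition (ii) every non-stable neighbour $\deviceId''$ satisfies $\anyvalue_{\deviceId''} \wfGe \anyvalue_{\deviceId'}$, whence $\opApply{\opFunOf{\fname}}{\anyvalue_{\deviceId''},\overline{\anyvalue}} \wfGe \opApply{\opFunOf{\fname}}{\anyvalue_{\deviceId'},\overline{\anyvalue}}$. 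Therefore the minimum over all neighbour contributions equals the minimum taken over the stable neighbours alone, so after firing the value of $\deviceId$ is $\anyvalue_{0,\deviceId} \wedge \lowerBound\{\opApply{\opFunOf{\fname}}{\anyvalue_{\deviceId'},\overline{\anyvalue}} \;\vert\; \deviceId' \in \stableNodeSet \text{ a neighbour of } \deviceId\}$, an expression built entirely from quantities univocally determined by $\netframe$.

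I would then establish stability by observing that the very same computation recurs at every later firing of $\deviceId$: the neighbours in $\stableNodeSet$ retain their fixed values by condition (i), the subexpressions $\e_i$ retain theirs by pre-self-stability, and — crucially — condition (ii) is an invariant preserved along any firing evolution, so the non-stable neighbours never drop below the stable ones and never take over the minimum. Hence re-firing $\deviceId$ yields exactly the same value, so $\deviceId$ is stable, and being univocally determined it is self-stable. Running this argument over each $\deviceId \in \frontierOfIn{\stableNodeSet}{\nodeSubSet}$ (each of which fires in the $1$-fair evolution) gives that all of them are self-stable in $\network'$; the reasoning for one frontier device is independent of the transient states of the others, since any neighbour lying in $\nodeSubSet-\stableNodeSet$ only ever supplies a contribution dominated by the stable ones.

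The main obstacle I anticipate is exactly this invariance step: one must verify that domination of the minimum by $\stableNodeSet$ holds throughout the \emph{whole} $1$-fair sequence rather than only at the first firing, so that mutually adjacent frontier devices — whose intermediate values fluctuate — cannot perturb one another's stabilised value. This is where the interplay of monotonicity of $\fname$ (from the stabilising-diffusion assumption) and the ``greater-or-equal forever'' clause of condition (ii) does the real work. I would additionally record the harmless assumption, valid in the intended application where $\nodeSubSet$ is the full device set of the (sub)network under consideration, that every neighbour of a device in $\nodeSubSet$ again lies in $\nodeSubSet$, so that no uncontrolled external neighbour can undercut the stable contributions.
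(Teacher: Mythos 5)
Your proposal is correct and follows essentially the same route as the paper's proof: both reduce the [E-SPR] minimum to the contribution of the minimum-valued stable neighbour via monotonicity of $\fname$ together with condition (ii), and both conclude self-stability because the stable neighbours, the subexpression values, and the domination property persist along any firing evolution. Your explicit treatment of the invariance step (and the observation that mutually adjacent frontier devices cannot perturb one another) merely spells out what the paper's one-line justification leaves implicit.
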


\begin{proof}
When a device $\deviceId$ in  $\frontierOfIn{\stableNodeSet}{\nodeSubSet}$ fires it gets the value
\[
\anyvalue_{0,\deviceId}\wedge\opFunOf{\fname}(\anyvalue_{\deviceId'},\anyvalue_{1,\deviceId},\ldots,\anyvalue_{n,\deviceId})
\]
where $\deviceId'\in\stableNodeSet$ is the neighbour of $\deviceId$ that has minimum value (among the neighbours of $\deviceId$).
This value is univocally determined by  $\netframeOf{\network}$ and is stable, since the values $\anyvalue_{\deviceId'}$ and $\anyvalue_{j,\deviceId}$ ($0\le j\le n$) are stable and  in any firing evolution  each neighbour of $\deviceId$  assumes only  values greater or equal to $\anyvalue_{\deviceId'}$.
\end{proof}

Note that conditions (i)-(iii) of the following lemma are exactly the same as in Lemma~\ref{lem-frontier}.
\begin{lem}[Minimum value not in  $\stableNodeSet$]\label{lem-minimum-value-outside}
Given a program $\e=\spreadThree{\e_0}{\fname}{\e_1,\ldots,\e_n}$ with valid  sort and stabilising diffusion assumptions, for every reachable pre-self-stable network configuration $\network$ with devices $\nodeSubSet$ and a non-empty subset of devices $\stableNodeSet\subset\nodeSubSet$ such that 
\begin{enumerate}[label=(\roman*)]
\item
each device in $\stableNodeSet$ is self-stable in $\network$;
\item
each device in $\nodeSubSet-\stableNodeSet$ has a value greater or equal to the values of the devices in $\stableNodeSet$ and, during any firing evolution, will always assume  values greater or equal to the values of the devices in $\stableNodeSet$; 
\item
$\frontierOfIn{\stableNodeSet}{\nodeSubSet}\not=\emptyset$; and
\item
each device in $\frontierOfIn{\stableNodeSet}{\nodeSubSet}$ is self-stable in $\network$; 
\end{enumerate}
if $\sourceNodeSubSet\subseteq\nodeSubSet-\stableNodeSet$ is the set of devices $\deviceId$  such that
$\anyvalueInNC{\deviceId}{\network}$ is minimum (among
the values of the devices in $\nodeSubSet-\stableNodeSet$), then
\begin{enumerate}
\item
if $\deviceId\in\sourceNodeSubSet$,  $\network\netArrIdStar\network'$ and $\deviceId\not\in\overline{\deviceId}$, then $\anyvalueInNC{\deviceId}{\network}=\anyvalueInNC{\deviceId}{\network'}$ is minimum (among
the values of the devices in $\nodeSubSet-\stableNodeSet$ in $\network'$);
\item
if $\sourceNodeSubSet\cap\frontierOfIn{\stableNodeSet}{\nodeSubSet}=\emptyset$, then there is a device $\deviceId\in\sourceNodeSubSet$ such that
either  $\anyvalueInNC{\deviceId}{\network}=\anyvalue_{0,\deviceId}$
    (i.e., $\deviceId$ is self-stable in $\network$) or 
if $\network\netArrFairStarN{1}\network'$ then there exists $\anyvalue'$ such that:
\begin{enumerate}
\item
    $\anyvalueInNC{\deviceId}{\network}\wfLt\anyvalue'\wfLe\anyvalueInNC{\deviceId}{\network'}$; and
\item if $\network'\netArrPresStar\network''$,
    then
    $\anyvalue'\wfLe\anyvalueInNC{\deviceId}{\network''}$.
\end{enumerate}
\end{enumerate}
\end{lem}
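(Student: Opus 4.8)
The plan is to treat this statement as a \emph{localised} version of Lemma~\ref{lem-minimum-value}, in which the stable set $\stableNodeSet$ together with its frontier $\frontierOfIn{\stableNodeSet}{\nodeSubSet}$ plays the role of a fixed boundary and the minimum is taken over the not-yet-stabilised region $\nodeSubSet-\stableNodeSet$ rather than over the whole network. Both points will be proved by the same monotonicity/progressiveness bookkeeping used for Lemma~\ref{lem-minimum-value}, the only genuinely new ingredient being hypothesis~(iv), which pins down the frontier. Throughout I rely on Definition~\ref{def:StabilisingProgression}: $\fname$ is monotonic nondecreasing and progressive in its first argument with respect to the relevant stabilising sort-signature.

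First I would establish the monotone-boundary observation underlying Point~(1): every device of $\nodeSubSet-\stableNodeSet$ that has a stable neighbour lies in $\frontierOfIn{\stableNodeSet}{\nodeSubSet}$ and is therefore self-stable by~(iv), hence keeps a constant value under any firing evolution; conversely, a device $\deviceId'\in\nodeSubSet-\stableNodeSet$ whose value can change has \emph{no} stable neighbour, so all of its neighbours lie in $\nodeSubSet-\stableNodeSet$ and thus carry values $\wfGe$ the current minimum of $\nodeSubSet-\stableNodeSet$. Applying monotonicity and progressiveness exactly as in Point~(1) of Lemma~\ref{lem-minimum-value}, the value produced when $\deviceId'$ fires is $\lowerBound\{\anyvalue_{0,\deviceId'},\anyvaluealt_1,\ldots,\anyvaluealt_m\}$ with each $\anyvaluealt_j\wfGe\anyvaluebis_j\wfGe$ that minimum. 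Consequently the minimum over $\nodeSubSet-\stableNodeSet$ is non-decreasing along $\network\netArrIdStar\network'$, and a device $\deviceId\in\sourceNodeSubSet$ that does not fire keeps $\anyvalueInNC{\deviceId}{\network'}=\anyvalueInNC{\deviceId}{\network}$, which therefore remains minimal in $\nodeSubSet-\stableNodeSet$. This gives Point~(1).

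For Point~(2) I would use the extra hypothesis $\sourceNodeSubSet\cap\frontierOfIn{\stableNodeSet}{\nodeSubSet}=\emptyset$ to pick any $\deviceId\in\sourceNodeSubSet$: being outside the frontier, $\deviceId$ has no stable neighbour, so all its neighbours sit in $\nodeSubSet-\stableNodeSet$ and carry values $\wfGe\anyvalueInNC{\deviceId}{\network}$. This is precisely the configuration of the globally minimal device in Point~(2) of Lemma~\ref{lem-minimum-value}, so I can reuse that argument essentially unchanged: if $\anyvalueInNC{\deviceId}{\network}\wfLt\anyvalue_{0,\deviceId}$ then progressiveness gives $\anyvaluealt_0=\opApply{\opFunOf{\fname}}{\anyvalueInNC{\deviceId}{\network},\ldots}\wfGt\anyvalueInNC{\deviceId}{\network}$ and monotonicity gives $\anyvaluealt_0\wfLe\anyvaluealt_j$ for every neighbour contribution, whence firing raises $\deviceId$ to at least $\anyvalue'=\anyvalue_{0,\deviceId}\wedge\anyvaluealt_0\wfGt\anyvalueInNC{\deviceId}{\network}$, yielding~(a); and since by the Point~(1) argument no device of $\nodeSubSet-\stableNodeSet$ ever drops below the old minimum along $\network'\netArrPresStar\network''$, the recomputed value of $\deviceId$ can never again fall below $\anyvalue'$, yielding~(b).

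The main obstacle is boundary control: without hypothesis~(iv) a device adjacent to $\stableNodeSet$ could inherit, through one application of $\fname$, a value below the current minimum of $\nodeSubSet-\stableNodeSet$ and so make that minimum decrease, breaking both points. The crux is therefore to observe that such adjacency forces membership in the frontier, which~(iv) freezes, and that $\sourceNodeSubSet\cap\frontierOfIn{\stableNodeSet}{\nodeSubSet}=\emptyset$ keeps the distinguished minimal device away from this boundary. I would also record that the neighbours of devices in $\nodeSubSet-\stableNodeSet$ relevant to the minimum remain within $\nodeSubSet$, as guaranteed by the way the lemma is instantiated (with $\nodeSubSet$ an independent region cut off from $\stableNodeSet$, cf.\ the empty-frontier step in the proof of Lemma~\ref{lem-pre-self-stable-network-self-stabilization}).
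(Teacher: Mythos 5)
Your proposal is correct and follows essentially the same route as the paper's proof: frontier devices are frozen by hypothesis~(iv), non-frontier devices of $\nodeSubSet-\stableNodeSet$ recompute only from values in $\nodeSubSet-\stableNodeSet$, and the monotonicity/progressiveness bookkeeping of Lemma~\ref{lem-minimum-value} is then replayed verbatim for both points. The only (harmless) imprecision is the phrase ``has no stable neighbour'': a non-frontier device may well have a self-stable neighbour lying in $\frontierOfIn{\stableNodeSet}{\nodeSubSet}$ — what you actually use, and what is true, is that it has no neighbour in $\stableNodeSet$, so all its neighbours lie in $\nodeSubSet-\stableNodeSet$ and carry values at least the minimum.
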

\begin{proof}
Since the self-stable values in $\frontierOfIn{\stableNodeSet}{\nodeSubSet}$ ensure that in any firing evolution the values of the devices in $\nodeSubSet-(\stableNodeSet\cup\frontierOfIn{\stableNodeSet}{\nodeSubSet})$ are computed without using  the values of the devices in $\stableNodeSet$, the proof is similar to the proof of Lemma~\ref{lem-minimum-value}.

\noindent
\textbf{Point (1).} Consider a device $\deviceId'\not\in\stableNodeSet\cup\sourceNodeSubSet$. Then $\anyvalueInNC{\deviceId}{\network}\wfLe\anyvalueInNC{\deviceId'}{\network}\wfLe\anyvalue_{0,\deviceId'}$  and its $m\ge 0$ neighbours have values $\anyvaluebis_j$  such that 
$\anyvalueInNC{\deviceId}{\network}\wfLe\anyvaluebis_j$ ($1\le j\le m$). Since the  stabilising diffusion assumptions hold, $\anyvaluebis_j\wfLt\opFunOf{\fname}(\anyvaluebis_j,\anyvalue_{1,\deviceId'},\ldots,\anyvalue_{n,\deviceId'})=\anyvaluealt_j$. Therefore,
 $\anyvalueInNC{\deviceId}{\network}\wfLe\lowerBound\{\anyvalue_{0,\deviceId'},\anyvaluealt_1,\myldots,\anyvaluealt_m\}=\anyvalueInNC{\deviceId'}{\network'}$. So $\anyvalueInNC{\deviceId}{\network'}=\anyvalueInNC{\deviceId}{\network}$ is minimum (among
the values of the devices of $\nodeSubSet-\stableNodeSet$ in $\network'$).

\noindent
\textbf{Point (2).} Assume that $\anyvalueInNC{\deviceId}{\network}\wfLt\anyvalue_{0,\deviceId}$. The $m\ge 0$ neighbours of $\deviceId$ have values $\anyvaluebis_j$  such that 
$\anyvalueInNC{\deviceId}{\network}\wfLe\anyvaluebis_j$ ($1\le j\le m$). Since the  stabilising diffusion assumptions hold, $\anyvalueInNC{\deviceId}{\network}\wfLt\opFunOf{\fname}(\anyvalueInNC{\deviceId}{\network},\anyvalue_{1,\deviceId},\ldots,\anyvalue_{n,\deviceId})=\anyvaluealt_0$ and  $\anyvaluealt_0\wfLe\opFunOf{\fname}(\anyvaluebis_j,\anyvalue_{1,\deviceId},\ldots,\anyvalue_{n,\deviceId})=\anyvaluealt_j$. Therefore,
the value $\anyvalue'=\anyvalue_{0,\deviceId}\wedge\anyvaluealt_0$ is such that 
when $\deviceId$ fires its new value $\anyvalueInNC{\deviceId}{\network'}=\lowerBound\{\anyvalue_{0,\deviceId},\anyvaluealt_1,\myldots,\anyvaluealt_m\}$ is such that $\anyvalueInNC{\deviceId}{\network}\wfLt\anyvalue'\wfLe\anyvalueInNC{\deviceId}{\network'}$. Moreover,  since  $\anyvalueInNC{\deviceId}{\network}\wfLt\anyvalueInNC{\deviceId}{\network'}$ we have that in a firing evolution  none of  the devices $\deviceId'\in\nodeSubSet-(\stableNodeSet\cup\{\deviceId\})$  will reach a value less than $\anyvalueInNC{\deviceId}{\network}$ and therefore the device $\deviceId$ will never reach a value less than $\anyvalue'$.
\end{proof}

Note that conditions \emph{(i)}-\emph{(iv)} of the following lemma are exactly the same as in Lemma~\ref{lem-minimum-value-outside}.
\begin{lem}[Self-stabilisation of the minimum value not in  $\stableNodeSet$]\label{lem-self-stabilisation-minimum-value-outside}
Given a program $\e=\spreadThree{\e_0}{\fname}{\e_1,\ldots,\e_n}$ with valid sort and stabilising diffusion assumptions, for every reachable pre-self-stable network configuration $\network$ with devices $\nodeSubSet$ and a non-empty subset of devices $\stableNodeSet\subset\nodeSubSet$ such that 
\begin{enumerate}[label=(\roman*)]
\item
each device in $\stableNodeSet$ is self-stable in $\network$;
\item
each device in $\nodeSubSet-\stableNodeSet$ has a value greater or equal to the values of the devices in $\stableNodeSet$ and, during any firing evolution, will always assume  values greater or equal to the values of the devices in $\stableNodeSet$; 
\item
$\frontierOfIn{\stableNodeSet}{\nodeSubSet}\not=\emptyset$; and
\item
each device in $\frontierOfIn{\stableNodeSet}{\nodeSubSet}$ is self-stable in $\network$; 
\end{enumerate}
there exists
$k\ge 0$ such that $\network\netArrFairStarN{k}\network'$
implies that there exists a device $\deviceId_1$ in  $\nodeSubSet-\stableNodeSet$ such that  $\stableNodeSet_1=\stableNodeSet\cup\{\deviceId_1\}$ satisfies the following conditions: 
\begin{enumerate}
\item
each device $\deviceId$  in $\stableNodeSet_1$ is self-stable in $\network'$; and
\item
in $\network'$ any device  in $\nodeSubSet-\stableNodeSet_1$ has a value greater or equal to the values of the devices in $\stableNodeSet_1$ and, during any firing evolution, will always assume  values greater than the values of the devices in $\stableNodeSet_1$.
\end{enumerate}
\end{lem}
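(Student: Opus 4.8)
The plan is to mirror closely the proof of the auxiliary result~\ref{lem-self-stabilisation-minimum-value}, replacing its use of Lemma~\ref{lem-minimum-value} by Lemma~\ref{lem-minimum-value-outside} and exploiting the frontier hypothesis. Fix the reachable pre-self-stable configuration $\network$, the devices $\nodeSubSet$ and the set $\stableNodeSet$ satisfying \emph{(i)}--\emph{(iv)}, and let $\sourceNodeSubSet\subseteq\nodeSubSet-\stableNodeSet$ be the set of devices whose value is minimum among those in $\nodeSubSet-\stableNodeSet$. First I would dispose of the easy case $\sourceNodeSubSet\cap\frontierOfIn{\stableNodeSet}{\nodeSubSet}\neq\emptyset$: here a minimum-value device $\deviceId_1$ already lies on the frontier, hence is self-stable by \emph{(iv)}, so I can take $k=0$ and $\network'=\network$ and move directly to verifying the two conclusions.

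The core of the argument is then to drive the minimum of $\nodeSubSet-\stableNodeSet$ upward until it becomes realised by a self-stable device. As long as $\sourceNodeSubSet\cap\frontierOfIn{\stableNodeSet}{\nodeSubSet}=\emptyset$, Lemma~\ref{lem-minimum-value-outside} applies and yields a minimum-value device $\deviceId\in\sourceNodeSubSet$ that is either self-stable (when $\anyvalueInNC{\deviceId}{\network}=\anyvalue_{0,\deviceId}$) or, after a $1$-fair evolution, acquires a value admitting a persistent strict lower bound $\anyvalue'$ with $\anyvalueInNC{\deviceId}{\network}\wfLt\anyvalue'$. Since the order $\wfLe$ is noetherian (Section~\ref{sec-basicIngredients}) and the number of devices is finite, such strict increases of the minimum cannot continue indefinitely; moreover, by Point~(1) of Lemma~\ref{lem-minimum-value-outside} a minimum-value device retains its value while it does not fire, so the minimum is non-decreasing along firing evolutions and is bounded above by the fixed, self-stable value of any frontier device (the frontier being non-empty by \emph{(iii)}). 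Hence after finitely many $1$-fair rounds, i.e.\ after some $k$-fair evolution $\network\netArrFairStarN{k}\network'$, the minimum of $\nodeSubSet-\stableNodeSet$ either is realised by a self-stable device or has climbed up to meet the frontier so that $\sourceNodeSubSet\cap\frontierOfIn{\stableNodeSet}{\nodeSubSet}\neq\emptyset$; in both cases $\network'$ contains a minimum-value device $\deviceId_1\in\nodeSubSet-\stableNodeSet$ that is self-stable.

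I would then set $\stableNodeSet_1=\stableNodeSet\cup\{\deviceId_1\}$ and check the two conclusions. Condition~(1) is immediate: $\deviceId_1$ is self-stable by construction, and the devices of $\stableNodeSet$ remain self-stable since self-stability is preserved by firing evolution. For condition~(2), because $\deviceId_1$ realises the minimum value of $\nodeSubSet-\stableNodeSet$ in $\network'$, every device of $\nodeSubSet-\stableNodeSet_1$ has value $\wfGe\anyvalueInNC{\deviceId_1}{\network'}$, which together with \emph{(ii)} gives the ``greater or equal'' part; the ``greater during any firing evolution'' part follows from progressiveness (Definition~\ref{def:StabilisingProgression}) together with the persistent strict lower bounds supplied by Point~(2) of Lemma~\ref{lem-minimum-value-outside}, applied to the devices of $\nodeSubSet-\stableNodeSet_1$.

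The step I expect to be the main obstacle is exactly the strictness demanded in condition~(2): devices that are tied with $\deviceId_1$ at the minimum value in $\network'$ but are not self-stable must, once they fire, strictly exceed the values in $\stableNodeSet_1$ and never fall back to them. This is where the progressiveness clause of the stabilising-diffusion assumption and the persistent-lower-bound conclusion of Lemma~\ref{lem-minimum-value-outside} do the real work, in precisely the same way as the corresponding step in the proof of the auxiliary result~\ref{lem-self-stabilisation-minimum-value}; the remaining bookkeeping (preservation of \emph{(i)}--\emph{(iv)} and of pre-self-stability along the $k$-fair evolution) is routine.
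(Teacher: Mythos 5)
Your proposal is correct and follows essentially the same route as the paper: the same case split on whether $\sourceNodeSubSet\cap\frontierOfIn{\stableNodeSet}{\nodeSubSet}$ is empty, the same appeal to Lemma~\ref{lem-minimum-value-outside} to drive the minimum of $\nodeSubSet-\stableNodeSet$ strictly upward, and the same finiteness/termination argument to reach a self-stable minimum device after finitely many $1$-fair rounds. Your explicit mention of noetherianity and of the frontier values as an upper bound only makes precise what the paper leaves implicit.
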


\begin{proof}
Let $\sourceNodeSubSet\subseteq\nodeSubSet-\stableNodeSet$ be the set of devices $\deviceId$  such that
$\anyvalueInNC{\deviceId}{\network}$ is minimum (among
the values of the devices in $\nodeSubSet-\stableNodeSet$).  We consider two cases.
\begin{description}
\item[$\sourceNodeSubSet\cap\frontierOfIn{\stableNodeSet}{\nodeSubSet}\not=\emptyset$] Any of the devices $\deviceId_1\in\sourceNodeSubSet\cap\frontierOfIn{\stableNodeSet}{\nodeSubSet}$ is such that conditions (1) and (2) in the statement of the lemma are satisfied.
\item[$\sourceNodeSubSet\cap\frontierOfIn{\stableNodeSet}{\nodeSubSet}=\emptyset$] 
The number of devices in the network configuration is
finite, the environment does not change, the network is pre-self-stable, and the  stabilising diffusion condition holds. 
The results follows by  Lemma~\ref{lem-minimum-value-outside}.\footnote{In particular, since the self-stable values in $\frontierOfIn{\stableNodeSet}{\nodeSubSet}$ ensure that in any firing evolution the values of the devices in $\nodeSubSet-(\stableNodeSet\cup\frontierOfIn{\stableNodeSet}{\nodeSubSet})$ are computed without using  the values of the devices in $\stableNodeSet$, the proof of this case is similar to the proof of  Lemma~\ref{lem-self-stabilisation-minimum-value}  (by using Lemma~\ref{lem-minimum-value-outside} instead of Lemma~\ref{lem-minimum-value}).} Namely, if there is a device $\deviceId\in\sourceNodeSubSet$  such that  $\anyvalue_\deviceId\wfLt\anyvalue_{0,\deviceId}$, then after a 1-fair network evolution  $\deviceId$ reaches a value which is greater or equal to some $\anyvalue'$ such that
\begin{itemize}
\item
    $\anyvalue\wfLt\anyvalue'\wfLe\anyvalue_{0,\deviceId}$; and
\item in any subsequent firing evolution the value of $\deviceId$ will be always greater or equal to $\anyvalue'$. 
\end{itemize}
Therefore, after a finite number $k$ of 1-fair evolutions (i.e., after any $h$-fair evolution with $h\ge k$)  conditions (1) and (2) in the statement of the lemma are satisfied.\qedhere
\end{description}
\end{proof}

\begin{lem}[Pre-self-stable network self-stabilization]\label{lem-pre-self-stable-network-self-stabilization}
Given a program $\e=\spreadThree{\e_0}{\fname}{\e_1,\ldots,\e_n}$ with valid  sort and stabilising diffusion assumptions, for every reachable pre-self-stable network configuration $\network$ there exists
$k\ge 0$ such that $\network\netArrFairStarN{k}\network'$
implies that $\network'$ is self-stable.
\end{lem}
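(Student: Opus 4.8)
The plan is to prove the lemma by a layered ``peeling'' argument that repeatedly enlarges a set $\stableNodeSet$ of self-stable devices until it exhausts the network, using the five preceding auxiliary results as the reasoning primitives. Since $\network$ is reachable and pre-self-stable, and (by the observation recorded just before this lemma in the proof outline) pre-self-stability is preserved by firing evolution, I may treat $\network$ as a fixed pre-self-stable configuration throughout, with a fixed environment and a \emph{finite} device set $\nodeSubSet$ — finiteness being the resource that the whole induction consumes.

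First I would bootstrap the construction by invoking Lemma~\ref{lem-self-stabilisation-minimum-value}: after a sufficiently long fair evolution the subset $\stableNodeSet_1$ of devices carrying the minimum value of $\e_0$ becomes self-stable, and every device outside $\stableNodeSet_1$ acquires, and forever retains under firing, a strictly larger value. This yields a non-empty $\stableNodeSet$ satisfying the invariants (i)--(ii) shared by Lemmas~\ref{lem-frontier}--\ref{lem-self-stabilisation-minimum-value-outside}. Then I would iterate the growth step: renaming the current self-stable set to $\stableNodeSet$, as long as $\frontierOfIn{\stableNodeSet}{\nodeSubSet}\neq\emptyset$ I would apply Lemma~\ref{lem-frontier} to drive all frontier devices to self-stability (establishing invariant (iv)), and then apply Lemma~\ref{lem-self-stabilisation-minimum-value-outside} to absorb at least one further device $\deviceId_1$, obtaining a strictly larger self-stable set $\stableNodeSet\cup\{\deviceId_1\}$ that again satisfies (i)--(ii). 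Each step consumes a finite $k_i$-fair prefix and, because $\nodeSubSet$ is finite, only finitely many such steps occur, so the total fairness requirement $k$ is the (finite) maximum of the per-step bounds.

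When the iteration halts with $\frontierOfIn{\stableNodeSet}{\nodeSubSet}=\emptyset$, either $\stableNodeSet=\nodeSubSet$ and every device is self-stable, so we are done, or $\stableNodeSet\subsetneq\nodeSubSet$. In the latter case the empty frontier means no device of $\nodeSubSet-\stableNodeSet$ has a neighbour in $\stableNodeSet$; hence the sub-network on $\nodeSubSet-\stableNodeSet$ evolves independently of $\stableNodeSet$, is itself reachable and pre-self-stable, and I would recurse the entire argument on it. This outer recursion terminates because $\nodeSubSet-\stableNodeSet$ is strictly smaller than $\nodeSubSet$.

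The main obstacle I anticipate is the careful bookkeeping of invariants across the two interleaved inductions (the growth of $\stableNodeSet$ and the recursion on disconnected sub-networks). I must verify that (a) self-stability of a device, once achieved, is genuinely \emph{preserved} under all subsequent firings — this is exactly where progressiveness and monotonicity supplied by the stabilising-diffusion assumptions are indispensable — and (b) the ``value stays strictly above $\stableNodeSet$'' part of invariant (ii) survives the absorption of a new device, so that the premises of the next application of Lemma~\ref{lem-self-stabilisation-minimum-value-outside} continue to hold. Composing the finitely many fairness bounds into a single $k$ with the correct $\netArrFairStarN{k}$ semantics is routine but relies on fair evolutions being monotone in $k$.
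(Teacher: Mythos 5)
Your proposal is correct and follows essentially the same route as the paper's proof: bootstrap a non-empty self-stable set via Lemma~\ref{lem-self-stabilisation-minimum-value}, grow it one device at a time by alternating Lemma~\ref{lem-frontier} and Lemma~\ref{lem-self-stabilisation-minimum-value-outside} until the frontier empties, and then recurse on the disconnected remainder $\nodeSubSet-\stableNodeSet$, with termination by finiteness of the device set. The one quantitative slip is your claim that the overall bound is the \emph{maximum} of the per-stage bounds: since the stages must occur consecutively and a $\max_i k_i$-fair evolution need not decompose into successive $k_i$-fair segments, the bounds have to be \emph{summed}, exactly as the paper does with $k'=\Sigma_{j=0}^{c-1}k_j$ and $k=k'+k''$.
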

\begin{proof}
Let $\nodeSubSet$ be a set of devices of $\network$. The proof is by induction on the number of devices in $\nodeSubSet$.
\begin{description}
\item[Case $\nodeSubSet=\emptyset$] 
Immediate. 
\item[Case $\nodeSubSet\not=\emptyset$] 
By Lemma~\ref{lem-self-stabilisation-minimum-value} there exists $k_0\ge 0$ such that after any $k_0$-fair evolution there is a non-empty set of devices  $\stableNodeSet_1$ that satisfies 
conditions (1) and (2) in the statement of Lemma~\ref{lem-self-stabilisation-minimum-value} (and, therefore, also
conditions (1) and (2) in the statement of Lemma~\ref{lem-self-stabilisation-minimum-value-outside}).

Now rename $\stableNodeSet_1$ to $\stableNodeSet$, consider 
 a counter $c$ intially equal to $1$, and iterate the following two reasoning steps while the set of devices $\stableNodeSet$ is such that  $\frontierOfIn{\stableNodeSet}{\nodeSubSet}\not=\emptyset$:
\begin{itemize}
\item
By lemma~\ref{lem-frontier} and lemma~\ref{lem-self-stabilisation-minimum-value-outside}
there exists $k_c\ge 0$ such that after any $k_c$-fair evolution there is a non-empty set of devices  $\stableNodeSet_1$ that satisfies conditions (1) and (2) in the statement of Lemma~\ref{lem-self-stabilisation-minimum-value-outside}. 
\item
Rename $\stableNodeSet_1$ to $\stableNodeSet$ and increment the value of $c$.
\end{itemize} 
Since at each iteration the number devices in  $\stableNodeSet$ is increased by one, the number of iterations is finite (note that  $\stableNodeSet=\nodeSubSet$ implies $\frontierOfIn{\stableNodeSet}{\nodeSubSet}=\emptyset$).
After the last iteration, we have proved that there exists $k'=\Sigma_{j=0}^{c-1} k_j$ such that after any $k'$-fair evolution there is a non-empty set of devices  $\stableNodeSet$ that satisfies conditions (1) and (2) in the statement of Lemma~\ref{lem-self-stabilisation-minimum-value-outside}. 
Since $\frontierOfIn{\stableNodeSet}{\nodeSubSet}=\emptyset$,
then the evolution of the devices in
$\nodeSubSet-\stableNodeSet$ is independent from the
devices in $\stableNodeSet$. By induction there exists
$k''\ge 0$ such that after any $k''$-fair evolution the portion of the network with devices in $\nodeSubSet-\stableNodeSet$ is self-stable.  Therefore, we have proved the lemma with $k=k'+k''$.\qedhere
\end{description}
\end{proof}

\noindent\textbf{Restatement of
Theorem~\ref{the-network-self--stabilization}} (Network self-stabilisation for programs that satisfy the stabilising-diffusion condition)\textbf{.}
\emph{Given a program with valid  sort and stabilising diffusion assumptions,  
every reachable network configuration $\network$ self-stabilises, i.e.,
there exists
$k\ge 0$ such that $\network\netArrFairStarN{k}\network'$
implies that $\network'$ is self-stable.}


\begin{proof}
By induction on the syntax of closed expressions $\e$ and on
the number of function calls that may be encountered during the
evaluation of $\e$. Let $\netframe=\netframeOf{\network}$.
\begin{description}
\item[Case $\anyvalue$] Any device fire produces the
    value-tree $\anyvalue()$ (independently from
    $\netframe$). 
\item[Case $\snsname$] Each fire of device $\deviceId$
    produces the value-tree $\anyvalue()$, where
    $\anyvalue=\snsFunFor{\deviceId}(\snsname)$  is
    univocally determined by $\netframe$.
\item[Case $\oname(\overline{\e})$] Straightforward by
    induction.
\item[Case $\fname(\overline{\e})$] Straightforward by
    induction.
\item[Case $\spreadThree{\e_0}{\fname}{\e_1,\ldots,\e_n}$]
    By induction there exists $h\ge 0$ such that if
    $\network\netArrFairStarN{h}\network_1$ then on every
    device $\deviceId$, the evaluation of
    $\e_0,\e_1,\ldots,\e_n$ produce stable value-trees
    $\vtree_{0,\deviceId},\vtree_{1,\deviceId},\ldots,\vtree_{n,\deviceId}$,
    which are univocally determined by $\netframe$. 
    Note that, if $\network\netArrFairStarN{h+1}\network_2$
    then  $\network_2$ is
    pre-self-stable. Therefore the result follows straightforwardly by Lemma~\ref{lem-pre-self-stable-network-self-stabilization}.\qedhere
\end{description}
\end{proof}

%

\section{ Proof of Theorem~\ref{the-DeviceSortPreservation}}\label{app-proof-sortSounness}

The proof of Theorem~\ref{the-DeviceSortPreservation} is similar to the proof of Theorem~\ref{the-DeviceTypePreservation} (cf.\ Appendix~\ref{app-proof-typeSounness})---see Remark~\ref{rem:TypeVsSortChecking}.

\begin{lem}[Substitution lemma for sorting]\label{lem:substitutionForSorting}
If
$\surExpTypJud{\overline{\xname}:\overline{\rtype}}{\e}{\rtype}$,
$\lengthOf{\overline{\anyvalue}}=\lengthOf{\overline{\xname}}$,
        $\surExpTypJud{\emptyset}{\overline{\anyvalue}}{\overline{\rtype}'}$,
and
$\overline{\rtype}'\rsub\overline{\rtype}$,
then
$\surExpTypJud{\emptyset}{\applySubstitution{\e}{\substitution{\overline{\xname}}{\overline{\anyvalue}}}}{\rtype'}$ for some $\rtype'$ such that  $\rtype'\rsub\rtype$.
\end{lem}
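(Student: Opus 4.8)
The plan is to prove the statement by induction on the (syntax-directed) sort-checking derivation of $\surExpTypJud{\overline{\xname}:\overline{\rtype}}{\e}{\rtype}$, equivalently on the structure of $\e$. Write $\hat{\e}$ for the substitution instance $\applySubstitution{\e}{\substitution{\overline{\xname}}{\overline{\anyvalue}}}$ and $\hat{\e}_i$ for the corresponding instance of a subexpression $\e_i$. Before the case analysis I would isolate three monotonicity facts about the sort operations appearing in Figure~\ref{fig:RefinementTyping}. First, subsorting preserves structure: if $\rsubof{\rtype''}{\mkpairType{\rtype_1}{\rtype_2}}$ then $\rtype''=\mkpairType{\rtype''_1}{\rtype''_2}$ with $\rsubof{\rtype''_1}{\rtype_1}$ and $\rsubof{\rtype''_2}{\rtype_2}$ (by inversion of rule \ruleNameSize{[I-PAIR]}, the only rule deriving subsorting between pair sorts). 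Second, $\refSupOf{\cdot}{\cdot}$ is monotone: $\rsubof{\rtype'_1}{\rtype_1}$ and $\rsubof{\rtype'_2}{\rtype_2}$ imply $\rsubof{\refSupOf{\rtype'_1}{\rtype'_2}}{\refSupOf{\rtype_1}{\rtype_2}}$, which is immediate from the least-upper-bound property. Third, and most importantly, most-specific-signature lookup is monotone: for a deterministic $Q$, if $\rsubof{\overline{\rtype}'}{\overline{\rtype}}$ and $\mostSpecificOf{Q}{\overline{\rtype}}$ is defined with result sort $\rtype$, then $\mostSpecificOf{Q}{\overline{\rtype}'}$ is defined and has result sort a subsort of $\rtype$; indeed, with the notation of Section~\ref{sec-ref-typing-auxiliary}, $\rsubof{\overline{\rtype}'}{\overline{\rtype}}$ gives $Q(\overline{\rtype})\subseteq Q(\overline{\rtype}')$, so the witnessing signature for $\overline{\rtype}$ also lies in $Q(\overline{\rtype}')$, making it nonempty, and its most specific element has result sort below that witness.

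The base and congruence cases are then routine. For $\e=\xname_i$ (rule \ruleNameSize{[S-VAR]}) we have $\rtype=\rtype_i$ and $\hat{\e}=\anyvalue_i$, so the claim is exactly the hypothesis $\surExpTypJud{\emptyset}{\anyvalue_i}{\rtype'_i}$ with $\rsubof{\rtype'_i}{\rtype_i}$. For $\e=\snsname$ and $\e=\groundvalue$ the expression is closed, $\hat{\e}=\e$, and one takes $\rtype'=\rtype$. For $\e=\mkpair{\e_1}{\e_2}$ (rule \ruleNameSize{[S-PAIR]}) the induction hypothesis yields sorts $\rtype'_1,\rtype'_2$ for $\hat{\e}_1,\hat{\e}_2$ with $\rsubof{\rtype'_1}{\rtype_1}$ and $\rsubof{\rtype'_2}{\rtype_2}$; re-applying \ruleNameSize{[S-PAIR]} gives $\mkpairType{\rtype'_1}{\rtype'_2}$, a subsort of $\mkpairType{\rtype_1}{\rtype_2}$ by \ruleNameSize{[I-PAIR]}. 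For $\e=\fstK\,\e_1$ (rule \ruleNameSize{[S-FST]}; $\sndK$ is symmetric) the induction hypothesis gives a sort $\rtype''$ for $\hat{\e}_1$ with $\rsubof{\rtype''}{\mkpairType{\rtype_1}{\rtype_2}}$; the first monotonicity fact forces $\rtype''=\mkpairType{\rtype''_1}{\rtype''_2}$ with $\rsubof{\rtype''_1}{\rtype_1}$, so \ruleNameSize{[S-FST]} assigns $\hat{\e}$ the sort $\rtype''_1$, which is a subsort of $\rtype_1=\rtype$.

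The conditional $\e=\condExpr{\e_0}{\e_1}{\e_2}$ is the case needing genuine care, because substitution may sharpen the guard's sort and thereby switch which of \ruleNameSize{[S-COND]}, \ruleNameSize{[S-COND-TRUE]}, \ruleNameSize{[S-COND-FALSE]} applies to $\hat{\e}$. By induction the guard $\hat{\e}_0$ receives some sort $\rsubof{\rtype'_0}{\boolType}$, and the branches receive sorts $\rtype'_1,\rtype'_2$ below the branch sorts $\rtype_1,\rtype_2$ of the original derivation. I would then split on $\rtype'_0\in\{\boolType,\trueType,\falseType\}$: if $\rtype'_0=\boolType$, rule \ruleNameSize{[S-COND]} yields $\refSupOf{\rtype'_1}{\rtype'_2}$, bounded by $\refSupOf{\rtype_1}{\rtype_2}$ via the second monotonicity fact; if $\rtype'_0=\trueType$, rule \ruleNameSize{[S-COND-TRUE]} yields $\rtype'_1$; if $\rtype'_0=\falseType$, rule \ruleNameSize{[S-COND-FALSE]} yields $\rtype'_2$. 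In each subcase the produced sort is bounded above by whatever sort $\rtype$ the original rule assigned, using $\rsubof{\rtype_1}{\refSupOf{\rtype_1}{\rtype_2}}$ and $\rsubof{\rtype_2}{\refSupOf{\rtype_1}{\rtype_2}}$ when the original rule was \ruleNameSize{[S-COND]}, and the minimality of $\trueType,\falseType$ under $\rsub$ when it was \ruleNameSize{[S-COND-TRUE]} or \ruleNameSize{[S-COND-FALSE]} (so that a sharpened guard sort cannot change the branch selected).

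Finally, the application $\e=\fname(\overline{\e})$ (rule \ruleNameSize{[S-FUN]}) and the spreading expression $\e=\spreadThree{\e_0}{\fname}{\overline{\e}}$ (rule \ruleNameSize{[S-SPR]}) are where the argument rests on the third monotonicity fact, and I expect this to be the main obstacle. For \ruleNameSize{[S-FUN]}, the induction hypothesis lowers the argument sorts from $\overline{\rtype}$ to $\rsubof{\overline{\rtype}'}{\overline{\rtype}}$; since $\mostSpecificOf{\rsignaturesOf{\fname}}{\overline{\rtype}}$ was defined in the original derivation, lookup monotonicity gives that $\mostSpecificOf{\rsignaturesOf{\fname}}{\overline{\rtype}'}$ is defined with result sort a subsort of $\rtype$, and \ruleNameSize{[S-FUN]} assigns exactly that sort to $\hat{\e}$. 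The spreading case is analogous but composite: the induction hypothesis lowers $\rtype_0$ to $\rsubof{\rtype'_0}{\rtype_0}$ and $\overline{\rtype}$ to $\rsubof{\overline{\rtype}'}{\overline{\rtype}}$; lookup monotonicity on $\stabilisingSignaturesOf{\fname}$ yields a diffusion result sort $\rsubof{\rtype^{*}}{\rtype''}$, and \ruleNameSize{[S-SPR]} assigns $\refSupOf{\rtype'_0}{\rtype^{*}}$, a subsort of $\refSupOf{\rtype_0}{\rtype''}=\rtype$ by $\refSupOf{\cdot}{\cdot}$-monotonicity. The crux throughout is precisely that every sort-synthesising operation of Figure~\ref{fig:RefinementTyping}—pair formation, projection, least upper bound, and deterministic most-specific-signature lookup—is monotone with respect to subsorting; once these facts are in hand, each case closes by re-applying the same rule to the substitution instance and chaining the resulting subsorting inequalities.
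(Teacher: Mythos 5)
Your proof is correct and follows the same route as the paper, which dispatches this lemma in one line as ``straightforward by induction on the application of the sort-checking rules''; you have simply filled in the details the paper omits. Your elaboration correctly isolates the only two points of substance --- the possible switch among \ruleNameSize{[S-COND]}/\ruleNameSize{[S-COND-TRUE]}/\ruleNameSize{[S-COND-FALSE]} when substitution sharpens the guard's sort, and the monotonicity of $\mostSpecificOf{Q}{\cdot}$ via $Q(\overline{\rtype})\subseteq Q(\overline{\rtype}')$ --- and both are handled correctly.
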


\begin{proof}
Straightforward by induction on the application of the sort-checking rules for expressions in Fig.~\ref{fig:RefinementTyping}.
\end{proof}

\begin{lem}[Device computation
sort preservation]\label{lem:DeviceSortPreservation}
If
$\vtree\in\bsWSVT{\overline{\xname}:\overline{\rtype}}{\e}{\rtype}$,
then $\surExpTypJud{\emptyset}{\vrootOf{\vtree}}{\rtype'}$ for some $\rtype'$ such that  $\rtype'\rsub\rtype$.
\end{lem}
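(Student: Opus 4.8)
The plan is to mirror the proof of Lemma~\ref{lem:DeviceTypePreservation}, replacing the type-checking rules of Figure~\ref{fig:SurfaceTyping} by the sort-checking rules of Figure~\ref{fig:RefinementTyping} and carrying a subsorting slack $\rtype'\rsub\rtype$ through every case. I would proceed by a triple induction: on the definition of $\bsWSVT{\overline{\xname}:\overline{\rtype}}{\e}{\rtype}$, on the number of user-defined function calls encountered while evaluating $\applySubstitution{\e}{\substitution{\overline{\xname}}{\overline{\anyvalue}}}$, and on the syntax of the closed expression. From the hypothesis $\vtree\in\bsWSVT{\overline{\xname}:\overline{\rtype}}{\e}{\rtype}$ I would extract a sensor mapping $\snsFun$, well-sorted neighbour trees $\overline{\vtree}$, and values $\overline{\anyvalue}$ with $\surExpTypJud{\emptyset}{\overline{\anyvalue}}{\overline{\rtype}'}$ and $\overline{\rtype}'\rsub\overline{\rtype}$, together with the evaluation $\bsopsem{\snsFun}{\overline{\vtree}}{\applySubstitution{\e}{\substitution{\overline{\xname}}{\overline{\anyvalue}}}}{\vtree}$. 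The Substitution Lemma for sorting (Lemma~\ref{lem:substitutionForSorting}) then yields $\surExpTypJud{\emptyset}{\applySubstitution{\e}{\substitution{\overline{\xname}}{\overline{\anyvalue}}}}{\rtype''}$ for some $\rtype''\rsub\rtype$, letting me reason about the closed substituted expression while keeping the target sort up to subsorting. Throughout I would rely on the soundness-and-completeness requirement on subsorting (condition (4) of Section~\ref{sec-ref-types}), namely $\rsubof{\rtype}{\rtype'}$ iff $\semOf{\rtype}\subseteq\semOf{\rtype'}$, to pass freely between the syntactic relation $\rsub$ and inclusion of value-sets.

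The base cases $\snsname$ and $\anyvalue$ go through as in the type version, using the mapping $\rtypeof{\cdot}$ and the minimal-sort property (condition (3)) to read off a minimal sort of the produced value. The cases for pair construction, $\fstK$ and $\sndK$ are immediate by induction together with the congruence rule \ruleNameSize{[I-PAIR]}. The conditional is the one mild novelty relative to the type proof: under rule \ruleNameSize{[S-COND]} the result comes from one branch, whose sort is $\rsub$ the least upper bound $\refSupOf{\rtype_1}{\rtype_2}$; under \ruleNameSize{[S-COND-TRUE]} (resp.\ \ruleNameSize{[S-COND-FALSE]}) the guard is sorted $\trueType$ (resp.\ $\falseType$), so by the induction hypothesis the guard value lies in $\semOf{\trueType}=\{\trueK\}$ (resp.\ $\semOf{\falseType}=\{\falseK\}$), forcing the operational semantics to select exactly the branch whose sort the rule assigns. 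The user-defined application case $\dname(\overline{\e})$ reuses the well-sortedness of the program (rule \ruleNameSize{[S-DEF]}, which checks the body against every signature in $\rsignaturesOf{\dname}$) and the most-specific-signature machinery of rule \ruleNameSize{[S-FUN]}, exactly paralleling the corresponding case of Lemma~\ref{lem:DeviceTypePreservation}.

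The hard part will be the spreading case $\spreadThree{\e_0}{\fname}{\overline{\e}}$, where all the subsorting bookkeeping concentrates. Rule \ruleNameSize{[S-SPR]} infers $\rtype_0\overline{\rtype}$ for $\e_0\overline{\e}$, takes the most specific stabilising signature $\rtype'(\rtype_1'\overline{\rtype}')=\mostSpecificOf{\stabilisingSignaturesOf{\fname}}{\rtype_0\overline{\rtype}}$, and assigns $\rtype=\refSupOf{\rtype_0}{\rtype'}$. From the definition of the most specific signature I get $\rtype_0\rsub\rtype_1'$ and $\overline{\rtype}\rsub\overline{\rtype}'$, and since $\rtype'(\rtype_1'\overline{\rtype}')\in\stabilisingSignaturesOf{\fname}\subseteq\stabsigof{\fname}$ is progressive I also get $\rtype'\subwrhLe\rtype_1'$, hence $\rtype'\rsub\rtype_1'$; as $\rtype_1'$ is then an upper bound of both $\rtype_0$ and $\rtype'$, the least upper bound satisfies $\rtype=\refSupOf{\rtype_0}{\rtype'}\rsub\rtype_1'$. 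Now the neighbour values $\anyvaluebis_j=\conclusionOf{\vtree_j}$ are roots of well-sorted trees $\vtree_j\in\overline{\vtree}\in\bsWSVT{\overline{\xname}:\overline{\rtype}}{\e}{\rtype}$, so by the induction on the definition of $\bsWSVT{\overline{\xname}:\overline{\rtype}}{\e}{\rtype}$ each $\anyvaluebis_j$ has a sort $\rsub\rtype\rsub\rtype_1'$, i.e.\ $\anyvaluebis_j\in\semOf{\rtype_1'}$; likewise the argument values $\anyvalue_1,\ldots,\anyvalue_n$ lie in $\semOf{\overline{\rtype}'}$ by the inner induction. Since $\stabilisingSignaturesOf{\fname}\subseteq\stabsigof{\fname}\subseteq\refsigof{\fname}$, the function $\fname$ satisfies $\rtype'(\rtype_1'\overline{\rtype}')$ (uniformly for built-in and user-defined $\fname$, the latter being pure), so each applied value $\anyvaluealt_j=\opApply{\opFunOf{\fname}}{\anyvaluebis_j,\anyvalue_1,\ldots,\anyvalue_n}$ lies in $\semOf{\rtype'}$, while $\anyvalue_0\in\semOf{\rtype_0}$. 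By completeness of subsorting $\semOf{\rtype_0}\subseteq\semOf{\rtype}$ and $\semOf{\rtype'}\subseteq\semOf{\rtype}$, so the minimum $\lowerBound\{\anyvalue_0,\anyvaluealt_1,\ldots,\anyvaluealt_m\}$---being one of these values---lies in $\semOf{\rtype}$; its minimal sort (condition (3)) is therefore $\rsub\rtype$, which is exactly the conclusion. As in the type case, Theorem~\ref{the-DeviceSortPreservation} then follows immediately by instantiating this lemma at $\vtree\in\bsWSVT{\overline{\xname}:\overline{\rtype}}{\e}{\rtype}$.
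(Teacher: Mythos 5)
Your proposal is correct and follows essentially the same route as the paper's proof: the same triple induction (on the definition of $\bsWSVT{\overline{\xname}:\overline{\rtype}}{\e}{\rtype}$, on the number of user-defined calls, and on expression syntax), the same use of the substitution lemma for sorting, and the same case analysis culminating in the spreading expression. The only local difference is that in the \ruleNameSize{[S-SPR]} case you argue semantically---deriving $\rtype=\refSupOf{\rtype_0}{\rtype'}\rsub\rtype_1'$ from progressiveness of the stabilising signature and then invoking the assumption $\stabilisingSignaturesOf{\fname}\subseteq\stabsigof{\fname}\subseteq\refsigof{\fname}$ together with completeness of subsorting---whereas the paper re-runs the syntactic function-application argument for user-defined diffusions; both are legitimate given the assumptions of Section~\ref{sec-ref-for-diffusions}, and your bookkeeping of the slack $\rtype'\rsub\rtype$ is, if anything, the more carefully stated of the two.
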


\begin{proof}
Recall that the sorting rules (in Fig.~\ref{fig:RefinementTyping}) and the evaluation rules  (in Fig.~\ref{fig:deviceSemantics} and Fig.~\ref{fig:deviceSemantics-extended}) are syntax directed.
The proof is by induction on the definition of $\bsWSVT{\overline{\xname}:\overline{\rtype}}{\e}{\rtype}$ (given in Section~\ref{sec-properties-sorting}), on
the number of user-defined function calls that may be encountered during the
evaluation of $\applySubstitution{\e}{\substitution{\overline{\xname}}{\overline{\anyvalue}}}$ (cf.\ sanity condition \emph{\textbf{(iii)}} in
Section~\ref{sec-typing}), and on the syntax of closed expressions.

From the hypothesis $\vtree\in\bsWSVT{\overline{\xname}:\overline{\rtype}}{\e}{\rtype}$ we have $\surExpTypJud{\overline{\xname}:\overline{\rtype}}{\e}{\rtype}$, $\bsopsem{\snsFun}{\overline{\vtree}}{\e}{\vtree}$  for some  sensor mapping $\snsFun$,  evaluation trees $\overline{\vtree}\in\bsWSVT{\overline{\xname}:\overline{\rtype}}{\e}{\rtype}$, and values $\overline{\anyvalue}$ such that $\lengthOf{\overline{\anyvalue}}=\lengthOf{\overline{\xname}}$,
        $\surExpTypJud{\emptyset}{\overline{\anyvalue}}{\overline{\rtype}'}$,
and
$\overline{\rtype}'\rsub\overline{\rtype}$. The case $\overline{\vtree}$ empty represents the base of the induction on the definition of $\bsWSVT{\overline{\xname}:\overline{\rtype}}{\e}{\rtype}$. Therefore the rest of this proof can be understood as a proof of the base step by assuming  $\overline{\vtree}=\emptyset$ and a proof of the inductive step  by assuming $\overline{\vtree}\not=\emptyset$.
Moreover, by Lemma~\ref{lem:substitutionForSorting} we have that $\surExpTypJud{\emptyset}{\applySubstitution{\e}{\substitution{\overline{\xname}}{\overline{\anyvalue}}}}{\rtype'}$ for some $\rtype'$ such that  $\rtype'\rsub\rtype$.

The case when $\e$ does non contain user-defined function calls represents the base on the induction on the number of user-defined function calls that may be encountered during the
evaluation of $\applySubstitution{\e}{\substitution{\overline{\xname}}{\overline{\anyvalue}}}$. Therefore the rest of this proof can be understood as a proof of the base step by ignoring the cases $\applySubstitution{\e}{\substitution{\overline{\xname}}{\overline{\anyvalue}}}=\fname(\e_1,\myldots,\e_n)$ and $\applySubstitution{\e}{\substitution{\overline{\xname}}{\overline{\anyvalue}}}=\spreadThree{\e_0}{\fname}{\e_1,\myldots,\e_n}$ when $\fname$ is a used-defined function $\dname$.
The base of the induction on $\applySubstitution{\e}{\substitution{\overline{\xname}}{\overline{\anyvalue}}}$ consist of two cases.
\begin{description}
\item[Case $\snsname$] 
From the hypothesis we have  $\surExpTypJud{\emptyset}{\snsname}{\rtype}$ where $\rtype=\rtypeof{\snsname}$ (by rule \ruleNameSize{[T-SNS]}) and $\bsopsem{\senstate}{\overline\vtree}{\snsname}{\vtree}$ where $\vtree=\anyvalue()$ and $\anyvalue=\senstate(\snsname)$ (by rule \ruleNameSize{[E-SNS]}). Since the sensor  $\snsname$ returns values of sort $\rtypeof{\snsname}$, we have that $\rtypeof{\anyvalue}\rsub\rtypeof{\snsname}=\rtype$. So  the result follows by a straightforward induction of the syntax of values using rules \ruleNameSize{[S-VAL]} and \ruleNameSize{[S-PAIR]}. 
\item[Case $\anyvalue$]
From the hypothesis we have  $\surExpTypJud{\emptyset}{\anyvalue}{\rtype}$  and  (by rule \ruleNameSize{[E-VAL]}) $\bsopsem{\senstate}{\overline\vtree}{\anyvalue}{\vtree}$ where  $\vtree=\anyvalue()$.  So  the result follows by a straightforward induction of the syntax of values using rules \ruleNameSize{[S-VAL]} and \ruleNameSize{[S-PAIR]}.
\end{description}
For the inductive step on $\applySubstitution{\e}{\substitution{\overline{\xname}}{\overline{\anyvalue}}}$, we show only the two most interesting cases (all the other cases are straightforward by induction). 
\begin{description}
\item[Case $\dname(\e_1,\myldots,\e_n)$] 
From the  hypothesis we have $ \surExpTypJud{\emptyset}{\fname(\e_1,\myldots,\e_n)}{\rtype}$ (by rule \ruleNameSize{S-FUN]}) and $\bsopsem{\senstate}{\overline\vtree}{\dname(\e_1,\myldots,\e_n)}{\anyvalue(\vtree'_1,\myldots,\vtree'_n,\anyvalue(\overline\vtreealt))}$ (by rule \ruleNameSize{E-DEF]}). Therefore we have
$\rtype(\rtype_1,\myldots,\rtype_n)=\mostSpecificOf{\rsignaturesOf{\fname}}{\rtype'_1,\ldots,\rtype'_n} $,  $\surExpTypJud{\emptyset}{\e_1}{\rtype'_1}$, $\myldots$, $\surExpTypJud{\emptyset}{\e_n}{\rtype'_n}$ and $\rtype'_1\rsub\rtype_1$, $\myldots$, $\rtype'_n\rsub\rtype_n$ (by the premises of rule \ruleNameSize{S-FUN]}) and $\defK \; \anytype \;\dname(\anytype_1\;\xname_1,\myldots,\anytype_n\;\xname_n) = \e''$, $\bsopsem{\senstate}{\premiseNumOf{1}{\overline\vtree}}{\e_1}{\vtree'_1}$, $\myldots$,
$\bsopsem{\senstate}{\premiseNumOf{n}{\overline\vtree}}{\e_n}{\vtree'_n}$ and
\begin{equation}\label{eq1:the-DeviceSortPreservation}
\bsopsem{\senstate}{\premiseNumOf{n+1}{\overline\vtree}}{\e'}{\anyvalue(\overline\vtreealt)}
\quad \mbox{where} \; \e'=\applySubstitution{\e''}{\substitution{\xname_1}{\vrootOf{\vtree'_1}},\myldots,\substitution{\xname_n}{\vrootOf{\vtree'_n}}}
\end{equation}
(the premises of rule \ruleNameSize{E-DEF]}).

Since $\premiseNumOf{i}{\overline\vtree}\in\bsWSVT{\emptyset}{\e_i}{\rtype'_i}$ ($1\le i\le n$), then $\vtree'_i\in\bsWSVT{\emptyset}{\e_i}{\rtype'_i}$; therefore, by  induction we have $\surExpTypJud{\emptyset}{\vrootOf{\vtree'_1}}{\rtype''_1}$, $\myldots$, $\surExpTypJud{\emptyset}{\vrootOf{\vtree''_n}}{\rtype''_n}$ and $\rtype''_1\rsub\rtype'_1\rsub\rtype_1$, $\myldots$, $\rtype''_n\rsub\rtype'_n\rsub\rtype_n$. 

Since the program is well sorted (cf.\ Section~\ref{sec-ref-typing}) we have  $\surExpTypJud{\xname_1:\rtype_1,\myldots,\xname_n:\rtype_n}{\e''}{\rtype'}$ and $\rtype'\rsub\rtype$ (by rule \ruleNameSize{T-DEF]}). 


Since $\premiseNumOf{n+1}{\overline\vtree}\in\bsWFVT{\xname_1:\rtype_1,\myldots,\xname_n:\rtype_n}{\e''}{\rtype'}$, then (by~(\ref{eq1:the-DeviceSortPreservation})) we have $\anyvalue(\overline\vtreealt)\in\bsWFVT{\xname_1:\rtype_1,\myldots,\xname_n:\rtype_n}{\e''}{\rtype'}$; therefore, by induction we have that $\surExpTypJud{\emptyset}{\anyvalue}{\rtype''}$ with $\rtype''\rsub\rtype'\rsub\rtype$.

\item[Case $\spreadThree{\e_0}{\fname}{\e_1,\myldots,\e_n}$] 
From the  hypothesis we have $\surExpTypJud{\emptyset}{\spreadThree{\e_0}{\fname}{\e_1,\myldots,\e_n}}{\rtype'}$ (by rule \ruleNameSize{S-SPR]}) and $\bsopsem{\senstate}{\overline\vtree}{\spreadThree{\e_0}{\fname}{\overline{\e}}}{\lowerBound\{\anyvalue_0,\anyvaluealt_1,\myldots,\anyvaluealt_m\}(\vtreealt_0,\vtreealt_1,\myldots,\vtreealt_n)}$ (by rule \ruleNameSize{E-SPR]}). Therefore we have
 $\progressionOpOf{\fname}$, $\surExpTypJud{\RefTypEnv}{\e_0\overline{\e}}{\rtype'_0\overline{\rtype}'}$,
$\rtype'(\rtype_0\overline{\rtype})=\mostSpecificOf{\stabilisingSignaturesOf{\fname}}{\rtype'_0\overline{\rtype}'}$,
$\rtype=\refSupOf{\rtype'_0}{\rtype'}$ and $\rtype'_0\overline{\rtype}'\rsub\rtype_0\overline{\rtype}$ (by the premises of rule \ruleNameSize{S-SPR]}) and $\bsopsem{\senstate}{\premiseNumOf{1}{\overline\vtree}}{\e_1}{\vtree'_1}$, $\myldots$,
$\bsopsem{\senstate}{\premiseNumOf{n}{\overline\vtree}}{\e_n}{\vtree'_n}$, 
$\conclusionOf{\vtreealt_0,\myldots,\vtreealt_n} = \anyvalue_0 \myldots \anyvalue_n$,
$\conclusionOf{\overline\vtree} = \anyvaluebis_1 \myldots \anyvaluebis_m$,
\begin{equation}\label{eq3:the-DeviceSortPreservation}
\bsopsem{\senstate}{\emptyset}{\fname(\anyvaluebis_1,\anyvalue_1,\myldots,\anyvalue_n)}{\anyvaluealt_1(\cdots)},
\myldots,
\bsopsem{\senstate}{\emptyset}{\fname(\anyvaluebis_m,\anyvalue_1,\myldots,\anyvalue_n)}{\anyvaluealt_m(\cdots)}
\end{equation}
(the premises rule \ruleNameSize{E-SPR]}).
By induction we have $\surExpTypJud{\emptyset}{ \anyvalue_0 \myldots \anyvalue_n}{\rtype''_0 \rtype''_1 \myldots \rtype''_n}$
with $\rtype''_0 \rtype''_1 \myldots \rtype''_n\rsub\rtype'_0 \rtype'_1 \myldots \rtype'_n$
and  $\surExpTypJud{\emptyset}{\anyvaluebis_1}{\rtype'''_1}$, $\myldots$, $\surExpTypJud{\emptyset}{\anyvaluebis_m}{\rtype'''_m}$
with $\rtype'''_l\rsub\rtype\rsub\rtype_0$ $(1\le l\le m$).
We have two subcases.
\begin{itemize}
\item
If $\fname$ is a user-defined function, then from~(\ref{eq3:the-DeviceSortPreservation}) we get (by reasoning as in the proof of case $\dname(\e_1,\myldots,\e_n)$)
$\surExpTypJud{\emptyset}{\anyvaluealt_1}{\rtype''''_1}$, $\myldots$, $\surExpTypJud{\emptyset}{\anyvaluealt_m}{\rtype''''_m}$ for some $\rtype''''_l$ such that $\rtype''''_l\rsub\rtype'$ $(1\le l\le m$).
\item
If $\fname$ is a built-in function, then from~(\ref{eq3:the-DeviceSortPreservation}) we get (by the semantics of built-in functions) $\surExpTypJud{\emptyset}{\anyvaluealt_1}{\rtype''''_1}$, $\myldots$, $\surExpTypJud{\emptyset}{\anyvaluealt_m}{\rtype''''_m}$ for some $\rtype''''_l$ such that $\rtype''''_l\rsub\rtype'$ $(1\le l\le m$).
\end{itemize}
In both cases $\anyvalue=\lowerBound\{\anyvalue_0,\anyvaluealt_1,\myldots,\anyvaluealt_m\}$ has a sort $\rtype''$ with $\rtype'''\rsub\rtype$, i.e., $\surExpTypJud{\emptyset}{\anyvalue}{\rtype'''}$ with $\rtype'''\rsub\rtype$ holds.\qedhere
\end{description}
\end{proof}

\noindent\textbf{Restatement of
Theorem~\ref{the-DeviceSortPreservation} (Device computation
sort preservation).} \emph{If
$\surExpTypJud{\overline{\xname}:\overline{\rtype}}{\e}{\rtype}$,
 $\snsFun$ is a sensor mapping,
$\overline{\vtree}\in\bsWSVT{\overline{\xname}:\overline{\rtype}}{\e}{\rtype}$,
$\lengthOf{\overline{\anyvalue}}=\lengthOf{\overline{\xname}}$,
        $\surExpTypJud{\emptyset}{\overline{\anyvalue}}{\overline{\rtype}'}$,
$\overline{\rtype}'\rsub\overline{\rtype}$,
and
$\bsopsem{\snsFun}{\overline{\vtree}}{\applySubstitution{\e}{\substitution{\overline{\xname}}{\overline{\anyvalue}}}}{\vtree}$,
then $\surExpTypJud{\emptyset}{\vrootOf{\vtree}}{\rtype'}$ for some $\rtype'$ such that  $\rtype'\rsub\rtype$.}

\begin{proof}
Straightforward by Lemma~\ref{lem:DeviceSortPreservation}, since $\vtree\in\bsWSVT{\overline{\xname}:\overline{\rtype}}{\e}{\rtype}$.
\end{proof}

\section{ Proof of Theorem~\ref{the-AnnotationSoundness}}\label{app-proof-annotationSoundness}

\begin{lem}[Annotated sort of an expression]\label{lem:annotatedSort}
If
$\annExpTypJud{\xname_1:\rtype_1\inputAnnSB,\,\overline{\xname}:\overline{\rtype}}{\e}{\rtype\pAnnSB}$,
then $\surExpTypJud{\xname:\rtype_1,\,\overline{\xname}:\overline{\rtype}}{\e}{\rtype}$ and 
$\topValOf{\leftKeyOf{\rtype_1}}=\topValOf{\leftKeyOf{\rtype}}$.
\end{lem}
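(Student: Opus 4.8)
The plan is to prove both conclusions — the support typing $\surExpTypJud{\xname_1:\rtype_1,\,\overline{\xname}:\overline{\rtype}}{\e}{\rtype}$ and the key-top equality $\topValOf{\leftKeyOf{\rtype_1}}=\topValOf{\leftKeyOf{\rtype}}$ — simultaneously, by induction on the structure of $\e$ (equivalently, on the syntax-directed derivation of $\annExpTypJud{\xname_1:\rtype_1\inputAnnSB,\,\overline{\xname}:\overline{\rtype}}{\e}{\rtype\pAnnSB}$); note that \ruleNameSize{[A-FUN]} consults the signature mapping $\asignaturesOf{\fname}$ rather than the body of $\fname$, so plain structural induction suffices. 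Before the case analysis I would record two auxiliary facts. First, for every sort $\rtype$ one has $\leftKeyOf{\topValOf{\rtype}}=\topValOf{\leftKeyOf{\rtype}}$: the leftmost-ground component of the lexicographic maximum of $\semOf{\rtype}$ is the maximum of $\semOf{\leftKeyOf{\rtype}}$, which is immediate from property~(\ref{eq:leftKeyProperty}). Second, whenever $\rsubof{\artype}{\artype'}$ holds in the annotated subsorting order \ruleNameSize{[I-A-SORT]}, its reverse-Hoare key component $\rsubwrhof{\leftKeyOf{\annErasure{\artype}}}{\leftKeyOf{\annErasure{\artype'}}}$ forces $\topValOf{\leftKeyOf{\annErasure{\artype}}}=\topValOf{\leftKeyOf{\annErasure{\artype'}}}$. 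Fact two is what delivers the key-top conclusion for free in every case whose result is obtained as an upper bound.

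The base and easy structural cases are direct. For \ruleNameSize{[A-VAR]} the checked variable is the input-annotated one, so $\rtype=\rtype_1$ and both conclusions are trivial (the support typing coming from \ruleNameSize{[S-VAR]}). For \ruleNameSize{[A-GVAL]} the premise $\topValOf{\leftKeyOf{\rtype_1}}=\groundvalue$, together with the definedness of $\atypeof{\groundvalue}=\rtypeof{\groundvalue}\PpAnnSB$ (which by Proposition~\ref{prop:AnnotatedSortsForGroundValues} requires $\groundvalue=\topValOf{\rtypeof{\groundvalue}}$ and hence $\topValOf{\leftKeyOf{\rtypeof{\groundvalue}}}=\groundvalue$), matches the key-top of $\rtype_1$, while \ruleNameSize{[S-GVAL]} supplies the support. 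The cases \ruleNameSize{[A-PAIR]} and \ruleNameSize{[A-FST]} follow from the induction hypothesis on the annotated premise using $\leftKeyOf{\mkpairType{\rtype_1}{\rtype_2}}=\leftKeyOf{\rtype_1}$ and rules \ruleNameSize{[S-PAIR]}, \ruleNameSize{[S-FST]}; the cases \ruleNameSize{[A-COND-TRUE]} and \ruleNameSize{[A-COND-FALSE]} are immediate from the induction hypothesis on the annotated branch and \ruleNameSize{[S-COND-TRUE]}, \ruleNameSize{[S-COND-FALSE]}. There is no case for $\sndK\,\e$, since the system assigns it no annotated sort.

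The two loaded cases are \ruleNameSize{[A-COND]} and \ruleNameSize{[A-FUN]}, where the support produced by the annotated rule must be shown to coincide with the sort produced by its plain counterpart (the sort system being deterministic, equality is genuinely required). For \ruleNameSize{[A-COND]} the induction hypothesis gives $\topValOf{\leftKeyOf{\annErasure{\artype_1}}}=\topValOf{\leftKeyOf{\rtype_1}}=\topValOf{\leftKeyOf{\annErasure{\artype_2}}}$, and I would argue that, because the two branch supports then share the same key-top, the subsorting least upper bound $\refSupOf{\annErasure{\artype_1}}{\annErasure{\artype_2}}$ already satisfies the reverse-Hoare upper-bound condition on keys and is therefore exactly the support of the annotated least upper bound $\refSupOf{\artype_1}{\artype_2}$; the support conclusion then follows from \ruleNameSize{[S-COND]}, and the key-top conclusion from fact two. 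For \ruleNameSize{[A-FUN]} the first argument is handled by the induction hypothesis and the remaining arguments by the plain system, so the task is to reconcile $\mostSpecificOf{\asignaturesOf{\fname}}{\rtype_1\overline{\rtype}}$ with $\mostSpecificOf{\rsignaturesOf{\fname}}{\rtype_1\overline{\rtype}}$; here one uses that $\fname$ is a diffusion, so the chosen signature is \emph{progressive} in the sense of Definition~\ref{def:ProgressiveSortSignature}, and combines this with fact one, the induction hypothesis on $\e_1$, and the key constraint built into the most-specific lookup to pin $\topValOf{\leftKeyOf{\rtype}}=\topValOf{\leftKeyOf{\rtype_1}}$ and to show the two lookups return the same result support. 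This last reconciliation — that $\asignaturesOf{\cdot}$ refines $\rsignaturesOf{\cdot}$ compatibly enough for the most-specific computations to agree on the result sort — is the step I expect to be the main obstacle, and it will rest on the representability and determinism conditions imposed on these mappings in Sections~\ref{sec-ref-for-functions} and~\ref{sec-AnnotatedSortSignaturesMappings}.
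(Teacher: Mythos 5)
Your approach coincides with the paper's: the paper's entire proof of this lemma is the single sentence ``straightforward by induction on the application of the annotated sort-checking rules,'' so your case analysis is precisely the elaboration the paper omits, and your easy cases (\ruleNameSize{[A-VAR]}, \ruleNameSize{[A-GVAL]}, \ruleNameSize{[A-PAIR]}, \ruleNameSize{[A-FST]}, the one-branch conditionals) as well as the chain of key-top equalities through the progressive-support and lookup constraints in \ruleNameSize{[A-FUN]} are exactly what such an induction looks like. The one step you flag but do not discharge---that $\mostSpecificOf{\asignaturesOf{\fname}}{\cdot}$ and $\mostSpecificOf{\rsignaturesOf{\fname}}{\cdot}$ return signatures with the \emph{same} result sort in \ruleNameSize{[A-FUN]}, and the analogous agreement of the two least upper bounds in \ruleNameSize{[A-COND]}---is a genuine subtlety that the paper silently glosses over: the stated conditions only relate $\asignaturesOf{\cdot}$ to $\stabilisingSignaturesOf{\cdot}$ and the latter to $\rsignaturesOf{\cdot}$ via subsigning, so they do not obviously force the two lookups to agree on the result sort rather than merely yielding comparable ones. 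You are right that this is where the only real content lies; note, though, that the exact-equality form of the support conclusion is stronger than what the lemma is actually used for---in the proof of the annotation-soundness lemma for expressions it is invoked only to conclude that $\semOf{\e}$ has sort-signature $\rtype(\rtype_1\overline{\rtype})$ together with $\topValOf{\leftKeyOf{\rtype_1}}=\topValOf{\leftKeyOf{\rtype}}$, and for that a conclusion of the form $\surExpTypJud{\xname_1:\rtype_1,\,\overline{\xname}:\overline{\rtype}}{\e}{\rtype'}$ with $\rtype'\rsub\rtype$ (which your induction delivers without the reconciliation step, combined with sort soundness) already suffices. So the cleanest way to close your acknowledged gap is to weaken the first conclusion to subsorting rather than to fight for agreement of the two most-specific computations.
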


\begin{proof}
Straightforward by induction on the application of the annotated sort checking rules for expressions in Fig.~\ref{fig:AnnotatedTyping}.
\end{proof}

A pure expression $\e$ with free variables $\overline{\xname}$ of sorts $\overline{\rtype}$  represents  the pure function
 that for every $\overline{\anyvalue}\in\semOf{\overline{\rtype}}$ returns the value $\semOf{\applySubstitution{\e}{\substitution{\overline{\xname}}{\overline{\anyvalue}}}}$. In the following we will write $\pureFunOf{\e}$ to denote such a function.

\begin{lem}[Annotation soundness for expressions]\label{lem:annotationSoundness}
If
$\annExpTypJud{,\xname_1:\rtype_1\inputAnnSB,\,\overline{\xname}:\overline{\rtype}}{\e}{\rtype\pAnnSB}$ and $\overline{\anyvalue}\in\semOf{\overline{\rtype}}$,
then 
\begin{enumerate}
    \item
     if $\pAnn=\PpAnn$ then
 \begin{itemize}
\item
 $\anyvalue\keywfLtOf{\rtype_1}\anyvalue'$ and $\opApply{\opFunOf{\pureFunOf{\e}}}{\anyvalue,\overline{\anyvalue}}=\anyvalue''\not\keywfEqOf{\rtype}\topValOf{\rtype}$ imply
    implies
    $\anyvalue''\keywfLtOf{\rtype}\opApply{\opFunOf{\pureFunOf{\e}}}{\anyvalue',\overline{\anyvalue}}$;
\item
for all
        $\anyvalue\in\semOf{\rtype_1}-\{\topValOf{\rtype_1}\}$,
        $\leftKeyOf{\anyvalue}\wfLtOf{\leftKeyOf{\rtype_1}}\leftKeyOf{\opApply{\opFunOf{\pureFunOf{\e}}}{\anyvalue,\overline{\anyvalue}}}$;
\end{itemize}
    \item
     if $\pAnn\in\{\PpAnn,\WpAnn\}$ then
\begin{itemize}
\item
 $\anyvalue\keywfLeOf{\rtype_1}\anyvalue'$
    implies
    $\opApply{\opFunOf{\pureFunOf{\e}}}{\anyvalue,\overline{\anyvalue}}\keywfLeOf{\rtype}\opApply{\opFunOf{\pureFunOf{\e}}}{\anyvalue',\overline{\anyvalue}}$;
\item 
for all
        $\anyvalue\in\semOf{\rtype_1}$,
     $\leftKeyOf{\anyvalue}\wfLeOf{\leftKeyOf{\rtype_1}}\leftKeyOf{\opApply{\opFunOf{\pureFunOf{\e}}}{\anyvalue,\overline{\anyvalue}}}$.
 \end{itemize}
\end{enumerate}
\end{lem}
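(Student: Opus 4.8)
The plan is to prove the statement by structural induction on the derivation of the annotated sort judgement $\annExpTypJud{\xname_1:\rtype_1\inputAnnSB,\,\overline{\xname}:\overline{\rtype}}{\e}{\rtype\pAnnSB}$; since the rules of Figure~\ref{fig:AnnotatedTyping} are syntax directed, this coincides with induction on the syntax of the pure expression $\e$, reading $\pureFunOf{\e}$ as a function of its first (annotated) argument $\xname_1$ with the remaining arguments instantiated by the fixed tuple $\overline{\anyvalue}\in\semOf{\overline{\rtype}}$. Two preparatory moves set up the whole argument. First, I would recast both conclusions purely in terms of keys: by the leftmost-key property~(\ref{eq:leftKeyProperty}) and its strict version, $\anyvalue\keywfLeOf{\rtype}\anyvalue'$ is equivalent to $\leftKeyOf{\anyvalue}\wfLeOf{\leftKeyOf{\rtype}}\leftKeyOf{\anyvalue'}$ (and similarly for $\keywfLtOf{\rtype}$), so the monotonicity and progressiveness requirements of both clauses become assertions about the ground key order $\wfLeOf{\leftKeyOf{\rtype}}$. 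Second, I would record the invariant from Lemma~\ref{lem:annotatedSort}: every subexpression that receives an annotated sort $\rtype''\pAnnScriptSB{''}$ in the derivation satisfies $\topValOf{\leftKeyOf{\rtype_1}}=\topValOf{\leftKeyOf{\rtype''}}$, i.e. all annotated (key-carrying) subexpressions share a single top key, namely that of the input sort $\rtype_1$.

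The base and structural cases are then quick. For \ruleNameSize{[A-VAR]} the expression is the input variable, $\pureFunOf{\e}$ is the identity on its first argument and $\pAnn=\WpAnn$, so clause~(2) holds by reflexivity. For \ruleNameSize{[A-GVAL]} the expression is the ground literal $\groundvalue$ with $\groundvalue=\topValOf{\leftKeyOf{\rtype_1}}$ and result annotation $\PpAnn$ (produced by $\atypeof{\cdot}$ and justified by Proposition~\ref{prop:AnnotatedSortsForGroundValues}); the function is constant and equal to the common top key, so monotonicity is immediate, while the guard $\anyvalue''\not\keywfEqOf{\rtype}\topValOf{\rtype}$ of the strict clause is never satisfied and that clause is vacuous. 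For \ruleNameSize{[A-PAIR]} and \ruleNameSize{[A-FST]} only the first, key-carrying component is annotation-checked, and since $\leftKeyOf{\mkpair{\anyvalue_1}{\anyvalue_2}}=\leftKeyOf{\anyvalue_1}=\leftKeyOf{\fstK\,\mkpair{\anyvalue_1}{\anyvalue_2}}$, both conclusions reduce verbatim to the induction hypothesis on that component; the deliberate absence of a rule for $\sndK\,\e$ is precisely what keeps this reduction sound.

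The two substantive cases are the conditionals and the function application. In \ruleNameSize{[A-COND-TRUE]} and \ruleNameSize{[A-COND-FALSE]} the guard $\e_0$ has sort $\trueType$ (resp. $\falseType$), hence $\pureFunOf{\e}$ coincides with the selected branch and the claim transfers directly from the induction hypothesis on it. In \ruleNameSize{[A-COND]} both branches carry annotated sorts $\artype_1,\artype_2$ and the result is their join $\refSupOf{\artype_1}{\artype_2}$ in the annotated subsorting order \ruleNameSize{[I-A-SORT]}; here I would weaken each branch to the join by soundness of annotated subsorting (the sort-level companion of Proposition~\ref{prop:Sounness of annotated subsigning}) and combine the two branch guarantees, using the shared-top-key invariant to keep the value under control as the guard switches branches. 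For \ruleNameSize{[A-FUN]} the function computed is the composition of $\fname$ (in its first argument) with $\pureFunOf{\e_1}$, the remaining arguments being fixed values which, by sort preservation (Theorem~\ref{the-DeviceSortPreservation}), lie in $\overline{\rtype}$ so that the most specific annotated signature $\rtype(\cdots)\SB{\pAnn'}$ selected from $\asignaturesOf{\fname}$ is applicable; monotone functions compose to a monotone function, and the progressiveness clause follows from the annotation arithmetic $\pAnn=\pAnn'(\pAnn'')$, a strict downstream step being available only when both the inner dependence ($\pAnn''$, from the induction hypothesis on $\e_1$) and the outer first-argument behaviour ($\pAnn'$, from Definition~\ref{def:MuPiProgression} read on $\fname$) are strict, with the guard $\not\keywfEqOf{\rtype}\topValOf{\rtype}$ absorbing the saturated case.

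I expect the \ruleNameSize{[A-FUN]} case to be the main obstacle: it requires a careful case split on whether $\pureFunOf{\e_1}(\anyvalue,\overline{\anyvalue})$ has already reached its top key, and a faithful tracking of how the annotations combine through $\pAnn'(\pAnn'')$ while the strict key-progress and the monotonicity stay aligned across the composition; the selection performed by $\mostSpecificOf{\asignaturesOf{\fname}}{\rtype_1\overline{\rtype}}$ and its interaction with annotated subsorting must be threaded through this bookkeeping. A secondary difficulty is the join in \ruleNameSize{[A-COND]}, whose preservation of monotonicity across the branch actually taken rests essentially on the invariant that both branches share the single top key of $\rtype_1$.
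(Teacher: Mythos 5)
Your proposal is correct and follows essentially the same route as the paper: induction on the syntax of pure expressions (equivalently on the syntax-directed derivation), using Lemma~\ref{lem:annotatedSort} for the shared-top-key invariant, Proposition~\ref{prop:AnnotatedSortsForGroundValues} for the \ruleNameSize{[A-GVAL]} base case, and Definition~\ref{def:MuPiProgression} together with the annotation composition $\pAnn=\pAnn'(\pAnn'')$ for the \ruleNameSize{[A-FUN]} case. You simply spell out the pair, projection and conditional cases that the paper dismisses as straightforward, which is a faithful elaboration rather than a different argument.
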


\begin{proof}
By Lemma~\ref{lem:annotatedSort}
$\semOf{\e}$ has sort $\rtype(\rtype_1\overline{\rtype})$ and either $\rsubwrhof{\leftKeyOf{\rtype_1}}{\leftKeyOf{\rtype}}$ or $\rsubwrhof{\leftKeyOf{\rtype}}{\leftKeyOf{\rtype_1}}$.
Recall that the annotated sort checking rules (in Fig.~\ref{fig:AnnotatedTyping}) and the evaluation rules  (in Fig.~\ref{fig:deviceSemantics} and Fig.~\ref{fig:deviceSemantics-extended}) are syntax directed. 
By induction on the syntax of pure expressions $\e$.
The base of the induction on $\e$ consist of two cases.
\begin{description}
\item[Case $\xname$] 
Immediate by rule \ruleNameSize{[A-VAR]}.
\item[Case $\anyvalue$]
Straightforward by rule \ruleNameSize{[A-GVAL]} and Proposition~\ref{prop:AnnotatedSortsForGroundValues}.
\end{description}
For the inductive step on $\e$, we show only the case for function application (all the other cases are straightforward by induction). 
\begin{description}
\item[Case $\fname(\e_1,\overline{\e})$] 
Then the premises of rule \ruleNameSize{[A-FUN]}:
\begin{itemize}
\item
$ \annExpTypJud{\AnnTypEnv}{\e_1}{\rtype'_1\SB{\pAnn''}}$
\item
$\surExpTypJud{\annErasure{\AnnTypEnv}}{\overline{\e}}{\overline{\rtype}'}$
\item
$\rtype(\rtype''_1\overline{\rtype}'')\SB{\pAnn'}\in\mostSpecificOf{\asignaturesOf{\fname}}{\rtype'_1\overline{\rtype'}}$
\end{itemize}
hold and $\pAnn=\pAnn'(\pAnn'')$. By the last premise, we have that
$\pOfSCRIPT{'}{\fname}{\rtype(\rtype''_1\overline{\rtype}'')}$ holds. Then, the result follows straightforward by induction (using Definition~\ref{def:MuPiProgression}).\qedhere
\end{description}
\end{proof}

\noindent\textbf{Restatement of
Theorem~\ref{the-AnnotationSoundness} (Annotation soundness).} \emph{If
$\annFunTypJud{}{\FUNCTION}{\overline{\rtype(\overline{\rtype})\pAnnSB}}$ holds, then $\pOf{\fname}{\rtype(\overline{\rtype})}$ holds for all 
$\rtype(\overline{\rtype})\pAnnSB\in\overline{\rtype(\overline{\rtype})\pAnnSB}$.}

\begin{proof}
Straighforward from rule \ruleNameSize{[A-DEF]} in  Fig.~\ref{fig:RefinementTyping} using Lemma~\ref{lem:annotationSoundness} and Proposition~\ref{prop:Sounness of annotated subsigning}.
\end{proof}

\end{document}